\definecolor{white}{rgb}{1,1,1}
\definecolor{black}{rgb}{0,0,0}
\definecolor{grey}{rgb}{0.7,0.7,0.7}
\definecolor{dgrey}{rgb}{0.5,0.5,0.5}
\definecolor{lightgrey}{rgb}{0.88,0.88,0.88}
\definecolor{lgrey}{rgb}{0.9,0.9,0.9}
\definecolor{llgrey}{rgb}{0.93,0.93,0.93}
\definecolor{lllgrey}{rgb}{0.96,0.96,0.96}
\definecolor{tableHeadGray}{rgb}{0.85,0.85,0.85}
\definecolor{oddRowGrey}{rgb}{0.95,0.95,0.95}
\definecolor{evenRowGrey}{rgb}{0.85,0.85,0.85}
\definecolor{yellow}{rgb}{1.0, 1.0, 0.0}
\definecolor{lightyellow}{rgb}{1.0, 1.0, 0.88}
\definecolor{selectiveyellow}{rgb}{1.0, 0.73, 0.0}
\definecolor{shadered}{rgb}{1,0.85,0.85}
\definecolor{red}{rgb}{1,0,0}
\definecolor{shadegreen}{rgb}{0.95,1,0.95}
\definecolor{green}{rgb}{0,1,0}
\definecolor{darkgreen}{rgb}{0,0.5,0}
\definecolor{shadeblue}{rgb}{0.95,0.95,1}
\definecolor{blue}{rgb}{0,0,1}
\definecolor{darkblue}{rgb}{0,0,0.5}
\definecolor{darkpurple}{rgb}{0.5,0,0.5}
\definecolor{darkdarkpurple}{rgb}{0.3,0,0.3}
\newcommand{\BG}[1]{\todo[inline,size=\tiny]{\textbf{Boris says:$\,$} #1}}
\newcommand{\detailedproof}[2]{\ifbool{ShowDetailedProofs}{
\begin{proof}
#2
\end{proof}
}{\noindent\textit{Proof Sketch:}#1\qed}\smallskip}
\newrobustcmd{\ifnotdetailedproof}[1]{\ifbool{ShowDetailedProofs}{}{#1}}
\newrobustcmd{\ifnottechreport}[1]{\ifbool{ShowDetailedProofs}{}{#1}}
\newrobustcmd{\iftechreport}[1]{\ifbool{ShowDetailedProofs}{#1}{}}
\newcommand{\mypar}[1]{\smallskip\noindent{\bf #1.}}
\newcommand{\bfcaption}[1]{\caption{#1}}
\newcommand{\trimfigurespacing}{\vspace*{-5mm}}
\DeclareMathAlphabet{\mathbbold}{U}{bbold}{m}{n}
\newtheorem{Theorem}{Theorem}
\newtheorem{Definition}{Definition}
\newtheorem{Lemma}{Lemma}
\newtheorem{Corollary}{Corollary}
\newtheorem{Example}{Example}
\newcommand{\proofpara}[1]{\medskip\noindent\underline{{#1}:}}
\newcommand{\mathtext}[1]{\ensuremath{\,\text{#1}\,}}
\newcommand{\card}[1]{\vert{#1}\vert}
\newcommand{\defas}{\coloneqq}
\newcommand{\mathtab}{\;\;\;}
\DeclareMathAlphabet{\mathbbold}{U}{bbold}{m}{n}
\newcommand{\thead}[1]{\textbf{#1}}
\newcommand{\tabletitle}[1]{\underline{{#1}}}
\newcommand{\schemasymb}{\textsc{Sch}}
\newcommand{\aschema}{S}
\newcommand{\schema}{\schemasymb(D)}
\newcommand{\relschema}{\schemasymb(R)}
\newcommand{\schemaOf}{\textsc{Sch}}
\newcommand{\arity}[1]{arity({#1})}
\newcommand{\db}{D}
\newcommand{\rel}{R}
\newcommand{\query}{Q}
\newcommand{\tup}{t}
\newcommand{\aDom}{\mathbb{D}}
\newcommand{\tuple}[1]{\left<\;{#1}\;\right>}
\newcommand{\comprehension}[2]{\left\{\;#1\;|\;#2\;\right\}}
\newcommand{\selection}{\sigma}
\newcommand{\projection}{\pi}
\newcommand{\join}{\Join}
\newcommand{\crossprod}{\times}
\newcommand{\union}{\cup}
\newcommand{\difference}{-}
\newcommand{\rename}{\rho}
\newcommand{\aggregation}[2]{%
  \IfEqCase{#1}{%
    {}{\gamma_{{#2}}}%
    }[\gamma_{{#1},{#2}}]%
  }
\newcommand{\gbAttrs}{G}
\newcommand{\agroup}{g}
\newcommand{\aout}{\rangeTup_{\asgrp}}
\newcommand{\outof}[1]{\rangeTup_{{#1}}}
\newcommand{\duprem}{\delta}
\newcommand{\combine}{\Psi}
\newcommand{\splitbg}{\bgMarker{\textsc{split}}}
\newcommand{\splitpos}{\ubMarker{\textsc{split}}}
\newcommand{\compressPos}{\textsc{Cpr}}
\newcommand{\raPlus}{\ensuremath{\mathcal{RA}^{+}}}
\newcommand{\raAgg}{\ensuremath{\mathcal{RA}^{agg}}}
\newcommand{\abbrBG}{SG\xspace}
\newcommand{\abbrBGW}{SGW\xspace}
\newcommand{\abbrBGQP}{SGQP\xspace}
\newcommand{\abbrUADB}{UA-DB\xspace}
\newcommand{\abbrUADBs}{UA-DBs\xspace}
\newcommand{\termUAADBs}{attribute-annotated uncertain databases\xspace}
\newcommand{\captialUAADBs}{Attribute-Annotated Uncertain Databases\xspace}
\newcommand{\abbrUAADB}{\textsf{AU-DB}\xspace}
\newcommand{\abbrUAADBs}{\textsf{AU-DBs}\xspace}
\newcommand{\abbrUAARel}{AU-relation\xspace}
\newcommand{\abbrAUDB}{\abbrUAADB}
\newcommand{\abbrAUDBs}{\abbrUAADBs}
\newcommand{\abbrTI}{TI-DB\xspace}
\newcommand{\abbrTIs}{TI-DBs\xspace}
\newcommand{\abbrPTIrel}{probabilistic TI-relation\xspace}
\newcommand{\termTIs}{tuple-independent databases\xspace}
\newcommand{\captialTIs}{tuple-independent database\xspace}
\newcommand{\termBI}{block-independent database\xspace}
\newcommand{\abbrCoddTable}{Codd-table\xspace}
\newcommand{\abbrVtables}{V-tables\xspace}
\newcommand{\abbrCtable}{C-table\xspace}
\newcommand{\abbrCtables}{C-tables\xspace}
\newcommand{\abbrVCtable}{Virtual C-table\xspace}
\newcommand{\abbrVCtables}{Virtual C-tables\xspace}
\newcommand{\abbrXDB}{x-DB\xspace}
\newcommand{\abbrXDBs}{x-DBs\xspace}
\newcommand{\abbrXtable}{x-table\xspace}
\newcommand{\abbrMtables}{m-tables\xspace}
\newcommand{\termBG}{selected-guess\xspace}
\newcommand{\capitalBG}{Selected-Guess\xspace}
\newcommand{\termBGW}{selected-guess world\xspace}
\newcommand{\capitalBGW}{Selected-Guess Worlds\xspace}
\newcommand{\st}{s.t.}
\newcommand{\CT}{CT\xspace}
\newcommand{\ptime}{\texttt{PTIME}\xspace}
\newcommand{\npcomplete}{NP-complete\xspace}
\newcommand{\conpcomplete}{coNP-complete\xspace}
\newcommand{\nphard}{NP-hard\xspace}
\newcommand{\conphard}{coNP-hard\xspace}
\newcommand{\bigO}{\mathcal{O}}
\newcommand{\pdb}{\mathcal{D}}
\newcommand{\prel}{\mathcal{R}}
\newcommand{\bestG}[1]{\textsc{BestGuess}(#1)}
\newcommand{\uadb}{\db_{UA}}
\newcommand{\bgdb}{D_{sg}}
\newcommand{\prob}{P}
\newcommand{\uMarker}[1]{\textcolor{red}{{#1}}}
\newcommand{\cMarker}[1]{#1}
\newcommand{\ubMarker}[1]{\ensuremath{{#1}{}^{\uparrow}}}
\newcommand{\lbMarker}[1]{\ensuremath{{#1}{}^{\downarrow}}}
\newcommand{\bgMarker}[1]{\ensuremath{{#1}{}^{\bgName}}}
\newcommand{\TUL}{\mathcal{L}}
\newcommand{\TTR}{\mathcal{T}}
\newcommand{\TTRTI}{\TTR_{TI}}
\newcommand{\TTRX}{\TTR_{X}}
\newcommand{\TTRC}{\TTR_{C}}
\newcommand{\doTTR}{\textsf{trans}}
\newcommand{\doTTRTI}{\doTTR_{\textsc{\abbrTI}}}
\newcommand{\doTTRX}{\doTTR_{\textsc{\abbrXDB}}}
\newcommand{\doTTRC}{\doTTR_{\textsc{\abbrCtable}}}
\newcommand{\doTTRD}{\doTTR_{\textsc{certain}}}
\newcommand{\makeuncert}[3]{\textsc{MakeUncertain}(#1,#2,#3)}
\newcommand{\thenull}{\ensuremath{\mathbf{null}}}
\newcommand{\lcond}{\phi}
\newcommand{\gcond}{\Phi}
\newcommand{\xTup}{\tau}
\newcommand{\xselect}[1]{\textsc{pickMax}(#1)}
\newcommand{\orval}[1]{[{#1}]}
\newcommand{\dataDomain}{\mathbb{D}}
\newcommand{\sexpr}{e}
\newcommand{\vars}{\textsc{vars}}
\newcommand{\seval}[2]{\llbracket{#1}\rrbracket_{#2}}
\newcommand{\sval}{\varphi}
\newcommand{\uval}{{\Phi}}
\newcommand{\rval}{\tilde{\varphi}}
\newcommand{\ifte}[3]{{\bf if}\,{#1}\,{\bf then}\,{#2}\,{\bf else}\,{#3}}
\newcommand{\semN}{\mathbb{N}}
\newcommand{\semB}{\mathbb{B}}
\newcommand{\semNX}{\mathbb{N}[X]}
\newcommand{\semK}{\mathcal{K}}
\newcommand{\onesymbol}{\mathbbold{1}}
\newcommand{\zerosymbol}{\mathbbold{0}}
\newcommand{\multsymb}{\cdot}
\newcommand{\addsymbol}{+}
\newcommand{\monsymbol}{-}
\newcommand{\bfalse}{\bot}
\newcommand{\btrue}{\top}
\newcommand{\kDom}{K}
\newcommand{\addK}{\addsymbol_{\semK}}
\newcommand{\multK}{\multsymb_{\semK}}
\newcommand{\monK}{\monsymbol_{\semK}}
\newcommand{\oneK}{\onesymbol_{\semK}}
\newcommand{\zeroK}{\zerosymbol_{\semK}}
\newcommand{\addOf}[1]{\addsymbol_{#1}}
\newcommand{\multOf}[1]{\multsymb_{#1}}
\newcommand{\monOf}[1]{\monsymbol_{#1}}
\newcommand{\oneOf}[1]{\onesymbol_{#1}}
\newcommand{\zeroOf}[1]{\zerosymbol_{#1}}
\newcommand{\zeroN}{0}
\newcommand{\oneN}{1}
\newcommand{\addN}{+}
\newcommand{\multN}{\cdot}
\newcommand{\ordersymbol}{\preceq}
\newcommand{\geqsymbol}{\succeq}
\newcommand{\ordK}{\ordersymbol_{\semK}}
\newcommand{\geqK}{\geqsymbol_{\semK}}
\newcommand{\ordOf}[1]{\ordersymbol_{#1}}
\newcommand{\ordB}{\ordOf{\semB}}
\newcommand{\ordN}{\ordOf{\semN}}
\newcommand{\lub}{\sqcup}
\newcommand{\glb}{\sqcap}
\newcommand{\glbKof}[1]{\sqcap_{#1}}
\newcommand{\glbK}{\glbKof{\semK}}
\newcommand{\lubKof}[1]{\sqcup_{#1}}
\newcommand{\lubK}{\lubKof{\semK}}
\newcommand{\GlbK}{\sqcap_{\semK}}
\newcommand{\dpK}[2]{{#1}^{#2}}
\newcommand{\dpDom}[2]{{#1}^{#2}}
\newcommand{\doubleK}[1]{\dpK{#1}{2}}
\newcommand{\aDoubleK}{\doubleK{\semK}}
\newcommand{\doubleDom}[1]{\dpDom{#1}{2}}
\newcommand{\doubleN}{\dpK{\semN}{2}}
\newcommand{\semq}[1]{\ensuremath{\dpK{#1}{3}}}
\newcommand{\semkq}{\semq{\semK}}
\newcommand{\semqN}{\semq{\semN}}
\newcommand{\monoid}{M}
\newcommand{\madd}[1]{+_{#1}}
\newcommand{\mzero}[1]{\zeroOf{#1}}
\newcommand{\amadd}{\madd{\monoid}}
\newcommand{\amzero}{\mzero{\monoid}}
\newcommand{\mmax}{\text{\upshape\textsf{MAX}}}
\newcommand{\mmin}{\text{\upshape\textsf{MIN}}}
\newcommand{\msum}{\text{\upshape\textsf{SUM}}}
\newcommand{\rangeMon}{{\monoid}_{I}}
\newcommand{\rangeMof}[1]{{#1}_{I}}
\newcommand{\aggmin}{\ensuremath{\mathbf{min}}\xspace}
\newcommand{\aggmax}{\ensuremath{\mathbf{max}}\xspace}
\newcommand{\aggsum}{\ensuremath{\mathbf{sum}}\xspace}
\newcommand{\aggavg}{\ensuremath{\mathbf{avg}}\xspace}
\newcommand{\aggcount}{\ensuremath{\mathbf{count}}\xspace}
\newcommand{\ggrouping}{\mathcal{G}}
\newcommand{\tgrouping}{\mathbb{g}}
\newcommand{\uncertg}[3]{\textsc{ug}({#1},{#2},{#3})}
\newcommand{\lbagg}[1]{\textsc{lbagg}({#1})}
\newcommand{\ubagg}[1]{\textsc{ubagg}({#1})}
\newcommand{\gstrat}{\mathbb{G}}
\newcommand{\stratGrps}{\mathcal{G}}
\newcommand{\asgrp}{g}
\newcommand{\stratBG}{\psi}
\newcommand{\stratPoss}{\alpha}
\newcommand{\gsdef}{\gstrat_{def}}
\newcommand{\gsdefG}{\stratGrps}
\newcommand{\gsdefP}{\stratPoss}
\newcommand{\smbpair}{\ast}
\newcommand{\smbN}[1]{\smbpair_{{#1}}}
\newcommand{\asmbN}{\smbN{\monoid}}
\newcommand{\smbNAU}[1]{\smbpair_{\uaaK{\semN},{#1}}}
\newcommand{\asmbNAU}{\smbNAU{\monoid}}
\newcommand{\mysmbNAU}[1]{\circledast_{{#1}}}
\newcommand{\amysmbNAU}{\mysmbNAU{\monoid}}
\newcommand{\smpair}{\otimes}
\newcommand{\sm}[2]{{#1} \smpair {#2}}
\newcommand{\asm}{\sm{\semK}{\monoid}}
\newcommand{\kmadd}[2]{\madd{\sm{#1}{#2}}}
\newcommand{\akmadd}{\kmadd{\semK}{\monoid}}
\newcommand{\uaName}{UA}
\newcommand{\uaK}[1]{{#1}_{\uaName}}
\newcommand{\matches}{\simeq}
\newcommand{\bgName}{sg}
\newcommand{\bgOf}[1]{\bgMarker{#1}}
\newcommand{\rangeName}{I}
\newcommand{\rangeDom}{{\dataDomain_{\rangeName}}}
\newcommand{\rangeTup}{\textbf{\sffamily \tup}}
\newcommand{\rangeRel}{\mathbf{\rel}}
\newcommand{\rangeOf}[1]{\mathbf{{#1}}}
\newcommand{\rangeDB}{{\mathbf{\db}}}
\newcommand{\tmatch}{\sqsubseteq}
\newcommand{\ntmatch}{\not\sqsubseteq}
\newcommand{\toverlaps}{\sqcap}
\newcommand{\dbbounds}{\sqsubset}
\newcommand{\rliftK}[1]{\mathcal{M}_{#1}}
\newcommand{\uaaName}{AU}
\newcommand{\uaaK}[1]{{#1}_{\uaaName}}
\newcommand{\uaaDom}[1]{\kDom_{{\uaaName}}}
\newcommand{\uaaAdd}[1]{\addOf{\uaaK{#1}}}
\newcommand{\uaaMult}[1]{\multOf{\uaaK{#1}}}
\newcommand{\uaaZero}[1]{\zeroOf{\uaaK{#1}}}
\newcommand{\uae}[2]{[#2,#1]}
\newcommand{\uv}[3]{\ensuremath{[{#1}/{#2}/{#3}]}}
\newcommand{\ut}[3]{({#1},{#2},{#3})}
\newcommand{\uaaN}{\uaaK{\semN}}
\newcommand{\uaanAdd}{\uaaAdd{\semN}}
\newcommand{\uaanMult}{\uaaMult{\semN}}
\newcommand{\uaaNzero}{\uaaZero{\semN}}
\newcommand{\dbleq}{\preceq_{\rangeName}}
\newcommand{\dble}{\prec_{\rangeName}}
\newcommand{\dbge}{\succ_{\rangeName}}
\newcommand{\certainName}{\textsc{cert}}
\newcommand{\possibleName}{\textsc{poss}}
\newcommand{\pwCertain}{{\certainName}_{\semK}}
\newcommand{\pwPossible}{{\possibleName}_{\semK}}
\newcommand{\pwCertOf}[1]{{\certainName}_{#1}}
\newcommand{\pwPossOf}[1]{{\possibleName}_{#1}}
\newcommand{\pwCertainN}{{\certainName}_{\semN}}
\newcommand{\pwPossibleN}{{\possibleName}_{\semN}}
\newcommand{\TM}{\mathcal{TM}}
\newcommand{\rewrUAA}[1]{\textsc{rewr}({#1})}
\newcommand{\rewrOpt}[1]{\textsc{opt}({#1})}
\newcommand{\qmerge}{\query_{merge}}
\newcommand{\tenc}[1]{\textsc{enc}({#1})}
\newcommand{\tdec}[1]{\textsc{dec}({#1})}
\newcommand{\tdecr}[1]{\textsc{rowdec}({#1})}
\newcommand{\Enc}{\textsc{Enc}}
\newcommand{\Dec}{\textsc{Dec}}
\newcommand{\ubatt}[1]{\ubMarker{{#1}}}
\newcommand{\lbatt}[1]{\lbMarker{{#1}}}
\newcommand{\bgatt}[1]{\bgMarker{{#1}}}
\newcommand{\ratt}{row}
\newcommand{\rattj}{row2}
\newcommand{\rbg}{\bgMarker{\ratt}}
\newcommand{\rub}{\ubMarker{\ratt}}
\newcommand{\rlb}{\lbMarker{\ratt}}
\newcommand{\renameto}{\rightarrow}
\DeclareMathSymbol{\mlq}{\mathord}{operators}{``}
\DeclareMathSymbol{\mrq}{\mathord}{operators}{`'}
\newlength{\eqheight}
\settototalheight{\eqheight}{$=$}
\title{Efficient Uncertainty Tracking for Complex Queries with Attribute-level Bounds}
\author{Su Feng}
\affiliation{\institution{Illinois Institute of Technology}}
\email{sfeng14@hawk.iit.edu}
\author{Aaron Huber}
\affiliation{\institution{SUNY Buffalo}}
\email{ahuber@buffalo.edu}
\author{Boris Glavic}
\affiliation{\institution{Illinois Institute of Technology}}
\email{bglavic@iit.edu}
\author{Oliver Kennedy}
\affiliation{\institution{SUNY Buffalo}}
\email{okennedy@buffalo.edu}
\definecolor{lstpurple}{rgb}{0.5,0,0.5}
\definecolor{lstred}{rgb}{1,0,0}
\definecolor{lstreddark}{rgb}{0.7,0,0}
\definecolor{lstredl}{rgb}{0.64,0.08,0.08}
\definecolor{lstmildblue}{rgb}{0.66,0.72,0.78}
\definecolor{lstblue}{rgb}{0,0,1}
\definecolor{lstmildgreen}{rgb}{0.42,0.53,0.39}
\definecolor{lstgreen}{rgb}{0,0.5,0}
\definecolor{lstorangedark}{rgb}{0.6,0.3,0}
\definecolor{lstorange}{rgb}{0.75,0.52,0.005}
\definecolor{lstorangelight}{rgb}{0.89,0.81,0.67}
\definecolor{lstbeige}{rgb}{0.90,0.86,0.45}
\DeclareFontShape{OT1}{cmtt}{bx}{n}{<5><6><7><8><9><10><10.95><12><14.4><17.28><20.74><24.88>cmttb10}{}
\lstdefinestyle{psql}
{
tabsize=2,
basicstyle=\small\upshape\ttfamily,
language=SQL,
morekeywords={PROVENANCE,BASERELATION,INFLUENCE,COPY,ON,TRANSPROV,TRANSSQL,TRANSXML,CONTRIBUTION,COMPLETE,TRANSITIVE,NONTRANSITIVE,EXPLAIN,SQLTEXT,GRAPH,IS,ANNOT,THIS,XSLT,MAPPROV,cxpath,OF,TRANSACTION,SERIALIZABLE,COMMITTED,INSERT,INTO,WITH,SCN,UPDATED},
extendedchars=false,
keywordstyle=\bfseries,
mathescape=true,
escapechar=@,
sensitive=true
}
\lstdefinestyle{psqlcolor}
{
tabsize=2,
basicstyle=\small\upshape\ttfamily,
language=SQL,
morekeywords={PROVENANCE,BASERELATION,INFLUENCE,COPY,ON,TRANSPROV,TRANSSQL,TRANSXML,CONTRIBUTION,COMPLETE,TRANSITIVE,NONTRANSITIVE,EXPLAIN,SQLTEXT,GRAPH,IS,ANNOT,THIS,XSLT,MAPPROV,cxpath,OF,TRANSACTION,SERIALIZABLE,COMMITTED,INSERT,INTO,WITH,SCN,UPDATED},
extendedchars=false,
keywordstyle=\bfseries\color{lstpurple},
deletekeywords={count,min,max,avg,sum},
keywords=[2]{count,min,max,avg,sum},
keywordstyle=[2]\color{lstblue},
stringstyle=\color{lstreddark},
commentstyle=\color{lstgreen},
mathescape=true,
escapechar=@,
sensitive=true
}
\lstdefinestyle{datalog}
{
basicstyle=\footnotesize\upshape\ttfamily,
language=prolog
}
\lstdefinestyle{pseudocode}
{
  tabsize=3,
  basicstyle=\small,
  language=c,
  morekeywords={if,else,foreach,case,return,in,or},
  extendedchars=true,
  mathescape=true,
  literate={:=}{{$\gets$}}1 {<=}{{$\leq$}}1 {!=}{{$\neq$}}1 {append}{{$\listconcat$}}1 {calP}{{$\cal P$}}{2},
  keywordstyle=\color{lstpurple},
  escapechar=&,
  numbers=left,
  numberstyle=\color{lstgreen}\small\bfseries,
  stepnumber=1,
  numbersep=5pt,
}
\lstdefinestyle{xmlstyle}
{
  tabsize=3,
  basicstyle=\small\upshape\ttfamily,
  language=xml,
  extendedchars=true,
  mathescape=true,
  escapechar=£,
  tagstyle=\bfseries,
  usekeywordsintag=true,
  morekeywords={alias,name,id},
  keywordstyle=\color{lstred}
}
\lstdefinestyle{xmlstyle-color}
{
  tabsize=3,
  basicstyle=\small\upshape\ttfamily,
  language=xml,
  extendedchars=true,
  mathescape=true,
  escapechar=£,
  tagstyle=\color{keywordpurple},
  usekeywordsintag=true,
  morekeywords={alias,name,id},
  keywordstyle=\color{lstred}
}
\keywords{uncertainty, incomplete databases, annotations, aggregation}
\newcommand{\reva}[1]{#1}
\newcommand{\revb}[1]{#1}
\newcommand{\revc}[1]{#1}
\newcommand{\revm}[1]{#1}
\begin{document}

% -*- root: uadb.tex -*-
\begin{abstract}
  \ifnottechreport{ Incomplete and probabilistic database techniques are
    principled methods for coping with uncertainty in data.  Unfortunately, % in
    % spite of the progress that has been made in the past,
    the class of queries that can be answered efficiently over such databases is
    severely limited, even when advanced approximation techniques are employed.
    We introduce \emph{\termUAADBs} (\abbrUAADBs), an uncertain data model that
    annotates tuples and attribute values with bounds to compactly approximate
    an incomplete database.  \abbrUAADBs are closed under % full
    relational
    algebra with aggregation using an efficient evaluation semantics. Using
    optimizations that trade accuracy for performance, our approach scales to
    complex queries and large datasets, and produces accurate
    results.  % when applied to real world data.
  }
  \iftechreport{ Certain answers are a principled method for coping with the
    uncertainty that arises in many practical data management
    tasks. Unfortunately, this method is expensive and may exclude useful (if
    uncertain) answers. Prior work introduced \emph{Uncertainty Annotated
      Databases} (\abbrUADBs), which combine an under- and over-approximation of
    certain answers. \abbrUADBs combine the reliability of certain answers based
    on incomplete K-relations with the performance of classical deterministic
    database systems. However, \abbrUADBs only support a limited class of
    queries and do not support attribute-level uncertainty which can lead to
    inaccurate under-approximations of certain answers.  In this paper, we
    introduce \emph{\termUAADBs} (\abbrUAADBs) which extend the \abbrUADB model
    with attribute-level annotations that record bounds on the values of an
    attribute across all possible worlds. This enables more precise
    approximations of incomplete databases. Furthermore, we extend \abbrUADBs to
    encode an compact over-approximation of possible answers which is necessary
    to support non-monotone queries including aggregation and set difference.
    We prove that query processing over \abbrUAADBs preserves the bounds on
    certain and possible answers and investigate algorithms for compacting
    intermediate results to retain efficiency. Through an compact encoding of
    possible answers, our approach also provides a solid foundation for handling
    missing data.  Using optimizations that trade accuracy for performance, our
    approach scales to complex queries and large datasets, and produces accurate
    results. Furthermore, it significantly outperforms alternative methods for
    uncertain data management.  }
\end{abstract}

%%% Local Variables:
%%% mode: latex
%%% TeX-master: "uaadb"
%%% End:

\maketitle
%!TEX root = ./uaadb.tex
\section{Introduction}
\label{sec:introduction}

Uncertainty arises naturally in many application domains due to data entry errors, sensor errors and noise~\cite{jeffery-06-dssdc}, uncertainty in information extraction and parsing~\cite{sarawagi2008information}, ambiguity from data integration~\cite{OP13,AS10,HR06a}, and heuristic data wrangling~\cite{Yang:2015:LOA:2824032.2824055,F08,Beskales:2014:SRC:2581628.2581635}.
Analyzing uncertain data without accounting for its uncertainty can create hard to trace errors with severe real world implications. % such as financial damages, incorrect scientific conclusions, or even affect people's physical well-being.
Incomplete database techniques~\cite{DBLP:conf/pods/ConsoleGLT20} have emerged as a principled way to model and manage uncertainty in data\footnote{
\revb{
  Probabilistic databases~\cite{suciu2011probabilistic} generalize incomplete databases with a probability distribution over possible worlds.
  We focus on contrasting with the former for simplicity, but many of the same cost and expressivity limitations also affect probabilistic databases.
}
}.
An incomplete database models uncertainty by encoding a set of possible worlds, each of which is one possible state of the real world.
Under the commonly used \emph{certain answer semantics}~\cite{AK91,DBLP:journals/jacm/ImielinskiL84}, a \revm{query} % incomplete database will only
returns the set of answer tuples \emph{guaranteed} to be in the result, regardless of which possible world is correct.
Many computational problems are intractable over incomplete databases. Even approximations (e.g.,~\cite{GP17,DBLP:journals/vldb/FinkHO13,DBLP:conf/pods/KoutrisW18}) are often still not efficient enough, are insufficiently expressive, or exclude useful answers~\cite{FH19,DBLP:conf/pods/ConsoleGLT20}.
% In spite of great progress, existing techniques remain slow and/or support only very limited classes of queries, and as a result are not widely applied in practice.
% These limitations stem from the fact that many computational problems are intractable over incomplete databases.
% Even approximations (e.g.,~\cite{GP17,DBLP:journals/vldb/FinkHO13,DBLP:conf/pods/KoutrisW18}) are often still not efficient enough, are insufficiently expressive, or exclude useful answers~\cite{FH19,DBLP:conf/pods/ConsoleGLT20}.
%
Thus, typical database users resort to a cruder, but more practical alternative: resolving uncertainty using heuristics and then treating the result as a deterministic database~\cite{Yang:2015:LOA:2824032.2824055}.
In other words, this approach selects one possible world for analysis, ignoring all other possible worlds.
\BG{This is, for example, how uncertainty is addressed in typical ETL processes~\cite{Yang:2015:LOA:2824032.2824055}.}
We refer to this % simplistic
approach as \textit{selected-guess} query processing (\emph{\abbrBGQP}).
\abbrBGQP is efficient, since the resulting dataset is deterministic, but discards all information about uncertainty, with the associated potential for severe negative consequences.

%%%%%%%%%%%%%%%%%%%%%%%%%%%%%%%%%%%%%%%%
\begin{figure*}[t]
  \begin{minipage}{0.7\linewidth}
  \centering
  %%%%%%%%%%%%%%%%%%%%%%%%%%%%%%%%%%%%%%%%
%   \begin{subfigure}{0.1\linewidth}
% \begin{lstlisting}
% SELECT state, sum(pop) AS spop
% FROM cities
% GROUP BY state
% \end{lstlisting}
% \caption{Query computing state populations\label{fig:running-example-q}}
%   \end{subfigure}
%%%%%%%%%%%%%%%%%%%%%%%%%%%%%%%%%%%%%%%%
  % X-DB
  \begin{subfigure}{0.28\linewidth}
    \centering
    \begin{minipage}{1\linewidth}
      \centering
    \tabletitle{$\pdb$}                                 \\
  {\tiny
  \begin{tabular}{c|c|c}
    \thead{locale} & \thead{rate}  & \thead{size}  \\ \hline
    Los Angeles   & \orval{3\%,4\%}     & metro             \\
    Austin       & 18\%                & \orval{city,metro}             \\
    Houston      & 14\%                & metro             \\
    Berlin       & \orval{1\%,3\%} & \orval{town,city} \\
    Sacramento   & 1\%                 & \thenull       \\
    Springfield  & \thenull            & town             \\
  \end{tabular}
}
\end{minipage}
\begin{minipage}{1\linewidth}
      \centering
    \tabletitle{$\query(\pdb)$}  \\
  {\tiny
  \begin{tabular}{c|cc}
    \thead{size}  & \thead{rate} \\ \hline
    village       & 0\%    \\
    village       & 1\%    \\
    town          & 0\%        \\
    town          & 0.5\%        \\
    % \ldots        & \ldots       \\
    % city          & 8.5\%        \\
    \ldots        & \ldots       \\
    metro         & 12\%      \\
  \end{tabular}
}
\end{minipage}
\caption{\label{fig:running-example-xdb}X-DB}
\end{subfigure}
%%%%%%%%%%%%%%%%%%%%%%%%%%%%%%%%%%%%%%%%
  % Some Possible worlds
\begin{subfigure}{0.2\linewidth}
%   \begin{minipage}{0.49\linewidth}
%     \centering
%   \tabletitle{$\db_1$}                         \\
%     {\tiny
%   \begin{tabular}{c|c|c}
%     \thead{city} & \thead{pop} & \thead{state} \\ \hline
%     Los Angeles   & 12.1M       & CA            \\
%     Austin       & 790k        & TX            \\
%     Houston      & 2.3M        & TX            \\
%       Berlin     & 140         & IL            \\
%       Sacramento & 200k        & IL            \\
%     Springfield  & 16k         & CA            \\
%   \end{tabular}
% }
% \end{minipage}
\begin{minipage}{1\linewidth}
  \centering
  \tabletitle{$\db_{\abbrBG}$}\\
{\tiny
    \begin{tabular}{c|c|c}
    \thead{locale} & \thead{rate} & \thead{size} \\ \hline
    Los Angeles   & 3\%         & metro           \\
    Austin       & 18\%        & city            \\
    Houston      & 14\%        & metro           \\
      Berlin     & 3\%         & town            \\
      Sacramento & 1\%         & town            \\
    Springfield  & 5\%         & town            \\
  \end{tabular}
}
\end{minipage}
%%%%%%%%%%%%%%%%%%%%
%   \begin{minipage}{0.49\linewidth}
%     \centering
%   \tabletitle{$\query(\db_1)$}   \\
%     {\tiny
%   \begin{tabular}{c|c}
%     \thead{state} & \thead{spop} \\ \hline
%     CA            & 12.12M       \\
%     IL            & 216.1k       \\
%     TX            & 3.09M        \\
%   \end{tabular}
% }
% \end{minipage}
%%%%%%%%%%%%%%%%%%%%
  \begin{minipage}{1\linewidth}
    \centering
  \tabletitle{$\query(\db_{\abbrBG})$}   \\
    {\tiny
  \begin{tabular}{c|cc}
    \thead{size} & \thead{rate} \\ \hline
    metro            & 8.5\%       \\
    city             & 18\%       \\
    town             & 3\%         \\
  \end{tabular}
}
\end{minipage}
\caption{\label{fig:running-example-pos-worlds}A possible world}
\end{subfigure}
%%%%%%%%%%%%%%%%%%%%%%%%%%%%%%%%%%%%%%%%
%
%%%%%%%%%%%%%%%%%%%%%%%%%%%%%%%%%%%%%%%%
% AU-DB
\begin{subfigure}{0.5\linewidth}
  \centering
  \begin{minipage}{1\linewidth}
    \centering
\tabletitle{$\db_{AU}$}\\
{\tiny
  \begin{tabular}{c|c|cc}
    \thead{locale} & \thead{rate}         & \thead{size}    & \underline{\semqN} \\ \cline{1-3}
    Los Angeles   & \uv{3\%}{3\%}{4\%}     & metro           & \ut{1}{1}{1}       \\
    Austin       & 18\%                   & \uv{city}{city}{metro}            & \ut{1}{1}{1}       \\
    Houston      & 14\%                   & metro           & \ut{1}{1}{1}       \\
    Berlin       & \uv{1\%}{3\%}{3\%}     & \uv{town}{town}{city} & \ut{1}{1}{1}       \\
    Sacramento   & 1\%                    & \uv{village}{town}{metro}            & \ut{1}{1}{1}       \\
    Springfield  & \uv{0\%}{5\%}{100\%}   & town & \ut{1}{1}{1}       \\
  \end{tabular}
}
\end{minipage}
\begin{minipage}{1\linewidth}
  \centering
\tabletitle{$\query(\db_{AU})$}                                   \\
  {\tiny
  \begin{tabular}{c|cc}
    \thead{size}   & \thead{spop}             & \underline{\semqN} \\ \cline{1-2}
    metro          & \uv{6\%}{8.5\%}{12\%}    & \ut{1}{1}{1} \\
    city           & \uv{7.33\%}{18\%}{18\%} & \ut{0}{1}{1} \\
    town           & \uv{0.33\%}{4\%}{100\%}  & \ut{1}{1}{1} \\
    \uv{village}{village}{metro} & 1\%       & \ut{0}{0}{1} \\
  \end{tabular}
  }
\end{minipage}
\caption{\label{fig:running-example-audb}Possible \abbrUAADB Encoding (based on $\db_{\abbrBG}$)}
\end{subfigure}\\[-5mm]
  \caption{\revb{Example incomplete database and query results.}}
  \label{fig:running-example}
\end{minipage}
  \begin{minipage}{0.295\linewidth}
  \centering
  \includegraphics[width=1.0\linewidth]{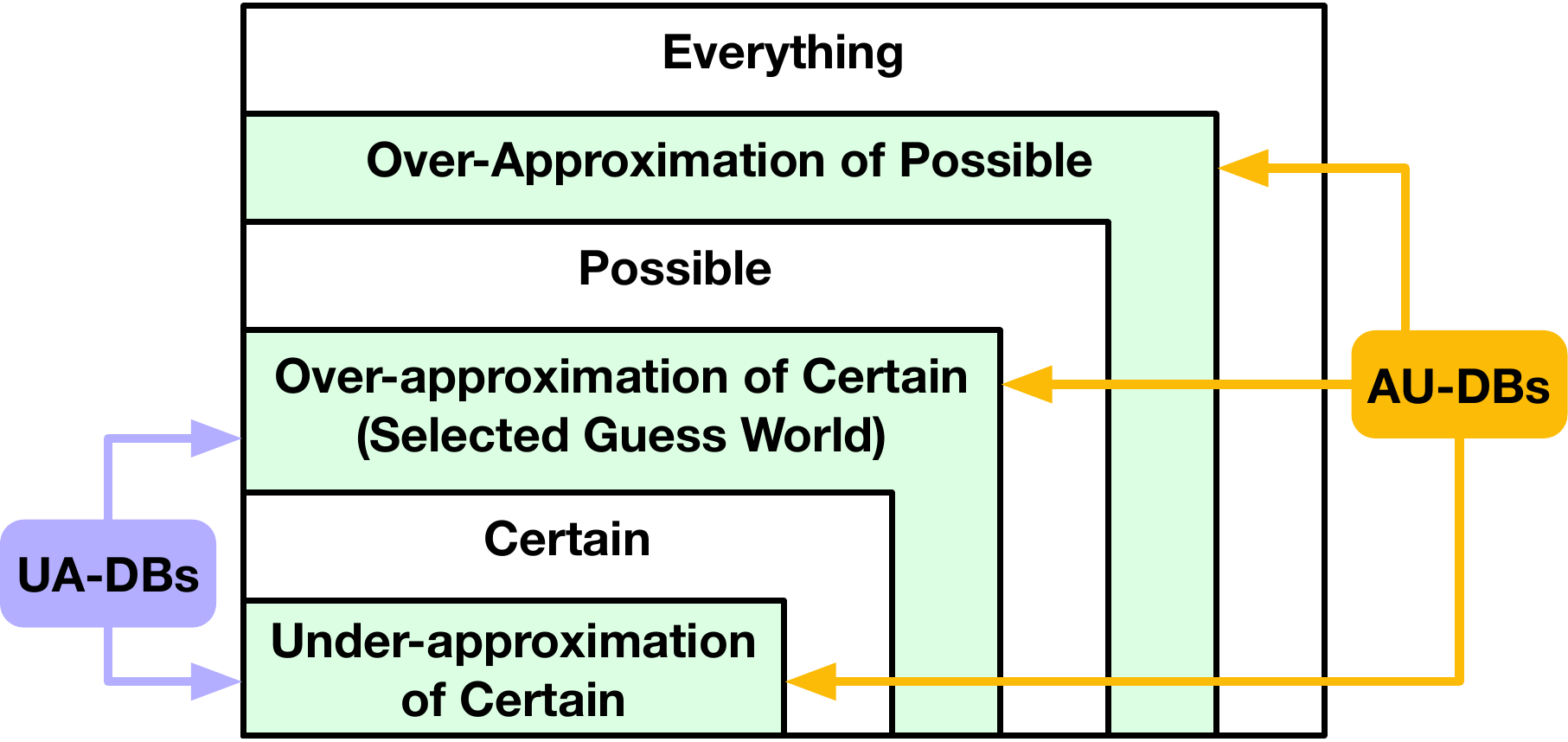}
  \caption{\abbrUAADBs sandwich certain answers between an under-approximation and the \abbrBGW and over-approximate possible answers. \label{fig:uadb-sandwiching}}
\end{minipage}
\end{figure*}
%%%%%%%%%%%%%%%%%%%%%%%%%%%%%%%%%%%%%%%%

% %%%%%%%%%%%%%%%%%%%%%%%%%%%%%%%%%%%%%%%%
% \begin{figure}[t]
%   \begin{minipage}{1.0\linewidth}
%   \centering
%   \includegraphics[width=0.7\linewidth]{uadb-sandwiching.pdf}
%   \caption{\abbrUAADBs sandwich certain answers between an under-approximation and the distinguished \abbrBGW while also encoding an over-approximation of possible answers. \label{fig:uadb-sandwiching}}
% \end{minipage}
% \end{figure}
% %%%%%%%%%%%%%%%%%%%%%%%%%%%%%%%%%%%%%%%%

%%%%%%%%%%%%%%%%%%%%%%%%%%%%%%
\begin{figure}[ht]
  \centering
  \vspace{-3mm}
  \includegraphics[width=6cm]{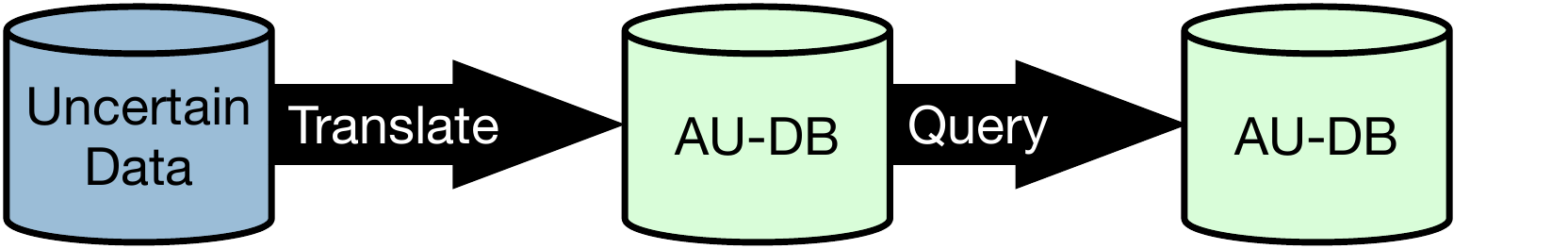}\\[-3mm]
  \caption{\label{fig:audb-ingest-and-querying} \abbrAUDBs are created from uncertain data, possibly represented using % existing
    incomplete or probabilistic data models. % \abbrAUDBs are closed under a large class of queries.
  }% ($\raAgg$).}
\end{figure}
%%%%%%%%%%%%%%%%%%%%%%%%%%%%%%

%%%%%%%%%%%%%%%%%%%%%%%%%%%%%%%%%%%%%%%%
\begin{Example}
\revb{
  % As a running example consider a use case where an analyst
Alice is tracking the spread of COVID-19 and wants to use data extracted from the web to compare infection rates in population centers of varying size.
\Cref{fig:running-example-xdb} (top) shows example (unreliable) input data.
Parts of this data are trustworthy, while other parts are ambiguous; $[v_1, \ldots, v_n]$ denotes an uncertain value (e.g., conflicting data sources) and $\thenull$ indicates that the value is completely unknown (i.e., any value from the attribute's domain could be correct). % Intuitively,
$\pdb$ encodes a set of possible worlds, each a deterministic database that represents one possible state of the real world.
%
%Oliver@Jun29: For now, independence doesn't matter.  We want to come back to it, but this is the intro example.
%Let us for now assume that uncertain values are independent of each other.\footnote{$\pdb$ can be represented using several well-renowned incomplete database models, e.g., C-tables~\cite{DBLP:journals/jacm/ImielinskiL84} with global conditions.}
Alice's ETL heuristics select (e.g., based on the relative trustworthiness of each source) one possible world $\db_{\abbrBG}$ (\Cref{fig:running-example-pos-worlds}) by selecting a deterministic value for each ambiguous input (e.g., an infection rate of 3\% for Los Angeles).
Alice next computes the average rate by locale size.
}
{\upshape
\begin{lstlisting}
SELECT size, avg(rate) AS rate
FROM locales GROUP BY size
\end{lstlisting}
}
\revb{
Querying $\db_{\abbrBG}$ % without context
may produce misleading results, (e.g., an 18\% average infection rate for cities).
Conversely, querying $\pdb$ % directly
using % the classical
\emph{certain answer semantics} produces no results at all.
Although there must exist a result tuple for metros, the uncertain infection rate of Los Angeles makes it impossible to compute one certain result tuple.
Furthermore, the data lacks a size for Sacramento, which can contribute to any result, rendering all \lstinline!rate! values uncertain, % certain answer exists
even for result tuples with otherwise perfect data. % like cities.
An alternative is the % less often used
\emph{possible answer semantics}, which enumerates all possible results.
% However, even though the domain of the \texttt{rate} attribute is bounded (i.e., to $[0\%, 100\%]$),
However, the number of possible results is inordinately large (e.g., \Cref{fig:running-example-xdb}, bottom).
With only integer percentages there are nearly 600 possible result tuples for towns alone.
Worse, enumerating either the (empty) certain or the (large)  possible results is expensive (\conphard/\nphard).
}
\end{Example}
%%%%%%%%%%%%%%%%%%%%%%%%%%%%%%%%%%%%%%%%

Neither certain answers nor possible answer semantics are meaningful for aggregation over uncertain data (e.g., see \cite{DBLP:conf/pods/ConsoleGLT20} for a deeper discussion), further encouraging the (mis-)use of \abbrBGQP.
One possible solution is to  develop a special query semantics for aggregation, either returning hard bounds on aggregate results (e.g., ~\cite{DBLP:journals/tcs/ArenasBCHRS03,DBLP:conf/pods/AfratiK08,DBLP:journals/tkde/MurthyIW11}), or computing expectations (e.g.,~\cite{DBLP:journals/tkde/MurthyIW11,5447879}) when probabilities are available.
Unfortunately, for such approaches, aggregate queries and non-aggregate queries return incompatible results, and thus the class of queries supported by these approaches is typically quite limited.
For example, most support only a single aggregation as the last operation of a query.
Worse, these approaches are often still computationally intractable.
Another class of solutions represents aggregation results symbolically (e.g., \cite{DBLP:journals/pvldb/FinkHO12,AD11d}).
Evaluating queries over symbolic representations is often tractable (\ptime), but the result may be hard to interpret for a human, and extracting tangible information (e.g., expectations) from % such
symbolic instances is again hard.
% In practise, symbolic representation
In summary, prior work on processing complex queries involving aggregation over incomplete (and probabilistic) databases
(i) only supports % quite
limited query types; (ii) is often expensive; (iii) and/or returns results that are hard to interpret.

We argue that for uncertain data management to be accepted by practitioners it has to be competitive with the \termBG approach in terms of (i) performance and (ii) the class of supported queries (e.g., aggregation).
In this work, we present \abbrAUDBs, an annotated data model that approximates an incomplete database by annotating one of its possible worlds.
As an extension of the recently proposed \abbrUADBs~\cite{FH19}, \abbrAUDBs generalize and subsume current standard practices (i.e., \abbrBGQP).
An \abbrAUDB is  built
on a % similarly
selected world, supplemented with two sets of annotations: lower and upper bounds on both attributes, and tuple  \ifnottechreport{multiplicities.}\iftechreport{ annotations (multiplicities in the case of bag semantics).}
Thus, each tuple in an \abbrAUDB may encode a set of tuples from each possible world, each with attribute values falling within the provided bounds.
In addition to being a strict generalization of \abbrBGQP, an \abbrAUDB relation also includes enough information to bound both the certain and possible answers as illustrated in \Cref{fig:uadb-sandwiching}.

%%%%%%%%%%%%%%%%%%%%%%%%%%%%%%%%%%%%%%%%
\begin{Example}
\revb{
  \Cref{fig:running-example-audb} shows an \abbrAUDB constructed from one possible world $\db_{\abbrBG}$ of $\pdb$.
  We refer to this world as the \emph{\termBGW} (\abbrBGW).
  Each uncertain attribute is replaced by a 3-tuple, consisting of a lower bound, the value of the attribute in the \abbrBGW, and an upper bound, respectively.
  Additionally, each tuple is annotated with a 3-tuple consisting of a lower bound on its multiplicity across all possible worlds, its multiplicity in the \abbrBGW, and an upper bound on its multiplicity.
  For instance, Los Angeles is known to have an infection rate between 3\% and 4\% with a guess (e.g., based on a typical ETL approach like giving priority to a trusted source) of 3\%.
  The query result is shown in \Cref{fig:running-example-audb}.
  The first row of the result indicates that there is exactly one record for metro areas (i.e., the upper and lower multiplicity bounds are both 1), with an average rate between 6\% and 12\% (with a selected guess of 8.5\%).
  Similarly, the second row of the result indicates that there might (i.e., lower-bound of 0) exist one record for cities with a rate between 7.33\% and 18\%.
  This is a strict generalization of how users presently interact with uncertain data, as ignoring everything but the middle element of each 3-tuple gets us the \abbrBGW.
  However, the \abbrAUDB also captures the data's uncertainty.
}
\end{Example}
%%%%%%%%%%%%%%%%%%%%%%%%%%%%%%%%%%%%%%%%

As we will demonstrate, \abbrAUDBs have several beneficial properties that make them a good fit for dealing with uncertain data:

\mypar{Efficiency}
Query evaluation over \abbrAUDBs is \ptime, and
by using novel optimizations that compact intermediate results to trade precision for performance,
 our approach scales to large datasets and complex queries.
While still slower than SGQP, AU-DBs are practical, significantly outperforming alternative uncertain data management systems, especially for queries involving aggregation.

\mypar{Query Expressiveness}
The under- and over-approximations encoded by an \abbrAUDB are preserved by queries from the full-relational algebra with \reva{multiple} aggregation\reva{s} ($\raAgg$).
Thus, \abbrAUDBs are closed under $\raAgg$, and are (to our knowledge) the first incomplete database approach to support complex, multi-aggregate queries.

\mypar{Compatibility} Like \abbrUADBs~\cite{FH19}, an \abbrAUDB can be constructed from many existing incomplete and probabilistic data models, including \abbrCtables~\cite{DBLP:journals/jacm/ImielinskiL84} or
\termTIs~\cite{suciu2011probabilistic}, making it possible to re-use existing approaches for exposing uncertainty in data (e.g., \cite{Beskales:2014:SRC:2581628.2581635,DBLP:conf/sigmod/RatnerBER17,Yang:2015:LOA:2824032.2824055,DBLP:conf/pods/KoutrisW18,GP17,DBLP:series/synthesis/2011Bertossi,DBLP:conf/pods/ArenasBC99}).
Moreover, although this paper focuses on bag semantics,
our model is defined for the same class of semiring-annotated databases~\cite{Green:2007:PS:1265530.1265535} as \abbrUADBs~\cite{FH19} which include, e.g.,  set semantics, security-annotations, and provenance. % For queries without aggregation  it is compatible with a wide range of query evaluation semantics (e.g., bag-, set-, security-, provenance-based query evaluation).

\mypar{Compactness}
As observed
elsewhere~\cite{GL17,GL16,L16a}, under\--ap\-prox\-i\-ma\-ting certain answers
for non-monotone queries (like aggregates) requires over-approximating possible answers. % This
% makes it extremely challenging to retain efficiency when supporting such
% queries.
A single \abbrAUDB tuple can encode a large number of tuples, and can
compactly approximate possible results. % of aggregation and other non-monotone operations.
% Attribute-level bounds enable us to encode an over-approximation of
% possible answers compactly.
This over-approximation % of possible answers
is interesting in its own right to deal with missing data in the spirit
of~\cite{sundarmurthy_et_al:LIPIcs:2017:7061,DBLP:conf/sigmod/LangNRN14,liang-20-frmdcanp}.

\mypar{Simplicity} % As shown in the example above,
\abbrAUDBs use simple bounds to convey uncertainty, as opposed to the more complex symbolic formulas of
\abbrMtables~\cite{sundarmurthy_et_al:LIPIcs:2017:7061} or tensors~\cite{AD11d}. \revb{Representing uncertainty as ranges has   been shown to lead to better decision-making~\cite{kumari:2016:qdb:communicating}.
 \abbrAUDBs can be integrated into uncertainty-aware user interfaces, e.g., Vizier~\cite{BB19,kumari:2016:qdb:communicating}.}

\section{Related Work}
\label{sec:related-work}

We build on prior research in % incomplete and probabilistic
uncertain databases, specifically, % Most closely related are
techniques for approximating certain answers and aggregation.

% \BG{
% topics that are related:
%   \textbullet Probabilistic Databases
% \textbullet Incomplete Databases
% \textbullet Approximations for Certain Answers
% \textbullet Attribute-level uncertainty
% \textbullet Aggregation over uncertain data
% \textbullet Spatial query processing
% }

\iftechreport{
\newcommand{\CmpApproach}[1]{\small #1}
\newcommand{\CmpSupported}{\textcolor{darkgreen}{\checkmark}}
\newcommand{\CmpUnsupported}{\textcolor{red}{\texttimes}}
\newcommand{\CmpSpecialAgg}[1]{\multicolumn{3}{c|}{\small $\vdash$\hspace*{-0.5mm}\textemdash\hspace*{-0.5mm}\textemdash \hspace{1mm} #1  \textemdash\hspace*{-0.5mm}\textemdash\hspace*{-0.5mm}$\dashv$}}
\newcommand{\CmpInputFDs}{\small FD}
\newcommand{\CmpInputDLLite}{\CmpInputFDs}
\newcommand{\CmpInputSTTGDs}{\small TGD}
\newcommand{\CmpInputCTable}{\small C-Tb}
\newcommand{\CmpInputXTable}{\small X-Tb}
\newcommand{\CmpInputTITable}{\small TI}
\newcommand{\CmpInputIA}{\small IA}
\newcommand{\CmpInputVTable}{\small V-Tb}
\newcommand{\CmpInputAny}{\small Any}
\newcommand{\CmpOutputGLBLUB}{\small GLB+LUB \xspace}
\newcommand{\CmpOutputGLB}{\small GLB only \xspace}
\newcommand{\CmpOutputSymbolic}{\small Symbolic \xspace}
\newcommand{\CmpOutputMoments}{\small Moments \xspace}
\newcommand{\CmpOutputSample}{\small Output Sample \xspace}
\newcommand{\CmpOutputSpecial}[1]{\small #1 \xspace}
\newcommand{\CmpNPHard}{\small NP-hard \xspace}
\newcommand{\CmpCoNPHard}{\small coNP-hard \xspace}
\newcommand{\CmpCoNPComplete}{\small coNP-complete \xspace}
\newcommand{\CmpPTime}{\small PTIME \xspace}
\newcommand{\CmpApprox}[1]{#1\small (approx)\xspace}
\newcommand{\CmpTimingSpecial}[1]{\small #1\xspace}

%%%%%%%%%%%%%%%%%%%%%%%%%%%%%%%%%%%%%%%%
\begin{figure*}[ht]
  \centering
  \bgroup
\def\arraystretch{0.9}%
  \begin{tabular}{l|c|c|c|c|c|c|c|c|c}
    \multirow{2}{*}{\thead{Approach}}
      & \multicolumn{3}{c|}{\thead{Aggregates}}
      & \multicolumn{3}{c|}{\thead{Features}}
      & \multirow{2}{*}{\thead{Input}}
      & \multirow{2}{*}{\thead{Output}}
      & \multirow{2}{*}{\thead{Complexity}} \\

    %%%%%%%%
      % Approach
      %%%%%%%%
      % Aggregates
      & \textbf{\footnotesize Sum/Cnt}
      & \textbf{\footnotesize Avg}
      & \textbf{\footnotesize Min/Max}
      %%%%%%%%
      % Features
      & \textbf{\footnotesize Chain}
      & \textbf{\footnotesize Having}
      & \textbf{\footnotesize Group}
      %%%%%%%%
      & %Input
      %%%%%%%%
      % Output
      %%%%%%%%
      & %Complexity
      \\ \hline

    \CmpApproach{Arenas et. al.~\cite{DBLP:journals/tcs/ArenasBCHRS03}}
      & \CmpSupported   & \CmpUnsupported & \CmpSupported
      & \CmpUnsupported & \CmpUnsupported & \CmpUnsupported
      & \CmpInputFDs
      & \CmpOutputGLBLUB
      & \CmpNPHard
    \\

    \CmpApproach{Fuxmann et. al.~\cite{FF05a}}
      & \CmpSupported   & \CmpUnsupported & \CmpSupported
      & \CmpUnsupported & \CmpUnsupported & \CmpSupported
      & \CmpInputFDs
      & \CmpOutputGLBLUB
      & \CmpCoNPHard / \CmpPTime
    \\

    \CmpApproach{Afrati et. al.~\cite{DBLP:conf/pods/AfratiK08}}
      & \CmpSupported   & \CmpSupported   & \CmpSupported
      & \CmpUnsupported & \CmpUnsupported & \CmpUnsupported
      & \CmpInputSTTGDs
      & \CmpOutputGLBLUB
      & \CmpNPHard / \CmpPTime
    \\
    %        & Single aggregation over FO query & \nphard / \ptime\footnote{For aggregations over conjunctive queries under restrictions of which solutions for the schema mapping are considered} & glb + lub & Data Exchange (st-tgds) \\

    \CmpApproach{Fink et. al.~\cite{DBLP:journals/pvldb/FinkHO12}}
      & \CmpSupported   & \CmpUnsupported & \CmpSupported
      & \CmpSupported   & \CmpSupported   & \CmpSupported
      & \CmpInputCTable
      & \CmpOutputSymbolic
      & \CmpNPHard
    \\
    %  & $\raAgg$ & \nphard & symbolic semimodule expression~\cite{AD11d} & Probabilistic DBs \\

    \CmpApproach{Murthy et. al.~\cite{DBLP:journals/tkde/MurthyIW11}}
      & \CmpSupported   & \CmpSupported   & \CmpSupported
      & \CmpUnsupported & \CmpUnsupported & \CmpUnsupported
      & \CmpInputXTable
      & \CmpOutputGLBLUB / \CmpOutputMoments
      & \CmpNPHard / \CmpPTime
    \\
    %  & Single aggregation (with GB) over & \nphard / \ptime & glb, lub, or expectation & Probabilistic DBs \\

    \CmpApproach{Abiteboul et. al. ~\cite{DBLP:conf/icdt/AbiteboulCKNS10}}
      & \CmpSupported   & \CmpSupported   & \CmpSupported
      & \CmpUnsupported & \CmpUnsupported & \CmpUnsupported
      & \CmpInputCTable$^1$
      & \CmpOutputGLBLUB / \CmpOutputMoments
      & \CmpNPHard
    \\
    %  & Single aggregation & \nphard & & \\

    \CmpApproach{Lechtenborger et. al.~\cite{DBLP:journals/jiis/LechtenborgerSV02}}
      & \CmpSupported   & \CmpUnsupported & \CmpUnsupported
      & \CmpSupported   & \CmpSupported   & \CmpSupported
      & \CmpInputCTable
      & \CmpOutputSymbolic
      & \CmpNPHard
    \\
    %  & & exptime algorithm & c-table (with symbolic aggregation expressions) & Incomplete DBs\\

    \CmpApproach{Re et. al.~\cite{DBLP:journals/vldb/ReS09}}
      & \CmpSpecialAgg{\texttt{HAVING} only}
      & \CmpSupported   & \CmpSupported   & \CmpSupported
      & \CmpInputTITable
      & \CmpOutputMoments
      & \CmpNPHard / \CmpPTime
    \\
    %  & HAVING queries & \nphard & Probabilistic DB & Probabilistic DBs \\

    \CmpApproach{Soliman et. al.~\cite{DBLP:journals/tods/SolimanIC08}}
      & \CmpSpecialAgg{\texttt{TOP-K} only}
      & \CmpUnsupported & \CmpUnsupported & \CmpSupported
      & \CmpInputCTable
      & \CmpOutputMoments
      & \CmpNPHard / \CmpPTime
    \\

    \CmpApproach{Chen et. al.~\cite{CC96}}
      & \CmpSupported   & \CmpSupported   & \CmpSupported
      & \CmpUnsupported & \CmpUnsupported & \CmpSupported
      & \CmpInputXTable
      & \CmpOutputGLBLUB
      & \CmpNPHard / \CmpPTime
    \\

    \CmpApproach{Jayram et. al.~\cite{DBLP:conf/soda/JayramKV07}}
      & \CmpSupported   & \CmpSupported   & \CmpSupported
      & \CmpUnsupported & \CmpUnsupported & \CmpUnsupported
      & \CmpInputXTable
      & \CmpOutputMoments
      & \CmpApprox \CmpPTime
    \\

    \CmpApproach{Burdick et. al.~\cite{DBLP:journals/vldb/BurdickDJRV07}}
      & \CmpSupported   & \CmpSupported   & \CmpUnsupported
      & \CmpUnsupported & \CmpUnsupported & \CmpSupported
      & \CmpInputXTable
      & \CmpOutputMoments
      & \CmpApprox \CmpPTime
    \\

    \CmpApproach{Calvanese et. al~\cite{DBLP:conf/cikm/CalvaneseKNT08}}
      & \CmpSupported   & \CmpUnsupported & \CmpUnsupported
      & \CmpUnsupported & \CmpUnsupported & \CmpUnsupported
      & \CmpInputDLLite
      & \CmpOutputGLB
      & \CmpNPHard %algorithm
    \\

    \CmpApproach{Kostylev et. al.~\cite{DBLP:conf/aaai/KostylevR13}}
      & \CmpSpecialAgg{\texttt{COUNT} / \texttt{DISTINCT}}
      & \CmpUnsupported & \CmpUnsupported & \CmpUnsupported
      & \CmpInputDLLite
      & \CmpOutputGLB
      & \CmpCoNPComplete %algorithm
    \\

    \CmpApproach{Yang et. al.~\cite{DBLP:conf/sigmod/YangWCK11}}
      & \CmpSpecialAgg{Agg Constraint only}
      & \CmpUnsupported & \CmpUnsupported & \CmpUnsupported
      & \CmpInputXTable
      & \CmpOutputSpecial{Sample of Input}
      & \CmpCoNPComplete %algorithm
    \\

    \CmpApproach{Jampani et. al.~\cite{jampani2008mcdb}}
      & \CmpSpecialAgg{No restrictions}
      & \CmpSupported & \CmpSupported & \CmpSupported
      & \CmpInputVTable
      & \CmpOutputSample
      & \CmpApprox \CmpPTime %algorithm
    \\

    \CmpApproach{Kennedy et. al.~\cite{5447879}}
      & \CmpSupported   & \CmpSupported   & \CmpSupported
      & \CmpUnsupported   & \CmpSupported   & \CmpUnsupported
      & \CmpInputCTable
      & \CmpOutputSample
      & \CmpApprox \CmpPTime %algorithm
    \\

    \CmpApproach{Lang et. al.~\cite{DBLP:conf/sigmod/LangNRN14}}
      & \CmpSupported   & \CmpSupported   & \CmpSupported
      & \CmpSupported   & \CmpSupported   & \CmpSupported
      & \CmpInputIA
      & \CmpInputIA
      & \CmpTimingSpecial{Data-Indep.}/\CmpPTime %algorithm
    \\

    \CmpApproach{Sismanis et al.~\cite{sismanis-09-rawqanbin}}
      & \CmpSupported   & \CmpSupported   & \CmpSupported
      & \CmpUnsupported   & \CmpUnsupported  & \CmpSupported
      & \CmpInputXTable$^2$
      & \CmpOutputGLBLUB
      & \CmpApprox \CmpPTime %algorithm
    \\
    \hline

    \CmpApproach{\textbf{This Paper}}
      & \CmpSupported   & \CmpSupported   & \CmpSupported
      & \CmpSupported   & \CmpSupported   & \CmpSupported
      & \CmpInputAny
      & \CmpOutputGLBLUB
      & \CmpApprox \CmpPTime %algorithm
    \\

  \end{tabular}
\egroup
  \\[-2mm]

  \caption{\label{fig:comparison-uncertain-aggregation} Comparison of approaches for  aggregation over uncertain data.  Features include the ability to \emph{chain} aggregates, support \emph{having} predicates, and support \emph{group}ing. % of uncertain attributes.
    FD: Functional Dependency Repair, TGD: Source-Target tgds, C-Tb: C-Table, X-Tb: X-Table (aka Block-Independent), TI: Tuple-Independent, V-Tb: V-Table, IA: Tables (or horizontal partitions) are annotated to indicate whether (i) their attribute values are correct, whether they may not contain all certain tuples, and whether they contain tuples that are not certain, GLB only: under-approximates certain answers (a lower bound on every possible world), GLB+LUB: under-approximates certain answers and over-approximates possible answers (an upper bound on every possible world). $^1$: Probabilistic XML analogous to C-Tables. $^2$: the input is an entity resolution problem that can be represented as using X-Tables.}
  %1: Probabilistic XML analogous top C-Tables
\end{figure*}
%%%%%%%%%%%%%%%%%%%%%%%%%%%%%%%%%%%%%%%%
}

%%%%%%%%%%%%%%%%%%%%%%%%%%%%%%%%%%%%%%%%%%%%%%%%%%%%%%%%%%%%%%%%%%%%%%%%%%%%%%%%
% \subsection{Approximations of Certain Answers}
% \label{sec:appr-cert-answ}

%%%%%%%%%%%%%%%%%%%%%%%%%%%%%%%%%%%%%%%%
\mypar{Approximations of Certain Answers}
Queries over incomplete databases typically use  certain answer semantics~\cite{DBLP:journals/jacm/ImielinskiL84,AK91,L16a,GL16,GL17} first defined in~\cite{L79a}. % analogously to our approach. %, but using minima instead of GLBs.
Computing certain answers is \conpcomplete~\cite{AK91,DBLP:journals/jacm/ImielinskiL84} (data complexity) for relational algebra. % even for restricted data models such as Codd-tables. This hardness result even holds for conjunctive queries over more complex uncertain data models (e.g., OR-databases~\cite{DBLP:journals/jcss/ImielinskiMV95}).
Several techniques for computing an under-approximation (subset) of certain answers
% Thus, it is not surprising that approaches for approximating the set of certain answers
have been proposed.
Reiter~\cite{R86} proposed a \ptime algorithm % that returns a subset of the certain answers (c-sound)
for positive existential queries. % (and a limited form of universal queries).
Guagliardo and Libkin~\cite{GL17,L16a,GL16} proposed a scheme % query semantics that preserves c-soundness
for full relational algebra for Codd- and \abbrVtables, and also studied bag semantics~\cite{CG19,GL17}.
% Then, \cite{GL17}~defined certain and possible multiplicities for bag semantics, and presented initial thoughts on how to extend~\cite{GL16} for bag semantics.
Feng et. al.~\cite{FH19} generalized this approach to new query semantics through Green et. al.'s $\semK$-relations~\cite{Green:2007:PS:1265530.1265535}.
m-tables~\cite{sundarmurthy_et_al:LIPIcs:2017:7061} compactly encode of large amounts of possible tuples, allowing for efficient query evaluation. However, this requires complex symbolic expressions which necessitate schemes for approximating certain answers. % to tame the computational complexity of queries.
% introduced , which can represent not just uncertainty, but also model information about missing tuples, and introduced the terms c-soundness/-correctness.
Consistent query
answering (CQA)~\cite{DBLP:series/synthesis/2011Bertossi,DBLP:conf/pods/ArenasBC99}
computes the certain answers to queries over all possible repairs of a database
that violates a set of constraints.  Variants of this problem
have been studied extensively
(e.g.,~\cite{DBLP:journals/ipl/KolaitisP12,DBLP:conf/pods/CaliLR03,DBLP:conf/pods/KoutrisW18})
and several combinations of classes of constraints and queries % have been
% identified that
permit first-order
rewritings~\cite{FM05,GP17,DBLP:journals/tods/Wijsen12,DBLP:conf/pods/Wijsen10}.
Geerts et. al.~\cite{GP17} study first-order under-approximations of certain
answers in the context of CQA.
Notably, \abbrUAADBs build on the approach of~\cite{FH19} (i.e., a selected guess and lower bounds), adding an upper bound on possible answers (e.g., as in~\cite{GL17}) to support aggregations, and bound attribute-level uncertainty with ranges instead of nulls.

%%%%%%%%%%%%%%%%%%%%%%%%%%%%%%%%%%%%%%%%%%%%%%%%%%%%%%%%%%%%%%%%%%%%%%%%%%%%%%%%
% \subsection{Aggregation over Incomplete and Probabilistics Databases}
% \label{sec:aggr-over-incompl}

%%%%%%%%%%%%%%%%%%%%%%%%%%%%%%%%%%%%%%%%
\mypar{Aggregation in Incomplete/Probabilistic Databases}
While aggregation of uncertain data has been studied extensively (see\ifnottechreport{~\cite{techreport}}\iftechreport{ \Cref{fig:comparison-uncertain-aggregation}} for a comparison of approaches), general solutions remain an open problem~\cite{DBLP:conf/pods/ConsoleGLT20}.
A key challenge lies in defining a meaningful semantics, as aggregates over uncertain data frequently produce empty certain answers~\cite{DBLP:conf/cikm/CalvaneseKNT08}.
An alternative semantics adopted for CQA and ontologies~\revc{\cite{DBLP:journals/tcs/ArenasBCHRS03,DBLP:conf/cikm/CalvaneseKNT08,FF05a,DBLP:conf/pods/AfratiK08,sismanis-09-rawqanbin}} returns per-attribute bounds over all possible results~\cite{DBLP:journals/tcs/ArenasBCHRS03} instead of a single \emph{certain} answer.
% Early work on such summation over functional dependency repairs~\cite{DBLP:journals/tcs/ArenasBCHRS03} was later generalized to wider ranges of queries~\cite{FF05a}, as well as more complex data models~\cite{DBLP:conf/pods/AfratiK08,DBLP:conf/cikm/CalvaneseKNT08}.
In contrast to prior work, we use bounds as a fundamental building block of our data model. % \abbrAUDBs.
% Our own approach borrows the idea that bounds are more sensible than certain answers, but uses bounds as the basis for the entire data model to allow the output of one aggregate to be used as input for arbitrary queries.
Because of the complexity of aggregating uncertain data, most approaches focus on identifying tractable cases and producing statistical moments or other lossy representations~\cite{5447879,DBLP:journals/tkde/MurthyIW11,DBLP:conf/icdt/AbiteboulCKNS10,DBLP:journals/tods/SolimanIC08,CC96,DBLP:conf/soda/JayramKV07,DBLP:journals/vldb/BurdickDJRV07,DBLP:conf/sigmod/YangWCK11}.
Even this simplified approach is expensive (often NP-hard, depending on the query class), and requires approximation.
Statistical moments like expectation may be meaningful as final query answers, but are less useful if the result is to be subsequently queried (e.g., \lstinline!HAVING! queries~\cite{DBLP:journals/vldb/ReS09}).

Efforts to create a lossless symbolic encoding closed under aggregation~\cite{DBLP:journals/pvldb/FinkHO12,DBLP:journals/jiis/LechtenborgerSV02} exist, supporting complex multi-aggregate queries and a wide range of statistics (e.g, bounds, samples, or expectations).
However, even factorizable encodings like aggregate semimodules~\cite{AD11d} usually scale in the size of the aggregate input and not the far smaller aggregate output, making these schemes impractical.
\abbrAUDBs are also closed under aggregation, but replace lossless encodings of aggregate outputs with lossy, but compact \emph{bounds}.

A third approach, exemplified by MCDB~\cite{jampani2008mcdb} queries sampled possible worlds.
In principle, this approach supports arbitrary queries, but is significantly slower than \abbrBGQP~\cite{FH19}, only works when probabilities are available, and only supports statistical measures that can be derived from samples (i.e., moments and epsilon-delta bounds).

A similarly general approach~\cite{DBLP:conf/sigmod/LangNRN14,sundarmurthy_et_al:LIPIcs:2017:7061} determines which parts of a query result over incomplete data are uncertain, and whether the result is an upper or lower bound. 
However, this approach tracks incompleteness coarsely (horizontal table partitions).
\abbrAUDBs are more general, combining both fine-grained uncertainty information (individual rows and attribute values) and coarse-grained information (one row in a \abbrAUDB may encode multiple tuples).

%%% Local Variables:
%%% mode: latex
%%% TeX-master: "uaadb"
%%% End:

%!TEX root = uaadb.tex
%%%%%%%%%%%%%%%%%%%%%%%%%%%%%%%%%%%%%%%%%%%%%%%%%%%%%%%%%%%%%%%%%%%%%%%%%%%%%%%%
\section{Notation and Background}\label{sec:background}

We now review $\semK$-relations, incomplete $\semK$-relations that generalize classical incomplete databases, and the \abbrUADBs model extended in this work.
A database schema $\schema = \{\schemaOf(\rel_1), \ldots, \schemaOf(\rel_n)\}$ is a set of relation schemas  $\schemaOf(\rel_i) = \tuple{A_1,\; \ldots,\; A_n}$. % is a relation name and a list of attribute names. % $A_1, \dots, A_n$.
The arity $\arity{\relschema}$ of % a relation schema
$\relschema$ is the number of attributes in $\relschema$.
\iftechreport{An  instance $\db$ for database schema $\schema$ is a set of relation instances with one relation for each relation schema in $\schema$: $\db=\{\rel_1, \dots, \rel_n\}$.}
Assume a universal domain of attribute values $\aDom$. A tuple with schema $\relschema$ is an element from $\aDom^{\arity{\relschema}}$.
\revm{We assume the existence of a total order over the elements of $\aDom$.}\footnote{
\revm{The order over $\aDom$ may be arbitrary, but range bounds are most useful when the order makes sense for the domain values (e.g., the ordinal scale of an ordinal attribute).}
}

% In this paper,
% We mainly consider bag semantics.
% A bag relation $\rel$ % with schema $\relschema$
% is a bag of tuples with schema $\relschema$.
% That is, for a set, $\rel \subseteq \aDom^{\arity{\relschema}}$.
% In either case, a tuple is an element from $\aDom^{\arity{\relschema}}$.
% We use $\tupDom$ to denote the set of all tuples over domain $\aDom$.
%\BG{ADD BACK IF WE ACTUALLY USE THIS SOMEWHERE: Given a relation $\rel$ with schema $\relschema$ we define $\tupDom(\rel)$, the set of tuples with values from $\aDom$ and schema $\relschema$. For a database $\db$, we define $\tupDom(\db)$ as the union of $\tupDom(\rel)$ for all relations $\rel$ in $\db$.   }

%!TEX root = uaadb.tex
\subsection{K-Relations} \label{sec:data_provenance}
The generalization of incomplete databases % , uncertainty labeling schemes, and UA-relations  are all
we use here is based on  \textbf{$\semK$-relations}~\cite{Green:2007:PS:1265530.1265535}.
In this framework, relations are annotated with elements from the domain $\kDom$ of a (commutative) semiring $\semK=\tuple{\kDom,\addK,\multK,\oneK,\zeroK}$, i.e., a mathematical structure with commutative and associative addition ($\addK$) and product ($\multK$) operations where $\addK$ distributes over $\multK$ and $k \multK \zeroK = \zeroK$ for all $k \in K$.
%   A commutative semiring is a structure $\semK=\tuple{\kDom,\addK,\multK,\oneK,\zeroK}$ with commutative and associative addition ($\addK$) and product ($\multK$) operations where $\addK$ distributes over $\multK$.
% \footnote{
%   A commutative semiring is a structure $\semK=\tuple{\kDom,\addK,\multK,\oneK,\zeroK}$ with commutative and associative addition ($\addK$) and product ($\multK$) operations where $\addK$ distributes over $\multK$.
% } $\semK$.
% which is required to form a mathematical structure which fulfills the algebraic laws of a commutative semiring.
% A structure  is a commutative semiring iff
% (1) addition and multiplication are associative and commutative,
% (2) $\zeroK$ is the neutral element of addition,
% (3) $\oneK$ is the neutral element of multiplication,
% (4) multiplication with $\zeroK$ yields $\zeroK$, and
% (5) multiplication distributes over addition.
% For simplicity, we will refer to commutative semirings as semirings.
% We use $\aDom$ to denote a universal domain.
An $n$-nary $\semK$-relation is a function that maps tuples % (elements from $\aDom^{n}$)
to elements from $\kDom$.
Tuples that are not in the relation are annotated with $\zeroK$. %, then this tuple is not in the relation.
%If $\aDom$ is infinite, then it is additionally required that
Only finitely many tuples may be mapped to an element other than $\zeroK$. % (i.e., relations must be finite).
Since $\semK$-relations are functions from tuples to annotations, it is customary to denote the annotation of a tuple $\tup$ in relation $\rel$ as $\rel(\tup)$. % (applying function $\rel$ to input $\tup$).
% The specific information encoded by an annotation % other than $\zeroK$
%  depends on the choice of semiring.
%%%%%%%%%%%%%%%%%%%%%%%%%%%%%%%%%%%%%%%%%
%\mypar{Encoding Sets and Bags}
%Green et al.~\cite{Green:2007:PS:1265530.1265535} demonstrated that
 \ifnottechreport{
   In this work we are interested in bag semantics, which can be encoded using the semiring of natural numbers with standard addition and multiplication $\tuple{\semN, +, \times, 0, 1}$ to annotate each tuple with its multiplicity. % For the remainder of this paper we limit the discussion to this semiring.
   We discuss the applicability of our framework to a larger class of semirings in an accompanying technical report~\cite{techreport}. % with the exception of a brief discussion of the applicability of our results to other semirings in \Cref{sec:conclusions}. %
}
\iftechreport{
The specific information encoded by an annotation % other than $\zeroK$
 depends on the choice of semiring.
  For instance, bag and set relations can be encoded as
 semirings: the natural numbers ($\semN$) with addition and multiplication, $\tuple{\semN, +, \times, 0, 1}$, annotates each tuple with its multiplicity; and boolean constants $\semB = \{T,F\}$ with disjunction and conjunction, $\tuple{\semB, \vee, \wedge, F, T}$, annotates each tuple with its set membership.
 Abusing notation, we often use $\semK$ to denote both the domain and the corresponding semiring.
 }

 \ifnottechreport{
 % \mypar{Query Semantics}
Operators of the positive relational algebra ($\raPlus$) over $\semN$-relations % are defined by
combine input annotations using % operations
$\addN$ and $\multN$. % operations of the semiring as follows:
\begin{align*}
	\textbf{Union: } & (R_1 \union R_2)(t) = R_1(t) \addN R_2(t) \\
	\textbf{Join: } & (R_1 \join R_2)(t) = R_1(t[{\schemaOf(\rel_1}) ]) \multN R_2(t[{\schemaOf(\rel_2}) ]) \\
  \textbf{Projection: } & (\projection_U (R))(t) = \sum_{t=t'[U]}R(t')\\
  \textbf{Selection: } & (\selection_\theta(R))(t) = R(t) \multN \theta(t)
\end{align*}
   }
\iftechreport{

\mypar{Query Semantics}
Operators of the positive relational algebra ($\raPlus$) over $\semK$-relations are defined by combining input annotations using operations $\addK$ and $\multK$. % operations of the semiring as follows:

\begin{align*}
	\textbf{Union: } & (R_1 \union R_2)(t) = R_1(t) \addK R_2(t) \\
	\textbf{Join: } & (R_1 \join R_2)(t) = R_1(t[\relschema_1 ]) \multK R_2(t[\relschema_2 ]) \\
  \textbf{Projection: } & (\projection_U (R))(t) = \sum_{t=t'[U]}R(t')\\
  \textbf{Selection: } & (\selection_\theta(R))(t) = R(t) \multK \theta(t)
\end{align*}
}
\iftechreport{
For simplicity we assume in the definition above that tuples are of a compatible schema (e.g., $\relschema_1$ for a union $R_1 \union R_2$).
We use $\theta(t)$ to denote a function that returns $\oneK$ iff $\theta$ evaluates to true over tuple $t$ and $\zeroK$ otherwise.
}
\iftechreport{
A \textbf{homomorphism} is a mapping $h: \semK \to \semK'$ from a semiring $\semK$ to a semiring $\semK'$ that maps $\zeroOf{\semK}$ and $\oneOf{\semK}$ to their counterparts in $\semK'$ and distributes over sum and product (e.g., $h(k \addK k') = h(k) \addOf{\semK'} h(k')$).
% for any $k, k' \in \kDom$:
% \begin{align*}
%   h(\zeroOf{\semK}) &= \zeroOf{\semK'} &  h(\oneOf{\semK}) &= \oneOf{\semK'}\\
%   h(k \addK k') &= h(k) \addOf{\semK'} h(k') &   h(k \multK k') &= h(k) \multOf{\semK'} h(k')
% \end{align*}
Any homomorphisms $h$ can be lifted from semirings to $\semK$-relations or $\semK$-databases  by applying $h$ to the annotation of every tuple $\tup$: $h(R)(t) = h(R(t))$. We will use the same symbol for a homomorphism and its lifted variants.
Importantly, queries commute with semiring homomorphisms: $h(\query(\db)) = \query(h(\db))$.
}
We will make use of the so called \textit{natural order} $\ordK$ for a semiring $\semK$
which is the  standard order $\leq$ of natural numbers for $\semN$.
% where $k \ordK k'$ if it is possible to obtain $k'$ by adding to $k$. % some value $k''$ to $k$.
% The natural order for $\semN$ is the standard order $<$ of natural numbers.
\iftechreport{
Formally, $k \ordK k'$ if it is possible to obtain $k'$ by adding to $k$: $\exists k'': k \addK k'' = k'$. % some value $k''$ to $k$.
Semirings for which the natural order is a partial order are called \textit{naturally ordered}~\cite{Geerts:2010bz}.
\begin{align}
\forall k, k' \in \kDom: \big(k \ordK k'\big) \Leftrightarrow \big(\exists k'' \in \kDom: k \addK k'' = k'\big)
\end{align}
}
\subsection{Incomplete K-Relations}
\label{sec:incompl-prob-k}

\ifnottechreport{
Instead of using the classical set-based definition of incomplete databases and certain answers, we apply the generalization from~\cite{FH19} called incomplete $\semK$-relations, specifically the $\semN$-relations that model bag-incomplete databases.
An incomplete $\semN$-database is a set of $\semN$-databases $\pdb = \{\db_1, \ldots, \db_n\}$  called possible worlds.
Queries over an incomplete $\semN$-database use
possible world semantics: The result of a query $\query$ over
an incomplete $\semN$-database $\pdb$ is the set of possible worlds derived
by evaluating $\query$ over every world in $\pdb$. % The only difference is that $\semK$-relational query semantics is used.
\begin{align*}\label{eq:pqp}
\query(\pdb) \defas \comprehension{\query(\db)}{\db \in \pdb} \tag{possible world semantics}
\end{align*}
}
\iftechreport{
%%%%%%%%%%%%%%%%%%%%%%%%%%%%%%%%%%%%%%%%
\begin{Definition} % [Incomplete $\semK$-database]
  Let $\semK$ be a semiring. An \emph{incomplete $\semK$-database} $\pdb$ is a set of $\semK$-databases $\pdb = \{\db_1, \ldots, \db_n\}$   called possible worlds.
\end{Definition}
%%%%%%%%%%%%%%%%%%%%%%%%%%%%%%%%%%%%%%%%

% Like classical incomplete databases,
Queries over an incomplete $\semK$-database use
possible world semantics, i.e., the result of evaluating a query $\query$ over
an incomplete $\semK$-database $\pdb$ is the set of all possible worlds derived
by evaluating $\query$ over every possible world $\db \in \pdb$. % The only difference is that $\semK$-relational query semantics is used.
\begin{align}\label{eq:pqp}
\query(\pdb) \defas \comprehension{\query(\db)}{\db \in \pdb}
\end{align}
}
\ifnottechreport{
Generalizing certain and possible answers,
%   For incomplete $\semN$-relations, the certain and possible multiplicities of tuples are used as a generalization of certain and possible answers. % For semiring $\semN$,
% The
the certain (possible)
multiplicity % $\pwCertainN(\pdb, \tup)$ ($\pwPossibleN(\pdb, \tup)$)
of a tuple $\tup$ in an incomplete $\semN$-database $\pdb$ is % defined as
the minimum (maximum) multiplicity of the tuple across all possible worlds:\\[-10mm]
\begin{center}
  \begin{align*}
  %  \pwCertain(\setK) &= \GlbK(\setK)
                         \pwCertainN(\pdb,\tup)  &\defas \min(\{\db(\tup) \mid \db \in \pdb \})\\
\pwPossibleN(\pdb,\tup) &\defas \max(\{\db(\tup) \mid \db \in \pdb \})
  \end{align*}
\end{center}
}
\iftechreport{
%(i.e., $\query(\pdb) = \{ \query(\db_1), \ldots, \query(\db_n) \}$).

%%%%%%%%%%%%%%%%%%%%%%%%%%%%%%%%%%%%%%%%%%%%%%%%%%
\subsubsection{Certain and Possible Annotations} %Answers}
\label{sec:poss-worlds-cert}

For incomplete $\semK$-relations, we define the certain and possible annotations of tuples as a generalization of certain and possible answers in classical incomplete databases. For these concepts to be well-defined we require that $\semK$ is an l-semiring~\cite{DBLP:conf/icdt/KostylevB12} which means that the natural order forms a lattice. Most commonly considered semirings (e.g., sets, bags, most provenance semirings, \ldots) are l-semirings.
The \emph{certain annotation} of a tuple, is the greatest lower bound (\emph{glb}) of its annotations across all possible world while the \emph{possible annotation} is the least upper bound (\emph{lub}) of these annotations.
We use $\glbK$ (glb) and $\lubK$ (lub)
to denote the $\glb$ and $\lub$
operations for a semiring $\semK$.
The certain (possible)
annotation $\pwCertain(\pdb, \tup)$ ($\pwPossible(\pdb, \tup)$) of a tuple $\tup$ in an incomplete $\semK$-database $\pdb$ is defined as the glb (lub) over the annotations of tuple $\tup$ across all possible worlds of $\pdb$: % of this set.
%\\[-8mm] % respectively.
%
\begin{center}
  \begin{align*}
  %  \pwCertain(\setK) &= \GlbK(\setK)
                         \pwCertain(\pdb,\tup)  &\defas \GlbK(\{\db(\tup) \mid \db \in \pdb \})\\
\pwPossible(\pdb,\tup) &\defas \lubK(\{\db(\tup) \mid \db \in \pdb \})
  \end{align*}
\end{center}
Importantly, this coincides with the standard definition of certain and possible answers for set semantics ($\semB$): the natural order of the set semiring $\semB$ is $\bfalse \ordOf{\semB} \btrue$, $k_1 \lubKof{\semB} k_2 = k_1 \vee k_2$,  % and $k_1
and $k_1 \glbKof{\semB} k_2 = k_1 \wedge k_2$. That is, a tuple is certain (has certain annotation $\btrue$) if it exists (is annotated with $\btrue$) in every possible world and possible if it exists in at least one possible world (is annotated with $true$ in one or more worlds). The natural order of $\semN$ is the standard order of natural numbers. We get  $\pwCertOf{\semN} = \min$ and $\pwPossOf{\semN} = \max$.
This coincides with % Guagliardo and Libkin's
the definition of certain and possible multiplicity for bag semantics from~\cite{GL16,CG19,DBLP:conf/pods/ConsoleGLT20}. % which defines
}

\subsection{UA-Databases} \label{sec:UA-model}
\ifnottechreport{
Using $\semK$-relations, Feng et al.~\cite{FH19} introduced \textit{\abbrUADBs} (\textit{uncertainty-annotated databases}) which, in the case of semiring $\semN$, encode a \revb{tuple level} under- and over-\-ap\-prox\-i\-ma\-tion of the certain multiplicity of tuples from an incomplete $\semN$-database $\pdb$. % We only discuss this model for semiring $\semN$ (bag semantics).
In a bag \abbrUADB (semiring $\semN$), every tuple is annotated with a pair $\uae{d}{c} \in \doubleDom{\semN}$ where $d$ is the tuple's multiplicity in a selected possible world $\bgdb \in \pdb$ i.e., $d = \bgdb(t)$ and $c$ is an under-approximation of the tuple's certain multiplicity, i.e., $c \leq \pwCertainN(\pdb,t) \leq d$.
The selected world $\bgdb$ is called the \termBGW (\abbrBGW).
Formally, these pairs $\uae{d}{c}$ are elements from a semiring $\uaK{\semN}$ which is the direct product of semiring $\semN$ with itself ($\semN^2$).
% To represent pairs $\uae{d}{c}$ from $\semK$ we define UA-semiring $\uaK{\semK}$ as a $\semK^2$-semiring, i.e., the direct product of a semiring $\semK$ with itself. Note that for any $\semK$, $\uaK{\semK}$ is a % commutative
% semiring, because, products of semirings are also semirings.
Operations in the product semiring $\doubleN = \tuple{\doubleDom{\semN}, \addOf{\doubleN}, \multOf{\doubleN}, \zeroOf{\doubleN}, \oneOf{\doubleN}}$ are defined pointwise, e.g., $[k_1, {k_1}'] \multOf{\doubleN} [k_2,{k_2}'] = [k_1 \multN k_2, {k_1}' \multN {k_2}']$.
}
\iftechreport{
  Using $\semK$-relations, Feng et al.~\cite{FH19} introduced \textit{\abbrUADBs} (\textit{uncertainty-annotated databases}) which encode an under- and an over-approximation of the certain annotation of tuples from an incomplete $\semK$-database $\pdb$. In the case of semiring $\semN$ this means that every tuple is annotated with  an under- and an over-\-ap\-prox\-i\-ma\-tion of its certain multiplicity. % We only discuss this model for semiring $\semN$ (bag semantics).
That is, in a bag \abbrUADB (semiring $\semN$), every tuple is annotated with a pair $\uae{d}{c} \in \doubleDom{\semN}$ where $d$ is the tuple's multiplicity in a selected possible world $\bgdb \in \pdb$ i.e., $d = \bgdb(t)$ and $c$ is an under-approximation of the tuple's certain multiplicity, i.e., $c \leq \pwCertain(\pdb,t) \leq d$.
The selected world $\bgdb$ is called the \termBGW (\abbrBGW).
Formally, these pairs $\uae{d}{c}$ are elements from a semiring $\uaK{\semK}$ which is the direct product of semiring $\semK$ with itself ($\semK^2$).
% To represent pairs $\uae{d}{c}$ from $\semK$ we define UA-semiring $\uaK{\semK}$ as a $\semK^2$-semiring, i.e., the direct product of a semiring $\semK$ with itself. Note that for any $\semK$, $\uaK{\semK}$ is a % commutative
% semiring, because, products of semirings are also semirings.
Operations in the product semiring $\aDoubleK = \tuple{\doubleDom{\semK}, \addOf{\aDoubleK}, \multOf{\aDoubleK}, \zeroOf{\aDoubleK}, \oneOf{\aDoubleK}}$ are defined pointwise, e.g., $[k_1, {k_1}'] \multOf{\aDoubleK} [k_2,{k_2}'] = [k_1 \multK k_2, {k_1}' \multK  p{k_2}']$.
%%%%%%%%%%%%%%%%%%%%%%%%%%%%%%%%%%%%%%%%
\begin{Definition}[UA-semiring]
  Let $\semK$ be a semiring. We define the corresponding UA-semiring
  $\uaK{\semK} \defas \semK^2$
% (\doubleDom{\kDom}, \uaAdd{\semK}, \uaMult{\semK}, \uaZero{\semK}, \uaOne{\semK})$$
%  where for any $\uae{u_1}{ c_1}, \uae{u_2}{ c_2} \in \doubleDom{\kDom}$:
% %
%  \begin{align*}
%     \uae{u_1}{c_1} \uaAdd{\semK} \uae{u_2}{c_2} &= \uae{u_1 \addK u_2}{ c_1 \addK c_2}\\
%     \uae{u_1}{c_1} \uaMult{\semK} \uae{u_2}{c_2} &= \uae{u_1 u_2 \addK u_1 c_2 \addK u_2 c_1}{c_1 c_2}\\
%     \uaZero{\semK} &= \uae{\zeroK}{\zeroK}\\
%     \uaOne{\semK} &= \uae{\zeroK}{ \oneK}
% \end{align*}
\end{Definition}
%%%%%%%%%%%%%%%%%%%%%%%%%%%%%%%%%%%%%%%%
}
\iftechreport{
\abbrUADBs are created from incomplete or probabilistic data sources by selecting a \abbrBGW $\bgdb$ and generating an under-approximation $\TUL$ of the certain annotation $\pwCertain$ of tuples. In the \abbrUADB, the annotation of each tuple $\tup$ is set to: % from this world:
  \begin{align*}
  \forall \tup: \TUL(\tup) \ordK \pwCertain(\pdb, \tup) \ordK \bgdb(\tup)
 \end{align*}
 }
% That is, we generate a possible world   of the incomplete $\semK$-database $\pdb$ and , an under-approximation
% of $\pwCertain$ and get:
% \begin{align}
%   \forall \tup: \TUL(\tup) \ordK \pwCertain(\pdb, \tup) \ordK \bgdb(\tup)
% \end{align}
 \ifnottechreport{
   \abbrUADBs are created from an incomplete or probabilistic data source by selecting a \abbrBGW $\bgdb$ and generating an under-approximation $\TUL$ of the certain multiplicty $\pwCertainN$ of tuples. In the \abbrUADB, the annotation of each tuple $\tup$ is set to: % from this world:
$
\uadb(\tup) \defas \uae{\bgdb(\tup)}{\TUL(\tup)}
$.
}
\iftechreport{
$$
\uadb(\tup) \defas \uae{\bgdb(\tup)}{\TUL(\tup)}
$$
}
\abbrUADBs constructed in this fashion are said to \textit{bound} $\pdb$ through $\TUL$ and $\bgdb$.
Feng et al.~\cite{FH19} discussed how to create \abbrUADBs that bound \abbrCtables, \abbrVtables, and \abbrXDBs.
\cite[Theorem 1]{FH19} shows that standard $\semN^2$-relational query semantics preserves bounds under $\raPlus$ queries, i.e., if the input bounds an incomplete $\semN$-database $\pdb$, then the result bounds $\query(\pdb)$.
% For a \abbrUADB $\uadb$ constructed in this fashion we say that $\uadb$ \textit{approximates} $\pdb$ by encoding $(\TUL, \db)$.
% We have shown~\cite[Theorem 1]{FH18} that standard $\semN$-relational query semantics preserves bounds under $\raPlus$ queries.
\iftechreport{
Formally, let $\uadb$ be a \abbrUADB created from a pair $(\TUL, \db)$  that approximates an incomplete $\semK$-database $\pdb$. Then for any $\raPlus$ query $\query$, we have that $\query(\uadb)$ approximation $\query(\pdb)$ by encoding $(\query(\db), \query(\TUL))$. \iftechreport{Importantly, this means that \abbrUADBs are closed under $\raPlus$ queries.}
}
\BG{even though it is nice to have the example here, I moved it to the techreport to save space.}
\iftechreport{
%%%%%%%%%%%%%%%%%%%%%%%%%%%%%%%%%%%%%%%%
\begin{Example}\label{example:incomplete-k-relations}
  Consider the incomplete $\semN$-database  $\pdb$ (bag semantics) with two possible worlds % containing a relation \textbf{states} as
  shown below.
  Using semiring $\semN$ each tuple in
  a possible world is annotated with its multiplicity (the number of copies
  of the tuple that exist in the possible world).
%All tuples not shown in the tables are assumed to be annotated with zero.
    % Note that we apply the % previously introduced
    % convention that all tuples not shown in the tables are assumed to be annotated with zero, i.e., are not part of the relation.
%
  % Arguably, tuples \texttt{(Lasalle, NY)} and \texttt{(Tucson, AZ)}
  % are certain since they appear (multiplicity higher than $0$) in both possible
  % worlds while \texttt{(Greenville, IN)} is not since it is not present (its
  % multiplicity is zero) in possible world $\db_1$
  % However, the boolean
  % interpretation of certainty in incomplete databases is not suited to $\semN$-relations (or $\semK$-relations in general) because it
  % ignores the annotations of tuples. In this particular example, tuple
  % \texttt{(Lasalle, NY)} appears with multiplicity $3$ in possible world $\db_1$
  % and multiplicity $2$ in possible world $\db_2$.
We also show an $\uaK{\semN}$-database that bounds $\pdb$ by encoding $\db_2$ and the certain multiplicities of tuples ($\TUL$ is exact in this example).
For example, tuple $(IL)$ is annotated with $[2,3]$ since this tuple appears thrice in $\db_2$ and \emph{at least} twice in every possible world, i.e., its certain annotation is
$\pwCertOf{\semN}(\{2,3\}) = min(2,3) = 2$.
Futhermore, consider the incomplete $\semB$-database (set semantics) shown below. Tuples $\tuple{IL}$ and $\tuple{AZ}$ in both possible worlds and, thus are certain (annotated with $[\btrue,\btrue]$). Tuple $\tuple{IN}$ only exists in $\db_2$. Thus, this tuple is not certain, but it is possible (annotated with $[\bfalse,\btrue]$).
\vspace*{-1mm}
\begin{center}
{\upshape
  \begin{minipage}{0.53\linewidth}
    \centering
    \textbf{Incomplete $\semN$-Database}
%%%%%%%%%%%%%%%%%%%%%%%%%%%%%%%%%%%%%%%%
    \begin{minipage}{0.45\linewidth}
    \centering
    $\db_1$                                     \\
  \begin{tabular}{ c c }
    % \cline{1-3}
     \textbf{state}     & \underline{$\semN$}   \\
    \cline{1-1}
        IL              & \textbf{2}            \\
        AZ              & \textbf{2}            \\
  \end{tabular}
  \end{minipage}
   \begin{minipage}{0.45\linewidth}
     \centering
     $\db_2$                                    \\
       \begin{tabular}{ c c }
    % \cline{1-3}
     \textbf{state}     & \underline{$\semN$}   \\
         \cline{1-1}
 IL                     & \textbf{3}            \\
 AZ                     & \textbf{1}            \\
 IN                     & \textbf{5}            \\
  \end{tabular}
\end{minipage}
\end{minipage}
%%%%%%%%%%%%%%%%%%%%%%%%%%%%%%%%%%%%%%%%
\begin{minipage}{0.4\linewidth}
  \begin{center}
    \textbf{$\uaK{\semB}$-Database}             \\[1mm]
  \begin{tabular}{ c c }
    % \cline{1-3}
         \textbf{state} & \underline{$\semN^2$} \\
    \cline{1-1}         &                       \\[-3.1mm]
 IL                     & \textbf{[2,3]}        \\
 AZ                     & \textbf{[1,1]}        \\
 IN                     & \textbf{[0,5]}        \\
  \end{tabular}
  \end{center}
\end{minipage}
}
\end{center}
\begin{center}
{\upshape
  \begin{minipage}{0.53\linewidth}
    \centering
    \textbf{Incomplete $\semB$-Database}
%%%%%%%%%%%%%%%%%%%%%%%%%%%%%%%%%%%%%%%%
    \begin{minipage}{0.45\linewidth}
    \centering
    $\db_1$                                     \\
  \begin{tabular}{ c c }
    % \cline{1-3}
     \textbf{state}     & \underline{$\semB$}   \\
    \cline{1-1}
        IL              & $\btrue$            \\
        AZ              & $\btrue$            \\
  \end{tabular}
  \end{minipage}
   \begin{minipage}{0.45\linewidth}
     \centering
     $\db_2$                                    \\
       \begin{tabular}{ c c }
    % \cline{1-3}
     \textbf{state}     & \underline{$\semB$}   \\
         \cline{1-1}
 IL                     & $\btrue$            \\
 AZ                     & $\btrue$            \\
 IN                     & $\btrue$            \\
  \end{tabular}
\end{minipage}
\end{minipage}
%%%%%%%%%%%%%%%%%%%%%%%%%%%%%%%%%%%%%%%%
\begin{minipage}{0.4\linewidth}
  \begin{center}
    \textbf{$\uaK{\semB}$-Database}             \\[1mm]
  \begin{tabular}{ c c }
    % \cline{1-3}
         \textbf{state} & \underline{$\dpK{\semB}{2}$} \\
    \cline{1-1}         &                       \\[-3.1mm]
 IL                     & \textbf{[$\btrue$,$\btrue$]}        \\
 AZ                     & \textbf{[$\btrue$,$\btrue$]}        \\
 IN                     & \textbf{[$\bfalse$,$\btrue$]}        \\
  \end{tabular}
\end{center}
\end{minipage}
}
\end{center}
\end{Example}
}

%%% Local Variables:
%%% mode: latex
%%% TeX-master: "uaadb"
%%% End:

%!TEX root = ./uaadb.tex

\section{Overview}

\revm{Query evaluation over \abbrUADBs is efficient (\ptime data complexity and experimental performance comparable to \abbrBGQP).}
However, \abbrUADBs may not be as precise \revm{and concise} as possible since uncertainty is only recorded at the tuple-level.
\revb{For example, the encoding of the town tuple in \Cref{fig:running-example-xdb} needs just shy of 600 uncertain tuples, one for each combination of possible values of the uncertain \lstinline{size} and \lstinline{rate} attributes.}
Additionally, \abbrUADB query semantics does not support non-monotone operations like aggregation \revc{and set difference}, as this requires an over-\-ap\-prox\-i\-ma\-tion of possible answers.
%For \abbrUADBs this would entail enumerating all possible tuples.  % of the input $\semN$-database.

We address both shortcomings in \abbrUAADBs through two changes relative to \abbrUADBs:
(i) Tuple annotations include an upper bound on the tuple's possible multiplicity; and
(ii) Attribute values become 3-tuples, with lower- and upper-bounds and a \termBG (\abbrBG) value.
These building blocks, range-annotated scalar expressions and $\uaaN$-relations, are formalized in \Cref{sec:expression,sec:uaa-range}, respectively.

Supporting both attribute-level and tuple-level uncertainty creates ambiguity in how tuples should be represented.
\revb{As noted above, the tuple for towns is certain (i.e., deterministically present) and has uncertain (i.e., multiple-possible values) attributes, but could also be expressed as 600 tuples with certain attribute values whose existence is uncertain.}
This ambiguity makes it challenging to define what it means for an \abbrAUDB to bound an incomplete database, a problem we resolve in \Cref{sec:uaadb-bounds} by defining \emph{tuple matchings} that relate tuples in an \abbrAUDB to those of a possible world.
An \abbrAUDB bounds an incomplete database if such a mapping exists for every possible world.
This ambiguity is also problematic for group-by aggregation, as aggregating a relation with uncertain group-by attribute values may admit multiple, equally viable output {\abbrUAARel}s.
We propose a specific grouping strategy in \Cref{sec:bound-pres-aggr} that mirrors \abbrBGW query evaluation, and show that it behaves as expected.

\reva{Uncertain attributes are defined by ranges, so equi-joins on such attributes degenerate to interval-overlap joins that may produce large results if many intervals overlap.
% preventing the use of traditional hash-based algorithms for joins.
To mitigate this bottleneck, \Cref{sec:joinOpt} proposes splitting join inputs into large, equi-joinable ``{\abbrBG}'' tables and small, interval-joinable ``possible'' tables.}

%%% Local Variables:
%%% mode: latex
%%% TeX-master: "tech_report"
%%% End:

%!TEX root = ./uaadb.tex
\section{Scalar Expressions}
\label{sec:expression}

%\BG{Only define $\leq$ and $=$ since $<$, $>$ and $\geq$ can then be defined using logical operators and these two}

%We now introduce a semantics for evaluating scalar expressions over a set of variables whose values are uncertain. %For that we first define expression syntax and semantics, then generalize to incomplete expressions, and finally introduce range-annotated values as a way to under- and over-approximate incomplete expressions. The techniques established in this section are then used to define \abbrUAADBs as relations whose values are range-annotated in \Cref{sec:uaa-range}.
Recall that $\dataDomain$ denotes a universal domain of values.
We assume that at least boolean values ($\bot$ and $\top$) are included in the domain.
Furthermore, let $\mathbb{V}$ denote a countable set of variables.
 \iftechreport{ %
%%%%%%%%%%%%%%%%%%%%%%%%%%%%%%%%%%%%%%%%
\begin{Definition}[Expression Syntax]\label{def:expr-syntax}
	For any variable $x \in \mathbb{V}$, $x$ is an expression and for any constant $c \in \dataDomain$, $c$ is an expression.
	If $\sexpr_1$, $\sexpr_2$ and $\sexpr_3$ are expressions, then \ldots
    \begin{align*}
      &\sexpr_1 \wedge \sexpr_2
      &&\sexpr_1 \vee \sexpr_2
      && \neg \sexpr_1
      &&\sexpr_1 = \sexpr_2
      &&\sexpr_1 \neq \sexpr_2
      &&\sexpr_1 \leq \sexpr_2
    \end{align*}
    \begin{align*}
      &&\sexpr_1 + \sexpr_2
      &&\sexpr_1 \cdot \sexpr_2
      &&\frac{1}{\sexpr_1}
      &&\ifte{\sexpr_1}{\sexpr_2}{\sexpr_3}\\
    \end{align*}
	are also expressions. Given an expression $\sexpr$, we denote the variables in $\sexpr$ by $\vars(\sexpr)$.  \end{Definition}
%%%%%%%%%%%%%%%%%%%%%%%%%%%%%%%%%%%%%%%%
   }
   \ifnottechreport{
%%%%%%%%%%%%%%%%%%%%%%%%%%%%%%%%%%%%%%%%
% \begin{Definition}[Expression Syntax]\label{def:expr-syntax}
	For any variable $x \in \mathbb{V}$, $x$ is an expression and for any constant $c \in \dataDomain$, $c$ is an expression.
	If $\sexpr_1$, $\sexpr_2$ and $\sexpr_3$ are expressions, then \ldots\\[-6mm]
    \begin{align*}
      &\sexpr_1 \wedge \sexpr_2
      &&\sexpr_1 \vee \sexpr_2
      && \neg \sexpr_1
      &&\sexpr_1 = \sexpr_2
      &&\sexpr_1 \neq \sexpr_2
      &&\sexpr_1 \leq \sexpr_2
    \end{align*}\\[-11mm]
    \begin{align*}
      &&\sexpr_1 + \sexpr_2
      &&\sexpr_1 \cdot \sexpr_2
      &&\ifte{\sexpr_1}{\sexpr_2}{\sexpr_3}
    \end{align*}\\[-4mm]
	are also expressions. Given an expression $\sexpr$, we denote the variables of $\sexpr$ by $\vars(\sexpr)$.  % \end{Definition}
%%%%%%%%%%%%%%%%%%%%%%%%%%%%%%%%%%%%%%%%
     }
We will also use $\neq$, $\geq$, $<$, $-$, and $>$ since these operators can be defined using the expression syntax above, e.g., $e_1 > e_2 = \neg\, (e_1 \leq e_2)$. \iftechreport{Assuming that $\dataDomain$ contains negative numbers, subtraction can be expressed using addition and multiplication.}
For an expression $\sexpr$, given a valuation $\sval$ that maps variables from $\vars(\sexpr)$ to constants from $\dataDomain$, the expression evaluates to a constant \ifnottechreport{$\seval{e}{\sval}$} from $\dataDomain$. \ifnottechreport{The semantics of these expressions are standard (see \cite{techreport} for explicit definitions).}
\iftechreport{The semantics of expression evaluation is defined below.
%
%%%%%%%%%%%%%%%%%%%%%%%%%%%%%%%%%%%%%%%%
\begin{Definition}[Expression Semantics]\label{def:expr-semantics}
	Let $\sexpr$ be an expression. Given a valuation  $\sval: \vars(e) \rightarrow \dataDomain$, the result of expression $\sexpr$ over $\sval$ is denoted as $\seval{\sexpr}{\sval}$. Note that $\seval{\frac{1}{\sexpr}}{\sval}$ is undefined if $\seval{\sexpr}{\sval} = 0$. The semantics of expression is defined as shown below:
    \begin{align*}
      \seval{x}{\sval}                                                                 & \defas \sval(x)
                                                                                       & \seval{c}{\sval}                                    & \defas c &
      \seval{\neg \sexpr_1}{\sval}                                                     & \defas \neg \seval{\sexpr_1}{\sval}
    \end{align*}
    \begin{align*}
     \seval{e_1 \wedge e_2}{\sval}                                                     & \defas \seval{e_1}{\sval} \wedge \seval{e_2}{\sval} &
    \seval{e_1 \vee e_2}{\sval}                                                        & \defas \seval{e_1}{\sval} \vee \seval{e_2}{\sval}                                                       \\
                                                          \seval{e_1 + e_2}{\sval}     & \defas \seval{e_1}{\sval} + \seval{e_2}{\sval}                                                          &
	 % \seval{e_1 - e_2}{\sval}                                                        & \defas \seval{e_1}{\sval} - \seval{e_2}{\sval}
                                                                                         \seval{e_1 \cdot e_2}{\sval} & \defas \seval{e_1}{\sval} \cdot \seval{e_2}{\sval}  \\
     \seval{\frac{1}{e_1}}{\sval} &\defas \frac{1}{\seval{e_1}{\sval}}\\
	 \seval{e_1 = e_2}{\sval}                                                          & \defas \seval{e_1}{\sval} = \seval{e_2}{\sval}                                                          &
                                                         %                             & \seval{e_1 > e_2}{\sval}                            & \defas \seval{e_1}{\sval} > \seval{e_2}{\sval}    \\
	 \seval{e_1 \leq e_2}{\sval}                                                       & \defas \seval{e_1}{\sval} \leq \seval{e_2}{\sval}
                                                         %                             & \seval{e_1 \neq e_2}{\sval}                         & \defas \seval{e_1}{\sval} \neq \seval{e_2}{\sval} \\
      \end{align*}
      \begin{align*}
      \seval{\ifte{\sexpr_1}{\sexpr_2}{\sexpr_3}}{\sval}                               & \defas
                                                                                         \begin{cases}
                                                                                           \seval{\sexpr_2}{\sval} &\mathtext{if}\; \seval{\sexpr_1}{\sval}\\
                                                                                           \seval{\sexpr_3}{\sval} &\mathtext{otherwise}
                                                                                         \end{cases}
% \ifte{\seval{\sexpr_1}{\sval}}{\seval{\sexpr_2}{\sval}}{\seval{\sexpr_3}{\sval}}
    \end{align*}
\end{Definition}
%%%%%%%%%%%%%%%%%%%%%%%%%%%%%%%%%%%%%%%%
}
%%%%%%%%%%%%%%%%%%%%%%%%%%%%%%%%%%%%%%%%%%%%%%%%%%%%%%%%%%%%%%%%%%%%%%%%%%%%%%%%
\subsection{Incomplete Expression Evaluation}\label{sec:expr-incomplete}

We now define evaluation of expressions over incomplete valuations, which are sets of valuations.
Each valuation in such a set, called a possible world, represents one possible input for the expression.
The semantics of expression evaluation are then defined using % the well-known
possible worlds semantics: the result of evaluating an expression $\sexpr$ over an incomplete valuation $\uval = \{ \sval_1, \ldots, \sval_n \}$
%%%%%%%%%%%%%%%%%%%%%%%%%%%%%%%%%%%%%%%%
\ifnottechreport{ denoted as $\seval{\sexpr}{\uval}$
is the set of results obtained by evaluating $\sexpr$ over each $\sval_i$ using deterministic expression evaluation semantics:
$$\seval{\sexpr}{\uval} \defas \{ \seval{\sexpr}{\sval} \mid \sval \in \uval \}$$
Consider an expression $\sexpr \defas x+y$ and an incomplete valuation $\uval = \{ (x=1,y=4), (x=2,y=4), (x=1,y=5) \}$.
We get $\seval{\sexpr}{\uval} = \{ 1+4, 2+4, 1+5 \} = \{5,6\}$. }
%%%%%%%%%%%%%%%%%%%%%%%%%%%%%%%%%%%%%%%%
\iftechreport{
is the set of results obtained by evaluating $\sexpr$ over each $\sval_i$ using the deterministic expression evaluation semantics defined above.
%%%%%%%%%%%%%%%%%%%%%%%%%%%%%%%%%%%%%%%%
\begin{Definition}[Incomplete Expression Semantics]\label{def:incomplete-expr-sem}
An incomplete valuation $\uval$ is a set $\{ \sval_1, \ldots, \sval_n \}$ where each $\sval_i$ is a valuation.
The result of evaluating an expression $\sexpr$ over $\uval$ denoted as $\seval{\sexpr}{\uval}$ is:
$$\seval{\sexpr}{\uval} \defas \{ \seval{\sexpr}{\sval} \mid \sval \in \uval \}$$
% An incomplete expression evaluation is a valuation with a set of possible variable to value bindings $vars(e) \rightarrow \{\dataDomain\}$ and evaluates to a set of possible values from the domain $\varphi(e) \rightarrow \{\dataDomain\}$
\end{Definition}
%%%%%%%%%%%%%%%%%%%%%%%%%%%%%%%%%%%%%%%%
%%%%%%%%%%%%%%%%%%%%%%%%%%%%%%%%%%%%%%%%
\begin{Example}
	Consider an expression $\sexpr \defas x+y$ and an incomplete valuation with possible bindings $\uval = \{ (x=1,y=4), (x=2,y=4), (x=1,y=5) \}$. Applying deterministic evaluation semantics for each of the three valuations from $\uval$ we get  $1+4=5$ ,$2+5=6$, and $1+5=6$. Thus, the possible outcomes of this expression under this valuation are: $\seval{\sexpr}{\uval} = \{5,6\}$.
\end{Example}
%%%%%%%%%%%%%%%%%%%%%%%%%%%%%%%%%%%%%%%%
}
%%%%%%%%%%%%%%%%%%%%%%%%%%%%%%%%%%%%%%%%

% The incomplete expression evaluation result can be obtained by performing conventional expression evaluating over each possible variable bindings and union their result.

%%%%%%%%%%%%%%%%%%%%%%%%%%%%%%%%%%%%%%%%%%%%%%%%%%%%%%%%%%%%%%%%%%%%%%%%%%%%%%%%
\ifnottechreport{
\subsection{Range-Annotated Domains}\label{sec:range-dom}
}
\iftechreport{
\subsection{Range-Annotated Domains}\label{sec:range-dom}
}
We now define \textit{range-annotated values}, which are domain values that are annotated with an interval that bounds the value from above and below. We assume an order $\leq$ for $\dataDomain$ preserved under addition.
\revb{For categorical values where no sensible order can be defined, we impose an arbitrary order.
Note that in the worst-case, we can just annotate a value with the range covering the whole domain to indicate that it is completely uncertain.}
% values can only be annotated by either certain or uncertain, we assume a random total order for the categorical domain and annotate uncertain attributes using the lowest and largest value from the domain as the lower and upper bound.
% \BG{@Su: adapt notation as in the previous sections}
We define an expression semantics for valuations that maps variables to range-annotated values and then prove that if the input bounds an incomplete valuation, then the range-annotated output produced by this semantics bounds the possible outcomes of the incomplete expression.
% \BG{Following may be possible to omit:
% These results for evaluating expressions under uncertainty will then be utilized afterwards to define relations whose values stem from the annotated domain and which are used to bound the possible worlds of an incomplete database from below and above.}

%%%%%%%%%%%%%%%%%%%%%%%%%%%%%%%%%%%%%%%%
\begin{Definition}%[Range annotated domain]
  \label{def:range-domain}
Let $\dataDomain$ be a domain and let $\leq$ denote a total order over its elements. Then the \emph{range-annotated domain} $\rangeDom$ is defined as:
$$\left\{ \uv{\lbMarker{c}}{\bgMarker{c}}{\ubMarker{c}} \mid \lbMarker{c}, \bgMarker{c}, \ubMarker{c} \in \dataDomain \wedge \lbMarker{c} \leq \bgMarker{c} \leq \ubMarker{c} \right\}$$

  % a function $\dataDomain \rightarrow I$ where $I=\dataDomain^2$ denotes the highest and lowest possible value of the element. For an element $x \in \rangeDom$, we use $\ubMarker{x}$ to denote the upper bound and $\lbMarker{x}$ to denote the lower bound of the element.
\end{Definition}
%%%%%%%%%%%%%%%%%%%%%%%%%%%%%%%%%%%%%%%%

A value $c = \uv{\lbMarker{c}}{\bgMarker{c}}{\ubMarker{c}}$ from $\rangeDom$ encodes a value $\bgMarker{c} \in \dataDomain$ and two values ($\lbMarker{c}$ and $\ubMarker{c}$) that bound $\bgMarker{c}$ from below and above.
% To simplify presentation, we alternatively write a range-annotated element $(c_1, c_2, c_3) \in \rangeDom$ as $\uv{c_1}{c_2}{c_3}$.
%Given $c = \uv{c_1}{c_2}{c_3} \in \rangeDom$, we write $\lbMarker{c}$ to denote $c_1$, $\ubMarker{c}$ to denote $c_3$ and $\bgMarker{c}$ to denote $c_2$.
We call a value $c \in \rangeDom$  \emph{certain} if $\lbMarker{c}=\bgMarker{c}=\ubMarker{c}$.
Observe, that the definition requires that for any $c \in \rangeDom$ we have $\lbMarker{c} \leq \bgMarker{c} \leq \ubMarker{c}$.
% We also set constrain to the bounds so that the value must be in the bound $\lbMarker{x} \leq x \leq \ubMarker{x}$.
\iftechreport{
%%%%%%%%%%%%%%%%%%%%%%%%%%%%%%%%%%%%%%%%
\begin{Example}
  For the boolean domain $\dataDomain=\{\bfalse,\btrue\}$ with order $\bfalse < \btrue$, the corresponding range annotated domain is:
  $$\rangeDom=\{\uv{\btrue}{\btrue}{\btrue},\uv{\bfalse}{\btrue}{\btrue},\uv{\bfalse}{\bfalse}{\btrue},\uv{\bfalse}{\bfalse}{\bfalse}\}$$
\end{Example}
%%%%%%%%%%%%%%%%%%%%%%%%%%%%%%%%%%%%%%%%
}
We use valuations that map the variables of an expression to elements from $\rangeDom$ % and use such valuations
to % encode
bound % on the possible values of
incomplete valuations.
%Instead of using a set of possible bindings to the variable from the original domain, we bind the uncertain variable to a single value from the interval annotated domain $\rangeDom$.

\iftechreport{
%%%%%%%%%%%%%%%%%%%%%%%%%%%%%%%%%%%%%%%%
\begin{Definition}[Range-annotated valuation]\label{def:range-val}
Let $\sexpr$ be an expression. A \emph{range-annotated valuation} $\rval$ for $\sexpr$ is a mapping $\vars(\sexpr) \to \rangeDom$.
  % For an uncertain variable with a set of possible bindings $x=\{v_1,v_2,...,v_n\}$. And a binding to the annotated domain $x=v^I$. We call the binding as a sound bound iff the bound fully covers all possible outcomes of the original bindings $\lbMarker{x} \leq min(\{v_1,v_2,...,v_n\}) \wedge \ubMarker{x} \geq max(\{v_1,v_2,...,v_n\})$.
\end{Definition}
%%%%%%%%%%%%%%%%%%%%%%%%%%%%%%%%%%%%%%%%
}
%%%%%%%%%%%%%%%%%%%%%%%%%%%%%%%%%%%%%%%%
\begin{Definition}% [Bounding valuations]
  \label{def:range-expr-bound}
  \ifnottechreport{A \emph{range-annotated valuation} $\rval$ for an expression $\sexpr$ is a mapping $\vars(\sexpr) \to \rangeDom$.}
  Given an incomplete valuation $\uval$ % for an expression $\sexpr$
  and  a range-annotated valuation $\rval$ for $\sexpr$,
  we say that $\rval$ bounds $\uval$ iff
  \begin{align*}
    \forall x \in \vars(\sexpr): \forall \sval \in \uval: \lbMarker{\rval(x)} \leq \sval(x) \leq \ubMarker{\rval(x)}\\
    \exists \sval \in \uval: \forall x \in \vars(\sexpr): \sval(x) = \bgMarker{\rval(x)}
  \end{align*}
  % For an uncertain variable with a set of possible bindings $x=\{v_1,v_2,...,v_n\}$. And a binding to the annotated domain $x=v^I$. We call the binding as a sound bound iff the bound fully covers all possible outcomes of the original bindings $\lbMarker{x} \leq min(\{v_1,v_2,...,v_n\}) \wedge \ubMarker{x} \geq max(\{v_1,v_2,...,v_n\})$.
\end{Definition}
%%%%%%%%%%%%%%%%%%%%%%%%%%%%%%%%%%%%%%%%

\ifnottechreport{
Consider the incomplete valuation $\uval = \{(x=1),(x=2),(x=3)\}$. The range-annotated valuation  $x=\uv{0}{2}{3}$ is a bound for $\uval$, while $x=\uv{0}{2}{2}$ is not a bound.
}
\iftechreport{
%%%%%%%%%%%%%%%%%%%%%%%%%%%%%%%%%%%%%%%%
\begin{Example}
Consider the incomplete valuation $\uval = \{(x=1),(x=2),(x=3)\}$. The range-annotated valuation  $x=\uv{0}{2}{3}$ is a bound for $\uval$, while $x=\uv{0}{2}{2}$ is not a bound.
  \end{Example}
%%%%%%%%%%%%%%%%%%%%%%%%%%%%%%%%%%%%%%%%
}
%%%%%%%%%%%%%%%%%%%%%%%%%%%%%%%%%%%%%%%%%%%%%%%%%%%%%%%%%%%%%%%%%%%%%%%%%%%%%%%%
\iftechreport{
\subsection{Range-annotated Expression Evaluation}\label{sec:range-expr-eval}
}
We now define a semantics for evaluating expressions over range-annotated valuations. We then demonstrate that this semantics preserves bounds.

%%%%%%%%%%%%%%%%%%%%%%%%%%%%%%%%%%%%%%%%
\begin{Definition}\iftechreport{[Range-annotated expression evaluation]}
  \label{def:range-expr-eval}
  Let $\sexpr$ be an expression. Given a range valuation $\rval: \vars(e) \rightarrow \rangeDom$,
we define % $\bgMarker{\rval}$ as the valuation
$\bgMarker{\rval}(x) \defas \bgMarker{\rval(x)}$.
  The result of expression $\sexpr$ over $\rval$ denoted as $\seval{\sexpr}{\rval}$ is defined as:
    \begin{align*}
      \seval{x}{\rval} &\defas \uv{\lbMarker{\rval(x)}}{\bgMarker{\rval(x)}}{\ubMarker{\rval(x)}} &
      \seval{c}{\rval} &\defas \uv{c}{c}{c}
    \end{align*}
    \ifnottechreport{
    For any of the following expressions we define $\bgMarker{\seval{e}{\rval}} \defas \seval{e}{\bgMarker{\rval}}$. Let $\seval{e_1}{\rval} = a$, $\seval{e_2}{\rval} = b$, and $\seval{e_3}{\rval} = c$. All expressions omitted below are defined point-wise (e.g., $\lbMarker{\seval{e_1 + e_2}{\rval}} \defas \lbMarker{a} + \lbMarker{b}$).
   \begin{align*}
%%%%%%%%%%%%%%%%%%%%
% 	 \lbMarker{\seval{e_1 \wedge e_2}{\rval}} &\defas \lbMarker{a} \wedge \lbMarker{b} &
% 	 \ubMarker{\seval{e_1 \wedge e_2}{\rval}} &\defas \ubMarker{a} \wedge \ubMarker{b}\\
% %%%%%%%%%%%%%%%%%%%%
%       \lbMarker{\seval{e_1 \vee e_2}{\rval}} &\defas \lbMarker{a} \vee \lbMarker{b} &
%       \ubMarker{\seval{e_1 \vee e_2}{\rval}} &\defas \ubMarker{a} \vee \ubMarker{b} \\
%%%%%%%%%%%%%%%%%%%%
      \lbMarker{\seval{\neg e_1}{\rval}} &\defas \neg\, \ubMarker{a} &
      \ubMarker{\seval{\neg e_1}{\rval}} &\defas \neg\, \lbMarker{a}
%%%%%%%%%%%%%%%%%%%%
      % \lbMarker{\seval{e_1 + e_2}{\rval}} &\defas \lbMarker{a} + \lbMarker{b} &
      % \ubMarker{\seval{e_1 + e_2}{\rval}} &\defas \ubMarker{a} + \ubMarker{b}
    \end{align*}\\[-10mm]
    \begin{align*}
%%%%%%%%%%%%%%%%%%%%
      \lbMarker{\seval{e_1 \cdot e_2}{\rval}}                                                                                         &\defas \min(\ubMarker{a} \cdot \ubMarker{b},\ubMarker{a} \cdot \lbMarker{b},\lbMarker{a} \cdot \ubMarker{b},\lbMarker{a} \cdot \lbMarker{b})\\
      \ubMarker{\seval{e_1 \cdot e_2}{\rval}}                                                                       &\defas \max(\ubMarker{a} \cdot \ubMarker{b},\ubMarker{a} \cdot \lbMarker{b},\lbMarker{a} \cdot \ubMarker{b},\lbMarker{a} \cdot \lbMarker{b})
     %%%%%%%%%%%%%%%%%%%%
    \end{align*}\\[-10mm]
    \begin{align*}
      \lbMarker{\seval{e_1 \leq e_2}{\rval}} &\defas \ubMarker{a} \leq \lbMarker{b} &
 \ubMarker{\seval{e_1 \leq e_2}{\rval}} &\defas \lbMarker{a} \leq \ubMarker{b} \\
 %%%%%%%%%%%%%%%%%%%%%%%%%%%%%%%%%%%%%%%%
\lbMarker{\seval{e_1=e_2}{\rval}} &\defas (\ubMarker{a}=\lbMarker{b} \wedge \ubMarker{b}=\lbMarker{a} ) &
                                                                                                 \ubMarker{\seval{e_1=e_2}{\rval}} &\defas \lbMarker{a} \leq \ubMarker{b} \wedge \lbMarker{b} \leq \ubMarker{a}
 % \ubMarker{a}>\lbMarker{b} \wedge \lbMarker{a}<\lbMarker{b} \vee \ubMarker{a}>\ubMarker{b} \wedge \lbMarker{a}<\ubMarker{b} \\
    \end{align*}\\[-10mm]
    \begin{align*}
\lbMarker{\seval{\ifte{e_1}{e_2}{e_3}}{\rval}} &\defas
 \begin{cases}
\lbMarker{b} & \text{if } \lbMarker{a}= \ubMarker{a}=\btrue \\
\lbMarker{c} & \text{if } \lbMarker{a}= \ubMarker{a}=\bfalse \\
 \min(\lbMarker{b},\lbMarker{c}) & \text{else}
\end{cases}\\
      \ubMarker{\seval{\ifte{e_1}{e_2}{e_3}}{\rval}} &\defas
 \begin{cases}
 \ubMarker{b} & \text{if } \lbMarker{a}= \ubMarker{a}=\btrue \\
\ubMarker{c} & \text{if } \lbMarker{a}= \ubMarker{a}=\bfalse \\
\max(\ubMarker{b},\ubMarker{c}) & \text{else}
\end{cases}
    \end{align*}
    }
%%%%%%%%%%%%%%%%%%%%%%%%%%%%%%%%%%%%%%%%%%%%%%%%%%%%%%%%%%%%%%%%%%%%%%%%%%%%%%%%
    \iftechreport{
Note that $\seval{\frac{1}{\sexpr}}{\rval}$ is undefined if $\lbMarker{\seval{\sexpr}{\rval}} \leq 0$ and $\ubMarker{\seval{\sexpr}{\rval}} \geq 0$, because then $\rval$ may bound a valuation $\sval$ where $\seval{\sexpr}{\sval} = 0$.
      For any of the following expressions we define $\bgMarker{\seval{e}{\rval}} \defas \seval{e}{\bgMarker{\rval}}$. Let $\seval{e_1}{\rval} = a$, $\seval{e_2}{\rval} = b$, and $\seval{e_3}{\rval} = c$. Then,
      \begin{align*}
%%%%%%%%%%%%%%%%%%%%
	 \lbMarker{\seval{e_1 \wedge e_2}{\rval}} &\defas \lbMarker{a} \wedge \lbMarker{b} &
	 \ubMarker{\seval{e_1 \wedge e_2}{\rval}} &\defas \ubMarker{a} \wedge \ubMarker{b}\\
%%%%%%%%%%%%%%%%%%%%
      \lbMarker{\seval{e_1 \vee e_2}{\rval}} &\defas \lbMarker{a} \vee \lbMarker{b} &
      \ubMarker{\seval{e_1 \vee e_2}{\rval}} &\defas \ubMarker{a} \vee \ubMarker{b} \\
%%%%%%%%%%%%%%%%%%%%
      \lbMarker{\seval{\neg e_1}{\rval}} &\defas \neg\, \ubMarker{a} &
      \ubMarker{\seval{\neg e_1}{\rval}} &\defas \neg\, \lbMarker{a} \\
%%%%%%%%%%%%%%%%%%%%
      \lbMarker{\seval{e_1 + e_2}{\rval}} &\defas \lbMarker{a} + \lbMarker{b} &
      \ubMarker{\seval{e_1 + e_2}{\rval}} &\defas \ubMarker{a} + \ubMarker{b}
    \end{align*}\\[-10mm]
    \begin{align*}
%%%%%%%%%%%%%%%%%%%%
      \lbMarker{\seval{e_1 \cdot e_2}{\rval}}                                                                                         &\defas \min(\ubMarker{a} \cdot \ubMarker{b},\ubMarker{a} \cdot \lbMarker{b},\lbMarker{a} \cdot \ubMarker{b},\lbMarker{a} \cdot \lbMarker{b})\\
      \ubMarker{\seval{e_1 \cdot e_2}{\rval}}                                                                       &\defas \max(\ubMarker{a} \cdot \ubMarker{b},\ubMarker{a} \cdot \lbMarker{b},\lbMarker{a} \cdot \ubMarker{b},\lbMarker{a} \cdot \lbMarker{b})\\
      %%%%%%%%%%%%%%%%%%%%
                                                                                                                     \lbMarker{\seval{\frac{1}{e_1}}{\rval}}                                                                                         &\defas \frac{1}{\ubMarker{a}})\\
      \ubMarker{\seval{\frac{1}{e_1}}{\rval}}                                                                       &\defas \frac{1}{\lbMarker{a}}
      %%%%%%%%%%%%%%%%%%%%
    \end{align*}
    \begin{align*}
      \lbMarker{\seval{a \leq b}{\rval}} &\defas \ubMarker{a} \leq \lbMarker{b} &
 \ubMarker{\seval{a \leq b}{\rval}} &\defas \lbMarker{a} \leq \ubMarker{b} \\
 %%%%%%%%%%%%%%%%%%%%%%%%%%%%%%%%%%%%%%%%
\lbMarker{\seval{a=b}{\rval}} &\defas (\ubMarker{a}=\lbMarker{b} \wedge \ubMarker{b}=\lbMarker{a} ) &
                                                                                                 \ubMarker{\seval{a=b}{\rval}} &\defas \lbMarker{a} \leq \ubMarker{b} \wedge \lbMarker{b} \leq \ubMarker{a}
 % \ubMarker{a}>\lbMarker{b} \wedge \lbMarker{a}<\lbMarker{b} \vee \ubMarker{a}>\ubMarker{b} \wedge \lbMarker{a}<\ubMarker{b} \\
    \end{align*}\\[-10mm]
    \begin{align*}
\lbMarker{\seval{\ifte{e_1}{e_2}{e_3}}{\rval}} &\defas
 \begin{cases}
\lbMarker{b} & \text{if } \lbMarker{a}= \ubMarker{a}=\btrue \\
\lbMarker{c} & \text{if } \lbMarker{a}= \ubMarker{a}=\bfalse \\
 \min(\lbMarker{b},\lbMarker{c}) & \text{else}
\end{cases}\\
      \ubMarker{\seval{\ifte{e_1}{e_2}{e_3}}{\rval}} &\defas
 \begin{cases}
 \ubMarker{b} & \text{if } \lbMarker{a}= \ubMarker{a}=\btrue \\
\ubMarker{c} & \text{if } \lbMarker{a}= \ubMarker{a}=\bfalse \\
\max(\ubMarker{b},\ubMarker{c}) & \text{else}
\end{cases}
    \end{align*}
      }
\end{Definition}
% Comparison and boolean operators are defined as:
% \begin{enumerate}

% \item \[\ubMarker{(a>b)} = \ubMarker{a}>\lbMarker{b}\] \\
% \[\lbMarker{(a>b)} = \lbMarker{a}>\ubMarker{b}\]

% \item \[\ubMarker{(a=b)} = \ubMarker{a}>\lbMarker{b} \wedge \lbMarker{a}<\lbMarker{b} \vee \ubMarker{a}>\ubMarker{b} \wedge \lbMarker{a}<\ubMarker{b}\] \\

% \[\lbMarker{(a=b)} = (\ubMarker{a}=\lbMarker{b} \wedge \ubMarker{b}=\lbMarker{a} )\] \\

% \item \[ \ubMarker{(p \wedge q)} = \ubMarker{p} \wedge \ubMarker{q} \]
% \[ \lbMarker{(p \wedge q)} = \lbMarker{p} \wedge \lbMarker{q} \]
% \item \[ \ubMarker{(p \vee q)} = \ubMarker{p} \vee \ubMarker{q} \]
% \[ \lbMarker{(p \vee q)} = \lbMarker{p} \vee \lbMarker{q} \]
% \end{enumerate}
% For arithmetic and conditional scalar expressions $a$ and $b$, we define expression evaluation in $\rangeDom$ as follows:
% \begin{enumerate}
% 	\item $ \ubMarker{(a+b)} = \ubMarker{a} + \ubMarker{b} $ \\
% 	$ \lbMarker{(a+b)} = \lbMarker{a} + \lbMarker{b} $
%     \item $ \ubMarker{(a-b)} = \ubMarker{a} - \lbMarker{b} $ \\
% 	$ \lbMarker{(a-b)} = \lbMarker{a} - \ubMarker{b} $
% 	\item $ \ubMarker{(a \cdot b)} = max(\ubMarker{a} \cdot \ubMarker{b},\ubMarker{a} \cdot \lbMarker{b},\lbMarker{a} \cdot \ubMarker{b},\lbMarker{a} \cdot \lbMarker{b}) $ \\
% 	$ \lbMarker{(a \cdot b)} = min(\ubMarker{a} \cdot \ubMarker{b},\ubMarker{a} \cdot \lbMarker{b},\lbMarker{a} \cdot \ubMarker{b},\lbMarker{a} \cdot \lbMarker{b}) $
% \]
% \end{enumerate}

\iftechreport{
%%%%%%%%%%%%%%%%%%%%%%%%%%%%%%%%%%%%%%%%%%%%%%%%%%%%%%%%%%%%%%%%%%%%%%%%%%%%%%%%
\subsection{Preservation of Bounds}\label{sec:range-expr-eval-preserves-bounds}
}

Assuming that an input range-annotated valuation bounds an incomplete valuation, we need to prove that the output of range-annotated expression evaluation also  bounds the possible outcomes. % produced for the incomplete valuation.

%%%%%%%%%%%%%%%%%%%%%%%%%%%%%%%%%%%%%%%%
\begin{Definition}% [Bounding sets of values]
  \label{def:bounding-vals}
  A value $c \in \rangeDom$ % is said to
  bounds a set of values $S \subseteq \dataDomain$ if: %  = \{c_1, \ldots, c_n\}$ % from $\dataDomain$
  % if:
  \begin{align*}
    &\forall c_i \in S: \lbMarker{c} \leq c_i \leq \ubMarker{c}
    &&\exists c_i \in S: c_i = \bgMarker{c}
  \end{align*}
\end{Definition}
%%%%%%%%%%%%%%%%%%%%%%%%%%%%%%%%%%%%%%%%

%%%%%%%%%%%%%%%%%%%%%%%%%%%%%%%%%%%%%%%%
\begin{Theorem}% [Bound Preservation]
  \label{theo:expr-bound}
  Let $\sexpr$ be an expression, $\uval$ an incomplete valuation for $\sexpr$, and $\rval$ a range-annotated valuation % for $\sexpr$
  that bounds $\uval$, then $\seval{\sexpr}{\rval}$ bounds $\seval{\sexpr}{\uval}$.
\end{Theorem}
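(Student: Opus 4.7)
The plan is to proceed by structural induction on $\sexpr$, showing that both requirements of \Cref{def:bounding-vals} hold for $\seval{\sexpr}{\rval}$ with respect to $\seval{\sexpr}{\uval}$: (a) every value in $\seval{\sexpr}{\uval}$ lies in the interval $[\lbMarker{\seval{\sexpr}{\rval}}, \ubMarker{\seval{\sexpr}{\rval}}]$, and (b) the best-guess component $\bgMarker{\seval{\sexpr}{\rval}}$ is actually realized by some $\sval \in \uval$.

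Condition (b) admits a uniform argument that I would dispatch first. Since $\rval$ bounds $\uval$, \Cref{def:range-expr-bound} guarantees some $\sval^\star \in \uval$ with $\sval^\star(x) = \bgMarker{\rval(x)} = \bgMarker{\rval}(x)$ for every $x \in \vars(\sexpr)$. A straightforward induction on $\sexpr$ using \Cref{def:expr-semantics} shows $\seval{\sexpr}{\sval^\star} = \seval{\sexpr}{\bgMarker{\rval}}$, and by the conventions in \Cref{def:range-expr-eval} this equals $\bgMarker{\seval{\sexpr}{\rval}}$. Thus the realized best-guess witness exists for every expression, leaving only condition (a) to verify case by case.

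For condition (a), the base cases are immediate: variables by \Cref{def:range-expr-bound}, and constants because $\seval{c}{\rval} = \uv{c}{c}{c}$ and $\seval{c}{\sval} = c$. The inductive cases follow the operator definitions in \Cref{def:range-expr-eval}. Let $\sval \in \uval$ be arbitrary, and by induction assume $\lbMarker{a_i} \leq \seval{\sexpr_i}{\sval} \leq \ubMarker{a_i}$ where $a_i = \seval{\sexpr_i}{\rval}$. Then for each connective I would invoke the monotonicity structure baked into its range-annotated rule: addition and boolean $\wedge,\vee$ are componentwise monotone; negation and $1/\cdot$ are antitone (the swap of lower and upper in the rule handles this); multiplication is handled by the four-corner $\min$/$\max$ formula, which is precisely the interval arithmetic needed when operands may straddle zero. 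For $e_1 \leq e_2$, the lower bound $\ubMarker{a} \leq \lbMarker{b}$ certifies the comparison holds in every possible world, while the upper bound $\lbMarker{a} \leq \ubMarker{b}$ certifies it may hold; equality is handled analogously using both directions of $\leq$. The \textbf{if}-\textbf{then}-\textbf{else} case splits on whether the guard's range is fixed to $\btrue$, fixed to $\bfalse$, or mixed: in the fixed cases the branch taken by every $\sval \in \uval$ is determined, so its bounds carry over; in the mixed case $\min$/$\max$ over both branches' bounds suffices.

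The main obstacle will be the multiplication case (and symmetrically $1/\sexpr$), since naive pointwise bounds fail when intervals cross zero. I would prove this case by observing that for any fixed $\sval$, $\seval{e_1}{\sval} \cdot \seval{e_2}{\sval}$ is a bilinear function of its arguments over the rectangle $[\lbMarker{a},\ubMarker{a}] \times [\lbMarker{b},\ubMarker{b}]$, hence attains its extrema at the corners; the $\min$ and $\max$ in \Cref{def:range-expr-eval} enumerate exactly those four corner products, giving the required two-sided bound. A secondary subtlety is the $1/\sexpr$ case, where the stated undefinedness condition when $\lbMarker{\seval{\sexpr}{\rval}} \leq 0 \leq \ubMarker{\seval{\sexpr}{\rval}}$ must be threaded through so the induction hypothesis only appeals to well-defined subexpression bounds; I would verify that if $\rval$ bounds $\uval$ and the range bound excludes $0$, then $\seval{\sexpr}{\sval} \neq 0$ for every $\sval \in \uval$, so the pointwise expression is defined and the reciprocal flips the interval correctly.
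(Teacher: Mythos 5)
Your proposal is correct and follows essentially the same route as the paper: structural induction on $\sexpr$, with the best-guess condition dispatched uniformly (since $\bgMarker{\seval{\sexpr}{\rval}}$ is defined as deterministic evaluation under $\bgMarker{\rval}$, which some $\sval^\star \in \uval$ realizes) and the interval condition verified operator by operator. Your corner/bilinearity justification of the four-term $\min$/$\max$ for multiplication is just a cleaner packaging of the paper's sign-based case analysis, and your handling of the reciprocal's undefinedness matches the paper's.
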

%%%%%%%%%%%%%%%%%%%%%%%%%%%%%%%%%%%%%%%%
\ifnottechreport{
  \begin{proof}[Proof Sketch]
Proven by straightforward induction over the structure of formulas. We present the full proof in \cite{techreport}.
  \end{proof}
}
\iftechreport{
%%%%%%%%%%%%%%%%%%%%%%%%%%%%%%%%%%%%%%%%
\begin{proof}
  We prove this theorem through induction over the structure of an expression under the assumption that $\rval$ bounds $\uval$.

\proofpara{Base case}
If $e \defas c$ for a constant $c$, then $\lbMarker{e} = \bgMarker{e} = \ubMarker{e} = c$ which is also the result of $e$ in any possible world of $\uval$. If $e \defas x$ for a variable $x$, then since $\rval$ bounds $\uval$, the value of $x$ in any possible world is bounded by $\rval(x)$.

\proofpara{Induction step}
Assume that for expressions $e_1$, $e_2$, and $e_3$, we have that their results under $\uval$ are bounded by their result under $\rval$:

\begin{align*}
\forall i \in \{1,2,3\}: \forall c \in \seval{e_i}{\uval}: \lbMarker{\seval{e_i}{\rval}} \leq c \leq \ubMarker{\seval{e_i}{\rval}}\\
\exists \sval \in \uval: \forall i \in \{1,2,3\}: \bgMarker{\seval{e_i}{\rval}} = \seval{e_i}{\sval}
\end{align*}
Note that the second condition trivially holds since $\bgMarker{\seval{e}{\rval}}$ was defined as applying deterministic expression semantics to $\bgMarker{\rval}$. We, thus, only have to prove  that the lower and upper bounds are preserved for all expressions $e$ that combine these expressions using one of the scalar, conditional, or logical operators.

\proofpara{$e \defas e_1 + e_2$}
Inequalities are preserved under addition. Thus, for any $\sval \in \uval$ we have $\lbMarker{\seval{e_1}{\rval}} + \lbMarker{\seval{e_2}{\rval}} \leq \seval{e_1}{\sval} + \seval{e_2}{\sval} \leq \ubMarker{\seval{e_1}{\rval}} + \ubMarker{\seval{e_2}{\rval}}$.

\proofpara{$e \defas e_1 \cdot e_2$}
We distinguish sixteen cases based on which of $\lbMarker{\seval{e_1}{\rval}}$, $\lbMarker{\seval{e_2}{\rval}}$, $\ubMarker{\seval{e_1}{\rval}}$, and $\lbMarker{\seval{e_2}{\rval}}$ are negative. For instance, if all numbers are positive then clearly $\lbMarker{\seval{e_1}{\rval}} \cdot \lbMarker{\seval{e_2}{\rval}} \leq \seval{e_1}{\sval} \cdot \seval{e_2}{\sval}$. While there are sixteen cases, there are only four possible combinations of lower and upper bounds we have to consider. Thus, if we take the minimal (maximal) value across all these cases, we get a lower (upper) bound on $e$.

\proofpara{$e \defas \frac{1}{e_1}$}
For any pair of numbers $c_1$ and $c_2$ that are either both positive or both negative, we have $c_1 \leq c_2$ implies $\frac{1}{c_1} \geq \frac{1}{c_2}$. Thus, $\frac{1}{\ubMarker{a}}$ is an upper bound on $\frac{1}{c}$ for any $c$ bound by $a$. Analog, $\frac{1}{\lbMarker{a}}$ is an upper bound.

\proofpara{$e \defas e_1 \wedge e_2$ and $e \defas e_1 \vee e_2$}
Both $\vee$ and $\wedge$ are monotone in their arguments wrt. the order $F \ordB T$. Thus, applying these operations to combine lower (upper) bounds preserves these bounds.

\proofpara{$e \defas \neg\,e_1$}
We distinguish three cases: (i) $\seval{e_1}{\sval} = \bfalse$ for all $\sval \in \uval$; (ii)$\seval{e_1}{\sval} = \btrue$ for some $\sval \in \uval$ and $\seval{e_1}{\sval} = \bfalse$ for some $\sval \in \uval$; and (iii) $\seval{e_1}{\sval} = \bfalse$ for all $\sval \in \uval$. In case (i) for $\rval$ to bound the input either $\seval{e_1}{\rval} = \uv{\bfalse}{\bfalse}{\bfalse}$ in which case $\seval{e}{\rval} = \uv{\btrue}{\btrue}{\btrue}$ or $\seval{e_1}{\rval} = \uv{\bfalse}{\bfalse}{\btrue}$ and $\seval{r}{\rval} = \uv{\bfalse}{\btrue}{\btrue}$. We have $\seval{e}{\sval} = \btrue$ for all $\sval \in \uval$ and, thus, in either case $\seval{e}{\rval}$ bounds $\seval{e}{\uval}$. In case (ii), $\lbMarker{\seval{e}{\rval}} = \bfalse$ and $\ubMarker{\seval{e}{\rval}} = \btrue$ which trivially bound $\seval{e}{\uval}$. The last case is symmetric to (i).

\proofpara{$e \defas e_1 \leq e_2$}
Recall that $\bfalse \leq \btrue$.  $e_1 \leq  e_2$ is guaranteed to evaluate to true in every possible world if the upper bound of $e_1$ is lower than or equal to the lower bound of $e_2$. In this case it is safe to set $\lbMarker{\seval{e}{\rval}} = \btrue$. Otherwise, there may exist a possible world where $e_1 \leq e_2$ evaluates to false and we have to set $\lbMarker{\seval{e}{\rval}} = \bfalse$. Similarly, if the lower bound of $e_1$ is larger than the upper bound of $e_2$ then $e_1 \leq e_2$ evaluates to false in every possible world and $\ubMarker{\seval{e}{\rval}} = \bfalse$ is an upper bound. Otherwise, there may exist a world where $e_1 \leq e_2$ holds and we have to set $\ubMarker{\seval{e}{\rval}} = \btrue$.

\proofpara{$\ifte{e_1}{e_2}{e_3}$}
When $e_1$ is certainly true ($\lbMarker{\seval{e_1}{\rval}} = \ubMarker{\seval{e_1}{\rval}} = \btrue$) or certainly false ($\lbMarker{\seval{e_1}{\rval}} = \ubMarker{\seval{e_1}{\rval}} = \bfalse$) then the bounds $e_2$ (certainly true) or $e_3$ (certainly false) are bounds for $e$. Otherwise, $e$ may evaluate to $e_2$ in some worlds and to $e_3$ in others. Taking the minimum (maximum) of the bounds for $e_2$ and $e_3$ is guaranteed to bound $e$ from below (above) in any possible world.

We conclude that the result of range-annotated expression evaluation under $\rval$ which bounds an incomplete valuation $\uval$ bounds the result of incomplete expression evaluation for any expression $e$.
% our expression evaluation model is bounded by the fact that any expression is recursively start with the base case which is bounded, and all the operator result is bounded given the input is bounded.
\end{proof}
}

%%% Local Variables:
%%% mode: latex
%%% TeX-master: "uaadb"
%%% End:

%%%%%%%%%%%%%%%%%%%%%%%%%%%%%%%%%%%%%%%%%%%%%%%%%%%%%%%%%%%%%%%%%%%%%%%%%%%%%%%%
\section{\captialUAADBs}
\label{sec:uaa-range}

We define \termUAADBs (\abbrUAADBs) as a special type of $\semK$-relations over range-annotated domains and demonstrate how to bound an incomplete $\semK$-relation using this model. Afterwards, define a metric for how precise the bounds of an incomplete $\semK$-database encoded by a \abbrUAADB are and proceed to define a query semantics for \abbrUAADBs and prove that this query semantics preserves bounds.
Tuple annotation of \abbrUAADBs are triples of elements from a semiring $\semK$. These triples form a semiring structure $\uaaK{\semK}$. The construction underlying $\uaaK{\semK}$ is well-defined if $\semK$ is an l-semiring, i.e., a semiring where the natural order forms a lattice over the elements of the semiring. Importantly, $\semN$ (bag semantics), $\semB$ (set semantics), and many provenance semirings are l-semirings.

%%%%%%%%%%%%%%%%%%%%%%%%%%%%%%%%%%%%%%%%%%%%%%%%%%%%%%%%%%%%%%%%%%%%%%%%%%%%%%%%
\subsection{\abbrUAADBs}\label{sec:uaadbs}

In addition to allowing for range-annotated values, \abbrUAADBs also differ from \abbrUADBs in that they encode an upper bound of the possible annotation of tuples. Thus, instead of using annotations from $\aDoubleK$, we use $\semkq$ to encode three annotations for each tuple: a lower bound on the certain annotation of the tuple, the annotation of the tuple in the \abbrBGW, and an over-approximation of the tuple's possible annotation.

%%%%%%%%%%%%%%%%%%%%%%%%%%%%%%%%%%%%%%%%
\begin{Definition}[Tuple-level Annotations]\label{def:uaa-tuple-annot}
  Let $\semK$ be an l-semiring and let $\ordK$ denote its natural order. Then the tuple level range-annotated domain $\uaaDom{\domK}$ is defined as:

  $$\{ \ut{\lbMarker{k}}{k}{\ubMarker{k}} \mid k, \lbMarker{k}, \ubMarker{k} \in \semK \wedge \lbMarker{k} \ordK k \ordK \ubMarker{k} \}$$
We use $\uaaK{\semK}$ to denote semiring $\semkq$ restricted to elements from $\uaaDom{\domK}$.
  \end{Definition}
%%%%%%%%%%%%%%%%%%%%%%%%%%%%%%%%%%%%%%%%

  Similar to the range-annotated domain, a value $(k_1, k_2, k_3)$ from
  $\uaaK{\semK}$ encodes a semiring element from $\semK$ and two elements ($k_1$
  and $k_3$) that bound the element from below and above.  Given an
  $\uaaK{\semK}$-element $k = \ut{k_1}{k_2}{k_3}$ we define
  $\lbMarker{k} = k_1$, $\bgMarker{k} = k_2$, and $\ubMarker{k} = k_3$.  Note
  that $\uaaK{\semK}$ is a semiring since when combining two elements of
  $\uaaK{\semK}$ with $\addOf{\semkq}$ and $\multOf{\semkq}$, the result
  $(k_1, k_2, k_3)$ fulfills the requirement $k_1 \ordK k_2 \ordK k_3$. This is
  the case because semiring addition and multiplication preserves
  the natural order of $\semK$ and these operations in $\semkq$ are defined as pointwise
  application of $\addK$ and $\multK$, e.g.,
  $\ut{\lbMarker{k}}{k}{\ubMarker{k}} \addOf{\semkq} \ut{\lbMarker{l}}{l}{\ubMarker{l}} = \ut{\lbMarker{l} \addK \lbMarker{l}}{k \addK l}{\ubMarker{k}}
  \addK \ubMarker{l}))$ and
  $k_1 \ordK k_2 \wedge k_3 \ordK k_4 \Rightarrow k_1 \addK k_3 \ordK k_2 \ordK
  k_4$ for any $k_1, k_2, k_3, k_4 \in \semK$.

%%%%%%%%%%%%%%%%%%%%%%%%%%%%%%%%%%%%%%%%
\begin{Definition}[$\uaaK{\semK}$-relations]\label{def:uaa-rels}
Given a range-annotated data domain $\rangeDom$ and l-semiring $\semK$, an $\uaaK{\semK}$-relation of arity $n$ is a function $\rel: \rangeDom^n \rightarrow \uaaK{\semK}$.
\end{Definition}
%%%%%%%%%%%%%%%%%%%%%%%%%%%%%%%%%%%%%%%%

As a notational convenience we show certain values, i.e., values $c \in \rangeDom$ where $\lbMarker{c} = \bgMarker{c} = \ubMarker{c} = c'$, as the deterministic value $c'$ they encode.

%%%%%%%%%%%%%%%%%%%%%%%%%%%%%%%%%%%%%%%%%%%%%%%%%%%%%%%%%%%%%%%%%%%%%%%%%%%%%%%%
\subsection{Extracting \capitalBGW}\label{sec:uaab-get-bgw}

Note that the same tuple $\tup$ may appear more than once in a $\uaaK{\semK}$-relation albeit with different value annotations.
We can extract the \termBGW encoded by a $\uaaK{\semK}$-relation by grouping tuples by the \abbrBG of their attribute values and then summing up their tuple-level \abbrBG annotation.

%%%%%%%%%%%%%%%%%%%%%%%%%%%%%%%%%%%%%%%%
\begin{Definition}\label{def:bool-to-K}
We lift function $\bgName$ from values to tuples: $\bgName: \rangeDom^n \to \dataDomain^n$, i.e., given an \abbrUAADB tuple $\rangeTup = (v_1, \ldots, v_n)$, % we define
$\bgOf{\rangeTup} \defas (\bgMarker{v_1}, \ldots, \bgMarker{v_n})$. For a  $\uaaK{\semK}$-relation $\rangeRel$, $\bgOf{\rangeRel}$, the \abbrBGW encoded by $\rangeRel$, is then defined as:
$$\bgOf{\rangeRel}(\tup) \defas \sum_{\bgOf{\rangeTup} = \tup} \bgOf{\rangeRel(\rangeTup)}$$
%  The corresponding $\uaak{\semK}$-relation  $D_{bg}=\{t \rightarrow \sum_{BG(\rangeTup)=t} \rangeTup.k\}$.
\end{Definition}
%%%%%%%%%%%%%%%%%%%%%%%%%%%%%%%%%%%%%%%%

%%%%%%%%%%%%%%%%%%%%%%%%%%%%%%%%%%%%%%%%
\begin{Example}\label{ex:audb-instance}
Figure~\ref{table:UAAR_inst} shows an instance of a $\uaaK{\semN}$-relation $\rel$ where each attribute is a triple showing the lower bound, \termBG and upper bound of the value. Each tuple is annotated by a triple showing the lower bound, \termBG and upper bound of the annotation value. Since this is a $\uaaK{\semN}$ relation, the annotations encode multiplicities of tuples. For example, the first tuple represents a tuple $(1,1)$ that appears at least twice in every possible world (its lower bound annotation is $2$), appears twice in the \abbrBGW, and may appear in any possible world at most thrice.
	Figure~\ref{table:UAAR_bg} shows the \abbrBGW encoded by the \abbrUAADB produced by summing up the annotations of tuples with identical \abbrBG values. For instance, the first two tuples both represent tuple $(1,1)$ and their annotations sum up to $5$, i.e., the tuple $(1,1)$ appears five times in the chosen \abbrBGW.
\end{Example}
%%%%%%%%%%%%%%%%%%%%%%%%%%%%%%%%%%%%%%%%

%%%%%%%%%%%%%%%%%%%%%%%%%%%%%%%%%%%%%%%%
\begin{figure}[t]
	\centering

    \begin{minipage}{0.485\linewidth}
	\begin{subtable}{\linewidth}
	\centering
	\begin{tabular}{ c|cc}
      \textbf{A}  & \textbf{B}  & \underline{\semqN} \\
      				\cline{1-2}
		$\uv{1}{1}{1}$ & $\uv{1}{1}{1}$ & \ut{2}{2}{3}\\
		$\uv{1}{1}{1}$ & $\uv{1}{1}{3}$ & \ut{2}{3}{3}\\
		$\uv{1}{2}{2}$ & $\uv{3}{3}{3}$ & \ut{1}{1}{1}\\
	\end{tabular}
	\caption{Example \abbrUAADB instance}
	\label{table:UAAR_inst}
	\end{subtable}
  \end{minipage}
  \begin{minipage}{0.485\linewidth}
	\begin{subtable}{\linewidth}
	\centering
	\begin{tabular}{ c|cc}
		\textbf{A}  & \textbf{B}  & \underline{$\semN$}  \\
		\cline{1-2}
		$1$ & $1$ & 5\\
		$2$ & $3$ & 1\\
	\end{tabular}
	\caption{\termBGW}
	\label{table:UAAR_bg}
	\end{subtable}
  \end{minipage}
	\caption{Example \abbrUAADB relation and the \abbrBGW it encodes}
\end{figure}
\subsection{Encoding Bounds}\label{sec:uaadb-bounds}

We now formally define what it means for an \abbrUAADB to bound a an incomplete $\semK$-relation from above and below. For that we first define bounding of deterministic tuples by range-annotated tuples.
%We use \abbrUAADB relation to bound all possible worlds from an incomplete database. Starting from bounding a tuple from each possible world to a \abbrUAADB tuple.

%%%%%%%%%%%%%%%%%%%%%%%%%%%%%%%%%%%%%%%%
\begin{Definition}[Tuple Bounding]\label{def:bounding-tuples}
  Let  $\rangeTup$ be a  range-annotated tuple with schema $(a_1, \ldots, a_n)$ and   $\tup$ be a tuple  with same schema as $\rangeTup$. We say that $\rangeTup$ bounds $\tup$ written as $\tup \tmatch \rangeTup$ iff
  $$\forall{i \in \{1, \ldots ,n\}}:
  \lbMarker{\rangeTup.a_i} \leq \tup.a_i \leq \ubMarker{\rangeTup.a_i}$$
\end{Definition}
%%%%%%%%%%%%%%%%%%%%%%%%%%%%%%%%%%%%%%%%

% %%%%%%%%%%%%%%%%%%%%%%%%%%%%%%%%%%%%%%%%
% \begin{Definition}[Bounding $\uaaK{\semK}$ Annotations]\label{def:bounding-annotations}
% 	A value $k \in \uaaK{\semK}$ is said to bound a set of values $S_k = \{k_1, \ldots, k_n\}$ from $\semK$ if:
%   \begin{align*}
%     \forall k_i \in S_k: \lbMarker{k} \leq k_i \leq \ubMarker{k}\\
%     \exists k_i \in S_k: k_i = \bgMarker{k}
%   \end{align*}
% \end{Definition}
% %%%%%%%%%%%%%%%%%%%%%%%%%%%%%%%%%%%%%%%%

Obviously, one \abbrUAADB tuple can bound multiple different conventional tuples and vice versa. We introduce \textit{tuple matchings} as a way to match the annotations of  tuples of a $\uaaK{\semK}$-database (or relation) with that of one possible world of an incomplete $\semK$-database (or relation).
Based on tuple matchings we then define how to  bound possible worlds.
% a mapping between a set of \abbrUAADB tuples and a set of deterministic tuples.

%%%%%%%%%%%%%%%%%%%%%%%%%%%%%%%%%%%%%%%%
\begin{Definition}[Tuple matching]\label{def:tuple-matching}
  Let $n$-ary \abbrUAARel  $\rangeRel$ and an $n$-ary database $R$. A tuple matching $\TM$ for $\rangeRel$ and $\rel$ is a function $(\rangeDom)^{n} \times \dataDomain^n \to \semK$.
  % $$TM_{R^I,R}(\rangeTup,t)=k$$ is the mappings from an annotated tuple $\rangeTup$ and a possible world tuple $t$ to a semiring element $k$
  \st
$$\forall \rangeTup \in \rangeDom^n: \forall \tup \ntmatch \rangeTup: \TM(\rangeTup,\tup) = \zeroK$$
and
$$\forall \tup \in \dataDomain^n: \sum_{\rangeTup \in \rangeDom^n} \TM(\rangeTup,\tup)=\rel(\tup)$$
\end{Definition}
%%%%%%%%%%%%%%%%%%%%%%%%%%%%%%%%%%%%%%%%

Intuitively, a tuple matching distributes the annotation of a tuple from $\rel$ over one or more matching tuples from $\rangeRel$. That is, multiple tuples from a \abbrUADB may encode the same tuple from an incomplete database. This is possible when the multidimensional rectangles of their attribute-level range annotations overlap. For instance, range-annotated tuples $(\uv{1}{2}{3})$ and $(\uv{2}{3}{5})$ both match the tuple $(2)$.
% The conditions stating that a \abbrUAADB tuple can only match to conventional tuple if the tuple is bounded by the \abbrUAADB tuple. Also all the conventional tuples in relation $\rel$ need to be matched in the tuple matching.

%%%%%%%%%%%%%%%%%%%%%%%%%%%%%%%%%%%%%%%%
\begin{Definition}[Bounding Possible Worlds]\label{def:bounding-worlds}
  Given an n-ary \abbrUAADB relation $\rangeRel$ and a n-ary deterministic relation $\rel$ (a possible world of an incomplete $\semK$-relation),  relation $\rangeRel$ is a lower bound for $\rel$ iff there exists a tuple matching $\TM$ for $\rangeRel$ and $\rel$ \st
  \begin{align}\label{eq:lower-bound-db}
    \forall \rangeTup \in \rangeDom^n:\sum_{\tup \in \dataDomain^n} \TM(\rangeTup,\tup) \geqK \lbMarker{\rangeRel(\rangeTup)}
  \end{align}
  and is upper bounded by $\rangeRel$ iff there exists a tuple matching $\TM$ for $\rangeRel$ and $\rel$ \st
  \begin{align}\label{eq:upper-bound-db}
    \forall \rangeTup \in \rangeDom^n: \sum_{\tup \in \dataDomain^n} \TM(\rangeTup,\tup) \ordK \ubMarker{\rangeRel(\rangeTup)}
  \end{align}
  A \abbrUAARel $\rangeRel$ bounds a relation $\rel$ written as $\rel \dbbounds \rangeRel$ iff there exists a tuple matching $\TM$ for $\rangeRel$ and $\rel$ that fulfills both \Cref{eq:lower-bound-db,eq:upper-bound-db}.
  \end{Definition}
%%%%%%%%%%%%%%%%%%%%%%%%%%%%%%%%%%%%%%%%

Having defined when a possible world is bound by a $\uaaK{\semK}$-relation, we are ready to define bounding of incomplete $\semK$-relations.

%%%%%%%%%%%%%%%%%%%%%%%%%%%%%%%%%%%%%%%%
\begin{Definition}[Bounding Incomplete Relations]\label{def:bounding-incomplet-dbs}
	Given an incomplete $\semK$-relation $\prel$ and a \abbrUAARel $\rangeRel$, we say that $\rangeRel$ bounds $\rel$, written as $\prel \dbbounds \rangeRel$
  iff
    \begin{align}
    \forall \rel \in \prel: \rel \dbbounds \rangeRel\\
    \exists \rel \in \prel: \bgOf{\rangeRel} = \rel
    \end{align}
\end{Definition}
%%%%%%%%%%%%%%%%%%%%%%%%%%%%%%%%%%%%%%%%

Note that all bounds we define for relations are extended to databases in the obvious way.

%%%%%%%%%%%%%%%%%%%%%%%%%%%%%%%%%%%%%%%%%%%%%%%%%%%%%%%%%%%%%%%%%%%%%%%%%%%%%%%%
\begin{Example}\label{def:bounding-incomplete-dbs}
  Consider the \abbrUAADB from \Cref{ex:audb-instance} and the two possible world shown below.

  \begin{minipage}{0.49\linewidth}
    \centering
    \underline{$\db_1$}                                                                                    \\
    	\begin{tabular}{lc|cc}
                                           & \textbf{A}                 & \textbf{B} & \underline{$\semN$} \\
		\cline{2-3}
$\tup_1$                                   & $1$                        & $1$        & 5                   \\
$\tup_2$                                   & $2$                        & $3$        & 1                   \\
	\end{tabular}
  \end{minipage}
  \begin{minipage}{0.49\linewidth}
    \centering
    \underline{$\db_2$}                                                                                    \\
    \begin{tabular}{lc|cc}
                                           & \textbf{A}                 & \textbf{B} & \underline{$\semN$} \\
		\cline{2-3}
 $\tup_3$                                  & $1$                        & $1$        & 2                   \\
 $\tup_4$                                  & 1                          & 3          & 2                   \\
$\tup_5$                                   & $2$                        & $4$        & 1                   \\
	\end{tabular}
  \end{minipage}
  This \abbrUAADB bounds these worlds, since there exist tuple matchings that provides both a lower and an upper bound for the annotations of the tuples of these worlds. For instance, denoting the tuples from this example as
  \begin{align*}
    \rangeTup_1                            & \defas (\uv{1}{1}{1}, \uv{1}{1}{1})                           \\
                               \rangeTup_2 & \defas (\uv{1}{1}{1}, \uv{1}{1}{3})                           \\
                               \rangeTup_3 & \defas (\uv{1}{2}{2}, \uv{3}{3}{3})
  \end{align*} %
  \iftechreport{tuple matchings $\TM_1$ and $\TM_2$ shown below to bound $\db_1$ and $\db_2$.}
  \ifnottechreport{tuple matching $\TM_1$ shown below bounds $\db_1$.}
  \begin{align*}
    \TM_1(\rangeTup_1, \tup_1)             & = 2
                                           & \TM_1(\rangeTup_2, \tup_1) & = 3
                                           & \TM_1(\rangeTup_3, \tup_1) & = 0                                 \\
    \TM_1(\rangeTup_1, \tup_2)             & = 0
                                           & \TM_1(\rangeTup_2, \tup_2) & = 0
                                           & \TM_1(\rangeTup_3, \tup_2) & = 1
    \iftechreport{\\[2mm]
    \TM_2(\rangeTup_1, \tup_3)             & = 2
                                           & \TM_2(\rangeTup_2, \tup_3) & = 0
                                           & \TM_2(\rangeTup_3, \tup_3) & = 0                                 \\
    \TM_2(\rangeTup_1, \tup_4)             & = 0
                                           & \TM_2(\rangeTup_2, \tup_4) & = 2
                                           & \TM_2(\rangeTup_3, \tup_4) & = 0                                 \\
    \TM_2(\rangeTup_1, \tup_5)             & = 0
                                           & \TM_2(\rangeTup_2, \tup_5) & = 0
                                           & \TM_2(\rangeTup_3, \tup_5) & = 1
}
  \end{align*}
    \end{Example}
%%%%%%%%%%%%%%%%%%%%%%%%%%%%%%%%%%%%%%%%%%%%%%%%%%%%%%%%%%%%%%%%%%%%%%%%%%%%%%%%

% \BG{removed this: fig:exp1 shows an example of a \abbrUAADB-DB with a possible tuple matching that fulfills the correctness criteria.}

%%%%%%%%%%%%%%%%%%%%%%%%%%%%%%%%%%%%%%%%%%%%%%%%%%%%%%%%%%%%%%%%%%%%%%%%%%%%%%%%
\subsection{Tightness of Bounds}
\label{sec:approximation-factor}

\Cref{def:bounding-incomplet-dbs} defines what it means for an \abbrUAADB to bound an incomplete databases. However, given an incomplete database, there may be many possible \abbrUAADBs that bound it that differ in how tight the bounds are. For instance, both $\tup_1 \defas (\uv{1}{15}{100})$ and $\tup_2 \defas (\uv{13}{14}{15})$ bound tuple $(15)$, but intuitively the bounds provided by the second tuple are tighter. In this section we develop a metric for the tightness of the approximation provided by an \abbrUAADB and prove that finding a \abbrUAADB that maximizes tightness % when we limit the size of the \abbrUAADB
is intractable. Intuitively, given two \abbrUAADBs $\rangeDB$ and $\rangeDB'$ that both bound an incomplete $\semK$-database $\db$, $\rangeDB$ is a tighter bound than $\rangeDB'$ if the set of deterministic databases bound by $\rangeDB$ is a subset of the set of deterministic databases bound by $\rangeDB'$. As a sanity check, consider $\rangeDB_1 \defas \{t_1\}$ and $\rangeDB_2 \defas \{t_2\}$ using $t_1$ and $t_2$ from above and assume that $\dataDomain = \semN \cup \semB$. Then $\rangeDB_2$ is a tighter bound than $\rangeDB_1$ since the three deterministic databases it bounds $\{(13)\}$, $\{(14)\}$ and $\{(15)\}$ are also bound by $\rangeDB_1$, but $\rangeDB_1$ bounds additional databases, e.g., $\{(2)\}$ that are not bound by $\rangeDB_2$.

%%%%%%%%%%%%%%%%%%%%%%%%%%%%%%%%%%%%%%%%%%%%%%%%%%%%%%%%%%%%%%%%%%%%%%%%%%%%%%%%
\begin{Definition}[Bound Tightness]\label{def:bound-tightness}
  Consider two $\uaaK{\semK}$-databases $\rangeDB$ and $\rangeDB'$ over the same schema
$\aschema$. We say that $\rangeDB$ is at least as tight as $\rangeDB'$, written as $\rangeDB \dbleq \rangeDB'$, if for all $\semK$-databases $\db$ with schema $\aschema$ we have:
  \begin{align*}
 \db \dbbounds \rangeDB \rightarrow \db \dbbounds \rangeDB'
  \end{align*}
  We say that $\rangeDB$ is a strictly tighter than $\rangeDB$, written as $\rangeDB \dble \rangeDB'$ if $\rangeDB \dbleq \rangeDB'$ and there exists $\db \dbbounds \rangeDB'$ with $\db \not\dbbounds \rangeDB$. Furthermore, we call $\rangeDB$  a maximally tight bound for an incomplete $\semK$-database $\pdb$ if:
  \begin{align*}
    &\pdb \dbbounds \rangeDB
   &&\not \exists \rangeDB': \rangeDB' \dble \rangeDB
  \end{align*}
\end{Definition}
%%%%%%%%%%%%%%%%%%%%%%%%%%%%%%%%%%%%%%%%%%%%%%%%%%%%%%%%%%%%%%%%%%%%%%%%%%%%%%%%

Note that the notion of tightness is well-defined even if the data domain $\dataDomain$ is infinite. For instance, if we use the reals $\mathbb{R}$ instead of natural numbers as the domain in the example above, then still $\rangeDB_1 \dbge \rangeDB_2$.
In general \abbrUAADBs that are tighter bounds are preferable. However, computing a maximally tight bound is intractable.

%%%%%%%%%%%%%%%%%%%%%%%%%%%%%%%%%%%%%%%%%%%%%%%%%%%%%%%%%%%%%%%%%%%%%%%%%%%%%%%%
\begin{Theorem}[Finding Maximally Tight Bounds]\label{theo:intractability-of-tight-bounds}
Let $\pdb$ be an incomplete $\semN$-database encoded as a \abbrCtable~\cite{DBLP:journals/jacm/ImielinskiL84}.  % in the size of $\pdb$
Computing a maximally tight bound $\rangeDB$ for $\pdb$ is \nphard.
\end{Theorem}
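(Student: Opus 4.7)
The plan is to reduce \textsc{SAT} to the problem of producing a maximally tight \abbrAUDB bound for an incomplete $\semN$-database represented as a \abbrCtable, by constructing an instance in which the presence of a particular range-annotated tuple with positive upper bound in the output encodes satisfiability of the input formula.

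Given a Boolean formula $\phi(x_1, \ldots, x_n)$, I would construct a \abbrCtable $T_\phi$ over a unary schema $R(A)$ containing a single row $\langle a \rangle$ guarded by the local condition $\phi$, where $x_1, \ldots, x_n$ are the \abbrCtable's null variables. Each valuation of the nulls yields a possible world of the induced incomplete $\semN$-database $\pdb_\phi$: valuations satisfying $\phi$ yield the world $\{(a)\}$ with multiplicity $1$, all others yield $\emptyset$. Without loss of generality (padding $\phi$ with a fresh dummy variable to guarantee at least one falsifying assignment), $\emptyset \in \pdb_\phi$ regardless of satisfiability, while $\{(a)\} \in \pdb_\phi$ iff $\phi$ is satisfiable.

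Next, I would analyze the structure of any maximally tight bound $\rangeDB$ for $\pdb_\phi$ using \Cref{def:bounding-incomplet-dbs} and \Cref{def:bound-tightness}. In the unsatisfiable case, the empty \abbrAUDB (no range-annotated tuples, with $\bgOf{\rangeDB} = \emptyset$) bounds $\pdb_\phi = \{\emptyset\}$, and it is strictly tighter than any \abbrAUDB that contains some $\rangeTup$ with $(a) \tmatch \rangeTup$ and $\ubMarker{\rangeDB(\rangeTup)} \geq 1$: such a candidate would also bound $\{(a)\}$ (via the tuple matching sending $\TM(\rangeTup, (a)) = 1$), while the empty \abbrAUDB does not, so by \Cref{def:bound-tightness} the empty bound is strictly tighter. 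Hence in the unsatisfiable case every maximally tight bound assigns upper bound $0$ to all range-annotated tuples matching $(a)$. In the satisfiable case, the world $\{(a)\} \in \pdb_\phi$ must itself be bounded, which by \Cref{def:bounding-worlds} forces every bound, maximally tight or otherwise, to contain some $\rangeTup$ with $(a) \tmatch \rangeTup$ and $\ubMarker{\rangeDB(\rangeTup)} \geq 1$.

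A hypothetical polynomial-time algorithm producing a maximally tight bound would therefore decide \textsc{SAT}: run it on $T_\phi$ and report ``satisfiable'' iff its output contains some range-annotated tuple matching $(a)$ with positive upper bound. The output has polynomial size since $T_\phi$ has a single row over a constant-size value domain, so this yields an honest polynomial-time reduction, establishing NP-hardness. The main obstacle is the strictness direction of the dichotomy above, specifically verifying that any positive upper bound assigned to $(a)$ in the unsatisfiable case admits a strictly tighter refinement. This rests on showing that the set of deterministic $\semN$-relations bounded by the empty \abbrAUDB is exactly $\{\emptyset\}$, which is a proper subset of the relations bounded by any non-trivial candidate admitting $(a)$, so the ordering $\dble$ of \Cref{def:bound-tightness} applies.
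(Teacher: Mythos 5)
Your proposal is correct and follows essentially the same strategy as the paper's proof: both build a single-tuple \abbrCtable whose local condition encodes an \npcomplete instance and observe that the maximally tight upper bound on that tuple's annotation is nonzero iff the instance is a yes-instance, so a polynomial-time algorithm for maximally tight bounds would decide the problem. The only difference is the choice of source problem (\textsc{SAT} versus the paper's 3-colorability, where the coloring constraints are split between the global and local conditions); your explicit verification of the strictness direction of $\dble$ via the empty \abbrUAADB is a welcome extra level of care but not a different idea.
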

%%%%%%%%%%%%%%%%%%%%%%%%%%%%%%%%%%%%%%%%%%%%%%%%%%%%%%%%%%%%%%%%%%%%%%%%%%%%%%%%
%%%%%%%%%%%%%%%%%%%%%%%%%%%%%%%%%%%%%%%%
\begin{proof}
  Note that obviously, \abbrCtables which apply set semantics cannot encode every possible incomplete $\semN$-database. However, the class of all $\semN$-databases where no tuples appear more than once can be encoded using \abbrCtables. To prove the hardness of computing maximally tight bounds it suffices to prove the hardness of finding bounds for this subset of all $\semN$-databases. We prove the claim through a reduction from the \npcomplete 3-colorability decision problem. A graph $G = (V,E)$ is 3-colorable if  each node $n$ can be assigned a color $C(n) \in \{r,g,b\}$ (red, green, and blue) such that for every edge $e = (v_1,v_2)$ we have $C(v_1) \neq C(v_2)$. Given such a graph, we will construct a \abbrCtable $\prel$ encoding an incomplete $\semB$-relation (\abbrCtables use set semantics) with a single tuple and show that the tight  upper bound on the annotation of the tuple is $\btrue$ iff the graph $G$ is 3-colorable. We now briefly review \abbrCtables for readers not familiar with this model. Consider a set of variables $\Sigma$. A \abbrCtable~\cite{DBLP:journals/jacm/ImielinskiL84} $\prel = (\rel, \lcond, \gcond)$ is a relation $\rel$ paired with (i) a global condition $\gcond$ which is also a logical condition over $\Sigma$ and (ii)  a function $\lcond$ that assigns to each tuple $\tup \in \rel$ a logical condition over $\Sigma$. Given a valuation $\mu$ that assigns to each variable from $\Sigma$ a value, the global condition and all local conditions evaluate to either $\btrue$ or $\bfalse$. The incomplete database represented by a \abbrCtable $\prel$ is the set of all relations $\rel$ such that there exists a valuation $\mu$ for which $\mu(\gcond)$ is true and $\rel = \{ \tup \mid \mu(\lcond(\tup)) \}$, i.e., $\rel$ contains all tuples for which the local condition evaluates to true. Given an input graph $G$, we associate a variable $x_v$ with each vertex $v \in V$. Each possible world of the \abbrCtable we construct encodes one possible assignment of colors to the nodes of the graph. This will be ensured through the global condition which is a conjunction of conditions of the form $(x_v = r \lor x_v = g \lor x_v = b)$ for each node $v \in V$. The \abbrCtable contains a single tuple $\tup_{one} = (1)$ whose local condition tests whether the assignment of nodes to colors is a valid 3-coloring of the input graph. That is, the local condition is a conjunction of conditions of the form $x_{v_1} \neq x_{v_2}$  for every edge $e = (v_1, v_2)$.
Thus, the \abbrCtable $(\rel, \lcond, \gcond)$ we construct for $G$ is:
  \begin{align*}
    \rel &= \{ \tup_{one} \}\;\text{for}\;\tup_{one} = (1)\\
    \gcond &= \bigwedge_{v \in V} (x_v = r \lor x_v = g \lor x_v = b)\\
    \lcond(\tup_{one}) &= \bigwedge_{(v_1, v_2) \in E} x_{v_1} \neq x_{v_2}
  \end{align*}
Note that in any possible world $\rel'$ represented by $\prel$, each $x_v$ is assigned one of the valid colors, because otherwise the global condition would not hold. For each such coloring, the tuple $\tup_{one} = (1)$ exists $\rel'(\tup_{one}) = \btrue$  if no adjacent vertices have the same color, i.e., the graph is 3-colorable. Thus, if $G$ is not 3-colorable, then $\rel'(\tup_{one}) = \bfalse$ in every possible world and if $G$ is 3-colorable, then $\rel'(\tup_{one}) = \btrue$ in at least one possible world. Thus, the tight upper bound on $\tup_{one}$'s annotation is $\btrue$ iff $G$ is 3-colorable.
\end{proof}
%%%%%%%%%%%%%%%%%%%%%%%%%%%%%%%%%%%%%%%%

In the light of this result, any efficient methods for translating incomplete and probabilistic databases into \abbrUAADBs can not guarantee tight bounds. Nonetheless, comparing the tightness of \abbrUAADBs is useful for evaluating how tight bounds are in practice as we will do in \Cref{sec:experiments}.
Furthermore, note that even if we were able to compute tight bounds for an input incomplete database, preserving the bounds under  queries is computationally hard. This follows from hardness results for computing tight bounds for the results of an aggregation query over incomplete databases (e.g., see~\cite{DBLP:journals/tcs/ArenasBCHRS03}).
\section{\abbrUAADB Query Semantics}
\label{sec:uaadb-query-semantics}

In this section we first introduce a semantics for $\raPlus$ queries over \abbrUAADBs that preserves bounds, i.e., if the input of a query $\query$ bounds an incomplete $\semK$-database $\pdb$, then the output bounds $\query(\pdb)$. Conveniently, it turns out that the standard query semantics for  $\semK$-relations with a slight extension to deal with uncertain boolean values in conditions is sufficient for this purpose. Recall from \Cref{sec:expression} that conditions (or more generally scalar expressions) over range-annotated values evaluate to triples of boolean values, e.g., $\uv{F}{F}{T}$ would mean that the condition is false in some worlds, is false in the \abbrBGW, and may be true in some worlds. Recall the standard semantics for evaluating selection conditions over $\semK$-relations. For a selection $\selection_{\theta}(R)$ the annotation of a tuple $t$ in annotation of $\tup$ in the result of the selection is computed by multiplying $\rel(\tup)$ with $\theta(\tup)$ which is defined as a function $\semB \to \{ \zeroK, \oneK \}$ that returns $\oneK$ if $\theta$ evaluates to true on $\tup$ and $\zeroK$ otherwise. In $\uaaK{\semK}$-relations tuple $\tup$ is a tuple of range-annotated values and, thus,  $\theta$ evaluates to an range-annotated Boolean value as described above.
% It turns out that \abbrUAADB relations follows the standard k-relational querying semantics by treating range annotations as a vector of semiring values $(\lbMarker{k},k,\ubMarker{k})$.
Using the range-annotated semantics for expressions from \Cref{sec:expression}, a selection condition evaluates to a triple of boolean values $\semB^3$. We need to map such a triple to a corresponding $\uaaK{\semK}$-element to define a semantics for selection that is compatible with $\semK$-relational query semantics.

%%%%%%%%%%%%%%%%%%%%%%%%%%%%%%%%%%%%%%%%
\begin{Definition}[Boolean to Semiring Mapping]\label{def:lift-bool-to-K}
  Let $\semK$ be a semiring. We define function $\rliftK{\semK}:\semB^3 \rightarrow \semK^3$ as:
  \begin{align*}
    \rliftK{\semK}(b_1,b_2,b_3) &\defas (k_1,k_2,k_3) \,\,\,\text{where}\\
    \forall i \in \{1,2,3\}: k_i &\defas
    \begin{cases} \oneK & \text{if } b_i = true\\
      \zeroK & \text{otherwise}
    \end{cases}
    \end{align*}
\end{Definition}
%%%%%%%%%%%%%%%%%%%%%%%%%%%%%%%%%%%%%%%%

We use the mapping of range-annotated Boolean values to $\uaaK{\semN}$ elements to define evaluation of selection conditions.

%%%%%%%%%%%%%%%%%%%%%%%%%%%%%%%%%%%%%%%%
\begin{Definition}[Conditions over Range-annotated Tuples]\label{def:range-selection}
  Let  $\rangeTup$ be a range-annotated tuple and $\theta$ be a Boolean condition over variables representing  attributes from $\rangeTup$. Furthermore, let $\rval_{\rangeTup}$ denote the range-annotated valuation that maps each variable to the corresponding value from $\rangeTup$. We define $\theta(\rangeTup)$, the result of the condition $\theta$ applied to $\rangeTup$ as:
  \begin{align*}
   \theta(\rangeTup) \defas \rliftK{\semN}(\seval{\theta}{\rval_{\rangeTup}})
  \end{align*}
  \end{Definition}
  %%%%%%%%%%%%%%%%%%%%%%%%%%%%%%%%%%%%%%%%

%%%%%%%%%%%%%%%%%%%%%%%%%%%%%%%%%%%%%%%%%%%%%%%%%%%%%%%%%%%%%%%%%%%%%%%%%%%%%%%%
  \begin{Example}\label{ex:selections}
    Consider the example $\uaaK{\semN}$-relation $\rel$ shown below. The single tuple $\rangeTup$ of this relation exists at least once in every possible world, twice in the \abbrBGW, and no possible world contains  more than $3$ tuples bound by this tuple.

    \begin{center}
      {\upshape
    \begin{tabular}{c|cc}
      \thead{A}    & \thead{B} & \underline{$\uaaK{\semN}$} \\ \cline{1-2}
      \uv{1}{2}{3} & $2$         & $\ut{1}{2}{3}$
    \end{tabular}
    }
  \end{center}

    To evaluate query $\query \defas \selection_{A = 2}(\rel)$ over this relations, we first evaluate the expression $A = 2$ using range-annotated expression evaluation semantics. We get $\uv{1}{2}{3} = \uv{2}{2}{2}$ which evaluates to $\uv{F}{T}{T}$. Using $\rliftK{\semN}$, this value is mapped to $\ut{0}{1}{1}$. To calculate the annotation of the tuple in the result of the selection we then multiply these values with the tuple's annotation in $\rel$ and get:
    $$\rel(\rangeTup) \multOf{\uaaK{\semN}} \theta(\rangeTup) = \ut{1}{2}{3} \cdot \ut{0}{1}{1} = \ut{0}{2}{3}$$
    Thus, the tuple may not exist in every possible world of the query result, appears twice in the \abbrBGW query result, and occurs at most three times in any possible world.
  \end{Example}
  %%%%%%%%%%%%%%%%%%%%%%%%%%%%%%%%%%%%%%%%%%%%%%%%%%%%%%%%%%%%%%%%%%%%%%%%%%%%%%%%

%%%%%%%%%%%%%%%%%%%%%%%%%%%%%%%%%%%%%%%%%%%%%%%%%%%%%%%%%%%%%%%%%%%%%%%%%%%%%%%%
\subsection{Preservation of Bounds}

For this query semantics to be useful, we need to prove that it preserves bounds. Intuitively, this is true because expressions are evaluated using our range-annotated expression semantics which preserves bounds on values and queries are evaluated in a direct-product semiring $\uaaK{\semN}$ for which semiring operations are defined point-wise. Furthermore, we utilize a result we have proven in \cite[Lemma 2]{FH18}: the operations of l-semirings preserve the natural order, e.g., if $k_1 \ordK k_2$ and $k_3 \ordK k_4$ then $k_1 \addK k_3 \ordK k_2 \addK k_4$.
% The annotations propagate through standard querying semantics by mapping the expression outputs to a triple of semiring values.

%%%%%%%%%%%%%%%%%%%%%%%%%%%%%%%%%%%%%%%%
\begin{Theorem}[\raPlus Queries Preserve Bounds]\label{lem:ra-plus-preserves-bounds}
Let $\pdb$ be an incomplete $\semN$-database, $\query$ be a $\raPlus$ query, and $\rangeDB$ be an $\uaaK{\semN}$-database that bounds $\db$. Then $\query(\rangeDB)$ bounds $\query(\pdb)$.
  % Under $\raPlus$, given $R$ is lower bounded by $R^I$ w.r.t. $TM_{R,R^I}$, there exist $TM'_{Q(R),Q(R^I)}$ s.t. $Q(R)$ is lower bounded by $Q(R^I)$.
\end{Theorem}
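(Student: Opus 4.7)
\textsc{Proof Sketch.} The plan is to proceed by induction on the structure of the $\raPlus$ query $\query$, exploiting two facts: (i) semiring operations in $\uaaK{\semN} \subseteq \semqN$ are defined pointwise, so the lower, selected-guess, and upper components evolve independently; and (ii) the natural order on an l-semiring is preserved by $\addK$ and $\multK$ (cited from \cite[Lemma 2]{FH18}). The overall obligation breaks into two pieces matching \Cref{def:bounding-worlds,def:bounding-incomplet-dbs}: the selected-guess world encoded by $\query(\rangeDB)$ must equal $\query(\db_{sg})$ for some $\db_{sg} \in \pdb$, and for every $\db \in \pdb$ there must exist a tuple matching witnessing the lower and upper bounds on $\query(\rangeDB)$.

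The selected-guess condition is the easy half. Because addition and multiplication in $\semqN$ act pointwise, the middle coordinate in every annotation behaves exactly like a standard $\semN$-relation computation, and $\bgName$ on attribute values is defined analogously. Thus $\bgOf{\query(\rangeDB)} = \query(\bgOf{\rangeDB})$, and by hypothesis $\bgOf{\rangeDB}$ is some world of $\pdb$, so $\query(\bgOf{\rangeDB}) \in \query(\pdb)$ as required.

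The hard half — and the main obstacle — is showing bound preservation for arbitrary $\db \in \pdb$ by \emph{constructing} a tuple matching $\TM'$ for $\query(\rangeDB)$ and $\query(\db)$ from a given matching $\TM$ for $\rangeDB$ and $\db$. I would do this operator by operator. For selection $\selection_\theta$, set $\TM'(\rangeTup,\tup) \defas \TM(\rangeTup,\tup) \multN \theta(\tup)$; \Cref{theo:expr-bound} guarantees that the range-annotated $\theta(\rangeTup)$ bounds the deterministic $\theta(\tup)$ whenever $\tup \tmatch \rangeTup$, and \Cref{def:lift-bool-to-K} together with pointwise multiplication in $\uaaK{\semN}$ then gives the required inequalities on lower and upper bounds. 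For projection $\projection_U$, define $\TM'(\rangeTup',\tup') \defas \sum_{\rangeTup[U]=\rangeTup',\, \tup[U]=\tup'} \TM(\rangeTup,\tup)$; the matching property (weights summing to $\rel(\tup)$) is preserved because the summation in $\TM$ is rearranged identically in $\query(\db)$ and in $\query(\rangeDB)$. For union, take $\TM'_{\union} \defas \TM_1 \addN \TM_2$. For join $R_1 \join R_2$, set $\TM'_{\join}((\rangeTup_1,\rangeTup_2),(\tup_1,\tup_2)) \defas \TM_1(\rangeTup_1,\tup_1) \multN \TM_2(\rangeTup_2,\tup_2)$, which is supported on matching pairs because $\tup_i \tmatch \rangeTup_i$ on the join attributes forces a consistent join result.

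The final check at each step is that \Cref{eq:lower-bound-db,eq:upper-bound-db} carry over. Here the monotonicity of $\addN$ and $\multN$ with respect to $\ordN$ is used to push the inequalities $\sum_\tup \TM(\rangeTup,\tup) \geqN \lbMarker{\rangeRel(\rangeTup)}$ and $\sum_\tup \TM(\rangeTup,\tup) \ordN \ubMarker{\rangeRel(\rangeTup)}$ through the pointwise semiring operations applied by each operator, matching exactly what $\query$ computes on the lower- and upper-bound coordinates of $\rangeDB$. Combining the inductively constructed matching with the selected-guess argument above yields $\query(\db) \dbbounds \query(\rangeDB)$ for every $\db \in \pdb$ and completes the proof.
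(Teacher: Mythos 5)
Your proposal is correct and follows essentially the same route as the paper's proof: structural induction with an explicit construction of the output tuple matching for each operator (sum over preimages for projection, multiplication by the condition for selection, pointwise sum for union, product for join), relying on pointwise semiring operations in $\uaaK{\semN}$ and order preservation under $\addK$ and $\multK$. Your explicit verification of the selected-guess condition via $\bgOf{\query(\rangeDB)} = \query(\bgOf{\rangeDB})$ is a small completeness bonus the paper leaves implicit, but it does not change the argument.
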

%%%%%%%%%%%%%%%%%%%%%%%%%%%%%%%%%%%%%%%%
\begin{proof}
	We prove this lemma using induction over the structure of a relational algebra expression under the assumption that $\rangeDB$ bounds the input $\pdb$.

	\proofpara{Base case}
	The query $\query$ consists of a single relation access $\rel$. The result is bounded following from $\rangeDB \dbbounds \pdb$.

	\proofpara{Induction step}
Let $\rangeRel$ and $\rangeOf{S}$ bound $n$-ary relation $\rel$ and $m$-ary relation $S$. Consider $\db \in \pdb$ and let $\TM_\rel$ and $\TM_{S}$ be two tuple matchings based on which these bounds can be established for $\db$. We will demonstrate how to construct a tuple matching $\TM_Q$ based on which $\query(\rangeDB)$ bounds $\query(\db)$. From this then immediately follows that $\query(\pdb) \dbbounds \query(\rangeDB)$. Note that by definition of $\uaaK{\semK}$ as the 3-way direct product of $\semK$ with itself, semiring operations are point-wise, e.g., $\lbMarker{k_1 \addOf{\uaaK{\semK}} k_2} = \lbMarker{k_1} \addK \lbMarker{k_2}$. Practically, this means that queries are evaluated over each dimension individually. We will make use of this fact in the following. We only prove that $\TM_Q$ is a lower bound since the proof for $\TM_Q$ being an upper bound is symmetric.

%%%%%%%%%%%%%%%%%%%%
\proofpara{$\projection_U(\rangeRel)$}
Recall that for $\TM$ to be a tuple matching, two conditions have to hold: (i) $\TM(\rangeTup, \tup) = \zeroK$ if $\tup \not\tmatch \rangeTup$ and (ii) $\sum_{\rangeTup \in \rangeDom^n} \TM(\rangeTup,\tup) = \rel(\tup)$.
Consider an $U$-tuple $\tup$. Applying the definition of projection for $\semK$-relations we have:
\[
\projection_U(\rel)(\tup) = \sum_{\tup = \tup'[U]} \rel(\tup)
\]

Since $\TM_R$ is a tuple matching based on which $\rangeRel$ bounds $\rel$, we know that by the definition of tuple matching the sum of annotations assigned to a tuple $\tup$ by the tuple matching is equal to the annotation of the tuple in $\rel$):

\begin{equation}\label{eq:project-proof-annotation-equal-tm-sum}
\sum_{\tup = \tup'[U]} \rel(\tup) = \sum_{\tup = \tup'[U]} \sum_{\rangeTup \in \rangeDom^n} \TM_R(\rangeTup, \tup')
\end{equation}

By definition for any tuple matching $\TM$ we have $\TM(\rangeTup,\tup) = \zeroK$ if $\tup \not\tmatch \rangeTup$. Thus, \Cref{eq:project-proof-annotation-equal-tm-sum} can be rewritten as:

\begin{equation}
  \label{eq:2}
= \sum_{\tup = \tup'[U]} \sum_{\tup' \tmatch \rangeTup} \TM_R(\rangeTup, \tup')
\end{equation}

Observe that for any n-ary range-annotated $\rangeTup$ and n-ary tuple $\tup$ it is the case that $\tup \tmatch \rangeTup$ implies $\tup[U] \tmatch \rangeTup[U]$ (if $\rangeTup$ matches $\tup$ on all attributes, then clearly it matches $\tup$ on a subset of attributes). For pair  $\tup$ and $\rangeTup$ such that $\tup[U] \tmatch \rangeTup[U]$, but $\tup \not\tmatch \rangeTup$ we know that $TM_R(\rangeTup, \tup) = \zeroK$. Thus,

\begin{equation}
  \label{eq:3}
= \sum_{\tup = \tup'[U]} \sum_{\rangeTup = \rangeTup'[U] \wedge \tup \tmatch \rangeTup} \TM_R(\rangeTup', \tup')
\end{equation}

So far we have established that:

\begin{equation}
  \label{eq:4}
  \projection_U(\rel)(\tup) = \sum_{\tup = \tup'[U]} \sum_{\rangeTup = \rangeTup'[U] \wedge \tup \tmatch \rangeTup} \TM_R(\rangeTup', \tup')
\end{equation}

We now define $\TM_Q$ as shown below:
\begin{equation}
  \label{eq:5}
  \TM_Q(\rangeTup,\tup) \defas \sum_{\forall \rangeTup', \tup: \rangeTup'[U]=\rangeTup \wedge t'[U]=t}\TM_{R}(\rangeTup', \tup')
\end{equation}

$\TM_Q$ is a tuple matching since \Cref{eq:4} ensures that $\forall \tup \in \dataDomain^n: \sum_{\rangeTup \in \rangeDom^n} \TM(\rangeTup,\tup)=\rel(\tup)$ (second condition in the definition) and we defined $\TM_Q$ such that $\TM_Q(\rangeTup,\tup) = \zeroK$ if $\tup \not\tmatch \rangeTup$. What remains to be shown is that $\projection_U(\rangeRel)$  bounds $\projection_U(\rel)$  based on $\TM_Q$. Let $\card{U} = m$, we have to show that

\[
  \forall {\rangeTup \in \rangeDom^m}: \lbMarker{\projection_U(\rangeRel)(\rangeTup)} \ordK \sum_{\tup \in \dataDomain^m}\TM_Q(\rangeTup, \tup)
  \]

  Since addition in $\uaaK{\semK}$ is pointwise application of $\addK$, using the definition of projection over $\semK$-relations we have

  \[
    \lbMarker{\projection_U(\rangeRel)(\rangeTup)} = \sum_{\rangeTup'[U] = \rangeTup} R(\rangeTup')
  \]

Furthermore, since $\TM_R$ is a tuple matching based on which $\rangeRel$ bounds $\rel$,
\[
= \sum_{\rangeTup'[U] = \rangeTup \wedge t \in \dataDomain^n} \TM_R(\rangeTup, \tup)
  \]

  Using again the fact that $\tup \not \tmatch \rangeTup$ implies $\TM_R(\rangeTup, \tup) = \zeroK$,

  \begin{align*}
    = &\sum_{\forall \tup', \rangeTup': \rangeTup'[U] = \rangeTup \wedge \tup' \tmatch \rangeTup'} \TM_R(\rangeTup', \tup')\\
    = &\sum_{\tup \tmatch \rangeTup} \sum_{\forall \tup', \rangeTup': \rangeTup'[U] = \rangeTup \wedge \tup'[U] = \tup} \TM_R(\rangeTup', \tup')\\
    = &\sum_{\tup \tmatch \rangeTup} \TM_Q(\rangeTup,\tup)     = \sum_{\tup \in \dataDomain^m} \TM_Q(\rangeTup,\tup)
  \end{align*}

Since we have established that $ \lbMarker{\projection_U(\rangeRel)(\rangeTup)} \ordK \sum_{\tup \in \dataDomain^m} \TM_Q(\rangeTup,\tup)$, $\projection_U(\rangeRel)$ lower bounds $\projection_U(\rel)$ via $\TM_Q$.

\proofpara{$\selection_\theta(\rangeRel)$}
By definition of selection and based on (i) and (ii) as in the proof of projection we have
$$\query(\rangeRel)\lbMarker{(\rangeTup)} = \lbMarker{\rangeRel(\rangeTup)} \multK \lbMarker{ \theta(\rangeTup)} \ordK \sum_{t \in \dataDomain^n} \TM_{\rel}(\rangeTup,t) \multK \lbMarker{ \theta(\rangeTup)}$$
Assume that for a tuple $\tup$ we have $t \not\tmatch \rangeTup$, then by \Cref{def:tuple-matching} it follows that $\TM_{rel}(\rangeTup, \tup) = \zeroK$. In this case we get $\lbMarker{\query(\rangeRel)(\rangeTup)} = \rangeRel(\rangeTup) \multK \zeroK = \zeroK$. Since $\zeroK \addK k = k$ for any $k \in \semK$, we get
$$\sum_{t \in \dataDomain^n} \TM_{\rel}(\rangeTup,t) \multK \lbMarker{ \theta(\rangeTup)} = \sum_{t \tmatch \rangeTup} \TM_{\rel}(\rangeTup,t) \multK \lbMarker{ \theta(\rangeTup)}$$
Note that based on \Cref{theo:expr-bound}, we have  $\lbMarker{\theta(\rangeTup)} \ordK \theta(\tup)$ since $\tup \tmatch \rangeTup$ from which follows that:
$\sum_{t \tmatch \rangeTup} \TM_{\rel}(\rangeTup,t) \multK \lbMarker{ \theta(\rangeTup)} \ordK \sum_{t \tmatch \rangeTup} \TM_{\rel}(\rangeTup,t) \multK \lbMarker{ \theta(\tup)}$.
It follows that $Q(\rangeRel)$ lower bounds $Q(\rel)$ through $\TM(\rangeTup,t) \defas \TM_{\rel}(\rangeTup,t) \multK \lbMarker{ \theta(t)}$.

%%%%%%%%%%%%%%%%%%%%
\proofpara{$\rangeRel \times \rangeOf{S}$}
Based on the definition of cross product for $\semK$-relations, (i) from above, and that semiring multiplication preserves natural order we get
$\lbMarker{(\rangeRel \times \rangeOf{S})(\rangeTup)} = \lbMarker{\rangeRel(\rangeTup[\rel])}  \multK \lbMarker{\rangeOf{S}(\rangeTup[S])} \ordK \sum_{t \in \dataDomain^n}\TM_R(\rangeTup[R], \tup) \multK \sum_{t' \in \dataDomain^m}\TM_{S}(\rangeTup[S], \tup')$.
% By definition of join, $\lbMarker{(R(\rangeTup) \times P(\rangeTup'))}=\lbMarker{R(\rangeTup)} \otimes \lbMarker{R(\rangeTup)} \leq \sum_{t \in R}\TM(\rangeTup,\tup) \otimes \sum_{t \in R}TM_{P,\rangeOf{S}}(t',\rangeTup')$.
Thus, $\rangeRel \times \rangeOf{S}$ lower bounds $R \times S$ via $\TM_Q(\rangeTup, \tup) \defas \sum_{t \in \dataDomain^n}\TM_R(t,\rangeTup[R]) \multK \sum_{t \in \dataDomain^m}\TM_{S}(t,\rangeTup[S])$.
%%%%%%%%%%%%%%%%%%%%
\proofpara{$\rangeRel \union \rangeOf{S}$}
Assume that $R$ and $S$ are n-ary relations.
Substituting the  definition of union and by (i) and (ii) from above we get:
$\lbMarker{(\rangeRel \union
  \rangeOf{S})(\rangeTup)}=\lbMarker{\rangeRel(\rangeTup)} \addK \lbMarker{\rangeOf{S}(\rangeTup)} \ordK \sum_{t \in \dataDomain^n}\TM_R(\rangeTup, \tup) \addK \sum_{\tup \in \dataDomain^n} \TM_{S}(\rangeTup, \tup)$. Thus, $\rangeRel \union \rangeOf{S}$ lower bounds $R \union S$ via $\TM_Q(\rangeTup, \tup) \defas \sum_{\tup \in \dataDomain^n}\TM_R(\rangeTup, \tup) \addK \sum_{\tup \in \dataDomain^n}\TM_{S}(\rangeTup, \tup)$.
%%%%%%%%%%%%%%%%%%%%
\end{proof}
%%%%%%%%%%%%%%%%%%%%%%%%%%%%%%%%%%%%%%%%

%%% Local Variables:
%%% mode: latex
%%% TeX-master: "tech_report"
%%% End:

%%%%%%%%%%%%%%%%%%%%%%%%%%%%%%%%%%%%%%%%
\section{Set Difference}
\label{sec:set-difference}

In this section, we discuss the evaluation of queries with set difference over \abbrUAADBs.

%%%%%%%%%%%%%%%%%%%%%%%%%%%%%%%%%%%%%%%%%%%%%%%%%%%%%%%%%%%%%%%%%%%%%%%%%%%%%%%%
\subsection{\capitalBG Combiner}\label{sec:BG-combiner}
\BG{Should this be a section by itself?}
In this section we introduce an auxiliary operator for defining set difference  over $\uaaK{\semK}$-relations that merges tuples that have the same values in the \abbrBGW.
% does not apply.
% When defining these operators we will make use of an auxiliary operator that merges tuples with same \abbrBG values.
The purpose of this operator is to ensure that a tuple in the \abbrBGW is encoded as a single tuple in the \abbrUAADB.

\BG{Explain briefly what its purpose is}

The main purpose for using the merge operator is to prevent tuples from over-reducing or over counting. And make sure we can still extract \abbrBGW from the non-monotone query result.

%%%%%%%%%%%%%%%%%%%%%%%%%%%%%%%%%%%%%%%%
\begin{Definition}[\abbrBG-Combiner]\label{def:t-combiner}
  Given a \abbrUAADB relation $\rangeRel$, the combine operator $\combine$ yields a \abbrUAADB relation by grouping tuples with the same $\abbrBGW$ attribute values:
  \begin{align*}
\combine(\rangeRel)(\rangeTup) &\defas
\begin{cases}
\sum_{\rangeTup': \bgMarker{\rangeTup}=\bgMarker{\rangeTup'}}\rangeRel(\rangeTup') & \text{if } \rangeTup=Comb(\rangeRel,\bgMarker{\rangeTup}) \\
0_k & \text{else}
\end{cases}
\end{align*}
% \sum_{\rangeTup': \bgMarker{\rangeTup} = \bgMarker{\rangeTup'}} \rangeRel{R}
%   \end{align*}
% $C(\rangeRel)=R_c$ is defined as
% $	\[ R_c(t) = \begin{cases}
% \sum_{BG(t)=BG(\rangeTup)}\rangeRel(\rangeTup) & \text{if } t=Comb(\rangeRel,BG(t)) \\
% 0_k & \text{else}
% \end{cases}
% \]
where $Comb(\rangeRel,t)$ defined below computes the mimimum bounding box for the ranges of all tuples from $\rangeRel$ that have the same \abbrBGW values as $t$ and are not annotated with $\zeroOf{\uaaK{\semK}}$. Let $a$ be an attribute from the schema of $\rangeRel$, then
\begin{align*}
Comb(\rangeRel,t).\lbMarker{a} &=\min_{\bgMarker{\rangeTup}=t \wedge \rangeRel(\rangeTup) \neq \zeroOf{\uaaK{\semK}}} \rangeTup.\lbMarker{a}\\
\bgMarker{Comb(\rangeRel,t).a} &=\bgMarker{\rangeTup.a}\\
Comb(\rangeRel,t).\ubMarker{a} &=\max_{\bgMarker{\rangeTup}=t \wedge \rangeRel(\rangeTup) \neq \zeroOf{\uaaK{\semK}}}\rangeTup.\ubMarker{a}
\end{align*}
\end{Definition}
%%%%%%%%%%%%%%%%%%%%%%%%%%%%%%%%%%%%%%%%

The \abbrBG-combiner merges all tuples with the same \abbrBG attribute values are combined by merging their attribute ranges and summing up their annotations. For instance, consider a relation $\rangeRel$ with two tuples $(\uv{1}{2}{2}, \uv{1}{3}{5})$ and $(\uv{2}{2}{4}, \uv{3}{3}{4})$ which are annotated with $\ut{1}{2}{2}$ and $\ut{3}{3}{4}$, respectively. Applying \abbrBG-combiner to this relation the two tuples are combined (they have the same \abbrBGW values) into a tuple $(\uv{1}{2}{4}, \uv{1}{3}{5})$ annotated with $\ut{1+3}{2+3}{2+4} = \ut{4}{5}{6}$. Before moving on and discussing semantics for set difference and aggrgeation we first establish that the \abbrBG-combiner preserves bounds.

%%%%%%%%%%%%%%%%%%%%%%%%%%%%%%%%%%%%%%%%%%%%%%%%%%%%%%%%%%%%%%%%%%%%%%%%%%%%%%%%
\begin{Lemma}\label{def:combine preserves bounds}
Let $\rangeRel$ by a $\uaaK{\semK}$-relation that bounds an n-nary $\semK$-relation $\rel$. Then $\combine(\rangeRel)$ bounds $\rel$.
\end{Lemma}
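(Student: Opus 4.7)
The plan is to take a tuple matching $\TM$ that witnesses $\rel \dbbounds \rangeRel$ and transform it into a tuple matching $\TM'$ that witnesses $\rel \dbbounds \combine(\rangeRel)$. Concretely, for each range-annotated tuple $\rangeTup^*$ that appears in $\combine(\rangeRel)$ with nonzero annotation (i.e., $\rangeTup^* = Comb(\rangeRel, t)$ for some $\abbrBGW$ tuple $t$), I will define
\begin{align*}
\TM'(\rangeTup^*, \tup) \defas \sum_{\rangeTup\,:\,\bgMarker{\rangeTup} = t} \TM(\rangeTup, \tup),
\end{align*}
and set $\TM'(\rangeTup, \tup) = \zeroK$ for all other $\rangeTup$. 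Since $\combine$ partitions the nonzero tuples of $\rangeRel$ according to their $\abbrBGW$ value, the disjoint union of these groups covers exactly the support of $\rangeRel$.

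Next, I will verify that $\TM'$ is a valid tuple matching. The key observation here is that $\rangeTup^* = Comb(\rangeRel, t)$ is defined as the minimum bounding box over the group: its lower bound on each attribute is the minimum, and its upper bound is the maximum, of the corresponding bounds of the grouped tuples. Consequently, whenever $\tup \tmatch \rangeTup$ for some $\rangeTup$ in the group, we also have $\tup \tmatch \rangeTup^*$. The contrapositive gives us that if $\tup \ntmatch \rangeTup^*$, then $\tup \ntmatch \rangeTup$ for every $\rangeTup$ in the group, hence $\TM(\rangeTup,\tup) = \zeroK$ by the tuple-matching property of $\TM$, and so $\TM'(\rangeTup^*,\tup) = \zeroK$. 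Summing $\TM'(\rangeTup^*, \tup)$ over all $\rangeTup^*$ just re-indexes the sum $\sum_{\rangeTup} \TM(\rangeTup, \tup) = \rel(\tup)$, so the second tuple-matching condition also holds.

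Finally, I will verify the lower- and upper-bound conditions of \Cref{def:bounding-worlds}. Because $\uaaK{\semK}$ operations are pointwise and $\combine$'s annotation is the pointwise sum over the group, we have $\lbMarker{\combine(\rangeRel)(\rangeTup^*)} = \sum_{\rangeTup\,:\,\bgMarker{\rangeTup}=t} \lbMarker{\rangeRel(\rangeTup)}$, and similarly for $\ubMarker{\cdot}$. Swapping summations gives $\sum_{\tup} \TM'(\rangeTup^*, \tup) = \sum_{\rangeTup\,:\,\bgMarker{\rangeTup}=t} \sum_{\tup} \TM(\rangeTup,\tup)$. Applying the bound conditions for $\TM$ termwise and using the fact (cited from \cite[Lemma 2]{FH18}) that $\addK$ preserves the natural order yields both inequalities simultaneously. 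The main (minor) obstacle is the indexing bookkeeping in defining $\TM'$ and checking it respects $\tmatch$; the bound preservation itself is an immediate consequence of pointwise additivity and monotonicity.
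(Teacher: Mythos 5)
Your proposal is correct and follows essentially the same route as the paper: both construct the witnessing tuple matching for $\combine(\rangeRel)$ by summing the given matching $\TM$ over all tuples sharing the same \abbrBGW values, and then appeal to the pointwise additivity of $\uaaK{\semK}$ and order-preservation of $\addK$. Your write-up is in fact somewhat more complete than the paper's, which only sketches the lower-bound inequality and leaves the verification that the aggregated map is a valid tuple matching (the bounding-box argument for $\tmatch$ and the re-indexing of $\sum_{\rangeTup}\TM(\rangeTup,\tup)=\rel(\tup)$) implicit.
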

%%%%%%%%%%%%%%%%%%%%%%%%%%%%%%%%%%%%%%%%%%%%%%%%%%%%%%%%%%%%%%%%%%%%%%%%%%%%%%%%
\begin{proof}
Consider a tuple $\tup \in \dataDomain^n$ and let $supp(\rangeRel, \tup)$ denote the set $\{ \rangeTup \mid \rangeRel(\rangeTup) \neq \zeroOf{\uaaK{\semK}} \wedge \bgMarker{\rangeTup} = \tup \}$. Observe that $Comb(\rangeRel, \tup)$ merges the range annotations $supp(\rangeRel, \tup)$. Let $\rangeTup$ be the result of $Comb(\rangeRel, \tup)$. Then
$$\lbMarker{\combine(\rangeRel)(\rangeTup)}=\sum_{\bgMarker{\rangeTup}=\bgMarker{\rangeTup'}}\rangeRel(\rangeTup') \ordK \sum_{\bgMarker{\rangeTup}=\bgMarker{\rangeTup'}} \sum_{\tup \in \dataDomain^n}\TM(\rangeTup,\tup)$$
Thus, $\combine(\rangeRel)$
bounds $\rel$ through $\TM'(\rangeTup, \tup)=\sum_{\bgMarker{\rangeTup}=\bgMarker{\rangeTup'}}\sum_{\tup \in \dataDomain^n}\TM(\rangeTup'), \tup$.
\end{proof}
%%%%%%%%%%%%%%%%%%%%%%%%%%%%%%%%%%%%%%%%%%%%%%%%%%%%%%%%%%%%%%%%%%%%%%%%%%%%%%%%

%%%%%%%%%%%%%%%%%%%%%%%%%%%%%%%%%%%%%%%%%%%%%%%%%%%%%%%%%%%%%%%%%%%%%%%%%%%%%%%%
\subsection{Set Difference}
\label{sec:set-difference-2}

Geerts~\cite{Geerts:2010bz} did extend $\semK$-relations to support set difference through m-semirings which are semirings equipped with a monus operation that is used to define difference. The monus operation $\monK$ is defined based on the natural order of semirings as $k_1 \monK k_2 = k_3$ where $k_3$ is the smallest element from $\semK$ \st $k_2 \addK k_3 \geqK k_1$. For instance, the monus of semiring $\semN$ is truncating subtraction: $k_1 \monOf{\semN} k_2 = max(0, k_1 - k_2)$. The
monus construction for a semiring $\semK$ can be lifted through point-wise application to  $\semkq$ since $\semkq$-semirings are direct products. We get
\[
  \ut{k_1}{k_2}{k_3} \monOf{\uaaK{\semK}} \ut{l_1}{l_2}{l_3} = \ut{k_1 \monK l_1}{k_2 \monK l_2}{k_3 \monK l_3}
\]
However, the result of $k \monOf{\semkq} k'$ for $k, k' \in \uaaK{\semK}$ is not necessarily in $\uaaK{\semK}$, i.e., this semantics for set difference does not preserve bounds even if we disallow range-annotated values.
For instance, consider an incomplete $\semN$-relations with two possible worlds: $\db_1 = \{ R(1) \mapsto 2, S(2) \mapsto 1\}$ and $\db_2 = \{ R(1) \mapsto 1, R(2) \mapsto 1,  S(1) \mapsto 3\}$. Here we use $\tup \mapsto k$ to denote that tuple $\tup$ is annotated with $k$. Without using range-annotations, i.e., we can bound these worlds using $\uaaK{\semN}$-database $\rangeDB_1$:
$\rangeDB \defas \{ \rangeRel(1) \mapsto (1,2,2), \rangeRel(2) \mapsto (0,0,1), \rangeOf{S}(1) \mapsto (0,0,3), \rangeOf{S}(2) \mapsto (0,1,1) \}$. Consider the query $\rel \difference S$. Applying the definition of set difference from Geerts~\cite{Geerts:2010bz} which is $(\rel - S)(\tup) \defas \rel(\tup) \monK S(\tup)$, for tuple $\rangeTup \defas (1)$  we get the annotation  $\rangeRel(\rangeTup) \monOf{\uaaK{\semN}} \rangeOf{S}(\rangeTup) = (1,2,2) \monOf{\uaaK{\semN}} (0,0,3) = (\max(1 - 0,0), \max(2 - 0,0), \max(2 - 3,0)  = (1,2,0)$. However,  $\lbMarker{(1,2,0)} = 1$ is not a lower bound on the certain annotation of $\rangeTup$ , since $\rangeTup$ is not in the result of the query in $db_2$ ($max(1-3,0) = 0$). This failure of the point-wise semantics to preserve bounds is not all uprising if we consider the following  observation from~\cite{GL17}: because of the negation in set difference, a lower bound on certain answers can turns into  an upper bound. To calculate an lower (upper) bound for the result one has to combine a lower bound for the LHS input of the set difference with an upper bound of the RHS. Thus, we can define
\[
(\rangeRel \difference \rangeOf{S})(\rangeTup) \defas
\ut
{\lbMarker{\rangeRel(\rangeTup)} \monK \ubMarker{\rangeOf{S}({\rangeTup})}}
{\bgMarker{\rangeRel(\rangeTup)} \monK \bgMarker{\rangeOf{S}({\rangeTup})}}
{\ubMarker{\rangeRel(\rangeTup)} \monK \lbMarker{\rangeOf{S}({\rangeTup})}
}
\]
to get a result that preserves bounds. For instance, for $\rangeTup \defas (1)$ we get $(\max(1 - 3,0), \max(2 - 0,0), \max(2 - 0,0)) = (0,2,2)$.

This semantics is however still not sufficient if we consider range-annotated values. For instance, consider the following $\uaaK{\semN}$-database $\rangeDB_2$ that also bounds our example incomplete  $\semN$-database :
$\{ \rangeRel(1) \mapsto (1,1,1), \rangeRel(\uv{1}{1}{2}) \mapsto (1,1,1), \rangeOf{S}(\uv{1}{1}{2}) \mapsto (1,1,3) \}$.
Observe that tuple $(1))$ from the \abbrBGW ($\db_1$) is encoded as two tuples in $\rangeDB_2$. To calculate the annotation of this tuple in the \abbrBGW we need to sum up the annotations of all such tuples in the LHS and RHS. To calculate lower bound annotations, we need to also use the sum of annotations of all tuples representing the tuple and then compute the monus of this sum with the sum of all annotations of tuples from the RHS that could be equal to this tuple in some world. Two range-annotated tuples may represent the same tuple in some world if all of their attribute values overlap.
Conversely, to calculate an upper bound it is sufficient to use annotations of RHS tuples if both tuples are certain (they are the same in every possible world).
We use the \abbrBG-combiner operator define above to merge tuples with the same \abbrBG values and then apply the monus using the appropriate set of tuples from the RHS.

 \begin{Definition}[Set Difference]\label{def:set-diff-semantics}
Let $\rangeTup$ and $\rangeTup'$ be n-ary range-annotated tuples with schema $(a_1, \ldots, a_n)$.
   We define a predicate $\rangeTup \equiv \rangeTup'$ that evaluates to true iff $\rangeTup=\rangeTup'$ and both $\rangeTup$ and $\rangeTup'$ are certain and a predicate $\rangeTup \matches \rangeTup'$ that evaluates to true iff $\forall i \in \{1, \ldots, n\}: \lbMarker{\rangeTup.a_i} \ordK \lbMarker{\rangeTup'.a_i} \ordK \ubMarker{\rangeTup.a_i} \vee \lbMarker{\rangeTup.a_1} \ordK \ubMarker{\rangeTup'.a_i} \ordK \ubMarker{\rangeTup.a_i}$. Using these predicates we define set difference as shown below.
   \begin{align*}
     \lbMarker{(\rangeRel_1 - \rangeRel_2)(\rangeTup)}  & \defas \lbMarker{\combine(\rangeRel_1)(\rangeTup)} \monK \sum_{\rangeTup \matches \rangeTup'} \ubMarker{\rangeRel_2(\rangeTup')} \\
     \bgMarker{(\rangeRel_1 - \rangeRel_2)(\rangeTup)} & \defas \bgMarker{\combine(\rangeRel_1)(\rangeTup)} \monK \sum_{\bgOf{\rangeTup} = \bgOf{\rangeTup'}} \bgMarker{\rangeRel_2(\rangeTup')} \\
     \ubMarker{(\rangeRel_1 - \rangeRel_2)(\rangeTup)}      & \defas \ubMarker{\combine(\rangeRel_1)(\rangeTup)} \monK \sum_{\rangeTup \equiv \rangeTup'} \lbMarker{\rangeRel_2(\rangeTup')}
   \end{align*}
 \end{Definition}
%%%%%%%%%%%%%%%%%%%%%%%%%%%%%%%%%%%%%%%%

\subsection{Bound Preservation}
\label{sec:bound-preservation-diff}

We now demonstrate that the semantics we have defined for set difference preserves bounds.

%%%%%%%%%%%%%%%%%%%%%%%%%%%%%%%%%%%%%%%%%%%%%%%%%%%%%%%%%%%%%%%%%%%%%%%%%%%%%%%%
\begin{Theorem}[Set Difference Preserves Bounds]\label{theo:set-difference-prese}
Let $\query \defas R - S$, $\rel$ and $S$ be incomplete $\semK$-relations, and $\rangeRel$ and $\rangeOf{S}$ be $\uaaK{\semK}$-relations that bound $\rel$ and $S$. Then $\query(\rangeRel,\rangeOf{S})$ bounds $\query(\rel,S)$.
\end{Theorem}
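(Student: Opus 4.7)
The plan is to mirror the structure of the proof of \Cref{lem:ra-plus-preserves-bounds}: for any possible world pair $(R,S) \in \prel_R \times \prel_S$, start from the tuple matchings $\TM_R$ and $\TM_S$ that witness $R \dbbounds \rangeRel$ and $S \dbbounds \rangeOf{S}$, and use them to construct a tuple matching $\TM_Q$ that witnesses $(R - S) \dbbounds (\rangeRel - \rangeOf{S})$. Since $\uaaK{\semK}$ is a pointwise construction, it suffices to verify each of the three coordinates (lower, \abbrBGW, upper) separately, relying on the monotonicity property of the monus: if $k_1 \ordK k_2$ and $l_1 \geqK l_2$, then $k_1 \monK l_1 \ordK k_2 \monK l_2$.

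First I would dispatch the \abbrBGW coordinate. By \Cref{def:combine preserves bounds}, $\combine(\rangeRel)$ still bounds $R$, and by construction its \abbrBGW coincides with $\bgOf{\rangeRel}$. Since the \abbrBGW condition of $R \dbbounds \rangeRel$ gives $\bgOf{\rangeRel} = R$ on the selected world (and analogously for $S$), the middle component simplifies to standard deterministic set difference $(\bgOf{\rangeRel} - \bgOf{\rangeOf{S}})(\bgOf{\rangeTup})$, which equals $(R - S)(t)$ at the \abbrBGW level.

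Next I would establish the lower bound. For a range-annotated tuple $\rangeTup$, the claim is $\lbMarker{\combine(\rangeRel)(\rangeTup)} \monK \sum_{\rangeTup \matches \rangeTup'} \ubMarker{\rangeOf{S}(\rangeTup')} \ordK \sum_{t \tmatch \rangeTup} \TM_Q(\rangeTup, t)$ for an appropriate $\TM_Q$. The key observation is that any deterministic $t \tmatch \rangeTup$ with $S(t) > \zeroK$ must have its $\TM_S$-mass distributed only among $\rangeTup' \in \rangeDom^n$ with $t \tmatch \rangeTup'$, and any such $\rangeTup'$ satisfies $\rangeTup \matches \rangeTup'$ because their attribute intervals share the point $t$. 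Hence $S(t) \ordK \sum_{\rangeTup \matches \rangeTup'} \ubMarker{\rangeOf{S}(\rangeTup')}$. Combined with $\lbMarker{\combine(\rangeRel)(\rangeTup)} \ordK $ the corresponding $\TM_{\combine(\rangeRel)}$ row sum, and using monus monotonicity, the claim follows. I define $\TM_Q(\rangeTup, t)$ by redistributing $\TM_{\combine(\rangeRel)}(\rangeTup, t) \monK \TM_S$-mass appropriately so that $\sum_{\rangeTup} \TM_Q(\rangeTup, t) = (R - S)(t)$ holds as required by \Cref{def:tuple-matching}.

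For the upper bound I would argue symmetrically, with the twist that we need a \emph{lower} bound on $S(t)$ from the RHS. The restriction to $\rangeTup \equiv \rangeTup'$ (i.e., certain and syntactically equal) is what guarantees this: if $\rangeTup' \in \rangeOf{S}$ is certain and matches $\rangeTup$, then every possible world of $S$ contains at least $\lbMarker{\rangeOf{S}(\rangeTup')}$ copies of the unique $t$ encoded by $\rangeTup'$, so $S(t) \geqK \sum_{\rangeTup \equiv \rangeTup'} \lbMarker{\rangeOf{S}(\rangeTup')}$. Pairing this with $\ubMarker{\combine(\rangeRel)(\rangeTup)} \geqK R(t)$-row-sum and monus monotonicity yields the desired inequality. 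The main obstacle is bookkeeping: verifying that the constructed $\TM_Q$ simultaneously satisfies all three bound inequalities and the two conditions of \Cref{def:tuple-matching}, in particular that the mass assigned to range-annotated tuples outside of $\{\rangeTup : t \tmatch \rangeTup\}$ is zero and that the totals agree with $(R - S)(t)$; the restrictiveness of $\equiv$ versus the permissiveness of $\matches$ is exactly what makes both directions of the bound go through.
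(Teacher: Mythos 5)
Your plan matches the paper's own proof strategy (which is in fact only a partial sketch covering the lower bound), and your key observations are the right ones: any $\TM_S$-mass assigned to a deterministic $t \tmatch \rangeTup$ can only land on range-tuples $\rangeTup'$ with $\rangeTup \matches \rangeTup'$, and the restriction to $\equiv$ is exactly what guarantees a certain multiplicity of the unique encoded tuple for the upper bound. One refinement for when you execute the bookkeeping: the per-tuple inequality $S(t) \ordK \sum_{\rangeTup \matches \rangeTup'} \ubMarker{\rangeOf{S}(\rangeTup')}$ is not quite what the monus step needs, since several $t \tmatch \rangeTup$ may each be reduced; what you need is the row-aggregate version $\sum_{t \tmatch \rangeTup} S(t) \ordK \sum_{\rangeTup \matches \rangeTup'} \ubMarker{\rangeOf{S}(\rangeTup')}$, which follows from your same mass-distribution argument by summing over all $t \tmatch \rangeTup$ and bounding each column sum of $\TM_S$ by the corresponding upper annotation.
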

%%%%%%%%%%%%%%%%%%%%%%%%%%%%%%%%%%%%%%%%%%%%%%%%%%%%%%%%%%%%%%%%%%%%%%%%%%%%%%%%
%%%%%%%%%%%%%%%%%%%%%%%%%%%%%%%%%%%%%%%%
\begin{proof}
\BG{Write up proof}
Given all input tuples are bounded, we first prove that the lower bound of the query semantics reserves the bound. We assume relation $\rangeRel$ is pre-combined s.t. $\rangeRel=\combine(\rangeRel)$ and $R$ preserves the bound. \\
For lower bounds $\lbMarker{(\rangeRel-\rangeOf{S})(\rangeTup)}$, on the L.H.S. of $\monK$ we have $$\lbMarker{\rangeRel(\rangeTup)} \ordK \sum_{\tup \in \dataDomain^n}\TM(\rangeTup,\tup)$$
On the R.H.S. we have
$$\sum_{\rangeTup \matches \rangeTup'} \ubMarker{\rangeOf{S}(\rangeTup')} \geqK \sum_{\rangeTup \matches \rangeTup'}\sum_{\tup' \in \dataDomain^n}\TM(\rangeTup',\tup')$$
thus
\begin{align*}
	\lbMarker{(\rangeRel - \rangeOf{S})(\rangeTup)} \defas & \lbMarker{\rangeRel)(\rangeTup)} \monK \sum_{\rangeTup \matches \rangeTup'}\ubMarker{\rangeOf{S}(\rangeTup')} \\
	\ordK & \sum_{\tup \in \dataDomain^n} \TM(\rangeTup,\tup) \monK \sum_{\rangeTup \matches \rangeTup'}\sum_{\tup' \in \dataDomain^n}\TM(\rangeTup',\tup') \\
	= & \sum_{\tup \in \dataDomain^n}(\TM(\rangeTup,\tup) \monK \sum_{\tup' \in \dataDomain^n}\TM(\rangeTup',\tup'))
\end{align*}
So the lower bounds is bounded by tuple-matching $\forall_{\rangeTup \in \rangeRel}:\TM(\rangeTup,\tup)=\TM(\rangeTup,\tup) \monK \sum_{\tup' \in \dataDomain^n}\TM(\rangeTup',\tup')$.

\end{proof}
%%%%%%%%%%%%%%%%%%%%%%%%%%%%%%%%%%%%%%%%

%%% Local Variables:
%%% mode: latex
%%% TeX-master: "tech_report"
%%% End:

%%%%%%%%%%%%%%%%%%%%%%%%%%%%%%%%%%%%%%%%%%%%%%%%%%%%%%%%%%%%%%%%%%%%%%%%%%%%%%%%
\section{Aggregation}\label{sec:aggregation}
\BG{We previously claimed that this works for bags and sets. Does it work for a more general class of semirings? Maybe all semirings where the symbolic semimodule elements can always be reduced to a concrete value?}

We now introduce a semantics for aggregation over \abbrUAADBs that preserves
bounds. We leave a generalization to other semirings to future work. See~\cite{techreport} for a discussion of the challenges involved with that. Importantly, our semantics has \ptime data complexity. One major challenge in defining aggregation over $\semK$-relations
which also applies to our problem setting is that one has to take the
annotations of tuples into account when calculating aggregation function
results. For instance, under bag semantics (semiring $\semN$) the multiplicity
of a tuple affects the result of \lstinline!SUM! aggregation.
We based our semantics for aggregation on earlier results from~\cite{AD11d}.
  For \abbrUAADBs we have to overcome two major new challenges: (i)
since the values of group-by attributes may be uncertain, a tuple's group
membership may be uncertain too and (ii) we are aggregating over range-bounded
values.
\BG{Do we need these solution overview here:?}
To address (ii) we  utilitze our expression semantics for
range-bounded values from \Cref{sec:expression}. However, additional
complications arise when taking the $\uaaK{\semN}$-annotations of tuples into
account. For (i) we will reason about all possible group memberships of
range-annotated tuples to calculate bounds on group-by values, aggregation
function results, and number of result groups.

%%%%%%%%%%%%%%%%%%%%%%%%%%%%%%%%%%%%%%%%
% \subsection{Background}
% \label{sec:background-1}

%%%%%%%%%%%%%%%%%%%%%%%%%%%%%%%%%%%%%%%%
\subsection{Aggregation Monoids}
Amsterdamer et al.~\cite{AD11d} introduced a semantics for aggregation queries
over $\semK$-relations that commutes with homomorphisms and under which
aggregation results can be encoded with polynomial space.
Contrast this with the
aggregation semantics for c-tables
from~\cite{DBLP:journals/jiis/LechtenborgerSV02} where aggregation results may
be of size exponential in the input size.
\cite{AD11d} deals with aggregation functions that
are commutative monoids $\tuple{\monoid, \amadd, \amzero}$, i.e., where
the values from
$\monoid$ that are the input to aggregation are combined through an operation
$\amadd$ which has a neutral element $\amzero$. Abusing notation, we will use $\monoid$ to both denote the monoid and its domain. A monoid is a
  mathematical structure $\tuple{\monoid, \amadd, \amzero}$ where $\amadd$ is a
  commutative and associative binary operation over $\monoid$, and $\amzero$ is
  the neutral element of $\monoid$. For instance,
  $\msum \defas \tuple{\mathbb{R}, +, 0}$, i.e., addition over the reals can be used
  for sum aggregation. Most standard aggregation
functions (\aggsum, \aggmin, \aggmax, and \aggcount) can be expressed as monoids
or, in the case of \aggavg, can be derived from multiple monoids (count and
sum).
As an example, consider the monoids for $\aggsum$ and $\aggmin$:
$\msum \defas \tuple{\mathbb{R}, +, 0}$ and $\mmin \defas \tuple{\mathbb{R}, \min,
\infty}$.
%The result of aggregation over a set of $\monoid$ values is defined as the sum of these values in  $\monoid$.
For $\monoid \in \{\msum, \mmin, \mmax\}$ (\aggcount uses \msum), we define a corresponding monoid  $\rangeMon$ using
range-annotated expression semantics (\Cref{sec:expression}). Note that this gives us aggregation functions which can be applied to range-annotated values and are bound preserving, i.e., the result of the aggregation function bounds all possible results for any set of values bound by the inputs.
For example, $\aggmin$ is expressed as $\min(v,w) \defas \ifte{v \leq w}{v}{w}$.

%%%%%%%%%%%%%%%%%%%%%%%%%%%%%%%%%%%%%%%%%%%%%%%%%%%%%%%%%%%%%%%%%%%%%%%%%%%%%%%%
\begin{Lemma}\label{lem:range-annotated-monoids}
$\rangeMof{\msum}$, $\rangeMof{\mmin}$, $\rangeMof{\mmax}$ are monoids.
\end{Lemma}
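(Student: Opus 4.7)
The plan is to verify, for each of the three candidates $\rangeMof{\msum}, \rangeMof{\mmin}, \rangeMof{\mmax}$, the three defining properties of a commutative monoid: (i) closure of the binary operation on the range-annotated domain $\rangeDom$, (ii) associativity and commutativity, and (iii) existence of a two-sided identity. Since each of these monoids is induced by a scalar expression (\emph{addition}, \emph{if-then-else with $\leq$}) via the range-annotated expression semantics of \Cref{sec:range-expr-eval}, the argument is largely mechanical once closure is in hand.

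First I would address closure, which is the only mildly delicate step. Given $v,w \in \rangeDom$ with $\lbMarker{v} \leq \bgMarker{v} \leq \ubMarker{v}$ and analogously for $w$, the range-annotated semantics of $+$ gives componentwise sums, so monotonicity of $+$ in each argument immediately yields $\lbMarker{v}+\lbMarker{w} \leq \bgMarker{v}+\bgMarker{w} \leq \ubMarker{v}+\ubMarker{w}$, establishing that $\rangeMof{\msum}$'s binary operation lands in $\rangeDom$. For $\min$ and $\max$, I would use the expression $\ifte{v \leq w}{v}{w}$ and rely on \Cref{theo:expr-bound}: since the underlying $\min$/$\max$ are monotone on $\dataDomain$, and since the \abbrBG value of the output is by definition $\bgMarker{\seval{e}{\rval}} = \seval{e}{\bgMarker{\rval}}$ (which lies between the lower and upper results by monotonicity of $\min$/$\max$), closure follows.

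Next I would argue associativity and commutativity. Because range-annotated expression evaluation for these operators reduces to componentwise application on the $\lbMarker{\cdot}$, $\bgMarker{\cdot}$, and $\ubMarker{\cdot}$ coordinates (using, for $\min$/$\max$, the equivalent closed-form expressions $\lbMarker{\min(v,w)} = \min(\lbMarker{v},\lbMarker{w})$, etc., obtainable directly from monotonicity), these laws lift coordinatewise from the corresponding scalar monoids $\msum, \mmin, \mmax$ on $\dataDomain$, which are known to be commutative monoids.

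Finally, for the identity, I would exhibit the range-annotated neutral elements $\uv{0}{0}{0}$ for $\rangeMof{\msum}$, $\uv{\infty}{\infty}{\infty}$ for $\rangeMof{\mmin}$, and $\uv{-\infty}{-\infty}{-\infty}$ for $\rangeMof{\mmax}$, all of which trivially satisfy $\lbMarker{c}\leq\bgMarker{c}\leq\ubMarker{c}$ and hence lie in $\rangeDom$; by coordinatewise application the two-sided identity law transfers from the scalar monoids. The main (mild) obstacle is simply to confirm that the componentwise equalities used for $\min$/$\max$ are in fact what the general $\ifte{}{}{}$ clause of \Cref{def:range-expr-eval} produces; this reduces to a small case analysis on the truth value of $\seval{v \leq w}{\rval}$, after which the monoid laws drop out by pointwise inheritance.
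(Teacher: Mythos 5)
Your proposal is correct and follows essentially the same route as the paper's proof: both reduce each operation to a pointwise (coordinatewise) application on the lower-bound, \abbrBG, and upper-bound components — for $\mmin$/$\mmax$ by simplifying the $\ifte{v \leq w}{v}{w}$ expression via a case analysis — and then inherit associativity, commutativity, and the neutral element from the underlying scalar monoids. Your additional explicit check of closure (that the result stays in $\rangeDom$) is a detail the paper leaves implicit, but it does not change the argument.
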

%%%%%%%%%%%%%%%%%%%%%%%%%%%%%%%%%%%%%%%%%%%%%%%%%%%%%%%%%%%%%%%%%%%%%%%%%%%%%%%%
\begin{proof}
  Addition in $\rangeDom$ is applied point-wise. Thus, addition in $\rangeDom$
  is commutative and associative and has neutral element $\uv{0}{0}{0}$. Thus,
  $\rangeMof{\msum}$ is a monoid. For $\rangeMof{\mmin}$ if we substitute the
  definition of $\ifte{v \leq w}{v}{w}$ and $v \leq w$ and simplify the
  resulting expression we get
  \begin{align*}
    &\min(\uv{a_1}{a_2}{a_3}, \uv{b_1}{b_2}{b_3})\\
    = &\uv{\min(a_1,b_1)}{\min(a_2,b_2)}{\min(a_3,b_3)}
  \end{align*}
   That is, the operation is
  again applied pointwise and commutativity, associativity, and identity of the
  neutral element ($\uv{\infty}{\infty}{\infty}$) follow from the fact that
  $\mmin$ is a monoid. The proof for max is symmetric.
\end{proof}
%%%%%%%%%%%%%%%%%%%%%%%%%%%%%%%%%%%%%%%%%%%%%%%%%%%%%%%%%%%%%%%%%%%%%%%%%%%%%%%%

Based on \Cref{lem:range-annotated-monoids}, aggregation functions
over range-annotated values preserve bounds.

%%%%%%%%%%%%%%%%%%%%%%%%%%%%%%%%%%%%%%%%%%%%%%%%%%%%%%%%%%%%%%%%%%%%%%%%%%%%%%%%
\begin{Corollary}[Aggregation Functions Preserve Bounds]\label{cor:aggregation-functions-preserve}
  Let $\mathcal{S} = \{c_1, \ldots, c_n\} \subseteq \rangeDom$ be a set of
  range-annotated values and $S = \{d_1, \ldots d_n\} \subseteq \dataDomain$ be
  a set of values such that $c_i$ bounds $d_i$, and
  $\rangeMon \in \{\rangeMof{\msum}, \rangeMof{\mmax}, \rangeMof{\mmin}\}$, then
  using the addition operation of $\rangeMon$ ($\monoid$) we have that
  $\sum \mathcal{S} $ bounds $\sum S $.
\end{Corollary}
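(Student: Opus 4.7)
The plan is to proceed by induction on the cardinality $n$ of the input set, reducing the claim to bound-preservation of the binary monoid operation of $\rangeMon$, which in turn follows from \Cref{theo:expr-bound}.

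For the base case $n = 0$, both sums reduce to the neutral element of the respective monoid: $\uv{0}{0}{0}$ for $\rangeMof{\msum}$ and $\uv{\infty}{\infty}{\infty}$ for $\rangeMof{\mmin}$ (and symmetrically for $\rangeMof{\mmax}$). A certain range-annotated value with all three components equal trivially bounds itself per \Cref{def:bounding-vals}. For the inductive step, assume that $c_1 \amadd \cdots \amadd c_{n-1}$ bounds $d_1 \amadd \cdots \amadd d_{n-1}$ in the corresponding deterministic monoid; it then suffices to show that for any $c, c' \in \rangeDom$ and $d, d' \in \dataDomain$ with $c$ bounding $\{d\}$ and $c'$ bounding $\{d'\}$, the value $c \amadd c'$ bounds $\{d \amadd d'\}$.

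The key observation is that each of the operations of $\rangeMof{\msum}$, $\rangeMof{\mmin}$, and $\rangeMof{\mmax}$ is expressible as a scalar expression from \Cref{def:expr-syntax}: addition is given directly, while $\min(v,w)$ is $\ifte{v \leq w}{v}{w}$ and symmetrically for $\max$. Consider the range-annotated valuation $\rval$ that maps two variables $v, w$ to $c, c'$ respectively, and the deterministic valuation $\sval$ mapping them to $d, d'$. Since $c$ bounds $\{d\}$ and $c'$ bounds $\{d'\}$, $\rval$ bounds the incomplete valuation $\{\sval\}$ per \Cref{def:range-expr-bound}. Applying \Cref{theo:expr-bound} to the expression defining $\amadd$ then yields that the result range-annotated value bounds $\{d \amadd d'\}$, which is exactly the claim.

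The only potentially subtle point is the existential requirement in \Cref{def:bounding-vals} that some element of the bounded set equal $\bgMarker{c \amadd c'}$. This is immediate because \Cref{lem:range-annotated-monoids} shows that the monoid operations act pointwise on the three components, so $\bgMarker{c \amadd c'} = \bgMarker{c} \amadd \bgMarker{c'} = d \amadd d'$; the bounded singleton $\{d \amadd d'\}$ therefore contains the required witness. No major obstacle arises; the result is essentially a corollary of \Cref{theo:expr-bound} combined with the pointwise structure established in \Cref{lem:range-annotated-monoids}.
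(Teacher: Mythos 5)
Your proof is correct and takes essentially the same route as the paper, which simply states that the corollary ``follows immediately from \Cref{theo:expr-bound} and \Cref{lem:range-annotated-monoids}.'' Your write-up just makes explicit the routine induction on $n$ and the reduction of the binary monoid operation to a scalar expression covered by \Cref{theo:expr-bound}, together with the pointwise handling of the selected-guess component.
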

%%%%%%%%%%%%%%%%%%%%%%%%%%%%%%%%%%%%%%%%%%%%%%%%%%%%%%%%%%%%%%%%%%%%%%%%%%%%%%%%
\begin{proof}
  The corollary follows immediately from \Cref{theo:expr-bound} and \Cref{lem:range-annotated-monoids}.
\end{proof}
%%%%%%%%%%%%%%%%%%%%%%%%%%%%%%%%%%%%%%%%%%%%%%%%%%%%%%%%%%%%%%%%%%%%%%%%%%%%%%%%

%%%%%%%%%%%%%%%%%%%%%%%%%%%%%%%%%%%%%%%%%%%%%%%%%%%%%%%%%%%%%%%%%%%%%%%%%%%%%%%%
\begin{figure}[t]
  \centering
  \begin{align}
    k \smbpair_{\monoid} (m_1 \amadd m_2) & = k \smbpair_{\semK} m_1 \amadd k \smbpair_{\semK} m_2 \label{eq:semimodule-distributes-over-m-plus}\\
    (k_1 \addK k_2) \smbpair_{\monoid} m  & = k_1 \smbpair_{\semK} m \amadd k_2 \smbpair_{\semK} m \label{eq:semimodule-law-distributes-over-addition}\\
    \oneK \smbpair_{\monoid} m            & = m                                                    \label{eq:semimodule-with-k-one-yields-m} \\
    \zeroK \smbpair_{\monoid} m           & = \amzero                                             \label{eq:semimodule-with-k-zero-yields-m-zero} \\
    k \smbpair_{\monoid} \amzero          & = \amzero                                            \label{eq:semimodule-with-m-zero-yields-m-zero}  \\
    (k_1 \multK k_2) \smbpair_{\monoid} m & = k_1 \smbpair_{\monoid} (k_2 \smbpair_{\monoid} m) \label{eq:semimodule-law-multiplication-is-semimodule}
  \end{align}
  \caption{Semimodule laws (semiring $\semK$ paired with monoid $\monoid$)}
  \label{fig:semimodule-laws}
\end{figure}
%%%%%%%%%%%%%%%%%%%%%%%%%%%%%%%%%%%%%%%%%%%%%%%%%%%%%%%%%%%%%%%%%%%%%%%%%%%%%%%%

%%%%%%%%%%%%%%%%%%%%%%%%%%%%%%%%%%%%%%%%
\mypar{Semimodules}
One challenge of supporting aggregation over
$\semK$-relations is that the annotations of tuples have to be factored into the
aggregation computation. For instance, consider an $\semN$-relation $R(A)$ with
two tuples $\tuple{30} \mapsto 2$ and $\tuple{40} \mapsto 3$, i.e., there are two duplicates
of tuple $\tuple{30}$ and $3$ duplicates of tuple $\tuple{40}$. Computing the sum over $A$
we expect to get $30 \cdot 2 + 40 \cdot 3 = 180$. More generally speaking, we
need an operation $\smbpair_{\monoid}: \semN \times \monoid \rightarrow \monoid$
that combines semiring elements with values from the aggregation function's
domain. As observed in~\cite{AD11d}
this operation has to be a semimodule, i.e., it has to fulfill a set of equational laws, two of which are shown for $\semN$ in \Cref{fig:semimodule-laws}. Note that in the example above we made use of the fact that
$\smbpair_{\semN,\msum}$ is $\cdot$ to get
$30 \smbpair_{\semN} 2 = 30 \cdot 2 = 60$.
% In the following, we will use $\smbpair_{\monoid}$ to denote $\smbpair_{\semN,\monoid}$.
Operation  $\smbpair$ is not well-defined for
all semirings, but it is defined for $\semN$ and all of the monoids we consider.
We show the definition for $\smbpair_{\semN,\monoid}$ for all considered monoids below:
\begin{align*}
  k \smbpair_{\semN,\msum} m                        & = k \cdot m             \\
  k \smbpair_{\semN,\mmin} m = k \smbpair_{\mmax} m & =
  \begin{cases}
    m                                         & \text{if}\,\,\,k \neq 0 \\
    0                                         & \text{else}             \\
  \end{cases}
\end{align*}

%%%%%%%%%%%%%%%%%%%%%%%%%%%%%%%%%%%%%%%%
\mypar{The Tensor Construction}
Amsterdamer et al. demonstrated that
there is no meaningful way to define semimodules for all combinations of
semirings and standard aggregation function monoids. To be more
  precise, if aggregation function results are concrete values from the
  aggregation monoid, then it is not possible to retain the important property
  that queries commute with homomorphisms. Intuitively, that is the case because
  applying a homomorphism to the input may change the aggregation function
  result. Hence, it is necessary to delay the computation of concrete
  aggregation results by keeping the computation symbolic. For instance,
  consider an $\semNX$-relation $R(A)$ (provenance polynomials) with a single
  tuple $(30) \mapsto x_1$. If we compute the sum over $A$, then under a
  homomorphism $h_1: x_1 \to 1$ we get a result of $30 \cdot 1 = 30$ while under
  a homomorphism $h_2: x_1 \to 2$ we get $30 \cdot 2 = 60$. The solution
presented in~\cite{AD11d} uses monoids whose elements are symbolic expressions
that pair semiring values with monoid elements. Such monoids are compatible with
a larger class of semirings including, e.g., the provenance polynomial
semiring. Given a semiring $\semK$ and aggregation monoid $\monoid$, the
symbolic commutative monoid has as domain bags of elements from
$\semK \times \monoid$ with bag union as addition (denoted as $\akmadd$) and the
emptyset as neutral element. This structure is then extended to a
$\semK$-semimodule $\asm$ by defining
$k \smbpair_{\asm} \sum k_i \smpair m_i \defas \sum (k \multK k_i) \smpair m_i$
and taking the quotient (the structure whose elements are equivalent classes)
wrt. the semimodule laws. For some semirings, e.g., $\semN$ and $\semB$, the
symbolic expressions from $\asm$ correspond to concrete aggregation result
values from $\monoid$.\footnote{That is the case when $\monoid$ and $\asm$ are
  isomorphic.} However, this is not the case for every semiring and aggregation
monoid. For instance, for most provenance semirings these expressions cannot be
reduced to concrete values. Only by applying homomorphisms to semirings for
which this construction is isomorphic to the aggregation monoid is it possible
to map such symbolic expressions back to concrete values. For instance,
computing the sum over the $\semNX$-relation $R(A)$ with tuples
$(30) \mapsto x_1$ and $(20) \mapsto x_2$ yields the symbolic expression
$30 \smpair x_1 \kmadd{\semNX}{\msum} 20 \smpair x_2$. If the input tuple
annotated with $x_1$ occurs with multiplicity $2$ and the input tuple annotated
with $x_2$ occurs with multiplicity $4$ then this can be expressed by applying a
homomorphism $h: \semNX \to \semN$ defined as $h(x_1) = 2$ and $h(x_2) =
4$. Applying this homomorphism to the symbolic aggregation expression
$30 \smpair x_1 \kmadd{\semNX}{\msum} 20 \smpair x_2$, we get the expected
result $30 \cdot 2 + 20 \cdot 4 = 140$. If we want to support aggregation for
$\uaaK{\semK}$-relations with this level of generality then we would have to
generalize range-annotated values to be symbolic expressions from $\asm$ and
would have to investigate how to define an order over such values to be able to
use them as bounds in range-annotated values. For instance, intuitively we may
bound $(x_1 + x_2) \smpair 3 \kmadd{\semNX}{\msum} x_3 \smpair 2$ from below
using $x_1 \smpair 3 \kmadd{\semNX}{\msum} x_3 \smpair 1$ since
$x_1 \ordOf{\semNX} x_1 + x_2$ and $2 < 3$.  Then we would have to show that
aggregation computations preserve such bounds to show that queries with
aggregation with this semantics preserve bounds. We trade generality for
simplicity by limiting the discussion to semirings where $\asm$ is isomorphic to
$\monoid$. This still covers the important cases of bag semantics and set
semantics ($\semN$ and $\semB$), but has the advantage that we are not burdening
the user with interpreting bounds that are complex symbolic expressions. For
instance, consider an aggregation without group-by over a relation with millions
of rows. The resulting bound expressions for the aggregation result value may
contain millions of terms which would render them completely useless for human
consumption. Additionally, while query evaluation is still \ptime when using
$\asm$, certain operations like joins on aggregation results are
inefficient.\footnote{Comparing symbolic expressions requires an extension of
  annotations to treat these comparisons symbolically. The reason is that since
  an aggregation result cannot be mapped to a concrete value, it is also not
  possible to determine whether such a values are equal. The net result is that
  joins on such values may degenerate to cross products.}

%%%%%%%%%%%%%%%%%%%%%%%%%%%%%%%%%%%%%%%%%%%%%%%%%%%%%%%%%%%%%%%%%%%%%%%%%%%%%%%%
\subsection{Applying Semimodules to $\uaaK{\semN}$-Relations}
\label{sec:semim-uaaks-relat}

As we will demonstrate in the following, even though it may be possible to
define $\uaaK{\semK}$-semimodules, such semimodules cannot be bound preserving
and, thus, would be useless for our purpose. We then demonstrate that it is
possible to define bound preserving operations that combine $\uaaK{\semN}$
elements with $\rangeDom$ elements and that this is sufficient for defining a
bound preserving semantics for aggregation.

%%%%%%%%%%%%%%%%%%%%%%%%%%%%%%%%%%%%%%%%%%%%%%%%%%%%%%%%%%%%%%%%%%%%%%%%%%%%%%%%
\begin{Lemma}[Bound preserving $\uaaK{\semN}$-semimodules are impossible]\label{lem:bound-preserving-impossible}
The semimodule for $\uaaK{\semN}$ and $\msum$, if it exists, cannot be bound preserving.
\end{Lemma}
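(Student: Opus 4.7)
The plan is to show that any semimodule $\smbpair : \uaaK{\semN} \times \msum \to \msum$ satisfying the laws of Figure~\ref{fig:semimodule-laws} is essentially rigid: it must collapse the triple $(k_1, k_2, k_3)$ to one of its three coordinates, so the output, which lives in $\msum = \mathbb{R}$, is a single real that can encode at most one of a lower bound, the \abbrBGW value, or an upper bound on the aggregate---never all at once. Since bound preservation for aggregation requires both a lower and an upper bound on the possible aggregate values across worlds, no such semimodule can be bound preserving.

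First I would extract the form of any candidate $\smbpair$. Law~\eqref{eq:semimodule-distributes-over-m-plus} makes $\smbpair$ additive in its monoid argument, so (under canonical scalar linearity) $k \smbpair m = \phi(k) \cdot m$ where $\phi(k) \defas k \smbpair 1$. Law~\eqref{eq:semimodule-law-distributes-over-addition} then makes $\phi$ additive, law~\eqref{eq:semimodule-with-k-one-yields-m} fixes $\phi((1,1,1)) = 1$, and law~\eqref{eq:semimodule-law-multiplication-is-semimodule} combined with bilinearity yields multiplicativity $\phi(k \multOf{\uaaK{\semN}} k') = \phi(k)\,\phi(k')$. Hence $\phi$ is a semiring homomorphism $\uaaK{\semN} \to \mathbb{R}$. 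Since every $(k_1, k_2, k_3) \in \uaaK{\semN}$ decomposes additively as $k_1 (1,1,1) + (k_2 - k_1)(0,1,1) + (k_3 - k_2)(0,0,1)$, $\phi$ is determined by $\alpha \defas \phi((0,1,1))$ and $\beta \defas \phi((0,0,1))$. The idempotency relations $(0,1,1)^2 = (0,1,1)$ and $(0,0,1)^2 = (0,0,1)$, together with $(0,1,1) \multOf{\uaaK{\semN}} (0,0,1) = (0,0,1)$, restrict $(\alpha, \beta)$ to $\{(0,0),(1,0),(1,1)\}$, yielding precisely the three projections $\phi(k) \in \{k_1, k_2, k_3\}$.

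Finally I would defeat all three candidates at once with a single minimal counterexample. Let $\prel$ be the incomplete $\semN$-relation with possible worlds $\rel_1 = \emptyset$ and $\rel_2 = \{(1) \mapsto 1\}$, both bounded by the one-tuple $\uaaK{\semN}$-relation $\rangeRel$ containing $(1)$ with annotation $\ut{0}{0}{1}$. The sum aggregate over $A$ takes value $0$ in $\rel_1$ and $1$ in $\rel_2$, so for the semimodule-computed aggregate $r \in \mathbb{R}$ to bound the set $\{0,1\}$ simultaneously from below and above it would have to satisfy both $r \leq 0$ and $r \geq 1$, which is impossible in $\mathbb{R}$. Evaluating the aggregate via any of the three candidate semimodules gives $r = \phi((0,0,1)) \multN 1 \in \{0, 0, 1\}$; no such $r$ delivers both bounds. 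The hard part will be step two: deriving multiplicativity of $\phi$ from law~\eqref{eq:semimodule-law-multiplication-is-semimodule} requires genuine scalar linearity in the monoid argument, whereas law~\eqref{eq:semimodule-distributes-over-m-plus} alone only gives additivity, so pathological Cauchy-additive but nonlinear solutions must be excluded, which is standard under any reasonable computability, monotonicity, or continuity hypothesis on semimodules.
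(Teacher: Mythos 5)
There is a genuine gap, and it starts with a misreading of what object the lemma is about. The hypothetical semimodule in question does not map into the plain monoid $\msum=\mathbb{R}$: it is an action of $\uaaK{\semN}$ on the \emph{range-annotated} monoid, i.e.\ an operation $\uaaK{\semN}\times\rangeDom\to\rangeDom$ whose output is a triple $\uv{\lbMarker{m}}{m}{\ubMarker{m}}$ (this is the only reading under which ``bound preserving'' even typechecks with \Cref{def:bounding-vals}, and it is exactly the signature of the non-semimodule operation $\amysmbNAU$ that the paper goes on to define in \Cref{def:range-semimodules}). Your entire argument --- classifying semiring homomorphisms $\phi:\uaaK{\semN}\to\mathbb{R}$ as the three coordinate projections and then observing that a single real cannot be simultaneously $\leq 0$ and $\geq 1$ --- only rules out scalar-valued operations. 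It says nothing about triple-valued ones, and indeed your final counterexample is handled perfectly well by a triple: $\ut{0}{0}{1}\amysmbNAU\uv{1}{1}{1}=\uv{0}{0}{1}$ bounds both possible aggregate values $0$ and $1$. So the statement you defeat is strictly weaker than the lemma, and the classification machinery (which, as you concede, also needs an unstated continuity/monotonicity hypothesis to get multiplicativity of $\phi$) is aimed at the wrong target.

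The actual obstruction is the interaction of two semimodule laws with bound preservation on triples. Take $k=\ut{1}{1}{2}$ and $m=\uv{0}{0}{0}$. Law \eqref{eq:semimodule-with-m-zero-yields-m-zero} forces $k\amysmbNAU m=\uv{0}{0}{0}$. But $m=m_1+m_2$ with $m_1=\uv{-1}{-1}{-1}$ and $m_2=\uv{1}{1}{1}$, and bound preservation forces $k\amysmbNAU m_1$ to have lower bound $\leq -2$ (since $2\cdot(-1)=-2$ must be covered) and $k\amysmbNAU m_2$ to have upper bound $\geq 2$; law \eqref{eq:semimodule-distributes-over-m-plus} then says the sum of these two results must equal $\uv{0}{0}{0}$, yet its interval has width at least $[-1,1]$ --- a contradiction. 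In short: the lemma is not ``one number cannot encode two bounds,'' it is ``the linearity demanded of a semimodule is incompatible with the interval arithmetic that bound preservation demands, because intervals can only grow under the decomposition $0=(-1)+1$ while $k\amysmbNAU\amzero=\amzero$ pins the result to a point.'' Your proof would need to be rebuilt around this tension rather than around the cardinality of the codomain.
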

%%%%%%%%%%%%%%%%%%%%%%%%%%%%%%%%%%%%%%%%%%%%%%%%%%%%%%%%%%%%%%%%%%%%%%%%%%%%%%%%
\begin{proof}
  For sake of contadiction assume that this semimodule exists and is bound preserving.
  Consider $k = \ut{1}{1}{2}$ and $m = \uv{0}{0}{0}$. Then by semimodule law \ref{eq:semimodule-with-m-zero-yields-m-zero} we have $k \mysmbNAU{\msum} m = m = \uv{0}{0}{0}$. Now observe that for $m_1 = \uv{-1}{-1}{-1}$ and $m_2 = \uv{1}{1}{1}$ we have $m = m_1 + m_2$. Let $m_1' = k \mysmbNAU{\msum} m_1$. We know that $m_1' = \uv{l_1}{-1}{u_1}$ for some $l_1$ and $u_1$. Since the semimodule is assumed to be bound preserving we know that $l_1 \leq -2$ and $u_1 \geq -1$ ($-1 \cdot 2=-2$ and $-1 \cdot 1 = -1$). Analog, let $m_2' = \uv{l_2}{1}{u_2} = k \mysmbNAU{\msum} m_2$. By the same argument we get $l_2 \leq 1$ and $u_2 \geq 2$. Applying semimodule law \ref{eq:semimodule-distributes-over-m-plus} we get $k \mysmbNAU{\msum} m = k \mysmbNAU{\msum} (m_1 + m_2)  = m_1' + m_2'  = \uv{l_1 + l_2}{0}{u_1 + u_2}$. Let $l' = l_1 + l_2$, $u' = u_1 + u_2$ and $m'' = \uv{l'}{0}{u'}$. Based on the inequalities constraining $l_i$ and $u_i$ we know that $l_1 + l_2 \leq -1$ and $u_1 + u_2 \geq 1$. Thus, we have the contradiction $k \mysmbNAU{\msum} m = \uv{0}{0}{0} \neq \uv{l'}{0}{u'} = k \mysmbNAU{\msum} m$.
\end{proof}
%%%%%%%%%%%%%%%%%%%%%%%%%%%%%%%%%%%%%%%%%%%%%%%%%%%%%%%%%%%%%%%%%%%%%%%%%%%%%%%%

%As a first attempt for defining non-semimodule operations $\amysmbNAU$, we may try point-wise application of
%$\smbpair_{\semN,\monoid}$. However, this does not always preserve bounds. For instance, consider sum
%aggregation ($\msum$) over bag semantics relations (semiring $\semN$). Consider
%a $\uaaK{\semN}$-relation $\rangeRel(A)$ with a single tuple $(\uv{-3}{-2}{-1})$
%with annotation $\ut{0}{1}{1}$ and query $\aggregation{sum(A)}{}(R)$. If
%$\mysmbNAU{\msum}$ would be defined as point-wise application of
%$\smbpair_{\semN,\msum}$, then the result of aggregation would be
%$\uv{-3 \cdot 0}{-2 \cdot 1}{-1 \cdot 1} = \uv{0}{-2}{-1}$ which is a not a
%valid range-annotated value since $0 > -2$.
%The reason why point-wise application fails is that $\smbpair_{\semN,\msum}$
%does not preserve the order of the domain $\dataDomain$ based on which we
%defined range-annotated values. % As we already saw in \Cref{sec:range-expr-eval},
%% that multiplication over $\rangeDom$ cannot be defined pointwise to preserve
%% bounds.
In spite of this negative result, not everything is lost. Observe that it not necessary for the operation that combines semiring elements (tuple annotations) with elements of the aggregation monoid to follow semimodule laws. After all, what we care about is that the operation is bound-preserving.
Below we define operations $\mysmbNAU{\monoid}$ that are not semimodules, but are bound-preserving.
To achieve bound-preservation we can rely on the bound-preserving expression semantics we have defined in~\Cref{sec:expression}.
For example, 
since $\smbpair_{\semN,\msum}$ is multiplication, we can define
$\smbpair_{\uaaK{\semN}, \msum}$ using our definition of multiplication for range-annotated expression evaluation.
It turns out that this approach of computing the bounds as the  minimum and maximum over all pair-wise combinations of value and tuple-annotation bounds also works for $\mmin$ and $\mmax$:
% The idea applied in
% the definition below is the same as uses in the definition of multiplication
% over range-annotated values: evaluate $\smbpair_{\semK,\monoid}$ over all
% combinations of values from the input and then return the minimum and maximum
% values over this set as bounds.

%%%%%%%%%%%%%%%%%%%%%%%%%%%%%%%%%%%%%%%%
\begin{Definition}\label{def:range-semimodules}
  Consider an aggregation monoid $\monoid$ such that $\smbpair_{\semN,\monoid}$ is well defined.
  Let $(\lbMarker{m},m,\ubMarker{m})$ be a range-annotated value from $\rangeDom$ and $(\lbMarker{k},k,\ubMarker{k}) \in \uaaK{\semN}$.
  %Let triple  that  the value domain. Let triple $\semNX^3=(\lbMarker{x},x,\ubMarker{x})$ denote a tuple annotation value in \abbrAUDB.
  We define $\amysmbNAU$ as shown below. % range aggregation monoid multiplication computation yields a triple as below.
  \begin{flalign*}
      \ut{\lbMarker{k}}{k}{\ubMarker{k}} \amysmbNAU \uv{\lbMarker{m}}{m}{\ubMarker{m}} =
  \end{flalign*}
    \begin{align*}
		(& min(\lbMarker{k} \smbpair_{\semN,\monoid} \lbMarker{m},\lbMarker{k} \smbpair_{\semN,\monoid} \ubMarker{m},\ubMarker{k} \smbpair_{\semN,\monoid} \lbMarker{m},\ubMarker{k} \smbpair_{\semN,\monoid} \ubMarker{m}), \\
		& {k} \smbpair_{\semN,\monoid} {m}, \\
		& max(\lbMarker{k} \smbpair_{\semN,\monoid} \lbMarker{m},\lbMarker{k} \smbpair_{\semN,\monoid} \ubMarker{m},\ubMarker{k} \smbpair_{\semN,\monoid} \lbMarker{m},\ubMarker{k} \smbpair_{\semN,\monoid} \ubMarker{m}))
	\end{align*}

\end{Definition}
%%%%%%%%%%%%%%%%%%%%%%%%%%%%%%%%%%%%%%%%

% Note that, in general there is no guarantee that
% $\mysmbNAU{\monoid}$ according to \Cref{def:range-semimodules}
% preserves bounds. However,
As the following theorem demonstrates $\asmbNAU$ is in fact bound preserving.

%%%%%%%%%%%%%%%%%%%%%%%%%%%%%%%%%%%%%%%%%%%%%%%%%%%%%%%%%%%%%%%%%%%%%%%%%%%%%%%%
\begin{Theorem}\label{theo:semimodulish-bound-preserving}
  Let $\monoid \in \{\mmin, \mmax\}$ and $\semK \in \{\semB, \semN\}$ or $\monoid = \msum$ and $\semK = \semN$. Then $\mysmbNAU{\monoid}$ preserves bounds.
\end{Theorem}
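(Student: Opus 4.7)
The plan is to unfold the definition of ``bound preservation'' for $\amysmbNAU$ and show, for every valid pair of concrete inputs drawn from the ranges, that the three output components sandwich the concrete semimodule product. Concretely, I must show: (i) $\bgMarker{k} \smbpair_{\semN,\monoid} \bgMarker{m} = \bgMarker{k \amysmbNAU m}$ (i.e.\ the middle component is exactly the selected-guess product), which is immediate by construction; and (ii) for every $k' \in \kDom$ with $\lbMarker{k} \ordK k' \ordK \ubMarker{k}$ and every $m' \in \dataDomain$ with $\lbMarker{m} \leq m' \leq \ubMarker{m}$, the inequalities $\lbMarker{k \amysmbNAU m} \leq k' \smbpair_{\semN,\monoid} m' \leq \ubMarker{k \amysmbNAU m}$ hold.

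I would then split into cases according to the choice of $\monoid$. For the case $\monoid = \msum$ with $\semK = \semN$, the semimodule operation $\smbpair_{\semN,\msum}$ is ordinary multiplication, so the claim becomes exactly the interval-multiplication bound: the minimum (resp.\ maximum) over the four corner products $\lbMarker{k}\cdot\lbMarker{m},\,\lbMarker{k}\cdot\ubMarker{m},\,\ubMarker{k}\cdot\lbMarker{m},\,\ubMarker{k}\cdot\ubMarker{m}$ is a lower (resp.\ upper) bound on $k'\cdot m'$. This is the same argument already used in the proof of \Cref{theo:expr-bound} for the multiplication case, and I would simply invoke it rather than redo the sixteen-sign case analysis.

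For $\monoid \in \{\mmin,\mmax\}$, the operation $\smbpair_{\semN,\monoid}$ is piecewise: it returns $m'$ when $k' \neq 0$ and the neutral element ($\infty$ for $\mmin$, $-\infty$ for $\mmax$) when $k' = 0$. Here I would case-split on $\lbMarker{k}$ and $\ubMarker{k}$: if $\lbMarker{k} > 0$, then every $k'$ in the range is nonzero, every $\smbpair$-product collapses to the underlying $m'$, and the four corner values become $\lbMarker{m}, \lbMarker{m}, \ubMarker{m}, \ubMarker{m}$, so the min/max trivially bound $m'$. If $\ubMarker{k} = 0$, every product is the neutral element and all four corners agree. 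In the mixed case ($\lbMarker{k}=0 < \ubMarker{k}$) the neutral element appears among the corners, which pushes the upper bound to $\infty$ (or the lower bound to $-\infty$), exactly covering the possibility that some $k'=0$ contributes the neutral element while other $k'$ yield genuine $m'$ values. The same case analysis applies verbatim for $\semK = \semB$, since $\semB = \{\bot,\top\}$ is just the two-point sub-case of the integer analysis.

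The only mildly subtle step is handling the neutral elements $\pm\infty$ in the $\mmin/\mmax$ analysis; everything else reduces to interval arithmetic. I would close by noting that combining (i) with the case analyses of (ii) gives the three required conditions of \Cref{def:range-domain} and \Cref{def:bounding-vals}, so $\amysmbNAU$ preserves bounds as claimed.
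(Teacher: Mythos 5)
Your proposal is correct and follows essentially the same route as the paper: for $\msum$ it reduces to the four-corner interval-multiplication bound (the same argument as the multiplication case of \Cref{theo:expr-bound}, which the paper redoes via a sign-based case split), and for $\mmin/\mmax$ it case-splits on whether $\lbMarker{k}$ and $\ubMarker{k}$ are zero, with $\semB$ handled as the two-point special case. The one discrepancy is that you take $0 \smbpair_{\semN,\mmin} m$ to be the neutral element $\pm\infty$, whereas the paper's displayed definition of $\smbpair_{\semN,\mmin}$ returns $0$ in that case (the paper is itself inconsistent on this point elsewhere), but the corner min/max argument goes through under either convention.
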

\begin{proof}
We first prove the theorem for $\semN$.
  We have to show for all $\monoid \in \{\mmin, \mmax, \msum\}$ that for any $\vec{k} = \ut{\lbMarker{k}}{\bgMarker{k}}{\ubMarker{k}} \in \uaaK{\semN}$ and $\vec{m} = \uv{\lbMarker{m}}{\bgMarker{m}}{\ubMarker{m}} \in \rangeDom$ we have that $\vec{k} \mysmbNAU{\monoid} \vec{m}$ bounds $k \smbpair_{\semN,\monoid} m$ for any $k$ bound by $\vec{k}$ and $m$ bound by $\vec{m}$. We prove the theorem for each  $\monoid \in \{\mmin, \mmax, \msum\}$.

%%%%%%%%%%%%%%%%%%%%
\proofpara{$\monoid = \msum$}
We have $k \smbpair_{\semN,\msum} m \defas k \cdot m$. We distinguish four cases:

\proofpara{$\lbMarker{m} < 0$, $\ubMarker{m} < 0$}
We have that $\ubMarker{k} \cdot \lbMarker{m} = min(\lbMarker{k} \cdot \lbMarker{m},\lbMarker{k} \cdot \ubMarker{m},\ubMarker{k} \cdot \lbMarker{m},\ubMarker{k} \cdot \ubMarker{m}$) and $\lbMarker{k} \cdot \ubMarker{m} = max(\lbMarker{k} \cdot \lbMarker{m},\lbMarker{k} \cdot \ubMarker{m},\ubMarker{k} \cdot \lbMarker{m},\ubMarker{k} \cdot \ubMarker{m})$.
Thus,
\begin{align*}
\vec{k} \mysmbNAU{\monoid} \vec{m} &= \uv{\ubMarker{k} \cdot \lbMarker{m}}{\bgMarker{k} \cdot\bgMarker{m}}{\lbMarker{k} \cdot \ubMarker{m}}
\end{align*}
Now for any $k$ bound by $\vec{k}$ and $m$ bound by $\vec{m}$ we have:
$\ubMarker{k} \cdot \lbMarker{m} \leq k \cdot m$ because $m$ is a negative number and $k \leq \ubMarker{k}$. Analog, $k \cdot m \leq \lbMarker{k} \cdot \ubMarker{m}$, because $m$ is negative and $\ubMarker{m} \geq m$ and $\lbMarker{k} \leq k$. Thus, $\vec{k} \mysmbNAU{\monoid} \vec{m}$ bounds $k \smbpair_{\semN,\monoid} m$.

%%%%%%%%%%%%%%%%%%%%
\proofpara{$\lbMarker{m} \geq 0$, $\ubMarker{m} \geq 0$}
We have that $\lbMarker{k} \cdot \lbMarker{m} = min(\lbMarker{k} \cdot \lbMarker{m},\lbMarker{k} \cdot \ubMarker{m},\ubMarker{k} \cdot \lbMarker{m},\ubMarker{k} \cdot \ubMarker{m}$) and $\ubMarker{k} \cdot \ubMarker{m} = max(\lbMarker{k} \cdot \lbMarker{m},\lbMarker{k} \cdot \ubMarker{m},\ubMarker{k} \cdot \lbMarker{m},\ubMarker{k} \cdot \ubMarker{m})$.
Thus,
\begin{align*}
\vec{k} \mysmbNAU{\monoid} \vec{m} &= \uv{\lbMarker{k} \cdot \lbMarker{m}}{\bgMarker{k} \cdot\bgMarker{m}}{\ubMarker{k} \cdot \ubMarker{m}}
\end{align*}
Now for any $k$ bound by $\vec{k}$ and $m$ bound by $\vec{m}$ we have:
$\ubMarker{k} \cdot \lbMarker{m} \leq k \cdot m$ because $m$ is a positive number and $k \geq \lbMarker{k}$. Analog, $k \cdot m \leq \ubMarker{k} \cdot \ubMarker{m}$, because $m$ is negative and $\ubMarker{m} \geq m$ and $\ubMarker{k} \geq k$. Thus, $\vec{k} \mysmbNAU{\monoid} \vec{m}$ bounds $k \smbpair_{\semN,\monoid} m$.

%%%%%%%%%%%%%%%%%%%%
\proofpara{$\lbMarker{m} < 0$, $\ubMarker{m} \geq 0$}
We have that $\ubMarker{k} \cdot \lbMarker{m} = min(\lbMarker{k} \cdot \lbMarker{m},\lbMarker{k} \cdot \ubMarker{m},\ubMarker{k} \cdot \lbMarker{m},\ubMarker{k} \cdot \ubMarker{m}$) and $\ubMarker{k} \cdot \ubMarker{m} = max(\lbMarker{k} \cdot \lbMarker{m},\lbMarker{k} \cdot \ubMarker{m},\ubMarker{k} \cdot \lbMarker{m},\ubMarker{k} \cdot \ubMarker{m})$.
Thus,
\begin{align*}
\vec{k} \mysmbNAU{\monoid} \vec{m} &= \uv{\ubMarker{k} \cdot \lbMarker{m}}{\bgMarker{k} \cdot\bgMarker{m}}{\ubMarker{k} \cdot \ubMarker{m}}
\end{align*}

Now consider some $k$ bound by $\vec{k}$ and $m$ bound by $\vec{m}$. If $m$ is positive, then trivially $\lbMarker{k} \cdot \lbMarker{m}$ bounds $k \cdot m$ from below since $\lbMarker{m}$ is negative. Otherwise, the lower bound holds using the argument for the case of $\lbMarker{m} < 0$, $\ubMarker{m} < 0$. If $m$ is negative, then trivally $\ubMarker{k} \cdot \ubMarker{m}$ bounds $k \cdot m$ from above since $\ubMarker{m}$ is positive. Otherwise, the upper bound holds using the argument for the case of $\lbMarker{m} \geq 0$, $\ubMarker{m} \geq 0$. Thus, $\vec{k} \mysmbNAU{\monoid} \vec{m}$ bounds $k \smbpair_{\semN,\monoid} m$.

%%%%%%%%%%%%%%%%%%%%
\proofpara{$\monoid = \mmin$}
We have
\[
  k \smbpair_{\semN,\mmin} m \defas
\begin{cases}
  0 &\mathtext{if} k = 0\\
  m &\mathtext{otherwise}\\
\end{cases}
\]

We distinguish three cases.

\proofpara{$\lbMarker{k} \geq 0$}
If $\lbMarker{k} \geq 0$, then $k \smbpair_{\semN,\mmin} m$ returns $m$ and $\vec{k} \mysmbNAU{\mmin} \vec{m}$ returns $\vec{m}$. Since $\vec{m}$ bounds $m$ also $\vec{k} \mysmbNAU{\mmin} \vec{m}$ bounds  $k \smbpair_{\semN,\mmin} m$.

\proofpara{$\lbMarker{k} = 0$, $\ubMarker{k} \geq 0$}
Now consider the remaining case: $\lbMarker{k} = 0$. Then the result of $\vec{k} \mysmbNAU{\mmin} \vec{m}$ simplifies to $\uv{min(0,\lbMarker{m})}{\bgMarker{k} \cdot \bgMarker{m}}{max(0,\ubMarker{m})}$. Now consider some $k$ bound by $\vec{k}$ and $m$ bound by $\vec{m}$. If $k \neq 0$ then $k \smbpair_{\semN,\mmin} m = m$ and the claim to be proven holds. Otherwise, $k \smbpair_{\semN,\mmin} m = 0$ which is bound by $\uv{min(0,\lbMarker{m})}{\bgMarker{k} \cdot \bgMarker{m}}{max(0,\ubMarker{m})}$.

\proofpara{$\lbMarker{k} = 0$, $\ubMarker{k} = 0$}
In this case $\vec{k} \mysmbNAU{\mmin} \vec{m} = \uv{0}{0}{0}$ and since $k = 0$ because $\vec{k}$ bounds $k$ we have $k \smbpair_{\semN,\mmin} m = 0$ which is trivally bound by $\uv{0}{0}{0}$.

%%%%%%%%%%%%%%%%%%%%
\proofpara{$\monoid = \mmax$}
We have
\[
  k \smbpair_{\semN,\mmax} m \defas
\begin{cases}
  0 &\mathtext{if} k = 0\\
  m &\mathtext{otherwise}\\
\end{cases}
\]
The proof for $\mmax$ is analog to the proof for $\mmin$.

For semiring $\semB$ and $\monoid = \mmin$ or $\monoid = \mmax$ observe that $k \smbpair_{\semB,\monoid} m$ is the identity for $m$ if $k \neq \zeroOf{\semB}$ and $k \smbpair_{\semB,\monoid} m = 0$ otherwise. Thus, the proof is analog to the proof for semiring $\semN$.
\end{proof}
%%%%%%%%%%%%%%%%%%%%%%%%%%%%%%%%%%%%%%%%%%%%%%%%%%%%%%%%%%%%%%%%%%%%%%%%%%%%%%%%

%%%%%%%%%%%%%%%%%%%%%%%%%%%%%%%%%%%%%%%%%%%%%%%%%%%%%%%%%%%%%%%%%%%%%%%%%%%%%%%%
\subsection{Bound-Preserving Aggregation}
\label{sec:bound-pres-aggr}
We now define a bound preserving aggregation semantics based on the
$\amysmbNAU$ operations. As mentioned above, the main
challenge we have to overcome is to deal with the uncertainty of group
memberships plus the resulting uncertainty in the number of groups and of which
inputs contribute to a group's aggregation function result values.  In general,
the number of possible groups encoded by an input \abbrUAADB-relation may be
very large. Thus, enumerating all concrete groups is not a viable option.  While
\abbrUAADBs can be used to encode an arbitrary number of groups as a single
tuple, we need to decide how to trade conciseness of the representation for
accuracy. Furthermore, we need to ensure that the aggregation result in the
\abbrBGW is encoded by the result. There are many possible strategies for how to
group possible aggregation results. We, thus, formalize grouping strategies and define a semantics for aggregation that preserves bounds for any such grouping semantics. Additionally, we present a reaonsable default strategy.
We define our aggregation semantics in three steps: (i) we introduce grouping strategies and our default grouping strategy that matches \abbrBG and possible input groups to output tuples (each output tuple will represent exactly one group in the  \abbrBGW and one or more possible groups); (ii) we calculate group-by attribute ranges for output tuples based on the assignment of input tuples to output tuples; (iii) we calculate the multiplicities (annotations) and bounds for aggregation function results for each output tuple. % based on the grouping strategy.

%%%%%%%%%%%%%%%%%%%%%%%%%%%%%%%%%%%%%%%%%%%%%%%%%%%%%%%%%%%%%%%%%%%%%%%%%%%%%%%%
\subsection{Grouping Strategies}
\label{sec:grouping-strategies}

A \textit{grouping strategy} $\gstrat$ is a function that takes as input a n-ary $\uaaK{\semK}$-relation $\rangeRel$ and list of group-by attributes $\gbAttrs$ and returns a triple $(\stratGrps, \stratBG, \stratPoss)$ where $\stratGrps$ is a set of output groups, $\stratBG$
is a  function associating each input tuple $\rangeTup$ from $\rangeRel$ where $\bgMarker{\rangeRel(\rangeTup)} \neq \zeroK$ with an output from $\stratGrps$,
and $\stratPoss$ is a  function associating each input tuple $\rangeTup$ from $\rangeRel$ where $\rangeRel(\rangeTup) \neq \uaaZero{\semK }$ with an output from $\stratGrps$. Note that the elements of  $\stratGrps$ are just unique identifiers for output tuples.  The actual range-annotated output tuples returned by an aggregation operator are not returned by the grouping strategy directly but are constructed by our aggregation semantics based on the information returned by a grouping strategy.
Intuitively, $\stratBG$ takes care of the association of groups in the \abbrBGW with an output while $\stratPoss$ does the same for all possible groups. For $\gstrat$ to be a grouping strategy we require that for any input relation $\rangeRel$ and list of group-by attributes $\gbAttrs$ we have:

%%%%%%%%%%%%%%%%%%%%%%%%%%%%%%%%%%%%%%%%
\begin{align*}
  % same BG group => same output
  \forall \rangeTup, \rangeTup' \in \rangeDom^n:\,\,\, &\rangeRel(\rangeTup) \neq \zeroOf{\uaaK{\semK}}
  \wedge \rangeRel(\rangeTup) \neq \zeroOf{\uaaK{\semK}}
  \wedge \bgMarker{\rangeTup.\gbAttrs} = \bgMarker{\rangeTup'.\gbAttrs}\\
  &\rightarrow
  \stratBG(\rangeTup) = \stratBG(\rangeTup')
\end{align*}
%%%%%%%%%%%%%%%%%%%%%%%%%%%%%%%%%%%%%%%%

This condition ensures that for every tuple that exists in the \abbrBGW, all inputs that exists in the \abbrBGW and belong this group are associated with a single output. Our aggregation semantics relies on this property to produce the correct result in the \abbrBGW. We use function $\stratPoss$ to ensure that every possible group is accounted for by the $\uaaK{\semK}$-relation returned as the result of aggregation. Inuitively, every range-annotated input tuple may correspond to several possible groups based on the range annotations of its group-by attribute values.
Our aggregation semantics takes ensures that an output tuple's group-by ranges bound the group-by attribute ranges of every input associated to it by $\stratPoss$.

%%%%%%%%%%%%%%%%%%%%%%%%%%%%%%%%%%%%%%%%%%%%%%%%%%%%%%%%%%%%%%%%%%%%%%%%%%%%%%%%
\subsection{Default Grouping Strategy}
Our \textit{default grouping strategy}
takes as input a n-ary $\uaaK{\semN}$-relation $\rangeRel$ and list of group-by attributes $\gbAttrs$ and returns a pair $(\stratGrps, \stratPoss)$ where $\stratGrps$ is a set of output tuples --- one for every \abbrBG group, i.e., an input tuple's group-by values in the \abbrBGW.
$\stratPoss$ assigns each input tuple to one output tuple based on its \abbrBG group-by values.
Note that even if the \abbrBG annotation of an input tuple is $0$, we still use its \abbrBG values
  to assign it to an output tuple.
  Only tuples that are not possible (annotated with $\uaaZero{\semN} = \ut{0}{0}{0}$) are not considered.
  % All possible groups of an input tuple are associated with the output that corresponds to the tuple's group-by values in the \abbrBGW.
  Since output tuples are identified by their \abbrBG group-by values, we will use these values to  identify elements from $\stratGrps$.

%%%%%%%%%%%%%%%%%%%%%%%%%%%%%%%%%%%%%%%%
\begin{Definition}[Default Grouping Strategy]\label{def:default-grouping-strategy}
Consider a query $\query \defas \aggregation{\gbAttrs}{f(A)}(\rangeRel)$.
 Let $\rangeTup \in \rangeDom^n$ such that $\rangeRel(\rangeTup) \neq \zeroOf{\uaaK{\semN}}$ and $\rangeTup' \in \rangeDom^n$ such that $\bgMarker{\rangeRel(\rangeTup')} \neq \zeroN$. The default grouping strategy $\gsdef \defas (\gsdefG,  \gsdefP)$ is defined as shown below.
\begin{align*}
 \gsdefG                      & \defas  \{ t.\gbAttrs \mid \exists \rangeTup: \bgMarker{\rangeTup} = t \wedge \rangeRel(\rangeTup) \neq \uaaZero{\semN} \} &
  % \gsdefBG(\rangeTup')        & \defas \bgMarker{\rangeTup'.\gbAttrs}  &
  \gsdefP(\rangeTup)          & \defas \bgMarker{\rangeTup.\gbAttrs}
\end{align*}
\end{Definition}
%%%%%%%%%%%%%%%%%%%%%%%%%%%%%%%%%%%%%%%%

% Note that one range-annotated input tuple may belong to multiple groups since
% its group-by values may bound multiple \abbrBG groups.
For instance, consider
three tuples $\rangeTup_1 \defas \tuple{\uv{1}{2}{2}}$ and
$\rangeTup_2 \defas \tuple{\uv{2}{2}{4}}$ and $\rangeTup_3 \defas \tuple{\uv{2}{3}{4}}$ over schema $\rel(A)$.
Furthermore, assume that $\rangeRel(\rangeTup_1) = \ut{1}{1}{1}$, $\rangeTup(\rangeTup_2) = \ut{0}{0}{1}$, and
$\rangeTup(\rangeTup_3) = \ut{0}{0}{3}$.
Grouping on $A$,
the default strategy will generate  two output groups $\agroup_1$ for \abbrBG group $(2)$ and
$\agroup_2$ for \abbrBG group $(3)$.
% Tuple $\rangeTup_1$ is the only tuple that is present in the \abbrBGW. Function $\gsdefBG$ assigns $\rangeTup_1$ to $\agroup_1$.
Based on their \abbrBG group-by values, the possible grouping function $\gsdefP$ assigns $\rangeTup_1$ and $\rangeTup_2$ to $\agroup_1$ and $\rangeTup_3$ to $\agroup_2$.

%%%%%%%%%%%%%%%%%%%%%%%%%%%%%%%%%%%%%%%%%%%%%%%%%%%%%%%%%%%%%%%%%%%%%%%%%%%%%%%%
\subsection{Aggregation Semantics}
We now introduce an aggregation semantics based on this grouping strategy.  For
simplicity we define aggregation without group-by as a special case of
aggregation with group-by (the only difference is how tuple annotations are
handled). We first define how to construct a result tuple $\aout$ for each
output group $\asgrp$ returned by the grouping strategy and then present how to
calculate tuple annotations. The construction of an output tuple is divided into
two steps: (i) determine range annotations for the group-by attributes and (ii)
determine range annotations for the aggregation function result attributes.

%%%%%%%%%%%%%%%%%%%%%%%%%%%%%%%%%%%%%%%%
\mypar{Group-by Bounds}
To ensure that all possible groups an input tuple $\rangeTup$ with
$\gsdefP(\rangeTup) = \aout$ belongs to are contained in $\aout.\gbAttrs$ we
have to merge the group-by attribute bounds of all of these tuples. Furthermore,
we set $\bgMarker{\aout.\gbAttrs} = \aout$, i.e., we use the unique \abbrBG
group-by values of all input tuples assigned to $\aout$ (i.e., $\bgMarker{\aout.\gbAttrs} = \asgrp$) as the output's \abbrBG group-by value.

%%%%%%%%%%%%%%%%%%%%%%%%%%%%%%%%%%%%%%%%
\begin{Definition}[Range-bounded Groups]\label{def:range-bounded-groups}
Consider a result group $\asgrp \in \gsdefG(\gbAttrs, \rangeRel)$ for an aggregation with group-by attributes $\gbAttrs$ over a $\uaaK{\semN}$-relation $\rangeRel$. The bounds for the group-by attributes values of $\aout$ are defined as shown below. Let $\agroup$ be the unique element from the set $\{ \bgMarker{\rangeTup.\gbAttrs} \mid \bgMarker{\rangeRel(\rangeTup)} \neq 0 \wedge \stratPoss(\rangeTup) = \aout \}$. For all $a \in \gbAttrs$ we define:
\begin{align*}
\lbMarker{\aout.a} &= \min_{\rangeTup: \stratPoss(\rangeTup) = \aout} \lbMarker{\rangeTup.a} &
\bgMarker{\aout.a} &= \agroup.a &
\ubMarker{\aout.a} &= \max_{\rangeTup: \stratPoss(\rangeTup) = \aout} \ubMarker{\rangeTup.a}
\end{align*}
\end{Definition}
%%%%%%%%%%%%%%%%%%%%%%%%%%%%%%%%%%%%%%%%

Note that in the definition above, $\min$ and $\max$ are the minimum and maximum
wrt. to the order over the data domain $\dataDomain$ which we used to define
range-annotated values.  Reconsider the three example tuples and two result groups from above.
The group-by range annotation for output tuple $\outof{\agroup_1}$ is
$\uv{\min(1,2)}{2}{\max{2,4}} = \uv{1}{2}{4}$. Observe that $\uv{1}{2}{4}$ bounds
any group $\rangeTup_1$ and $\rangeTup_2$ may belong to in some possible world.

%%%%%%%%%%%%%%%%%%%%%%%%%%%%%%%%%%%%%%%%%%%%%%%%%%%%%%%%%%%%%%%%%%%%%%%%%%%%%%%%
\mypar{Aggregation Function Bounds}
To calculate bounds on the result of an aggregation function for one group, we have to reason about the minimum and maximum possible aggregation function result based on the bounds of aggregation function input values, their row annotations, and their possible and guaranteed group memberships (even when a value of the aggregation function input attribute is certain the group membership of the tuple it belongs too may be uncertain). To calculate a conservative lower bound of the aggregation function result for an output tuple $\aout$, we use $\amysmbNAU$ to pair the aggregation function value of each tuple $\rangeTup$ with $\stratPoss(\rangeTup) = \asgrp)$ with the tuple's annotation and then extract the lower bound from the resulting range-annotated value.
For some tuples their  group membership is uncertain because either their group-by values are uncertain or they may not exist in all possible worlds (their certain multiplicity is $0$).
We take this into account by taking the minimum of the neutral element of the aggregation monoid and the result of $\amysmbNAU$ for such tuples. Towards this goal we introduce a predicate $\uncertg{\gbAttrs}{\rangeRel}{\rangeTup}$ that is defined as shown below.
\begin{align*}
  \uncertg{\gbAttrs}{\rangeRel}{\rangeTup} \defas (\exists a \in \gbAttrs: \lbMarker{\rangeTup.a} \neq \ubMarker{\rangeTup.a}) \vee \lbMarker{\rangeRel(\rangeTup)} = 0
\end{align*}
We then sum up the resulting values in the aggregation monoid. Note that here summation is assumed to  use addition in $\monoid$. % The result for each tuple belong to the group $\tgrouping(\agroup)$.
The upper bound calculation is analog (using the upper bound and maximum instead). The \abbrBG result is a calculated using standard $\semK$-relational semantics. In the definition we will use $\rangeTup \toverlaps \rangeTup'$ to denote that the range annotations of tuples $\rangeTup$ and $\rangeTup'$ with the same schema $(A_1, \ldots, A_n)$ overlap on each attribute $A_i$, i.e.,
\begin{align*}
\rangeTup \toverlaps \rangeTup' \defas \bigwedge_{i \in \{1, \ldots, n\}} [ \lbMarker{\rangeTup.A_i}, \ubMarker{\rangeTup.A_i} ] \cap  [ \lbMarker{\rangeTup'.A_i}, \ubMarker{\rangeTup'.A_i} ] \neq \emptyset
\end{align*}

%%%%%%%%%%%%%%%%%%%%%%%%%%%%%%%%%%%%%%%%
\begin{Definition}[Aggregation Function Result Bounds]\label{def:agg-function-bounds}
  Consider an output $\asgrp \in \gsdefG$, input $\rangeRel$, group-by
  attributes $\gbAttrs$, and aggregation function $f(A)$ with monoid $\monoid$.
  We use $\tgrouping(\asgrp)$ to denote the set of input tuples whose group-by
  attribute bounds overlap with $\aout.\gbAttrs$, i.e., they may be belong to a
  group represented by $\aout$:
  \begin{align*}
    \tgrouping(\asgrp) \defas \{ \rangeTup \mid \rangeRel(\rangeTup) \neq \uaaZero{\semN} \wedge \rangeTup.\gbAttrs \toverlaps \aout.\gbAttrs\}
  \end{align*}
  The bounds on the aggregation function result for tuple $\aout$ are defined as:
  \begin{align*}
  \lbMarker{\aout.f(A)} &= \sum_{\rangeTup \in \tgrouping(\asgrp)} \lbagg{\rangeTup}\\
  \lbagg{\rangeTup}     &= \begin{cases}
                            min(\amzero,\lbMarker{(\rangeRel(\rangeTup) \amysmbNAU \rangeTup.A)}) &\mathtext{if }\uncertg{\gbAttrs}{\rangeRel}{\rangeTup} \\
                            \lbMarker{(\rangeRel(\rangeTup) \amysmbNAU \rangeTup.A)} &\mathtext{otherwise}\\
                          \end{cases}\\
  %%%%%%%%%%%%%%%%%%%%%%%%%%%%%%%%%%%%%%%%
   \bgMarker{\aout.f(A)} &= \sum_{\rangeTup \in \tgrouping(\asgrp)} \bgMarker{(\rangeRel(\rangeTup) \amysmbNAU \rangeTup.A)} \\
  %%%%%%%%%%%%%%%%%%%%%%%%%%%%%%%%%%%%%%%%
  \ubMarker{\aout.f(A)} &= \sum_{\rangeTup \in \tgrouping(\asgrp)} \ubagg{\rangeTup}\\
  \ubagg{\rangeTup}     &= \begin{cases}
                            max(\amzero,\ubMarker{(\rangeRel(\rangeTup) \amysmbNAU \rangeTup.A)}) &\mathtext{if }\uncertg{\gbAttrs}{\rangeRel}{\rangeTup} \\
                            \ubMarker{(\rangeRel(\rangeTup) \amysmbNAU \rangeTup.A)}&\mathtext{otherwise}\\
                          \end{cases}
  % \lbMarker{\aout(\agroup).f(A)} &= \sum_{\rangeTup.\gbAttrs \tmatch \aout(\agroup).\gbAttrs}min(\amzero,\lbMarker{(\rangeRel(\rangeTup) \smbpair_{\uaaK{\semK}} \rangeTup.A)}) \\
  %  \bgMarker{\aout(\agroup).f(A)} &= \sum_{\rangeTup.\gbAttrs \tmatch \aout(\agroup).\gbAttrs}\bgMarker{(\rangeRel(\rangeTup) \smbpair_{\uaaK{\semK}} \rangeTup.A)} \\
  %  \ubMarker{\aout(\agroup).f(A)} &= \sum_{\rangeTup.\gbAttrs \tmatch \aout(\agroup).\gbAttrs}max(\amzero,\ubMarker{(\rangeRel(\rangeTup) \smbpair_{\uaaK{\semK}} \rangeTup.A)}) \\
\end{align*}
\end{Definition}
%%%%%%%%%%%%%%%%%%%%%%%%%%%%%%%%%%%%%%%%

%%%%%%%%%%%%%%%%%%%%%%%%%%%%%%%%%%%%%%%%
\begin{Example}
  For instance, consider calculating the sum of $A$ grouping on $B$ for a relation $R(A,B)$  which consists of two tuples $\rangeTup_3 \defas \tuple{\uv{3}{5}{10},\uv{3}{3}{3}}$ and $\rangeTup_4 \defas \tuple{\uv{-4}{-3}{-3},\uv{2}{3}{4}}$ which are both annotated with $(1,2,2)$ (appear certainly once and may appear twice). Consider calculating the aggregation function result bounds for the result tuple $\aout$ for the output group $\asgrp$ which  corresponds to \abbrBG group $\agroup \defas \tuple{3}$. The lower bound on {\upshape \lstinline!sum(A)!} is calculated as shown below:
  \begin{align*}
    &\sum_{\rangeTup \in \tgrouping(\asgrp)} \lbagg{\rangeTup}\\
    = &\lbMarker{\left(\ut{1}{2}{2} \cdot \uv{3}{5}{10}  \right)} + min(0, \lbMarker{\left(\ut{1}{2}{2} \cdot \uv{-4}{-3}{-3}\right)})\\
    = &\lbMarker{\uv{3}{10}{20}} + min(0, \lbMarker{\uv{-8}{-6}{-3}})\\
    =&3 + min(0,-8) = -5
  \end{align*}
The aggregation result is guaranteed to be greater than or equal to $-5$ since $\rangeTup_3$ certainly belongs to $\agroup$ (no minimum operation), because its group-by attribute value $\uv{3}{3}{3}$ is certain and the tuple certainly exists ($\lbMarker{\ut{1}{2}{1}} > 0$). This tuple contributes $3$ to the sum and $\rangeTup_4$ contributions at least $-8$. While it is possible that $\rangeTup_4$ does not belong to $\agroup$ this can only increase the final result ($3 + 0 > 3 + -8$).
\end{Example}
%%%%%%%%%%%%%%%%%%%%%%%%%%%%%%%%%%%%%%%%

% As shown in the example above, we can tighten the bounds on aggregation function result if we take into account that some tuples certainly belong to a particular group (their group-by values are certain). This can be realized by only conditionally applying the minimum (maximum) operation in the calculation of the bounds when a tuple has uncertain group-by values.
% \BG{We apply this in our implementation? This should be cheap?}
% \BG{Should define a predicate that checks whether tuple's values are certain. Then this could be included into the definition without causing to much complexity increase.}
%%%%%%%%%%%%%%%%%%%%%%%%%%%%%%%%%%%%%%%%%%%%%%%%%%%%%%%%%%%%%%%%%%%%%%%%%%%%%%%%
\mypar{Aggregation Without Group-by}
Having defined how each output tuple of aggregation is constructed we still need to calculate the row annotation for each result tuple. For aggregation without group-by there will be exactly one result tuple independent of what the input is.
In this case there exists a single possible \abbrBG output group (the empty tuple $\tuple{}$ ) and all input tuples are assigned to it through $\gsdefP$. Let $\rangeTup_{\tuple{}}$ denote this single output tuple.
Recalling that all remaining tuples have multiplicity $\zeroN$, we define:   % $\ggrouping(\emptyset, \rangeRel)$ will consist of a single group $()$ containing all input tuples.

%%%%%%%%%%%%%%%%%%%%%%%%%%%%%%%%%%%%%%%%
\begin{Definition}[Aggregation Without Group-By]\label{def:aggr-op-semantics-wo-gb}
  Consider a query $\query \defas \aggregation{}{f(A)}(\rangeRel)$. Let
$\gsdef(\emptyset,\rangeRel) = (\gsdefG, \gsdefP)$,
  and $\rangeTup$  be a range-annotated tuple with the same schema as $\query$ and $\asgrp$ denote the single output group in $\gsdefG$. Then
  \begin{align*}
    \lbMarker{\aggregation{}{f(A)}(\rangeRel)(\rangeTup)}
    =     \bgMarker{\aggregation{}{f(A)}(\rangeRel)(\rangeTup)}
    =     \ubMarker{\aggregation{}{f(A)}(\rangeRel)(\rangeTup)}
    \defas \begin{cases}
      \oneN &\text{if}\,\,\, \rangeTup = \rangeTup_{\tuple{}}\\
      \zeroN & \text{otherwise}
    \end{cases}
  \end{align*}
% We define a predicate $\rangeTup \equiv \rangeTup'$ that evaluates to true iff $\rangeTup=\rangeTup'$ and both $\rangeTup$ and $\rangeTup'$ are certain and a predicate $\rangeTup \matches \rangeTup'$ that evaluates to true iff $\forall i \in \{1, \ldots, n\}: \lbMarker{\rangeTup.a_i} \ordK
\end{Definition}
%%%%%%%%%%%%%%%%%%%%%%%%%%%%%%%%%%%%%%%%

%%%%%%%%%%%%%%%%%%%%%%%%%%%%%%%%%%%%%%%%%%%%%%%%%%%%%%%%%%%%%%%%%%%%%%%%%%%%%%%%
\mypar{Aggregation With Group-by}
For aggregation with group-by
in order to calculate the upper bound on the possible multiplicity for a result tuple of a group-by aggregation,
we have to determine the maximum number of distinct groups each output tuple could correspond to. We compute the bound for an output $\aout$ based on $\ggrouping$ making the worst-case assumption that (i) each input tuple $\rangeTup$ from $\ggrouping(\asgrp)$ occurs with the maximal multiplicity possible ($\ubMarker{\rangeRel(\rangeTup)}$) and that each tuple $\tup$ encoded by $\rangeTup$ belongs to a separate group and (ii) that the sets of groups produced from two inputs $\rangeTup$ and $\rangeTup'$ do not overlap. We can improve this bound by partitioning the input into two sets: tuples with uncertain group-by attribute values and tuple's whose group membership is certain. For the latter we can compute the maximum number of groups for an output $\aout$ by simplying counting the number of groups using \abbrBG values for each input tuple that overlaps with the group-by bounds of $\aout$. For the first set we still apply the worst-case assumption.
To determine the lower bound on the certain annotation of a tuple we have to reason about which input tuples certainly belong to a group. These are inputs whose group-by attributes are certain. For such tuples we sum up their tuple annotation lower bounds.
We then need to derive the annotation of a result tuple from the annotations of the relevant input tuples. For this purpose, \cite{AD11d} did extend semirings with a duplicate elimination operator $\duprem_\semN$
defined as $\duprem_\semN(k) = 0$ if $k = 0$ and $\duprem_{\semN}(k) = 1$ otherwise.

%%%%%%%%%%%%%%%%%%%%%%%%%%%%%%%%%%%%%%%%
\begin{Definition}[Aggregation With Group-By]\label{def:aggr-op-semantics-gb}
  Consider a query $\query \defas \aggregation{\gbAttrs}{f(A)}(\rangeRel)$. Let $\gsdef(\rangeRel,\gbAttrs) = (\gsdefG, \gsdefP)$.
% We define a predicate $\rangeTup \equiv \rangeTup'$ that evaluates to true iff $\rangeTup=\rangeTup'$ and both $\rangeTup$ and $\rangeTup'$ are certain and a predicate $\rangeTup \matches \rangeTup'$ that evaluates to true iff $\forall i \in \{1, \ldots, n\}: \lbMarker{\rangeTup.a_i} \ordK \lbMarker{\rangeTup'.a_i} \ordK \ubMarker{\rangeTup.a_i} \vee \lbMarker{\rangeTup.a_1} \ordK \ubMarker{\rangeTup'.a_i} \ordK \ubMarker{\rangeTup.a_i}$. Using these predicates we define set difference as shown below.%
Consider a tuple $\rangeTup$ such that $\exists \asgrp \in \gsdefG$ with $\rangeTup = \aout$. Then,

\begin{align*}
	\lbMarker{\aggregation{\gbAttrs}{f(A)}(\rangeRel)(\rangeTup)} &\defas  \duprem_{\semN}\left(\sum_{\rangeTup': \gsdefP(\rangeTup') = \asgrp \wedge \neg\, \uncertg{\gbAttrs}{\rangeRel}{\rangeTup'}}\lbMarker{\rangeRel(\rangeTup')}\right) \\
%%%%%%%%%%%%%%%%%%%%
  \bgMarker{\aggregation{\gbAttrs}{f(A)}(\rangeRel)(\rangeTup)} &\defas       \duprem_{\semN}\left(\sum_{\rangeTup': \gsdefP(\rangeTup') = \asgrp} \bgMarker{\rangeRel(\rangeTup')}\right)\\
%%%%%%%%%%%%%%%%%%%%
     \ubMarker{\aggregation{\gbAttrs}{f(A)}(\rangeRel)(\rangeTup)} &\defas
\sum_{\rangeTup': \gsdefP(\rangeTup') = \asgrp} \ubMarker{\rangeRel(\rangeTup')}
%%%%%%%%%%%%%%%%%%%%
\end{align*}
For any tuple $\rangeTup$ such that $\neg \exists \asgrp \in \gsdefG$ with $\rangeTup = \aout$, we define
\[
  \lbMarker{\aggregation{\gbAttrs}{f(A)}(\rangeRel)(\rangeTup)} =
  \bgMarker{\aggregation{\gbAttrs}{f(A)}(\rangeRel)(\rangeTup)} =
  \ubMarker{\aggregation{\gbAttrs}{f(A)}(\rangeRel)(\rangeTup)}
  = 0
\]
\end{Definition}

The following example illustrates the application of the aggregation semantics we have defined in this section.

%%%%%%%%%%%%%%%%%%%%%%%%%%%%%%%%%%%%%%%%%%%%%%%%%%%%%%%%%%%%%%%%%%%%%%%%%%%%%%%%
\begin{Example}[Aggregation]\label{def:aggregation}
  Consider the relation shown in \Cref{fig:AU-aggregation} which records addresses (street, street number, number of inhabitants). For the street attribute, instead of showing range annotations we mark values in red to indicate that their bound encompass the whole domain of the street attribute. Street values $v$ in black are certain, i.e.,  $\lbMarker{v} = \bgMarker{v} = \ubMarker{v}$. In this example, we are uncertain about particular street numbers and the number of inhabitants at certain addresses. Furthermore, several tuples may represent more than one address. Finally, we are uncertain about the street for the address represented by the second tuple. Consider the aggregation query without group-by shown in \Cref{fig:AU-agg-example-no-GB}. We are calculating the number of inhabitants. In the \abbrBGW there are 7 inhabitants ($1 \cdot 1 + 2 \cdot 1 + 2 \cdot 2$). As another example consider, the query shown in \Cref{fig:AU-agg-example-GB}.
  Consider the second result tuple (group \textit{State}). This tuple certainly exists since the 3rd tuple in the input appears twice in every possible world and its group-by value is certain. Thus, the count for group \textit{State} is at least two. Possibly, the second input tuple could also belong to this group and, thus, the count could be $3$ (the upper bound on the aggregation result).
\end{Example}
%%%%%%%%%%%%%%%%%%%%%%%%%%%%%%%%%%%%%%%%%%%%%%%%%%%%%%%%%%%%%%%%%%%%%%%%%%%%%%%%

%%%%%%%%%%%%%%%%%%%%%%%%%%%%%%%%%%%%%%%%%%%%%%%%%%%%%%%%%%%%%%%%%%%%%%%%%%%%%%%%
\begin{figure}[t]
	\centering

    %%%%%%%%%%%%%%%%%%%%%%%%%%%%%%%%%%%%%%%%
	\begin{subtable}{\linewidth}
	\centering
	\begin{tabular}{ c|c|cc}
		\textbf{street}  & \textbf{number}          & \thead{\#inhab} & \underline{$\uaaK{\semN}$} \\
		\cline{1-3}
      \cMarker{Canal}    & $\uv{165}{165}{165}$     & $\uv{1}{1}{1}$  & (1,1,2)                    \\
%				\cline{1-3}
		\uMarker{Canal}  & $\uv{154}{153}{156}$     & $\uv{1}{2}{2}$  & (1,1,1)                    \\
%				\cline{1-3}
		\cMarker{State}  & $\uv{623}{623}{629}$     & $\uv{2}{2}{2}$  & (2,2,3)                    \\
		\cMarker{Monroe} & $\uv{3574}{3550}{3585}$  & $\uv{2}{3}{4}$  & (0,0,1)                    \\
%				\cline{1-3}
	\end{tabular}
	\caption{Input Relation address}
        \label{fig:AU-agg-example-input}
  \end{subtable}
  %%%%%%%%%%%%%%%%%%%%%%%%%%%%%%%%%%%%%%%%

  %%%%%%%%%%%%%%%%%%%%%%%%%%%%%%%%%%%%%%%%
  \begin{subtable}{\linewidth}
	\centering
\begin{lstlisting}
SELECT sum(#inhab) AS pop FROM address;
\end{lstlisting}
	\begin{tabular}{ cc}
%				\cline{1-2}
 \textbf{pop}   & \underline{$\uaaK{\semN}$} \\
\cline{1-1}
 $\uv{6}{7}{14}$ & (1,1,1)                    \\
%				\cline{1-2}
	\end{tabular}
	\caption{Aggregation without Group-by}
    \label{fig:AU-agg-example-no-GB}
  \end{subtable}
  %%%%%%%%%%%%%%%%%%%%%%%%%%%%%%%%%%%%%%%%

  %%%%%%%%%%%%%%%%%%%%%%%%%%%%%%%%%%%%%%%%
  \begin{subtable}{\linewidth}
	\centering
\begin{lstlisting}
SELECT street, count(*) AS cnt
FROM address GROUP BY street;
\end{lstlisting}
	\begin{tabular}{ c|cc}
			%	\cline{1-2}
		\textbf{street}  & \textbf{cnt}   & \underline{$\uaaK{\semN}$} \\
		\cline{1-2}
		\uMarker{Canal}  & $\uv{1}{2}{3}$ & (1,1,2)                    \\
		\cMarker{State}  & $\uv{2}{2}{4}$ & (1,1,1)                    \\
		\cMarker{Monroe} & $\uv{1}{1}{2}$ & (0,0,1)                    \\
			%	\cline{1-2}
	\end{tabular}
    \caption{Aggregation with Group-by}
    \label{fig:AU-agg-example-GB}
  \end{subtable}
  %%%%%%%%%%%%%%%%%%%%%%%%%%%%%%%%%%%%%%%%

\caption{Aggregation over \abbrUAADBs}
\label{fig:AU-aggregation}
\end{figure}

%%%%%%%%%%%%%%%%%%%%%%%%%%%%%%%%%%%%%%%%%%%%%%%%%%%%%%%%%%%%%%%%%%%%%%%%%%%%%%%%
\subsection{Preservation of bounds}

We now demonstrate that our aggregation semantics for \abbrUAADBs is bound-preserving.
In the proof of this fact, we will make use of two auxiliary lemmas.

%%%%%%%%%%%%%%%%%%%%%%%%%%%%%%%%%%%%%%%%%%%%%%%%%%%%%%%%%%%%%%%%%%%%%%%%%%%%%%%%
\begin{Lemma}\label{lem:K-addition-distributes-over-smb}
  For $\monoid \in \{\msum,\mmin,\mmax\}$ we have for all $k_1, k_2 \in \uaaN$ and $m \in \monoid^3$:
  $(k_1 \uaanAdd k_2) \amysmbNAU m = k_1 \amysmbNAU m \madd{\rangeMof{\monoid}}  k_2 \amysmbNAU m$
\end{Lemma}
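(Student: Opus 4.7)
The plan is to prove the equality component-by-component. Since $\uaanAdd$ is pointwise application of $\addN$ to the lower-bound, \abbrBG, and upper-bound components, and $\madd{\rangeMof{\monoid}}$ is pointwise application of $\amadd$ on the same three components, it suffices to verify the equality separately for each of the three triple components.

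For the \abbrBG component, both sides reduce to $(k_1 \addN k_2) \smbpair_{\semN,\monoid} \bgMarker{m}$ versus $(k_1 \smbpair_{\semN,\monoid} \bgMarker{m}) \amadd (k_2 \smbpair_{\semN,\monoid} \bgMarker{m})$, and these are equal by the semimodule law~\eqref{eq:semimodule-law-distributes-over-addition}. For the lower bound (the upper bound is symmetric), unrolling Definition~\ref{def:range-semimodules} shows that the LHS is a minimum over the four corners $(a,b) \in \{\lbMarker{\cdot},\ubMarker{\cdot}\}^2$ of $(k_1^{(a)} \addN k_2^{(a)}) \smbpair_{\semN,\monoid} m^{(b)}$, while the RHS is $\amadd$ applied to two such minima computed independently for $k_1$ and $k_2$. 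I handle the three monoids separately.

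For $\monoid \in \{\mmin,\mmax\}$, where $\amadd$ is itself $\min$ or $\max$, I apply the semimodule law inside each corner to rewrite $(k_1^{(a)} \addN k_2^{(a)}) \smbpair m^{(b)}$ as $(k_1^{(a)} \smbpair m^{(b)}) \amadd (k_2^{(a)} \smbpair m^{(b)})$, and then invoke the identity $\min_{a,b}\min(x_1^{a,b},x_2^{a,b}) = (\min_{a,b} x_1^{a,b}) \min (\min_{a',b'} x_2^{a',b'})$ (and its analogue for $\max$) to decouple the minimization over $k_1$ from that over $k_2$, matching the RHS.

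The main obstacle is $\monoid = \msum$: here $\amadd = \addN$, so the semimodule-law trick above fails because a minimum does not distribute over addition in general. I will instead do a short case analysis on the signs of $\lbMarker{m}$ and $\ubMarker{m}$, using the fact that $k_i \in \uaaN$ forces $\lbMarker{k_i},\ubMarker{k_i} \in \semN$ to be nonnegative. Monotonicity of $\cdot$ on the nonnegative reals then implies that within each sign regime the minimizing corner $(a^*,b^*)$ of $k_i^{(a)} \cdot m^{(b)}$ is the same for $i=1$ and $i=2$; hence $\min_{a,b}(k_1^{(a)} \cdot m^{(b)}) \addN \min_{a',b'}(k_2^{(a')} \cdot m^{(b')}) = (k_1^{(a^*)} + k_2^{(a^*)}) \cdot m^{(b^*)}$. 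By the same sign analysis applied to the combined bounds $\lbMarker{k_1}+\lbMarker{k_2}$ and $\ubMarker{k_1}+\ubMarker{k_2}$, together with ordinary distributivity $(x+y) \cdot z = x z + y z$, this equals the LHS minimum. The upper-bound component is entirely analogous, with $\min$ replaced by $\max$ and the sign analysis dualized.
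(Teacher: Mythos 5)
Your overall decomposition into the three triple components, your handling of the selected-guess component via semimodule law~\eqref{eq:semimodule-law-distributes-over-addition}, and your sign-based case analysis for $\msum$ using non-negativity of $\uaaN$-annotations all line up with the paper's proof, which likewise singles out $\msum$ and argues by cases on the signs of $\lbMarker{m}$ and $\ubMarker{m}$. The gap is in your treatment of $\mmin$ and $\mmax$. By \Cref{def:range-semimodules} the lower-bound component of $\amysmbNAU$ is \emph{always} a $\min$ over the four corners and the upper-bound component is \emph{always} a $\max$ over them, regardless of the monoid, whereas the right-hand side of the lemma combines the two result triples pointwise with $\amadd$. Your decoupling identity therefore only covers the two components where the outer corner-extremum happens to coincide with $\amadd$: the lower bound for $\mmin$ and the upper bound for $\mmax$. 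For the other two components --- the upper bound for $\mmin$ and the lower bound for $\mmax$ --- you would need the interchange $\min_{a,b} \max(x_1^{a,b}, x_2^{a,b}) = \max(\min_{a,b} x_1^{a,b}, \min_{a,b} x_2^{a,b})$ (and its dual), which is false for general families; only the inequality $\geq$ holds. For instance, over a two-element index set with $x_1 = (0,10)$ and $x_2 = (10,0)$, the left side is $10$ and the right side is $0$.

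The interchange does hold for the specific families arising here, but only because of the special shape of $\smbpair_{\semN,\mmin}$ and $\smbpair_{\semN,\mmax}$: each corner $k_i^{(a)} \smbpair_{\semN,\monoid} m^{(b)}$ equals $m^{(b)}$ when $k_i^{(a)} \neq 0$ and the neutral element $\amzero$ otherwise, and $\lbMarker{k_i} \ordN \ubMarker{k_i}$ forces the neutral-element corners to appear in a consistent monotone pattern across $i=1,2$ and across $k_1 \uaanAdd k_2$. Making that precise requires essentially the case split the paper performs ($\lbMarker{k} = \ubMarker{k} = 0$, then $\lbMarker{k} = 0 < \ubMarker{k}$, then $\lbMarker{k} > 0$). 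So you should either restrict the interchange argument to the two matching components and supply this case analysis for the two mixed ones, or abandon the interchange route for $\mmin$/$\mmax$ and argue all of their components directly by cases on which of $\lbMarker{k_i}$, $\ubMarker{k_i}$ vanish, as the paper does.
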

%%%%%%%%%%%%%%%%%%%%%%%%%%%%%%%%%%%%%%%%%%%%%%%%%%%%%%%%%%%%%%%%%%%%%%%%%%%%%%%%
\begin{proof}
  Consider $k = k_1 + k_2$ and $m \in \rangeDom$.
  Recall the definition of $\amysmbNAU$:

  \begin{align*}
\lbMarker{k \amysmbNAU m}     & = \min(\lbMarker{k} \asmbN \lbMarker{m},\lbMarker{k} \asmbN \ubMarker{m},        \\
                            & \hspace{1.2cm}\ubMarker{k} \asmbN \lbMarker{m},\ubMarker{k} \asmbN \ubMarker{m}) \\
    \ubMarker{k \amysmbNAU m} & = \max(\lbMarker{k} \asmbN \lbMarker{m},\lbMarker{k} \asmbN \ubMarker{m},        \\
                            & \hspace{1.2cm}\ubMarker{k} \asmbN \lbMarker{m},\ubMarker{k} \asmbN \ubMarker{m})
  \end{align*}

%%%%%%%%%%%%%%%%%%%%%%%%%%%%%%%%%%%%%%%%
\proofpara{$\mmin$}
Consider $\lbMarker{k \mysmbNAU{\mmin} m}$.
$\smbN{\mmin}$ is the identify on $\dataDomain$ except for $k = 0$. Furthermore, $\mzero{\mmin} = \infty$ and $\madd{\mmin} = \min$. We distinguish three cases: $\lbMarker{k} = \ubMarker{k} = 0$, $\lbMarker{k} = 0 \land \ubMarker{k} > 0$ and $\lbMarker{k} > 0$.

If $\lbMarker{k} = \ubMarker{k} = 0$, then $\lbMarker{(k \mysmbNAU{\mmin} m)} =
\lbMarker{(k_1 \amysmbNAU m)} \madd{\mmin} \lbMarker{(k_2 \amysmbNAU m)} = \ubMarker{(k \mysmbNAU{\mmin} m)} =
\ubMarker{(k_1 \amysmbNAU m)} \madd{\mmin} \ubMarker{(k_2 \amysmbNAU m)} = \mzero{\mmin}$.

If $\lbMarker{k} = 0 \land \ubMarker{k} > 0$, then
\begin{align*}
\lbMarker{(k \mysmbNAU{\mmin} m)} & = \min(\lbMarker{k} \smbN{\mmin} \lbMarker{m},\lbMarker{k} \smbN{\mmin} \ubMarker{m},                \\
                                & \hspace{1.2cm}\ubMarker{k} \smbN{\mmin} \lbMarker{m},\ubMarker{k} \smbN{\mmin} \ubMarker{m})         \\
                                & = \ubMarker{k} \smbN{\mmin} \lbMarker{m}                                                             \\
                                & = \ubMarker{(k_1 + k_2)} \smbN{\mmin} \lbMarker{m}                                                   \\
                                & = \lbMarker{m}
\end{align*}
Since $\ubMarker{k} > 0$, at least one of $\ubMarker{k_1}$ and $\ubMarker{k_2}$
is larger than $0$. WLOG $\ubMarker{k_1} > 0$, then
$\lbMarker{(k_1 \smbN{\mmin} m)} = \lbMarker{m}$. $\lbMarker{(k_2 \smbN{\mmin} m)}$ is either $\lbMarker{m}$ or $\mzero{\mmin}$. Since $\min$ is idempotent, in either case we get:
\begin{align*}
                                % & = (\ubMarker{k_1} \smbN{\mmin} \lbMarker{m}) \madd{\mmin} (\ubMarker{k_2} \smbN{\mmin} \lbMarker{m}) \\
                                & = \ubMarker{(k_1 \amysmbNAU m)} \madd{\mmin} \ubMarker{(k_2 \amysmbNAU m)}                               \\
\ubMarker{(k \mysmbNAU{\mmin} m)} & = \max(\lbMarker{k} \smbN{\mmin} \lbMarker{m},\lbMarker{k} \smbN{\mmin} \ubMarker{m},                \\
                                & \hspace{1.2cm}\ubMarker{k} \smbN{\mmin} \lbMarker{m},\ubMarker{k} \smbN{\mmin} \ubMarker{m})         \\
                                & = \lbMarker{k} \smbN{\mmin} \ubMarker{m}                                                             \\
                                & = \lbMarker{(k_1 + k_2)} \smbN{\mmin} \ubMarker{m}                                                   \\
                                & = (\lbMarker{k_1} \smbN{\mmin} \ubMarker{m}) \madd{\mmin} (\lbMarker{k_2} \smbN{\mmin} \ubMarker{m}) \\
                                & = \ubMarker{(k_1 \amysmbNAU m)} \madd{\mmin} \ubMarker{(k_2 \amysmbNAU m)}
\end{align*}

If $\lbMarker{k} > 0$, then
\begin{align*}
\lbMarker{(k \mysmbNAU{\mmin} m)}   & = \min(\lbMarker{k} \smbN{\mmin} \lbMarker{m},\lbMarker{k} \smbN{\mmin} \ubMarker{m},                \\
                                  & \hspace{1.2cm}\ubMarker{k} \smbN{\mmin} \lbMarker{m},\ubMarker{k} \smbN{\mmin} \ubMarker{m})         \\
                                  & = \lbMarker{k} \smbN{\mmin} \lbMarker{m}                                                             \\
                                  & = \lbMarker{(k_1 + k_2)} \smbN{\mmin} \lbMarker{m}                                                   \\
                                  & = \lbMarker{m}
  \end{align*}
  Since $\lbMarker{k} > 0$, at least one of $\lbMarker{k_1}$ and $\lbMarker{k_2}$
is larger than $0$. WLOG $\lbMarker{k_1} > 0$, then
$\lbMarker{(k_1 \smbN{\mmin} m)} = \lbMarker{m}$. Applying the same argument as above, we get:
  \begin{align*}
                                  % & = (\lbMarker{k_1} \smbN{\mmin} \lbMarker{m}) \madd{\mmin} (\lbMarker{k_2} \smbN{\mmin} \lbMarker{m}) \\
                                  & = \lbMarker{(k_1 \amysmbNAU m)} \madd{\mmin} \lbMarker{(k_2 \amysmbNAU m)}                               \\
  \ubMarker{(k \mysmbNAU{\mmin} m)} & = \max(\lbMarker{k} \smbN{\mmin} \lbMarker{m},\lbMarker{k} \smbN{\mmin} \ubMarker{m},                \\
                                  & \hspace{1.2cm}\ubMarker{k} \smbN{\mmin} \lbMarker{m},\ubMarker{k} \smbN{\mmin} \ubMarker{m})         \\
                                  & = \ubMarker{k} \smbN{\mmin} \ubMarker{m}                                                             \\
                                  & = \ubMarker{(k_1 + k_2)} \smbN{\mmin} \ubMarker{m}                                                   \\
  \end{align*}
  Since $\ubMarker{k} > 0$, at least one of $\ubMarker{k_1}$ and $\ubMarker{k_2}$
is larger than $0$. WLOG $\ubMarker{k_1} > 0$, then
$\ubMarker{(k_1 \smbN{\mmin} m)} = \ubMarker{m}$. Applying the same argument as above, we get:
  \begin{align*}
                                    % & = (\ubMarker{k_1} \smbN{\mmin} \ubMarker{m}) \madd{\mmin} (\ubMarker{k_2} \smbN{\mmin} \ubMarker{m}) \\
                                  & = \ubMarker{(k_1 \amysmbNAU m)} \madd{\mmin} \ubMarker{(k_2 \amysmbNAU m)}                               \\
\end{align*}

%%%%%%%%%%%%%%%%%%%%%%%%%%%%%%%%%%%%%%%%
\proofpara{$\mmax$}
The proof is analog to the proof for $\mmin$.

%%%%%%%%%%%%%%%%%%%%%%%%%%%%%%%%%%%%%%%%
\proofpara{$\msum$}
Consider $\lbMarker{k \mysmbNAU{\msum} m}$. We first address that case $\lbMarker{m} < 0$.

\begin{align*}
\lbMarker{(k \mysmbNAU{\msum} m)} & =
                                \min(\lbMarker{k} \cdot \lbMarker{m},
                                \lbMarker{k} \cdot \ubMarker{m},
                                \ubMarker{k} \cdot \lbMarker{m},
                                \ubMarker{k} \cdot \ubMarker{m})                                            \\
\end{align*}
Since $\lbMarker{m} < 0$:
\begin{align*}
  =                             & \ubMarker{k} \cdot \lbMarker{m}                                           \\
  =                             & \ubMarker{(k_1 + k_2)} \cdot \lbMarker{m}                                 \\
  =                             & (\ubMarker{k_1} + \ubMarker{k_2}) \cdot \lbMarker{m}                      \\
  =                             & (\ubMarker{k_1} \cdot \lbMarker{m}) + (\ubMarker{k_2} \cdot \lbMarker{m}) \\
  =                             & \lbMarker{(k_1 \mysmbNAU{\msum} m)} \madd{\msum} \lbMarker{(k_2 \mysmbNAU{\msum} m)}
\end{align*}
Now consider the case $\lbMarker{m} > 0$.
\begin{align*}
                                & \lbMarker{(k \mysmbNAU{\msum} m)}                                           \\
  =                             & \lbMarker{k} \cdot \lbMarker{m}                                           \\
  =                             & \lbMarker{(k_1 + k_2)} \cdot \lbMarker{m}                                 \\
  =                             & (\lbMarker{k_1} \cdot \lbMarker{m}) + (\lbMarker{k_2} \cdot \lbMarker{m}) \\
  =                             & \lbMarker{(k_1 \mysmbNAU{\msum} m)} \madd{\msum} \lbMarker{(k_2 \mysmbNAU{\msum} m)}
\end{align*}

We now prove that  $\ubMarker{k \mysmbNAU{\msum} m} = \ubMarker{(k_1 \mysmbNAU{\msum} m)} \madd{\msum} \ubMarker{(k_2 \mysmbNAU{\msum} m)}$.
\begin{align*}
\ubMarker{(k \mysmbNAU{\msum} m)} & =
                                \max(\lbMarker{k} \cdot \lbMarker{m},
                                \lbMarker{k} \cdot \ubMarker{m},
                                \ubMarker{k} \cdot \lbMarker{m},
                                \ubMarker{k} \cdot \ubMarker{m})
\end{align*}

First consider $\ubMarker{m} < 0$.
\begin{align*}
    & \ubMarker{(k \mysmbNAU{\msum} m)}            \\
  = & \lbMarker{k} \cdot \ubMarker{m}            \\
  = & \lbMarker{(k_1 + k_2)} \cdot \ubMarker{m}  \\
  = & (\lbMarker{k_1} \cdot \ubMarker{m}) + (\lbMarker{k_2} \cdot \ubMarker{m}) \\
  = & \ubMarker{(k_1 \mysmbNAU{\msum} m)} \madd{\msum} \ubMarker{(k_2 \mysmbNAU{\msum} m)}
\end{align*}
Now consider $\ubMarker{m} \geq 0$.
\begin{align*}
    & \ubMarker{(k \mysmbNAU{\msum} m)}            \\
  = & \ubMarker{k} \cdot \ubMarker{m}            \\
  = & \ubMarker{(k_1 + k_2)} \cdot \ubMarker{m}  \\
  = & (\ubMarker{k_1} \cdot \ubMarker{m}) + (\ubMarker{k_2} \cdot \ubMarker{m}) \\
  = & \ubMarker{(k_1 \mysmbNAU{\msum} m)} \madd{\msum} \ubMarker{(k_2 \mysmbNAU{\msum} m)}
\end{align*}
\end{proof}
%%%%%%%%%%%%%%%%%%%%%%%%%%%%%%%%%%%%%%%%%%%%%%%%%%%%%%%%%%%%%%%%%%%%%%%%%%%%%%%%
In addition we will prove that for $\monoid \in \{ \msum, \mmin, \mmax \}$ and for all $m_1,m_2,m_3,m_4 \in \monoid$ such that $m_1 \leq m_2$ and $m_3 \leq m_4$ (here $<$ is the order of $\dataDomain$), we have $m_1 \amadd m_2 \leq m_3 \amadd m_4$. This is implies as a special case $m \amadd m' \leq m \amadd \amzero$ for  $m, m' \in \monoid$ and $m' \leq \amzero$.
%%%%%%%%%%%%%%%%%%%%%%%%%%%%%%%%%%%%%%%%
\begin{Lemma}\label{lem:add-0-to-m-is-larger}
  Let $\monoid \in \{ \msum, \mmin, \mmax \}$. $\forall m_1, m_2,m_3,m_4 \in \monoid:$
  \[
 m_1 \leq m_2 \land m_3 \leq m_4 \Rightarrow m_1 \amadd m_2 \leq m_3 \amadd m_4
    \]
\end{Lemma}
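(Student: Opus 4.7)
The statement as written appears to contain a typo: taking $m_1=1$, $m_2=2$, $m_3=0$, $m_4=0.5$ satisfies the hypotheses $m_1 \leq m_2$ and $m_3 \leq m_4$ but violates the conclusion for $\monoid = \mmax$, since $\max(1,2) = 2 > 0.5 = \max(0,0.5)$. The intended statement, consistent with the special case $m \amadd m' \leq m \amadd \amzero$ when $m' \leq \amzero$ called out immediately below the Lemma (obtained by setting $m_1 = m_3 = m$, $m_2 = m'$, $m_4 = \amzero$), is that $\amadd$ is monotone in both arguments:
\[
m_1 \leq m_3 \ \wedge\ m_2 \leq m_4 \ \Rightarrow\ m_1 \amadd m_2 \leq m_3 \amadd m_4.
\]
I will proceed under this reading.

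My plan is to prove the corrected statement by case analysis on $\monoid \in \{\msum, \mmin, \mmax\}$, since in each case $\amadd$ is a concrete operation on $\mathbb{R}$ (or, if negative values are excluded, on $\mathbb{R}_{\geq 0}$) whose monotonicity is elementary. For $\monoid = \msum$, $\amadd$ is ordinary addition, and the claim $m_1 + m_2 \leq m_3 + m_4$ follows by adding the two given inequalities.

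For $\monoid = \mmin$, $\amadd = \min$. From the tautology $\min(m_1,m_2) \leq m_1$ and the hypothesis $m_1 \leq m_3$ we get $\min(m_1,m_2) \leq m_3$; symmetrically $\min(m_1,m_2) \leq m_4$. Hence $\min(m_1,m_2)$ lies below both $m_3$ and $m_4$, so it is at most $\min(m_3,m_4)$. For $\monoid = \mmax$, $\amadd = \max$, and the argument is dual: $\max(m_3,m_4)$ dominates both $m_3 \geq m_1$ and $m_4 \geq m_2$, and therefore dominates $\max(m_1,m_2)$.

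There is no substantive obstacle here — each subcase is a one-line argument from the monotonicity of the underlying $\mathbb{R}$-operation. The only real care needed is in resolving the apparent typo in the statement before beginning, and (if one wants a fully general treatment) in noting that $\amzero$ takes the values $0$, $+\infty$, and $-\infty$ respectively for $\msum$, $\mmin$, and $\mmax$, none of which interferes with the monotonicity argument once the conventions $\min(x,+\infty) = x$ and $\max(x,-\infty) = x$ are adopted.
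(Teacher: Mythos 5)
Your proof is correct and takes essentially the same approach as the paper's: a three-way case split on the monoid with an elementary monotonicity argument in each case. Your reading of the hypotheses as $m_1 \leq m_3 \wedge m_2 \leq m_4$ is the right one --- the paper's own proof of the $\mmin$ case uses $m_1 \leq m_3$ directly, so it too silently corrects the typo in the statement; your $\min$/$\max$ arguments merely avoid the paper's WLOG case split, which is a minor stylistic improvement.
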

%%%%%%%%%%%%%%%%%%%%%%%%%%%%%%%%%%%%%%%%
\begin{proof}
  \proofpara{\mmin}
WLOG assume that $m_1 \leq m_2$ and $m_3 \leq m_4$ (the other cases are analog).
  \[
  \min(m_1,m_2) = m_1 \leq m_3 = \min(m_3,m_4)
\]

\proofpara{\msum}
Since addition preserves inequalities, we get
\[
  m_1 \madd{\msum} m_2 = m_1 + m_2 \leq m_3 + m_4 = m_3 \madd{\msum} m_4
\]

\proofpara{\mmax}
WLOG assume that $m_1 \leq m_2$ and $m_3 \leq m_4$ (the other cases are analog).
  \[
  \max(m_1,m_2) = m_2 \leq m_4 = \max(m_3,m_4)
  \]
\end{proof}
%%%%%%%%%%%%%%%%%%%%%%%%%%%%%%%%%%%%%%%%

Having proven this lemma, we are ready to proof that aggregation preserves bounds.

%%%%%%%%%%%%%%%%%%%%%%%%%%%%%%%%%%%%%%%%
\begin{Theorem}\label{theo:aggregation-preserves-bounds}
Let $\query \defas \aggregation{\gbAttrs}{f(A)}(\rel)$ or $\query \defas \aggregation{}{f(A)}(\rel)$ and $\prel$ be an incomplete $\semK$-relation that is bound by an $\uaaK{\semK}$-relation $\rangeRel$. Then $\query(\rangeRel)$ bounds $\query(\prel)$.
\end{Theorem}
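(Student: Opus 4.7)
My plan is to prove the theorem by constructing, for each possible world $\rel \in \prel$, an explicit tuple matching $\TM_{\query}$ between $\query(\rangeRel)$ and $\query(\rel)$ that simultaneously witnesses the lower and upper bounds, and then separately verifying that $\bgMarker{\query(\rangeRel)} = \query(\bgMarker{\rangeRel})$ so that the \abbrBGW is correctly encoded. The \abbrBGW preservation is straightforward: all \abbrBG components of the output (group-by values, aggregation function results, and multiplicities) are defined by applying deterministic semantics to the \abbrBG components of the input, and since $\bgMarker{\rangeRel} \in \prel$ by hypothesis, the required possible world witness exists.

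I would handle aggregation-without-group-by first as the warm-up case. There is a unique output tuple, so no matching is needed, and the claim reduces to showing that the range-annotated aggregation result bounds the true aggregation result in every possible world. This follows by combining Corollary~\ref{cor:aggregation-functions-preserve} (aggregation monoids preserve value bounds) with Theorem~\ref{theo:semimodulish-bound-preserving} ($\amysmbNAU$ is bound preserving), plus the fact that the tuple matching $\TM$ witnessing $\rel \dbbounds \rangeRel$ distributes the contribution of each real tuple $\tup$ across the range-annotated tuples $\rangeTup$ that match it. Lemma~\ref{lem:K-addition-distributes-over-smb} lets me rewrite $\TM(\rangeTup, \tup) \smbpair_{\semN,\monoid} \tup.A$ summed appropriately to recover $\rangeRel(\rangeTup) \amysmbNAU \rangeTup.A$.

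For the group-by case, fix a possible world $\rel \in \prel$ with witnessing matching $\TM$. I define $\TM_{\query}$ as follows: for each pair $(\aout, \tup')$ where $\aout$ is an output tuple of $\query(\rangeRel)$ and $\tup'$ is an output tuple of $\query(\rel)$ with $\tup' \tmatch \aout$, set $\TM_{\query}(\aout, \tup') = \duprem_{\semN}(\sum_{\rangeTup : \gsdefP(\rangeTup) = \asgrp \land \tup \tmatch \rangeTup \land \tup[\gbAttrs] = \tup'[\gbAttrs]} \TM(\rangeTup, \tup))$, i.e., we match an output group of $\rangeRel$ to an output group of $\rel$ whenever some input tuple of $\rel$ matched to that range-annotated group actually falls into that deterministic group. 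The verification then splits into three claims. For multiplicity upper bounds, each $\tup$ with $\TM(\rangeTup,\tup) > 0$ contributes at most $\ubMarker{\rangeRel(\rangeTup)}$ distinct groups, which is exactly what our upper-bound formula sums. For multiplicity lower bounds, the certain-group condition $\neg \uncertg{\gbAttrs}{\rangeRel}{\rangeTup'}$ guarantees that every matching $\tup$ lies in a single fixed deterministic group, so the certain annotations sum into the lower bound for that output. For the aggregation value bounds, I use Lemma~\ref{lem:K-addition-distributes-over-smb} to decompose $\rangeRel(\rangeTup) \amysmbNAU \rangeTup.A$ over the tuple matching, together with Lemma~\ref{lem:add-0-to-m-is-larger} to absorb the $\min(\amzero, \cdot)$ / $\max(\amzero, \cdot)$ safety margin used exactly when group membership is uncertain (so a real tuple might or might not contribute $\amzero$).

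The main obstacle will be the aggregation value bound in the group-by case, specifically handling input tuples with uncertain group membership. The difficulty is that a single range-annotated input $\rangeTup$ can feed into multiple output groups $\aout$ via $\tgrouping$, but in any fixed possible world $\rel$, the deterministic tuples $\tup$ realizing $\rangeTup$ each belong to exactly one deterministic output group. The $\min(\amzero, \cdot)$ clause in $\lbagg$ is precisely what absorbs the worst case where such a $\tup$ falls outside the group corresponding to $\aout$, and I will need to verify via case analysis on $\monoid \in \{\msum, \mmin, \mmax\}$ that this clause combined with Lemma~\ref{lem:add-0-to-m-is-larger} correctly handles both positive and negative contributions to the running sum. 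The $\duprem_{\semN}$ operator adds a small wrinkle for the lower bound: it only yields $\oneN$ if at least one input certainly contributes, which aligns exactly with the condition that $\TM_{\query}(\aout, \tup')$ is nonzero.
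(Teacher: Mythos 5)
Your overall strategy coincides with the paper's: fix a possible world $\rel \in \prel$ with a witnessing matching $\TM$, construct a tuple matching between $\query(\rangeRel)$ and $\query(\rel)$, and verify separately the group-by ranges, the aggregation-value ranges (via \Cref{theo:semimodulish-bound-preserving}, \Cref{lem:K-addition-distributes-over-smb}, \Cref{lem:add-0-to-m-is-larger}, and a per-monoid case analysis for inputs with uncertain group membership), and the multiplicity bounds; the no-group-by case as a degenerate warm-up is also how the paper proceeds. You correctly isolate the crux: the lower bound on the aggregate value when only part of a range-tuple's annotation actually lands in the deterministic group, which is exactly where the $\min(\amzero,\cdot)$ clause and the case split over $\monoid \in \{\msum,\mmin,\mmax\}$ are needed.

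The gap is in your explicit construction of $\TM_{\query}$. You set $\TM_{\query}(\aout,\tup') = \duprem_{\semN}(\cdots)$ for \emph{every} output $\aout$ that receives, via $\gsdefP$, some range-tuple to which $\TM$ assigns part of a tuple in $\tup'$'s group. A single deterministic result group $g$ can trigger this for several distinct outputs: $\TM$ may split the tuples of $g$ across range-tuples with different selected-guess group-by values, and $\gsdefP$ sends those to different outputs. Your $\TM_{\query}$ then assigns weight $1$ to each such output, so $\sum_{\aout} \TM_{\query}(\aout,\tup_g)$ can exceed $1 = \query(\rel)(\tup_g)$, violating the condition $\sum_{\rangeTup} \TM(\rangeTup,\tup) = \rel(\tup)$ of \Cref{def:tuple-matching}; the object you build is not a tuple matching. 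The paper repairs exactly this point with a ``group cover'': for each group $g$ it selects one witness triple $(\tup,\rangeTup,\rangeOf{o})$ with $\tup \in T_g$, $\TM(\rangeTup,\tup) \neq 0$ and $\gsdefP(\rangeTup) = \rangeOf{o}$, and sets $\TM_{\query}(\tup_g,\rangeOf{o}) = 1$ for that single output only. With that one change the rest of your verification goes through: the upper multiplicity bound still holds because each group charged to an output consumes at least one unit of $\sum_{\tup}\TM(\rangeTup,\tup) \leq \ubMarker{\rangeRel(\rangeTup)}$ for some $\rangeTup$ assigned to that output, and the lower bound only needs $\duprem_{\semN}(\cdot) \leq 1$.
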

%%%%%%%%%%%%%%%%%%%%%%%%%%%%%%%%%%%%%%%%
\begin{proof}
  We first consider the case of aggregation with group-by, i.e., $\query \defas \aggregation{\gbAttrs}{f(A)}(\rel)$.
Let $\rangeTup_{\asgrp}$ be the output tuple corresponding to $\asgrp \in \gsdefG$.
  Abusing notation, we will understand $\gsdefP(\rangeTup) = \rangeTup_{\asgrp}$ to mean $\gsdefP(\rangeTup) = \asgrp$.
  Consider one possible world $\rel \in \prel$ and let $\TM_{\rel}$ be a tuple matching based on which $\rangeRel$ bounds $\rel$. We will prove the existence of a tuple matching $\TM_{\query}$ between $\query(\rel)$ and $\query(\rangeRel)$ and demonstrate that $\query(\rangeRel)$ bounds $\query(\rel)$ based on this tuple matching. For that we first prove that for each result tuple $\tup \in \query(\db)$ the set $\mathbf{S}_\tup = \{ \rangeTup \mid \tup \tmatch \rangeTup \wedge \query(\rangeRel)(\rangeTup) \neq \uaaZero{\semN} \}$ is non-empty. Intuitively, the set $\mathbf{S}_\tup$ contains potential candidates for which we can set $\TM_{\query}(\rangeTup, \tup)$ to a non-zero value, because only tuples that bound $\tup$ can be associated with $\tup$ in a tuple matching. Because, for $\TM_\query$ to be a  tuple matching we have to assign that annotation $\query(\rel)(\tup)$ to a set of tuples such that $\sum_{\rangeTup} \TM_\query(\rangeTup, \tup) = \query(\rel)(\tup)$. Note that since each aggregation result in $\query(\rel)$ is annotated with $1$, this boils down to assigning $\tup$ to exactly one  $\rangeTup \in \mathbf{S}_\tup$.

  % Thus, $\TM_{\query}(\rangeTup, \tup)$ may be non-zero for every $\rangeTup \in \mathbf{S}_\tup$ ().
  Afterwards, we show that for each $\rangeTup \in \query(\rangeRel)$ it is possible to set $\TM_{\query}(\rangeTup, \tup)$ for all $\tup \in \query(\db)$ for which $\tup \tmatch \rangeTup$ such that (1) $\lbMarker{\query(\rangeRel)(\rangeTup)} \ordN \sum_{\tup: \tup \tmatch \rangeTup} \query(\db)(\tup) \ordN \ubMarker{\query(\rangeRel)(\rangeTup)}$ and (2) for all $\tup \in \query(\db)$ we have $\sum_{\rangeTup: \tup \tmatch \rangeTup} \TM_{\query}(\rangeTup, \tup) = \query(\db)(\tup)$. The consequence of these two steps and \Cref{def:bounding-incomplete-dbs} is that $\query(\rangeRel)$ bounds $\query(\rel)$ based on $\TM_\query$.

  We will make use of the following notation. Let $\mathcal{G} = \{ \tup.G \mid \rel(\tup) \neq 0 \}$, i.e., the set of groups in the possible world $\rel$. For a group $g \in \mathcal{G}$, we define

  \begin{align*}
    T_g                    & = \{ \tup \mid \tup.G = g \wedge \rel(\tup) \neq 0 \}\\
    \mathbf{S}_g           & = \{ \rangeTup \mid \exists \tup: \rel(\tup) \neq 0 \land \tup.G = g \land \TM_\rel(\rangeTup, \tup) \neq 0 \}                   \\
    \mathbf{O}_g           & = \{\rangeTup \mid \rangeTup \in \query(\rangeRel) \land \exists \rangeTup' \in \mathbf{S}_g: \gsdefP(\rangeTup') = \rangeTup \}
    % \mathbf{P}_g           & = \{ \rangeTup \mid \exists \rangeTup' \in \mathbf{S}_g: \rangeTup \toverlaps \rangeTup' \}                                      \\
    % \mathbf{N}_g           & = \{ \rangeTup \mid  g \tmatch \rangeTup.G  \land \rangeTup \not\in \mathbf{S}_g \}                                              \\
  \end{align*}

  Furthermore, for any $\rangeOf{o} \in \mathbf{O}_{g}$, we define
  \begin{align*}
        \mathbf{N}_\rangeOf{o} & = \{ \rangeTup \mid \gsdefP(\rangeTup) = \rangeOf{o} \land \rangeTup \not\in \mathbf{S}_g \}
  \end{align*}

  Consider a group $g \in \mathcal{G}$ and let $\tup_g$ denote the result tuple in $\query(\rel)$ corresponding to $g$. There is at least on $\tup \in T_g$, otherwise $g$ would not be in $\mathcal{G}$.
  Consider an arbitrary $\rangeOf{o} \in \mathbf{O}_g$. At least one such $\rangeOf{o}$ exists since $\TM(\rangeTup, \tup) \neq 0$ for one or more $\rangeTup$ with $\rangeRel(\rangeTup) \neq \uaaZero{\semN}$ and $\rangeTup$ has to be associated with at least one output $\rangeOf{o}$ by $\gsdefP$.
Let $\mathbf{S}_{\rangeOf{o}} = \{ \rangeTup \mid \rangeTup \in \mathbf{S}_g \wedge \gsdefP(\rangeTup) = \rangeOf{o}\}$.
We will show that $\tup_g \tmatch \rangeOf{o}$.

%%%%%%%%%%%%%%%%%%%%%%%%%%%%%%%%%%%%%%%%
\proofpara{$\tup_g.G \tmatch \rangeOf{o}.G$}
For all $\rangeTup \in \mathbf{S}_{\rangeOf{o}}$ we know that $g \tmatch \rangeTup.G$ because for $\rangeTup$ to be in $\mathbf{S}_g$ it has to be the case there exists $\tup$ with $\tup.G = g$ such that $\TM_\rel(\rangeTup, \tup) \neq 0$. This  implies $\tup \tmatch \rangeTup$ which in turn implies $g = \tup.G  \tmatch \rangeTup$. Since $\gsdefP(\rangeTup) = \rangeOf{o}$ and since by~\Cref{def:range-bounded-groups} the range annotations of $\rangeOf{o}.G$ are defined as the union of the range annotations of all $\rangeTup \in \mathbf{S}_{\rangeOf{o}}$ (and any other $\rangeTup$ with $\gsdefP(\rangeTup)$). Thus, $g \tmatch \rangeOf{o}.G$.

%%%%%%%%%%%%%%%%%%%%%%%%%%%%%%%%%%%%%%%%
\proofpara{$\tup_g.f(A) \tmatch \rangeOf{o}.f(A)$}
Based on the definition of aggregation over $\semN$-relation, we have:

\begin{align}
  \tup_g.f(A) = \sum_{\tup \in T_g} \rel(\tup) \asmbN \tup.A \label{eq:possible-world-group-result}
\end{align}
Let $f_g = \tup_g.f(A)$.
Note that based on \Cref{def:agg-function-bounds}, $\rangeOf{o}.f(A)$ is calculated over all tuples from $\mathbf{S}_g$ and $\mathbf{N}_\rangeOf{o}$.
Observe that if $\rangeTup \in \mathbf{N}_\rangeOf{o}$ then either $\uncertg{G}{\rangeRel}{\rangeTup}$ or $\lbMarker{\rangeRel(\rangeTup)} = 0$. To see why this has to be the case consider that if $\rangeTup.G$ is certain and $\rangeTup$ exists in every possible world ($\lbMarker{\rangeRel(\rangeTup)} > 0$ then for $\TM_{\rel}$ to be a tuple matching based on which $\rangeRel$ bounds $\rel$ there has to exist some $\tup \in T_g$ for which $\TM_{\rel}(\rangeTup, \tup) \neq 0$ which would lead to the contradiction $\rangeTup \in \mathbf{S}_g$.
Define
\begin{align*}
  \mathbf{S}_g^{uncertain} &= \{ \rangeTup \mid \rangeTup \in \mathbf{S}_g \wedge \uncertg{G}{\rangeRel}{\rangeTup} \}\\
  \mathbf{S}_g^{certain} &= \mathbf{S}_g -   \mathbf{S}_g^{uncertain}
\end{align*}

We have to show that $\lbMarker{\rangeOf{o}.f(A)} \leq f_g \leq \ubMarker{\rangeOf{o}.f(A)}$.

%%%%%%%%%%%%%%%%%%%%%%%%%%%%%%%%%%%%%%%%%%%%%%%%%%%%%%%%%%%%%%%%%%%%%%%%%%%%%%%%
\proofpara{$\lbMarker{\rangeOf{o}.f(A)} \leq f_g$}
Substituting \Cref{def:agg-function-bounds} we get for $\lbMarker{\rangeOf{o}.f(A)}$:

\begin{align}
  \lbMarker{\rangeOf{o}.f(A)} &=
  \sum_{\rangeTup \in {\mathbf{S}_g^{certain}}} \lbMarker{(\rangeRel(\rangeTup) \amysmbNAU \rangeTup.A)} \notag{}\\
  &\amadd \sum_{\rangeTup \in {\mathbf{S}_g^{uncertain}}} \min(\lbMarker{(\rangeRel(\rangeTup) \amysmbNAU \rangeTup.A)}, \amzero) \notag{}\\
  &\amadd \sum_{\rangeTup \in \mathbf{N}_g} \min(\lbMarker{(\rangeRel(\rangeTup) \amysmbNAU \rangeTup.A)}, \amzero) \label{eq:o-agg-lb}
\end{align}

Using \Cref{lem:add-0-to-m-is-larger}, we know that

\[
  \sum_{\rangeTup \in \mathbf{N}_g} \min(\lbMarker{(\rangeRel(\rangeTup) \amysmbNAU \rangeTup.A)}, \amzero) \leq \sum_{\rangeTup \in \mathbf{N}_g} \amzero = \amzero
  \]

Thus, we can bound \Cref{eq:o-agg-lb} from above:

\begin{align}
 \leq &\sum_{\rangeTup \in {\mathbf{S}_g^{certain}}} \lbMarker{(\rangeRel(\rangeTup) \amysmbNAU \rangeTup.A)} \notag{}\\ &\amadd \sum_{\rangeTup \in {\mathbf{S}_g^{uncertain}}} \min(\lbMarker{(\rangeRel(\rangeTup) \amysmbNAU \rangeTup.A)}, \amzero) \label{eq:o-agg-no-N}
\end{align}

We next will relate \Cref{eq:o-agg-no-N} to $f_g$ through $\TM_\rel$. Towards this goal for any $\rangeTup \in \mathbf{S}_g$ we define $T_\rangeTup = \{ \tup \mid \TM_{\rel}(\rangeTup, \tup) > 0\}$.
We know that for any $\rangeTup \in \mathbf{S}_g$, we have $\lbMarker{\rangeRel(\rangeTup)} \leq \sum_{\tup \in T_\rangeTup} \TM_\rel(\rangeTup, \tup) \ubMarker{\rangeRel(\rangeTup)}$ because $\TM_\rel$ is a tuple matching based on which $\rangeRel$ bounds $\rel$.
Consider the sum in \Cref{eq:o-agg-no-N} which ranges over $\mathbf{S}_g^{certain}$ first.
Since $\lbMarker{\rangeRel(\rangeTup)} \leq \sum_{\tup \in T_\rangeTup} \TM_\rel(\rangeTup, \tup) \leq \ubMarker{\rangeRel(\rangeTup)}$, based on \Cref{theo:semimodulish-bound-preserving} we have that $(\sum_{\tup \in T_\rangeTup} \TM_\rel(\rangeTup, \tup)) \asmbN \lbMarker{\rangeTup.A}$ is bound from below by $\lbMarker{(\rangeRel(\rangeTup) \amysmbNAU \rangeTup.A)}$. Thus,

\begin{align}
  &\sum_{\rangeTup \in {\mathbf{S}_g^{certain}}} \lbMarker{(\rangeRel(\rangeTup) \amysmbNAU \rangeTup.A)} \notag{}\\
  \leq &\sum_{\rangeTup \in {\mathbf{S}_g^{certain}}} \left(\sum_{\tup: \tup \in T_{\rangeTup}} \TM_R(\rangeTup, \tup)\right) \asmbN \lbMarker{\rangeTup.A} \label{eq:agg-proof-range-bound-by-tm-sum}
\end{align}

Note that for any $\rangeTup \in \mathbf{S}_g^{certain}$, $\tup \not\in T_g \Rightarrow \tup \not\tmatch \rangeTup$ since $\rangeTup.G$ is certain.

\begin{align*}
  &= \sum_{\rangeTup \in {\mathbf{S}_g^{certain}}}  \left(\sum_{\tup \in T_g} \TM_R(\rangeTup, \tup) \right) \asmbN \lbMarker{\rangeTup.A}
\end{align*}

For any semimodule and thus also every $\semN$-semimodule the law $(k_1 + k_2) \amysmbNAU m = k_1 \asmbN m \amadd k_2 \asmbN m$ holds. Applying this law we can factor out the inner sum:

\begin{align*}
  &=   \sum_{\rangeTup \in {\mathbf{S}_g^{certain}}} \sum_{\tup \in T_g} \TM_R(\rangeTup, \tup) \asmbN \lbMarker{\rangeTup.A}
\end{align*}

Using commutativity and associativity of $\amadd$, we commute the two sums:

\begin{align*}
  &=  \sum_{\tup \in T_g} \sum_{\rangeTup \in {\mathbf{S}_g^{certain}}} \TM_R(\rangeTup, \tup) \asmbN \lbMarker{\rangeTup.A}
\end{align*}

Since, $\tup \tmatch \rangeTup$ for any $\tup \in T_\rangeTup$, we have $\rangeTup.A < \tup.A$ from which follows that:

\begin{align*}
  &<  \sum_{\tup \in T_g} \sum_{\rangeTup \in {\mathbf{S}_g^{certain}}} \TM_R(\rangeTup, \tup) \asmbN \tup.A
\end{align*}

To recap so far we have shown that

\begin{align}
    &\sum_{\rangeTup \in {\mathbf{S}_g^{certain}}} \lbMarker{(\rangeRel(\rangeTup) \amysmbNAU \rangeTup.A)} \notag{}\\
  < &\sum_{\tup \in T_g} \sum_{\rangeTup \in {\mathbf{S}_g^{certain}}} \TM_R(\rangeTup, \tup) \asmbN \tup.A \label{eq:first-sum-of-agg-bound}
\end{align}

%%%%%%%%%%%%%%%%%%%%%%%%%%%%%%%%%%%%%%%%%%%%%%%%%%%%%%%%%%%%%%%%%%%%%%%%%%%%%%%%
Next we will bound the second sum from \Cref{eq:o-agg-no-N} which ranges over $\mathbf{S}_g^{uncertain}$ in a similar fashion and then show that $f_g$ is lower bound by the sum of these bounds.
For $\rangeTup \in \mathbf{S}_g^{uncertain}$ since $\rangeTup.G$ is uncertain, some $\tup \in T_\rangeTup$ may belong to a group $g' \neq g$. We will have to treat this case differently in the following. For that we define $T_\rangeTup^{+} = \{ \tup \mid \tup \in T_\rangeTup \land \tup.G = g\}$ and $T_\rangeTup^{-} = \{ \tup \mid \tup \in T_\rangeTup \land \tup.G \neq g\}$.
Let $k_{\rangeTup}^+ = \sum_{\tup \in T_\rangeTup^+} \TM_R(\rangeTup, \tup)$ and $k_{\rangeTup}^- =  \sum_{\tup \in T_\rangeTup^-} \TM_R(\rangeTup, \tup)$. Using the same argument as for \Cref{eq:agg-proof-range-bound-by-tm-sum}, we get:

\begin{align*}
  &\sum_{\rangeTup \in {\mathbf{S}_g^{uncertain}}} \min(\lbMarker{(\rangeRel(\rangeTup) \amysmbNAU \rangeTup.A)}, \amzero) \\
  \leq &\sum_{\rangeTup \in {\mathbf{S}_g^{uncertain}}} \min\left( \left( k_{\rangeTup}^+ + k_{\rangeTup}^-\right)  \amysmbNAU \lbMarker{\rangeTup.A}, \amzero \right)
\end{align*}

We consider two cases: (i) $\lbMarker{\rangeRel(\rangeTup)} \leq k_\rangeTup^+ \leq \ubMarker{\rangeRel(\rangeTup)}$ and (ii)
$k_\rangeTup^+ < \lbMarker{\rangeRel(\rangeTup)} \leq \ubMarker{\rangeRel(\rangeTup)}$. For case $(i)$ first consider that based on \Cref{lem:add-0-to-m-is-larger}, we have

\begin{align*}
\leq &\sum_{\rangeTup \in {\mathbf{S}_g^{uncertain}}} \lbMarker{\rangeRel(\rangeTup) \amysmbNAU \rangeTup.A}
\end{align*}

From (i) follows that $\sum_{\tup \in T_\rangeTup} \TM_\rel(\rangeTup, \tup) \asmbN \lbMarker{\rangeTup}$ is bound from below by $\lbMarker{\rangeRel(\rangeTup) \amysmbNAU \rangeTup.A}$ for any $\rangeTup \in \mathbf{S}_g^{uncertain}$. Applying the same steps as for $\mathbf{S}_g^{certain}$, we get:

\begin{align}
\leq &\sum_{\tup \in T_g} \sum_{\rangeTup \in {\mathbf{S}_g^{uncertain}}} \TM_R(\rangeTup, \tup) \asmbN \tup.A \label{eq:second-sum-of-agg-bound-case-one}
\end{align}

Now we have to prove the same for (ii), i.e., when $k_\rangeTup^+ < \lbMarker{\rangeRel(\rangeTup)}$. Let $min_A = \min(\{ \tup.A \mid \tup \in T_\rangeTup^+\})$. We now prove for each $\monoid \in \{ \msum, \mmin, \mmax \}$ that under assumption (ii) for any $\rangeTup \in \mathbf{S}_g^{uncertain}$ the following holds:

\begin{align}
       & \min((k_{\rangeTup}^+ + k_{\rangeTup}^-) \asmbN \lbMarker{\rangeTup.A}, \amzero ) \notag \\
  \leq & k_{\rangeTup}^+   \asmbN min_a \notag{}                                           \\
  =    & \left( \sum_{\tup \in T_\rangeTup^+} \TM_R(\rangeTup, \tup) \right) \asmbN min_a \label{eq:sum-of-agg-bound-case-two-each-t}
\end{align}

%%%%%%%%%%%%%%%%%%%%%%%%%%%%%%%%%%%%%%%%%%%%%%%%%%%%%%%%%%%%%%%%%%%%%%%%%%%%%%%%
\proofpara{$\msum$}
Recall that $\smbN{\msum}$ is multiplication  and $\mzero{\msum} = 0$.
We distinguish two cases: $min_a \leq 0$ and $\min_a > 0$. If $min_a \leq 0$, since $\lbMarker{\rangeTup.A} \leq min_a$, it follows that $\min((k_{\rangeTup}^+ + k_{\rangeTup}^-) \smbN{\msum} \lbMarker{\rangeTup.A}, \amzero)  = \min((k_{\rangeTup}^+ + k_{\rangeTup}^-) \cdot \lbMarker{\rangeTup.A}, 0) = (k_{\rangeTup}^+ + k_{\rangeTup}^-) \cdot \lbMarker{\rangeTup.A} < k_{\rangeTup}^+ \cdot \lbMarker{\rangeTup.A} < k_{\rangeTup}^+ \cdot min_a$. If $min_a > 0$, then $k_{\rangeTup}^+ \cdot min_a \geq 0$ and since $\min(m,0) \leq 0$ for any $m$, we get $\min((k_{\rangeTup}^+ + k_{\rangeTup}^-) \smbN{\msum} \lbMarker{\rangeTup.A}, 0) \leq k_{\rangeTup}^+ \cdot min_a$.

\proofpara{$\mmin$}
Since $\smbN{\mmin}$ is the identity on $\mmin$ except for when $k = 0$ and because  $\mzero{\mmax} = \infty$, we get
$\min((k_{\rangeTup}^+ + k_{\rangeTup}^-) \smbN{\mmin} \lbMarker{\rangeTup.A}, \amzero) = (k_{\rangeTup}^+ + k_{\rangeTup}^-) \smbN{\mmin} \lbMarker{\rangeTup.A}$. Distinguish two cases. If $k_{\rangeTup}^+ = 0$, then
$k_{\rangeTup}^+ \smbN{\mmin} \lbMarker{\rangeTup.A} = \infty > m$ for any $m$ including $(k_{\rangeTup}^+ + k_{\rangeTup}^-) \smbN{\mmin} \lbMarker{\rangeTup.A}$. Otherwise, since $\smbN{\mmin}$ is the identify on $\mmin$ if $k \neq 0$, we have
$(k_{\rangeTup}^+ + k_{\rangeTup}^-) \smbN{\mmin} \lbMarker{\rangeTup.A} = k_{\rangeTup}^+ \smbN{\mmin} \lbMarker{\rangeTup.A}$.

\proofpara{$\mmax$}
Since $\mzero{\mmax} = - \infty$, we get $\min((k_{\rangeTup}^+ + k_{\rangeTup}^-) \smbN{\mmax} \lbMarker{\rangeTup.A}, \amzero)  = - \infty \leq k_{\rangeTup}^+   \smbN{\mmax} min_a$.

Using \Cref{eq:sum-of-agg-bound-case-two-each-t} proven above, we can apply the same steps as in the proof of \Cref{eq:first-sum-of-agg-bound} to deduce that:

\begin{align}
       & \sum_{\rangeTup \in {\mathbf{S}_g^{uncertain}}} \min(\lbMarker{\rangeRel(\rangeTup) \amysmbNAU \rangeTup.A}, \amzero)  \notag{}    \\
  \leq & \sum_{\rangeTup \in {\mathbf{S}_g^{uncertain}}} \left(\sum_{\tup \in T_\rangeTup^+} \TM_R(\rangeTup, \tup)\right) \amysmbNAU min_a \notag{} \\
=      & \sum_{\rangeTup \in {\mathbf{S}_g^{uncertain}}} \left(\sum_{\tup \in T_g} \TM_R(\rangeTup, \tup)\right) \amysmbNAU min_a \notag{} \\
  =    & \sum_{\rangeTup \in {\mathbf{S}_g^{uncertain}}} \sum_{\tup \in T_g} \TM_R(\rangeTup, \tup) \amysmbNAU min_a                       \notag{}  \\
  \leq & \sum_{\rangeTup \in {\mathbf{S}_g^{uncertain}}} \sum_{\tup \in T_g} \TM_R(\rangeTup, \tup) \amysmbNAU \tup.A                    \notag{}    \\
  =    & \sum_{\tup \in T_g} \sum_{\rangeTup \in {\mathbf{S}_g^{uncertain}}} \TM_R(\rangeTup, \tup) \amysmbNAU \tup.A
         \label{eq:second-sum-of-agg-bound-case-two}
\end{align}

% \sum_{\rangeTup \in {\mathbf{S}_g^{uncertain}}} \min(\lbMarker{\rangeRel(\rangeTup) \amysmbNAU \rangeTup.A}, \amzero)

%%%%%%%%%%%%%%%%%%%%%%%%%%%%%%%%%%%%%%%%%%%%%%%%%%%%%%%%%%%%%%%%%%%%%%%%%%%%%%%%
Combining \Cref{eq:first-sum-of-agg-bound} with \Cref{eq:second-sum-of-agg-bound-case-one} and \Cref{eq:second-sum-of-agg-bound-case-two} and using \Cref{lem:add-0-to-m-is-larger} we get

\begin{align*}
  \lbMarker{\rangeOf{o}.f(A)} & \leq \sum_{\tup \in T_g} \sum_{\rangeTup \in {\mathbf{S}_g^{certain}}} \TM_R(\rangeTup, \tup) \asmbN \tup.A \\
                              & \amadd \sum_{\tup \in T_g} \sum_{\rangeTup \in {\mathbf{S}_g^{uncertain}}} \TM_R(\rangeTup, \tup) \asmbN \tup.A\\
                              & = \sum_{\tup \in T_g} \sum_{\rangeTup \in {\mathbf{S}_g}} \TM_R(\rangeTup, \tup) \asmbN \tup.A              \\
                              & = f_g
\end{align*}

%%%%%%%%%%%%%%%%%%%%%%%%%%%%%%%%%%%%%%%%%%%%%%%%%%%%%%%%%%%%%%%%%%%%%%%%%%%%%%%%
\proofpara{$\ubMarker{\rangeOf{o}.f(A)} \geq f_g$}
We still need to prove that $\ubMarker{\rangeOf{o}.f(A)} \geq f_g$.

\begin{align*}
  \ubMarker{\rangeOf{o}.f(A)} &=
  \sum_{\rangeTup \in {\mathbf{S}_g^{certain}}} \ubMarker{(\rangeRel(\rangeOf{o}) \amysmbNAU \rangeTup.A)}\\
  &\amadd \sum_{\rangeTup \in {\mathbf{S}_g^{uncertain}}} \max(\ubMarker{(\rangeRel(\rangeOf{o}) \amysmbNAU \rangeTup.A)}, \amzero)\\
  &\amadd \sum_{\rangeTup \in \mathbf{N}_g} \max(\ubMarker{(\rangeRel(\rangeOf{o}) \amysmbNAU \rangeTup.A)}, \amzero)\\
\end{align*}

This prove is analog to the prove for $\lbMarker{\rangeOf{o}.f(A)} \leq f_g$ except that it is always the case that $k_\rangeTup^+ \leq \ubMarker{\rangeRel(\rangeTup)}$ which simplifies the case for $\mathbf{S}_g^{uncertain}$.

%%%%%%%%%%%%%%%%%%%%%%%%%%%%%%%%%%%%%%%%%%%%%%%%%%%%%%%%%%%%%%%%%%%%%%%%%%%%%%%%
\proofpara{$\lbMarker{\query(\rangeRel)(\rangeOf{o})} \leq \sum_{\tup \tmatch \rangeOf{o}} \TM_\query(\rangeOf{o},\tup) \leq \ubMarker{\query(\rangeRel)(\rangeOf{o})}$}
So far we have established that for any $\rangeOf{o} \in \mathbf{O}_g$ we have $\tup_g \tmatch \rangeOf{o}$. For that follows that when constructing a tuple matching $\TM_\query$ based on which $\query(\rangeTup)$ bounds $\query(\rel)$ we can associate $\tup_g$ with any subset of $\mathbf{O}_g$. It remains to be shown that we can find such a tuple matching such that $\lbMarker{\query(\rangeRel)(\rangeOf{o})} \leq \sum_{\tup \tmatch \rangeOf{o}} \TM_\query(\rangeOf{o},\tup) \leq \ubMarker{\query(\rangeRel)(\rangeOf{o})}$. Since each aggregation result in $\query(\rel)$ appears exactly once, this boils down to proving that $\lbMarker{\query(\rangeRel)(\rangeOf{o})} \leq \card{\{ \tup \mid \TM_\query(\rangeOf{o},\tup) \neq 0 \}} \leq \ubMarker{\query(\rangeRel)(\rangeOf{o})}$. We will make use of the following notation:
\begin{align*}
  \mathbf{Tup}_\rel &= \{  \tup \mid \rel(\tup) \neq 0 \}\\
  \mathbf{Tup}_\rangeRel &= \{ \rangeTup \mid \rangeRel(\rangeTup) \neq 0 \}\\
  \mathbf{Tup}_{output} &= \{ \rangeOf{o} \mid \query(\rangeRel)(\rangeOf{o}) \neq 0 \}
\end{align*}

Recall that $\mathcal{G}$ denotes the set of groups in $\rel$.
For the construction of $\TM_\query$ we will use a mapping $gcover: \mathcal{G} \to \mathbf{Tup}_\rel \times \mathbf{Tup}_{\rangeRel} \times \mathbf{Tup}_{output}$ such that for any $g \in \mathcal{G}$ for which $gcover(g) = (\tup, \rangeTup, \rangeOf{o})$ the following conditions hold:
\begin{align*}
  \tup &\in T_g
  &\TM_R(\rangeTup, \tup) &\neq 0
  &\gsdefP(\rangeTup) &= \rangeOf{o}
\end{align*}
We will refer to such a mapping as a group cover.
The purpose of a group cover $gcover$ is to assign each group $g$ in the possible world to an output $\rangeOf{o}$ which represents this group and to justify this assignment through an input $\rangeTup$ that is associated by $TM_\rel$ with at least one tuple from group $g$ and is assigned by the grouping strategy to the \abbrUAADB output tuple $\rangeOf{o}$. We will first prove that at least one group cover exists and then prove that a group cover induces a tuple matching $\TM_\query$ for which the condition we want to prove ($\lbMarker{\query(\rangeRel)(\rangeOf{o})} \leq \sum_{\tup \tmatch \rangeOf{o}} \TM_\query(\rangeOf{o},\tup) \leq \ubMarker{\query(\rangeRel)(\rangeOf{o})}$) holds for all $\rangeOf{o} \in \mathbf{Tup}_{output}$.

\proofpara{Group cover exists}
To prove the existence of a group cover, we will show how to construct such a group cover for any aggregation query $\query$, input $\rel$ that is bound by a $\uaaN$-relation $\rangeRel$. Consider a group $g \in \mathbf{G}$ and pick a arbitrary tuple $\tup \in T_g$ and $\rangeTup \in \mathbf{S}_g$ such that $\TM_{\rel}(\rangeTup, \tup) \neq$. At least one such $\tup$ has to exist for $g$ to be a group in the result of $\query(\rel)$. Furthermore, since $\TM_\rel$ is a tuple matching based on which $\rangeRel$ bounds $\rel$, there has to exist at least one such $\rangeTup$. Now recall that $\gsdefP$ associates each tuple $\rangeTup$ for which $\rangeRel(\rangeTup) \neq \uaaZero{\semN}$ with one output in $\query(\rangeRel)$. WLOG let $o = \gsdefP(\rangeRel)$. We set $gcover(g) = (\tup, \rangeTup, \rangeOf{o})$. By construction $gcover$ is a group cover.

\proofpara{$\lbMarker{\query(\rangeRel)(\rangeOf{o})} \leq \sum_{\tup \tmatch \rangeOf{o}} \TM_\query(\rangeOf{o},\tup) \leq \ubMarker{\query(\rangeRel)(\rangeOf{o})}$}
It remains to be shown that it is possible to construct a tuple matching $\TM_\query$ such that for any $\rangeOf{o}$ in $\query(\rangeRel)$ we have
\[
  \lbMarker{\query(\rangeRel)(\rangeOf{o})} \leq \sum_{\tup \tmatch \rangeOf{o}} \TM_\query(\rangeOf{o},\tup) \leq \ubMarker{\query(\rangeRel)(\rangeOf{o})}
\]
which implies that $\query(\rangeRel)$ bounds $\query(\rel)$ based on $\TM_\query$. Since aggregation returns a single result tuple $t_g$ for each group $g$, we know that that $\query(\rel)(\tup_g) = 1$.
Using $gcover$, we construct $\TM_\query$ as shown below:
\begin{align*}
\TM_\query(\tup_g,\rangeOf{o}) =
  \begin{cases}
    1 &\mathtext{if} \exists \tup,\rangeTup: gcover(g) = (\tup,\rangeTup,\rangeOf{o})\\
    0 &\mathtext{otherwise}\\
  \end{cases}
\end{align*}
Obviously, $\sum_{\rangeTup} \TM_\query(\tup_g, \rangeTup) = 1 = \query(\rel(\tup_g))$. Thus, $\TM_{\query}$ is a tuple matching. It remains to be shown that for each $\rangeOf{o}$ in $\query(\rangeRel)$ we have $\lbMarker{\query(\rangeRel)(\rangeOf{o})} \leq \sum_{\tup} \TM_\query(\tup, \rangeRel) \leq \ubMarker{\query(\rangeRel)(\rangeOf{o})}$. Observe that based on how we have constructed $gcover$, the following holds for any such $\rangeOf{o}$:
\begin{align*}
  \sum_{\tup_g} \TM_{\query}(\rangeOf{o},\tup_g) = \sum_{g \in \mathbf{G}: \exists \tup, \rangeTup: gcover(g) = (\tup, \rangeTup, \rangeOf{o})} 1
\end{align*}
For any group cover if $gcover(g) = (\tup, \rangeTup, \rangeOf{o})$ then $\TM_{\rel}(\rangeTup, \tup) \neq 0$. Then,
\begin{align*}
\leq &\sum_{g \in \mathbf{G}: \exists \tup, \rangeTup: gcover(g) = (\tup, \rangeTup, \rangeOf{o})} \ubMarker{\rangeRel(\rangeTup)}
\end{align*}
Since $\gsdefP$ may assign to $\rangeOf{o}$ additional tuples which do not co-occur with $\rangeOf{o}$ in  $gcover$, i.e., where $\neg \exists g, \rangeTup, tup: gcover(g) = (\tup, \rangeTup, \rangeOf{o})$, we get:
\begin{align*}
  \leq &\sum_{\rangeTup: \gsdefP(\rangeTup) = \rangeOf{o}} \ubMarker{\rangeRel(\rangeTup)}
     = &\ubMarker{\query(\rangeRel)(\rangeOf{o})}
\end{align*}
It remains to be shown that $\lbMarker{\query(\rangeRel)(\rangeOf{o})} \leq \sum_{\tup \tmatch \rangeOf{o}} \TM_\query(\rangeOf{o},\tup)$. From the construction of $\TM_\query$ follows that:
\begin{align*}
  &\sum_{\tup \tmatch \rangeOf{o}} \TM_\query(\rangeOf{o},\tup)\\
  = &\sum_{g \in \mathbf{G}: \exists \tup, \rangeTup: gcover(g) = (\tup, \rangeTup, \rangeOf{o})} 1
\end{align*}
Based on \Cref{def:aggr-op-semantics-gb},
\begin{align}
\lbMarker{\query(\rangeRel)(\rangeOf{o})} = \duprem_{\semN}\left(\sum_{\rangeTup': \gsdefP(\rangeTup') = \asgrp}\lbMarker{\rangeRel(\rangeTup')}\right) \label{eq:agg-proof-tuple-annot-lb}
\end{align}
First consider the case where the set $\{ \rangeTup \mid \gsdefP(\rangeTup)  = \rangeOf{o} \}$ is empty. It follows that $\lbMarker{\query(\rangeRel)(\rangeOf{o})} = 0$ and the claim trivially holds.

If the set is non empty, then $\sum_{\tup \tmatch \rangeOf{o}} \TM_\query(\rangeOf{o},\tup) \geq 1$. Also
\[
  \duprem_{\semN}\left(\sum_{\rangeTup': \gsdefP(\rangeTup') = \rangeOf{o}}\lbMarker{\rangeRel(\rangeTup')}\right) \leq 1
  \], because
for any $k \in \semN$, $\duprem_{\semN}(k) \leq 1$ if $k \neq 0$. Thus,
\begin{align*}
  \leq &1  &\leq &\sum_{\tup \tmatch \rangeOf{o}} \TM_\query(\rangeOf{o},\tup)
\end{align*}
Thus, we have shown that $\lbMarker{\query(\rangeRel)(\rangeOf{o})} \leq \sum_{\tup \tmatch \rangeOf{o}}$ which together with $\sum_{\tup \tmatch \rangeOf{o}} \leq \ubMarker{\query(\rangeRel)(\rangeOf{o})}$ and the fact that $TM_\query$ only assigns non-zero annotations to $\tup$ and $\rangeOf{o}$ if $\tup \tmatch \rangeOf{o}$ that we have proven above implies that $\query(\rangeRel)$ bounds $\query(\rel)$ based on $\TM_\query$.

\proofpara{Aggregation without group-by}
The proof for aggregation without group-by is analog except for that both $\query(\rel)$ and $\query(\rangeRel)$ contain a single result tuple annotated with $1$ and $\ut{1}{1}{1}$ respectively. Let $\tup$ and $\rangeOf{o}$ denote this single result tuple. Then we trivially define $\TM_\query(\rangeTup, \tup) = 1$ and $\TM_\query(\tup',\rangeTup') = 0$ if either $\tup' \neq \tup$ or $\rangeTup \neq \rangeTup')$. Then, $\lbMarker{\query(\rangeRel)(\rangeOf{o})} \leq \sum_{\tup \tmatch \rangeOf{o}} \TM_\query(\rangeOf{o},\tup) \leq \ubMarker{\query(\rangeRel)(\rangeOf{o})}$. The proof of $\tup \tmatch \rangeOf{o}$ is analog to the proof for group-by aggregation.
\end{proof}
%%%%%%%%%%%%%%%%%%%%%%%%%%%%%%%%%%%%%%%%

From \Cref{theo:aggregation-preserves-bounds}\iftechreport{, \Cref{theo:set-difference-prese},}\ifnottechreport{,\cite[Theorem X]{techreport} (bound preservation for set difference)} and \Cref{lem:ra-plus-preserves-bounds} follows our main technical result: Our query semantics for $\raAgg$ queries preserves bounds.

%%%%%%%%%%%%%%%%%%%%%%%%%%%%%%%%%%%%%%%%%%%%%%%%%%%%%%%%%%%%%%%%%%%%%%%%%%%%%%%%
\begin{Corollary}[Preservation of bounds for $\raAgg$]\label{cor:preservation-of-boun}
  Let $\query$ be an $\raAgg$ query and $\pdb$ an incomplete $\semN$-database that is bound by an $\uaaK{\semN}$-database $\rangeDB$. Then $\query(\rangeDB)$ bounds $\query(\pdb)$.
  $$ \pdb \dbbounds \rangeDB \Rightarrow \query(\pdb) \dbbounds \query(\rangeDB)$$
\end{Corollary}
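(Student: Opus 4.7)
The plan is to proceed by structural induction on the $\raAgg$ query $\query$, using the three bound-preservation results already established (\Cref{lem:ra-plus-preserves-bounds} for $\raPlus$ operators, \Cref{theo:set-difference-prese} for set difference, and \Cref{theo:aggregation-preserves-bounds} for aggregation with and without group-by) as the cases of the induction. The induction hypothesis is that for every subquery $\query'$ and every incomplete $\semN$-database $\pdb'$ bound by a $\uaaK{\semN}$-database $\rangeDB'$, we have $\query'(\pdb') \dbbounds \query'(\rangeDB')$.

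For the base case, $\query$ is a relation access $R$, and the statement follows immediately from the assumption $\pdb \dbbounds \rangeDB$. For the inductive step I would pattern-match on the top-level operator of $\query$. If this operator is a projection, selection, union, or cross product, I invoke \Cref{lem:ra-plus-preserves-bounds}; if it is a set difference, I invoke \Cref{theo:set-difference-prese}; and if it is an aggregation (with or without group-by), I invoke \Cref{theo:aggregation-preserves-bounds}. In every case, the operator's inputs are strict subqueries to which the induction hypothesis applies, yielding \abbrUAADB inputs that bound the corresponding incomplete inputs; the appropriate theorem then lifts this bounding property through the operator.

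The only subtle point, which needs a brief justification rather than a new argument, is that the bound-preservation theorems are stated for \emph{single operators applied to bounding inputs}, whereas here the inputs to the top-level operator are themselves query results. This is handled by first applying the induction hypothesis to each immediate subquery of $\query$ to produce \abbrUAADB relations that bound the corresponding incomplete relations under possible-world semantics (recall \Cref{eq:pqp}: $\query'(\pdb) = \{\query'(\db) \mid \db \in \pdb\}$), and then feeding these bounding \abbrUAADB relations into the corresponding single-operator theorem. Since bounding of incomplete databases is defined pointwise over possible worlds plus a \abbrBGW condition (\Cref{def:bounding-incomplet-dbs}), and each single-operator theorem guarantees exactly this pointwise property along with the \abbrBGW correspondence (because \abbrBG annotations are computed by standard $\semN$-relational semantics), the composition goes through cleanly.

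I do not expect any real obstacle: the heavy lifting is done by the three component theorems, and the induction is routine. The one thing worth being careful about is simply ensuring that the ``$\exists \db \in \pdb: \bgOf{\rangeDB} = \db$'' half of \Cref{def:bounding-incomplet-dbs} is preserved in the inductive step, which follows because all our operator semantics compute the \abbrBG component exactly via the underlying $\semN$-relational operator on the \abbrBGW, so the \abbrBGW of $\query(\rangeDB)$ is $\query(\bgOf{\rangeDB})$, and this equals $\query(\db)$ for the distinguished $\db \in \pdb$ given by the inductive hypothesis.
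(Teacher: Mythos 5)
Your proposal is correct and matches the paper's treatment: the paper offers no separate proof for this corollary, stating only that it follows from \Cref{lem:ra-plus-preserves-bounds}, \Cref{theo:set-difference-prese}, and \Cref{theo:aggregation-preserves-bounds}, which is exactly the structural induction you spell out (with the \abbrBGW-preservation point you flag being the only detail worth making explicit). No gaps.
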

%%%%%%%%%%%%%%%%%%%%%%%%%%%%%%%%%%%%%%%%%%%%%%%%%%%%%%%%%%%%%%%%%%%%%%%%%%%%%%%%

Note that our semantics for $\raAgg$ queries over $\uaaN$-relations has \ptime data complexity.

%%%%%%%%%%%%%%%%%%%%%%%%%%%%%%%%%%%%%%%%%%%%%%%%%%%%%%%%%%%%%%%%%%%%%%%%%%%%%%%%
\begin{Theorem}[Data Complexity of $\raAgg$ Queries]\label{theo:data-complexity-of-raAgg}
Evaluation of $\raAgg$ queries over $\uaaN$-relations has \ptime data complexity.
\end{Theorem}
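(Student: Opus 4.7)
The plan is to prove the claim by structural induction on $\raAgg$ queries, showing that every operator can be evaluated in polynomial time (in the number of input tuples) and that the size of the output is polynomial in the size of the input. For data complexity, the query $\query$ (including the schema, the selection/aggregation expressions, and the choice of grouping strategy) is held fixed, so expression evaluation and every per-tuple primitive operation on $\uaaN$ and $\rangeDom$ values runs in constant time.

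First, I would establish the base case: a relation access $\rangeRel$ is produced in time linear in $|\rangeRel|$. For the inductive step I would handle the $\raPlus$ operators using their standard $\semK$-relational semantics instantiated with $\semK = \uaaK{\semN}$. Addition and multiplication in $\uaaK{\semN}$ are defined pointwise on triples of natural numbers and are $O(1)$ per call. Selection evaluates a fixed scalar expression over a range-annotated valuation per tuple using \Cref{def:range-expr-eval}, which is $O(1)$ per tuple by induction on the expression (of fixed size), then multiplies with $\rliftK{\semN}$ applied to the resulting Boolean triple; the total cost is $O(|\rangeRel|)$ and the output has at most $|\rangeRel|$ tuples. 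Projection sums annotations of tuples agreeing on the projected attributes in $O(|\rangeRel|\log|\rangeRel|)$, producing at most $|\rangeRel|$ tuples. Cross product and union run in $O(|\rangeRel|\cdot|\rangeOf{S}|)$ and $O(|\rangeRel|+|\rangeOf{S}|)$ respectively, and join reduces to a selection over a cross product.

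For set difference I would first argue that the $\abbrBG$-combiner $\combine$ from \Cref{def:t-combiner} is polynomial: it groups tuples of $\rangeRel_1$ by their \abbrBG values and computes coordinate-wise min/max of the range bounds and semiring sums of the annotations, which takes $O(|\rangeRel_1|\log|\rangeRel_1|)$. Once $\combine(\rangeRel_1)$ is available, \Cref{def:set-diff-semantics} computes, for each output tuple $\rangeTup$, three sums over subsets of $\rangeRel_2$ determined by the relations $\matches$, equality on $\bgName$-attributes, and $\equiv$; each predicate can be tested in $O(1)$ between two range-annotated tuples, so the whole step is $O(|\combine(\rangeRel_1)|\cdot|\rangeRel_2|)$. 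For aggregation, the default grouping strategy from \Cref{def:default-grouping-strategy} partitions tuples by their \abbrBG group-by values in $O(|\rangeRel|\log|\rangeRel|)$, producing at most $|\rangeRel|$ output groups. For each output group $\asgrp$, the group-by range bounds (\Cref{def:range-bounded-groups}) are pointwise min/max over the tuples assigned by $\gsdefP$, and the aggregation function bounds (\Cref{def:agg-function-bounds}) iterate over $\tgrouping(\asgrp) \subseteq \{\rangeTup \mid \rangeRel(\rangeTup)\neq\uaaZero{\semN}\}$, where each $\amysmbNAU$ evaluation is $O(1)$ and the overlap test $\toverlaps$ is $O(1)$. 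Summed over all output groups the cost is $O(|\rangeRel|^2)$. Tuple annotations are computed similarly from sums and the duplicate-elimination $\duprem_{\semN}$, all in polynomial time.

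Combining these bounds, each operator executes in time polynomial in the size of its inputs and produces an output of size at most polynomial in the size of its inputs. Since $\query$ is of fixed depth, a straightforward induction then shows that $\query(\rangeDB)$ is computed in time polynomial in $|\rangeDB|$, establishing \ptime data complexity. The step I expect to require the most care is aggregation, specifically verifying that $\tgrouping(\asgrp)$ and the associated sums of $\amysmbNAU$ values can be computed without any implicit blow-up; but because the number of output groups is bounded by the number of distinct \abbrBG group-by values in $\rangeRel$ and each $\tgrouping(\asgrp)$ is bounded by $|\rangeRel|$, this remains comfortably polynomial.
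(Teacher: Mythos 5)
Your proposal is correct and follows essentially the same route as the paper's proof: an operator-by-operator accounting showing that $\raPlus$ operators, set difference (via the $\abbrBG$-combiner and the quadratic scan of the right-hand side), and aggregation (at most linearly many output groups, each requiring at most a linear scan of the input) all run in time polynomial in the data, with the fixed query size making per-tuple expression evaluation constant. Your version is somewhat more careful than the paper's — in particular about bounding output sizes so that composition over a fixed-depth query stays polynomial — but the underlying argument is the same.
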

%%%%%%%%%%%%%%%%%%%%%%%%%%%%%%%%%%%%%%%%%%%%%%%%%%%%%%%%%%%%%%%%%%%%%%%%%%%%%%%%
%%%%%%%%%%%%%%%%%%%%%%%%%%%%%%%%%%%%%%%%
\begin{proof}
Query evaluation for $\raPlus$ over $\semK$-relations is known to be in \ptime. For $\raPlus$, our semantics only differs in the evaluation of expressions which adds an overhead that is independent on the size of the input database. For set difference, the semantics according to \Cref{def:set-diff-semantics} is in worst-case accessing the annotation of every tuple in the right-hand side input to calculate the annotation of a result tuple. Since each result tuple belongs to the left-hand side input, the complexity is certainly in $O(n^2)$ which is \ptime. Finally, for aggregation, the number of result tuples is at most linear in the input size and even a naive implementation just has to test for each input whether it contributes to a particular output. Thus, aggregation is certainly in $O(n^2)$ and we get an overall \ptime data complexity for evaluation of $\raAgg$ queries over $\uaaN$-relations.
\end{proof}
%%%%%%%%%%%%%%%%%%%%%%%%%%%%%%%%%%%%%%%%

%%% Local Variables:
%%% mode: latex
%%% TeX-master: "tech_report"
%%% End:

%%%%%%%%%%%%%%%%%%%%%%%%%%%%%%%%%%%%%%%%%%%%%%%%%%%%%%%%%%%%%%%%%%%%%%%%%%%%%%%%
\section{Implementation}
\label{sec:implementation}
In this section we discuss about the implementation of our \abbrAUDB model as a middleware running on top of conventional database systems. For that we define an encoding of $\uaaK{\semN}$-relations as classical bag semantics relations implemented as a function $\Enc$ which maps a $\uaaK{\semN}$-database to a bag semantics database. We use $\Dec$ to denote the inverse of $\Enc$.
% $\Enc(\rangeRel)=R$ and $\Dec(R)=\rangeRel$ to denote the encoding and decoding of range relations
Using the encoding we apply query rewriting to propagate annotations and implement $\uaaK{\semN}$-relational query semantics over the encoding.  % while processing the query. We specifically implement \abbrAUDB for bag semantics, as this is the model used by most DBMSes.
Our frontend rewriting engine receives a query  $\query$ over an $\uaaK{\semN}$-annotated database $\rangeDB$ and rewrites this into a query $\rewrUAA{\query}$ that evaluated over $\Enc(\rangeDB)$ returns the encoding of $\query(\rangeDB)$. That is, we will show that:
\begin{align*}
  \query(\rangeDB) = \Dec(\qmerge(\Enc(\rangeDB)))
\end{align*}

%%%%%%%%%%%%%%%%%%%%%%%%%%%%%%%%%%%%%%%%%%%%%%%%%%%%%%%%%%%%%%%%%%%%%%%%%%%%%%%%
\subsection{Relational encoding of \abbrAUDBs}

We now define $\Enc$ for a single $\uaaN$-relation $\rangeRel$.  $\Enc(\rangeDB)$ is then defined as the database generated by applying $\Enc$ to each relation $\rangeRel \in \rangeDB$.
We use $\bar{A}$ to denote a set of attributes.
We use $\schemaOf(\rel)$ to denote the schema of input relation $\rel$.
The schema of $\Enc(\rangeRel)$ for an $\uaaN$-relation $\rangeRel$
with schema $\schemaOf{\rangeRel} = (a_1, \ldots, a_n)$ is

\[
  \schemasymb(\Enc(\rangeRel)) = (\bar{A},\ubMarker{\bar{A}},\lbMarker{\bar{A}}, \rlb, \ratt, \rub )
\]

\begin{align*}
	\mathtext{where } \bar{A}=\{\bgMarker{A_1}, \ldots, & \bgMarker{A_n}\}, \\
		& \ubMarker{\bar{A}}=\{\ubMarker{A_1}, \ldots, \ubMarker{A_n}\}, \lbMarker{\bar{A}}=\{\lbMarker{A_1}, \ldots, \lbMarker{A_n}\}
\end{align*}
%where $\bar{A}=\{\bgMarker{A_1}, \ldots, \bgMarker{A_n}\}$, \\ $\ubMarker{\bar{A}}=\{\ubMarker{A_1}, \ldots, \ubMarker{A_n}\}$, $\lbMarker{\bar{A}}=\{\lbMarker{A_1}, \ldots, \lbMarker{A_n}\}$.

%%%%%%%%%%%%%%%%%%%%%%%%%%%%%%%%%%%%%%%%
\begin{Example}
	The schema of $\Enc(\rangeRel)$ for \abbrAUDB relation $\rangeRel(A,B)$ is $(A,B,\lbatt{A},\lbatt{B},\ubatt{A},\ubatt{B},\rlb,\ratt,\rub)$.
\end{Example}
%%%%%%%%%%%%%%%%%%%%%%%%%%%%%%%%%%%%%%%%

For each tuple $\rangeTup$ with $\rangeRel(\rangeTup) \neq \uaaZero{\semN}$, there will be one tuple $\tup = \tenc{\rangeTup,\rangeRel(\rangeTup)}$ in $\Enc(\rangeRel)$ where $\tenc{}$ is a function that maps tuples from $\rangeRel$ and their annotations to the corresponding tuple from $\Enc(\rangeRel)$. Attributes $\rlb$, $\ratt$, and $\rub$ are used to store $\rangeRel(\rangeTup)$:
\begin{align*}
  \tenc{\rangeTup, k}.\rlb &= \lbMarker{k} \\
 \tenc{\rangeTup, k}.\ratt &= \bgMarker{k} \\
 \tenc{\rangeTup, k}.\rub  &= \ubMarker{k}
\end{align*}

For each attribute $A_i$, the three attributes $\lbMarker{A_i}$, $\bgMarker{A_i}$, and $\ubMarker{A_i}$ are used to store the range-annotated value $\rangeTup.A_i$:
\begin{align*}
  \tenc{\rangeTup,k}.\lbMarker{A_i} & = \lbMarker{\rangeTup.A_i}\\
  \tenc{\rangeTup,k}.\bgMarker{A_i} & = \bgMarker{\rangeTup.A_i}\\
  \tenc{\rangeTup,k}.\ubMarker{A_i} & = \ubMarker{\rangeTup.A_i}
\end{align*}

In addition we define a function $\tdec{}$ which takes a tuple in the encoding and returns the corresponding range-annotated tuple $\rangeTup$. Given a tuple $\tup$ with schema $\schemasymb(\Enc(\rangeRel)) = (\bar{A},\ubMarker{\bar{A}},\lbMarker{\bar{A}}, \rlb, \ratt, \rub )$ where $\rangeRel$ is a $\uaaN$-relation $\rangeRel$, $\tdec{}$ returns a tuple with schema $\schemasymb(\rangeRel) = (\bar{A})$ such that for all $A_i \in \schemaOf(\rangeRel)$:

\begin{align*}
  \tdec{\tup}.A_i = \uv{\tup.\lbatt{A_i}}{\tup.\bgatt{A_i}}{\tup.\ubatt{A_i}}
\end{align*}

Furthermore, we define a function $\tdecr{\tup}$ which extract the row annotation encoded by a tuple $\tup$ in the encoding:
\[
  \tdecr{\tup} = \ut{\tup.\rlb}{\tup.\rbg}{\tup.\rub}
\]
Having defined the schema and tuple-level translation, we define $\Enc$ and its inverse $\Dec$ below.

%%%%%%%%%%%%%%%%%%%%%%%%%%%%%%%%%%%%%%%%%%%%%%%%%%%%%%%%%%%%%%%%%%%%%%%%%%%%%%%%
\begin{Definition}[Relational Encoding]\label{def:relational-encoding}
  Let $\rangeRel$ be a $\uaaN$-relation with schema $(A_1, \ldots, A_n)$ and let $\rel = \Enc(\rangeRel)$. Furthermore, let $\rangeTup$ be a tuple with schema $\schemaOf(\rangeRel)$ and $\tup$ be a tuple with schema $\schemaOf(\rel)$.
  \begin{align*}
    \Enc(\rangeRel)(\tup) &\defas
                            \begin{cases}
                              1 &\mathtext{if}\,\,\, \exists \rangeTup: \tup = \tenc{\rangeTup} \land \rangeRel(\rangeTup) > \uaaNzero\\
                              0 &\mathtext{otherwise}\\
                            \end{cases}
  \end{align*}
  \begin{align*}
    \Dec(\rel)(\rangeTup) &\defas \sum_{\tup: \tdec{\tup} = \rangeTup} \tdecr{\tup} \cdot \ut{\rel(\tup)}{\rel(\tup)}{\rel(\tup)}
                            % \ut{\sum_{\tup: \tdec{\tup} = \rangeTup} \tup.\rlb}{\sum_{\tup: \tdec{\tup} = \rangeTup} \tup.\rbg}{\sum_{\tup: \tdec{\tup} = \rangeTup} \tup.\rub}
  \end{align*}
\end{Definition}
\subsection{Rewriting}\label{sec:rewrite-rules}

We now define the rewriting $\rewrUAA{\cdot}$. We assume that for each input relation $\rangeRel$ of a query, $\Enc(\rangeRel)$ has been materialized as a relation $\rel_{\Enc}$. We will discuss how to create \abbrUAADBs in \Cref{sec:creating-abbruaadbs}.  These techniques enable \abbrUAADBs to be generated as part of the rewritten query in addition to supporting reading from a materialized input table. We call two tuples in the the relational encoding value equivalent if they are equal after projecting away the row annotation attributes ($\rlb$, $\rbg$, and $rub$. Note that $\Enc$ does never produce an output where two tuples are value-equivalent. To ensure that intermediate results that are valid encodings, we have to sum up the row annotations of value-equivalent tuples which requires aggregation for operators like projection and union. Observe that we only need to ensure that the final result of a rewritten query is a valid encoding. Thus, we can allow for valid-equivalent tuples as long as we ensure that they are merged in the final result.
In the following let $\query_1$ be a query with result schema $\bar{A} = (A_1, \ldots, A_n)$ and $\query_2$ be a query with schema $\bar{B} = (B_1, \ldots, B_m)$. We use $e \renameto A$ in generalized projections to denote that the projection onto expression $e$ renaming the result to $A$, e.g., $\projection_{A + B \renameto C, D \renameto E}$ has schema $(C,E)$. We will use $\Enc(\rangeTup)$ to refer to the deterministic tuple in $\Enc(\rangeRel)$ that encodes the \abbrUAADB tuple $\rangeTup$ and its annotation $\rangeRel(\rangeTup)$.
% to rename the LFS to the RHS. Assume the input rewritten schema is a valid \abbrAUDB that correctly interpolated in the relation database.

% We use $A$,$B$ to denote the set of non-annotation attributes for relation $R$ and $S$ to simplify the notation instead of enumerating all attributes with same rewrite. We use $\renameto$ to rename the LFS to the RHS. Assume the input rewritten schema is a valid \abbrAUDB that correctly interpolated in the relation database. We also want to show that hour rewriting is equivalent to the relational algebra query rewriting rules s.t. $\rewrUAA{\query}(\Enc(\rangeRel))=\Enc(\query(\rangeDB))$.
% $$\rewrUAA{R}(\bar{A}, \ubatt{\bar{A}} , \lbatt{\bar{A}}, \rlb,\ratt,\rub)$$
% $$\rewrUAA{S}(\bar{B}, \ubatt{\bar{B}}, \ubatt{\bar{B}}, \rlb,\ratt,\rub)$$

%%%%%%%%%%%%%%%%%%%%%%%%%%%%%%%%%%%%%%%%%%%%%%%%%%%%%%%%%%%%%%%%%%%%%%%%%%%%%%%%
\mypar{Merge Annotations}
After rewriting a $\query$ using the rewriting scheme $\rewrUAA{\cdot}$ shown below, we merge the annotation of value-equivalent tuples to generate the final encoding. Given $\rewrUAA{\query}$, we return a rewritten query $\qmerge$ to realize this:

\begin{align*}
  \qmerge &\defas \gamma_{\bar{A},\lbatt{\bar{A}},\ubatt{\bar{A}},e_{c},e_{sg},e_{p}}(\rewrUAA{\query})\\
  e_{c} &\defas \aggsum(\rlb) \renameto \rlb\\
  e_{sg} &\defas  \aggsum(\rbg) \renameto \rbg\\
  e_{p} &\defas  \aggsum(\rub) \renameto \rub
\end{align*}
%

%%%%%%%%%%%%%%%%%%%%%%%%%%%%%%%%%%%%%%%%%%%%%%%%%%%%%%%%%%%%%%%%%%%%%%%%%%%%%%%%
\mypar{Table Access}
Each access to a table $\rangeRel$ is rewritten into an access to $\Enc(\rangeRel)$ which is materialized as $\rel_{\Enc}$.

\[ \rewrUAA{\rangeRel} \defas \rel_{\Enc} \]

%%%%%%%%%%%%%%%%%%%%%%%%%%%%%%%%%%%%%%%%%%%%%%%%%%%%%%%%%%%%%%%%%%%%%%%%%%%%%%%%
\mypar{Selection}
For a selection we only filter out tuples $\Enc(\rangeTup)$ which are guaranteed to not fulfill the selection condition $\theta$, i.e., where $\ubMarker{\theta(\rangeTup)} = \bfalse$. Given an expression $e$, we use $\ubMarker{e}$ ($\lbMarker{e}$, and $\bgMarker{e}$) to denote an expression that if applied to $\Enc(\rangeTup)$ for range-annotated tuple $\rangeTup$ returns $\ubMarker{\seval{e}{\rval_{\rangeTup}}}$ ($\lbMarker{\seval{e}{\rval_{\rangeTup}}}$, $\bgMarker{\seval{e}{\rval_{\rangeTup}}}$), i.e., the upper (lower, \abbrBG) result of evaluating $e$ over the $\rangeTup$ using range-annotated expression semantics (\Cref{def:range-expr-eval}). Recall that $\rval_{\rangeTup}$ denotes the range-annotated valuation that assigns tuple $\rangeTup$'s attribute values to the variables of expression $e$. We will use  $\rval_{\Enc{\rangeTup}}$ to denote the valuation that contains three variables $\lbMarker{A}, \bgMarker{A}, \ubMarker{A}$ for each variable $A$ in $\rval$ and assigns these variables to values from $\Enc(\rangeTup)$ as follows:

\begin{align*}
  \rval_{\Enc(\rangeTup)}(\lbMarker{A}) &= \Enc(\rangeTup).\lbMarker{A}  = \lbMarker{\rval_\rangeTup(A)}\\
  \rval_{\Enc(\rangeTup)}(\bgMarker{A}) &= \Enc(\rangeTup).\bgMarker{A}  = \bgMarker{\rval_\rangeTup(A)}\\
  \rval_{\Enc(\rangeTup)}(\ubMarker{A}) &= \Enc(\rangeTup).\ubMarker{A}  = \ubMarker{\rval_\rangeTup(A)}
\end{align*}

Note that the expression semantics of \Cref{def:range-expr-eval} only uses deterministic expression evaluation and it is always possible to generate deterministic expressions $\lbMarker{e}$, $\bgMarker{e}$, and $\ubMarker{e}$. For instance, for a condition $e \defas A \leq B$, we would generate $\ubMarker{e} \defas \ubMarker{A} \leq \lbMarker{B}$. For instance, consider a  tuple $\rangeTup = (\uv{1}{1}{1},  \uv{0}{1}{2})$ with $\rangeRel(\rangeTup) = \ut{1}{1}{1}$. This tuple would be encoded in $\Enc(\rangeRel)$ as $(\bgMarker{A}: 1, \bgMarker{B}: 1, \lbMarker{A}: 1, \lbMarker{B}: 0, \ubMarker{A}: 1, \ubMarker{B}: 1,\rlb: 1,\rbg: 1,\rub: 1)$. We get $\seval{\ubMarker{e}}{\rval_{\Enc(\rangeTup)}} = 1 \leq 0 = \bfalse$.
In the result of selection the annotation of tuples is determined based on their annotation in the input and whether they certainly or in the \abbrBG world fulfill the selection conditions (see \Cref{def:range-selection}).
\begin{align*}
  \rewrUAA{\selection_{\theta}(Q_1)} &\defas \projection_{\bar{A},\ubatt{\bar{A}}, \lbatt{\bar{A}}, e_{c},e_{sg},\rub}(\selection_{\ubMarker{\theta}}(\rewrUAA{Q_1}))\\
 e_{c} &\defas (\ifte{\lbMarker{\theta}}{1}{0})*\rlb \renameto \rlb \\
 e_{sg} &\defas (\ifte{\bgMarker{\theta}}{1}{0})*\rbg \renameto \rbg
\end{align*}

%%%%%%%%%%%%%%%%%%%%%%%%%%%%%%%%%%%%%%%%%%%%%%%%%%%%%%%%%%%%%%%%%%%%%%%%%%%%%%%%
\mypar{Projection}
For a generalized projection $\projection_U(\rangeRel)$ with $U \defas e_1 \renameto A_1, \ldots, e_k \renameto A_k$, we rewrite each projection expression $e_i$ into three expressions $\lbMarker{e_i}$, $\bgMarker{e_i}$, and $\ubMarker{e_i}$ as explained for selection above. Then let $\lbatt{U} = \lbMarker{e_1} \renameto \lbatt{A_1}, \ldots, \lbMarker{e_k} \renameto \lbatt{A_k}$ and let $\bgMarker{U}$ and $\ubMarker{U}$ be defined analog.

\[ \rewrUAA{\projection_{U}(Q_1)}=\projection_{\bgatt{U},\ubatt{U},\lbatt{U},\rlb,\rbg,\rub}(\rewrUAA{Q_1}) \]

%%%%%%%%%%%%%%%%%%%%%%%%%%%%%%%%%%%%%%%%%%%%%%%%%%%%%%%%%%%%%%%%%%%%%%%%%%%%%%%%
\mypar{Cross Product}
Recall that $\uaanMult$ is defined as pointwise multiplication. Thus, for crossproduct we have to multiply the bounds of row annotations of input tuples.

\begin{align*}
  \rewrUAA{Q_1 \times Q_2} & \defas \projection_{\bar{A},\bar{B},\lbatt{\bar{A}},\lbatt{\bar{B}}\ubatt{\bar{A}},\ubatt{\bar{B}},e_{c},e_{sg},e_{p}}(\query_{prod}) \\
 \query_{prod}             & \defas \rewrUAA{Q_1} \times \rewrUAA{Q_2}\\
 e_{c}                     & \defas Q_1.\rlb \cdot Q_2.\rlb \renameto \rlb                                                                                                           \\
 e_{sg}                    & \defas Q_1.\rbg \cdot Q_2.\rbg \renameto \rbg                                                                                                           \\
 e_{p}                     & \defas Q_1.\rub \cdot Q_2.\rub \renameto \rub
\end{align*}

%%%%%%%%%%%%%%%%%%%%%%%%%%%%%%%%%%%%%%%%%%%%%%%%%%%%%%%%%%%%%%%%%%%%%%%%%%%%%%%%
% \iftechreport{
%   \mypar{Join}
%   %
%   Join is simply the combination of crossproduct with selection.
%   %
% \begin{align*}
%  e_{c}                        & \defas (\ifte{\lbMarker{\theta}}{1}{0})*Q_1.\rlb*Q_2.\rlb \renameto \rlb                                                                                                       \\
%  e_{sg}                       & \defas (\ifte{\theta}{1}{0})*Q_1.\rbg*Q_2.\rbg \renameto \rbg                                                                                                                  \\
%  e_{p}                        & \defas Q_1.\rub*Q_2.\rub \renameto \rub                                                                                                                                        \\
%  \rewrUAA{Q_1 \join_{\theta} Q_2} & \defas \projection_{\bar{A},\bar{B},\ubatt{\bar{A}},\lbatt{\bar{A}},\ubatt{\bar{B}},\lbatt{\bar{B}},e_{c},e_{sg},e_{p}}(\rewrUAA{Q_1} \join_{\ubMarker{\theta}} \rewrUAA{Q_2}) \\
% \end{align*}
% }

%%%%%%%%%%%%%%%%%%%%%%%%%%%%%%%%%%%%%%%%%%%%%%%%%%%%%%%%%%%%%%%%%%%%%%%%%%%%%%%%
\mypar{Union}
A union is rewritten as the union of its rewritten inputs.
\begin{align*}
  \rewrUAA{\query_1 \union \query_2} & \defas
                                       \rewrUAA{\query_1} \union \rewrUAA{\query_2} \\
\end{align*}

%%%%%%%%%%%%%%%%%%%%%%%%%%%%%%%%%%%%%%%%%%%%%%%%%%%%%%%%%%%%%%%%%%%%%%%%%%%%%%%%
\mypar{Set Difference}
For set difference we need to develop a rewrite that implements the combiner operator $\combine$ which merges all tuples with the same values in the \abbrBGW. This rewrite is shown below. To calculate the lower bound of the range-bounded annotation for a tuple in the result of a set difference operator we have to determine for each tuple $\rangeTup$ from the left input the set of all tuples $\rangeTup'$ from the right input whose values overlap with $\rangeTup$, i.e., where $\rangeTup \matches \rangeTup'$.
These are tuples that may be equal to $\rangeTup$ in some possible world. To calculate the lower bound we have to assume that all these tuples are equal to $\rangeTup$ and appear with the maximum possible multiplicity. That is, we have to subtract from the lower annotation bound of $\rangeTup$ the sum of the upper annotation bounds of these tuples.
For that we join the inputs on a condition $\theta_{join}$ shown below that checks whether $\rangeTup \matches \rangeTup'$ holds by checking that $[ \lbMarker{\rangeTup.A}, \ubMarker{\rangeTup.A}]$ overlaps with $[ \lbMarker{\rangeTup'.B}, \ubMarker{\rangeTup'.B}]$ for each attribute $A$ of the left input and the corresponding attribute $B$ of the right input. The \abbrUAADB-annotation of a tuple is then computed by grouping on the \abbrBG values of the LHS, summing up the upper bounds of tuples from the RHS and then subtract them from the lower bound of the LHS tuple's annotation. Below $\query_{SumRight}$ implements this step. To calculate the upper bound of a tuple's annotation we only subtract the lower bound annotations of tuples from the RHS if the tuples are guaranteed to be equal to the LHS in all possible worlds. That is the case if both the LHS and RHS tuple's attribute values are all certain (the lower bound is equal to the upper bound) and the tuples are equal. This is checking using condition $\theta_c$ shown below. Using a conditional expression $e_{pv}$ we only sum up the lower bounds of the annotation of RHS tuples fulfilling $\theta_c$. Finally, to calculate the multiplicity of a tuple in the \abbrBGW, we sum of the \abbrBG-annotations of RHS which are equal to the LHS tuple wrt. the tuples' \abbrBG-values. Finally, since this can result in negative multiplicities.

\begin{align*}
  \rewrUAA{Q_1 \difference Q_2} & \defas \selection_{\rub>0}(\projection_{\bar{A},\lbatt{\bar{A}},\ubatt{\bar{A}},e_{c},e_{sg},e_{p}}(\query_{sumright}))         \\
  e_{c}                         & \defas \aggmax(\rlb - \lbatt{rrow}, 0) \renameto \rlb                                                                              \\
  e_{sg}                        & \defas \aggmax(\rbg-\bgatt{rrow},0)  \renameto \rbg                                                                                \\
  e_{p}                         & \defas \aggmax(\rub - \ubatt{rrow},0) \renameto \rub                                                                               \\
%%%%%%%%%%%%%%%%%%%%%%%%%%%%%%%%%%%%%%%%
  \query_{sumright}             & \defas \gamma_{\bar{A},\ubatt{\bar{A}}, \lbatt{\bar{A}},\rlb,\rbg, \rub, e_{sc}, e_{ssg}, e_{sp}}(\query_{preagg})                                                      \\
  e_{sc}                        & \defas \aggsum(\lbatt{rrow}) \renameto \lbatt{rrow}                                                                                     \\
  e_{ssg}                       & \defas \aggsum(\bgatt{rrow}) \renameto \bgatt{rrow}                                                                                      \\
  e_{sp}                        & \defas \aggsum(\ubatt{rrow}) \renameto \ubatt{rrow}                                                                                       \\
%%%%%%%%%%%%%%%%%%%%%%%%%%%%%%%%%%%%%%%%
  \query_{preagg} &\defas \projection_{\bar{A},\ubatt{\bar{A}}, \lbatt{\bar{A}}, Q_1.\rlb, Q_1.\rbg, Q_1.\rub, e_{cv}, e_{sg}, e_{pv}}(\query_{join})\\
 e_{cv}                          & \defas Q_2.\rub \renameto \lbatt{rrow}\\
  e_{sgv}                       & \defas \ifte{\theta_{sg}}{Q_2.\rbg}{0}  \renameto \bgatt{rrow}                                                                                      \\
  \theta_{sg}                   & \defas \bigwedge_{A \in \bar{A}, B \in \bar{B}} \bgatt{A} = \bgatt{B}                                                           \\
  e_{pv}                        & \defas \ifte{\theta_c}{Q_2.\rlb}{0} \renameto \ubatt{rrow}                                                                                            \\
  \theta_c                      & \defas \bigwedge_{A \in \bar{A}, B \in \bar{B}} \lbatt{A} = \ubatt{A} \wedge \ubatt{A} = \lbatt{B} \wedge \lbatt{B} = \ubatt{B} \\
%%%%%%%%%%%%%%%%%%%%%%%%%%%%%%%%%%%%%%%%
  \query_{join}                 & \defas \rewrUAA{\combine(Q_1)} \join_{\theta_{join}} \rewrUAA{Q_2}                                                              \\
  \theta_{join}                 & \defas \bigwedge_{i \in \{1, \ldots, n\}}  \ubatt{A_i} \geq \lbatt{B_i} \wedge \ubatt{B_i} \geq \lbatt{A_i}                     \\
\end{align*}

%%%%%%%%%%%%%%%%%%%%%%%%%%%%%%%%%%%%%%%%%%%%%%%%%%%%%%%%%%%%%%%%%%%%%%%%%%%%%%%%
\mypar{$\combine(\query)$}
The \abbrBG combiner merges all tuples with the same values in the \abbrBGW by summing up their annotations and by merging their range-annotated values. We can implement this in relational algebra using aggregation.
\begin{align*}
   \rewrUAA{\combine(\query)}
  & \defas \gamma_{\bar{A},U_c,U_p,e_c,e_{sg},e_{p}}(\rewrUAA{\query})       \\
U_c                             & \defas \min(\lbatt{A_1}) \to \lbatt{A_1}, \ldots, \min(\lbatt{A_n}) \to \lbatt{A_n} \\
  U_p                           & \defas \max(\ubatt{A_1}) \to \ubatt{A_1}, \ldots, \max(\ubatt{A_n}) \to \ubatt{A_n} \\
  e_{c}                         & \defas sum(\rlb) \renameto \rlb                                                    \\
 e_{sg}                         & \defas sum(\rbg) \renameto \rbg                                                    \\
  e_{p}                         & \defas sum(\rub) \renameto \rub
\end{align*}

%%%%%%%%%%%%%%%%%%%%%%%%%%%%%%%%%%%%%%%%%%%%%%%%%%%%%%%%%%%%%%%%%%%%%%%%%%%%%%%%
\mypar{Aggregation}
Our rewrite for aggregation support $\aggmax$, $\aggmin$, $\aggsum$, and $\aggcount$ directly. For $\aggavg$, we calculate $\aggsum$ and $\aggcount$ and then calculate $\aggavg(A) = \ifte{\aggcount(*) = 0}{0.0}{\frac{\aggsum(A)}{\aggcount(*)}}$ using projection (not shown here).
In the rewrite, we first determine output groups and ranged-bounded values for the group-by attributes of each of this output. This is achieved by grouping the input tuples based on the group-by \abbrBG values and calculating the minimum/maximum bounds of group-by values (query $\query_{gbounds}$). Each such output is then joined with the aggregation's input to match all inputs with an output that could contribute to the groups represented by this output. For that we have to check output's group-by bounds overlap with the input's group-by bound (query $\query_{join}$). Afterwards, we determine the bounds on the number of groups represented by each output and prepare expressions that calculate bounds for aggregation function results. These expression ($lba$, $sba$, and $uba$) are specific to the aggregation function $f$ and are explained below. Finally, we use aggregation to calculate aggregation function result bounds and row annotations. Recall that $\bar{A}$ denotes the attributes from relation $\rangeRel$.

\begin{align*}
  \rewrUAA{\aggregation{G}{f(A)}(R)} & \defas \aggregation{\bgatt{G},\ubatt{G},\lbatt{G}}{e_{aggbounds}}(Q_{proj})
\end{align*}\\[-12mm]
\begin{align*}
	e_{aggbounds}                            \defas
                                     & f(\bgatt{A}),f(\ubatt{A}), f(\lbatt{A})      \\
                                     & \aggmax(\rlb) \renameto \rlb,                                                                           \\
                                     & \aggmax(\rbg) \renameto \rbg,                                                                           \\
                                     & \aggsum(\rub) \renameto \rub                                                                            \\[2mm]
%%%%%%%%%%%%%%%%%%%%%%%%%%%%%%%%%%%%%%%%
 \query_{proj}                       & \defas \projection_{\bgatt{G},\ubatt{G},\lbatt{G},lba,sga,uba,e_{c},e_{sg},e_{p}}(\query_{join}) \\
  e_{c}                                & \defas  \ifte{\theta_c}{1}{0} \renameto \rlb\\
  \theta_c &\defas \bigwedge_{A_i \in G}                                        \ubatt{A_i}=\ubatt{B_i} \wedge = \lbatt{A_i}=\lbatt{B_i} \wedge \lbatt{A_i} = \ubatt{A_i}\\
  e_{sg}                             & \defas \ifte{\theta_{sg}}{1}{0}
                                       \renameto \rbg                                                         \\
  \theta_{sg} &\defas \bigwedge_{A_i \in G} \bgMarker{A_i} = \bgMarker{B_i} \\
  e_{p} & \defas \ifte{\theta_{sg}}{\rub}{0}
                                       \renameto \rub \\[2mm]
%%%%%%%%%%%%%%%%%%%%%%%%%%%%%%%%%%%%%%%%
  \query_{join}                      & \defas \query_{gbounds} \join_{\theta_{join}} \rename_{e_{rename}}(\rewrUAA{R})                       \\
  \theta_{join}                      &\defas \bigwedge_{A_i \in G}  \ubatt{A_i} \geq \lbatt{B_i} \wedge \ubatt{B_i} \geq \lbatt{A_i}                           \\
  e_{rename}                         & \defas \lbatt{A_1} \renameto \lbatt{B_1}, \ldots, \lbatt{A_n} \renameto \lbatt{B_n}, \ldots, \ubatt{A_n} \renameto \ubatt{B_n}                                                 \\[2mm]
%%%%%%%%%%%%%%%%%%%%%%%%%%%%%%%%%%%%%%%%
  \query_{gbounds}                     & \defas                                                 \aggregation{\bgatt{G}}{e_{gbounds}}(\rewrUAA{R})  \\
  e_{gbounds}                      & \defas e_{bound}^{A_1}, \ldots, e_{bound}^{A_k} \tag{for $G = (A_1, \ldots, A_k)$}\\
  e_{bound}^{A}                      & \defas \aggmin(\lbatt{A}) \renameto \lbatt{A}, \aggmax(\ubatt{A}) \renameto \ubatt{A}
\end{align*}

The expressions that calculate bounds for aggregation function results ($lba$, $sga$, $uba$) are shown below. These expressions make use of expressions $lba_f$, $sga_f$, and $uba_f$ that are specific to the aggregation function $f$. We use expression $e_{gc}$ shown below to determine whether a tuple certainly belongs to a particular group, i.e., its group-by values are certain and its lower bound multiplicity is larger than zero. If a tuples group membership is uncertain, then we need account for the case where the tuple does not contribute to the aggregation function result. For that we calculate the minimum/maximum of $\amzero$ and $lba_f$/$uba_f$.
\begin{align*}
  lba   &\defas \ifte{e_{gc}}{lba_f}{\min(\amzero, lba_{f})} \renameto \lbatt{A}\\
  sga   &\defas sga_f \renameto \bgatt{A}\\
  uba   &\defas \ifte{e_{gc}}{uba_f}{\max(\amzero, lba_{f})} \renameto \ubatt{A}\\
  e_{gc} &\defas \theta_c \wedge \rlb > 0
\end{align*}

For $\aggsum$, we need to treat positive and negative numbers differently by multiplying them either with $\lbatt{row}$ or $\ubagg{row}$ to return the smallest/greatest possible aggregation function result.
\begin{align*}
 lba_{\aggsum}                                 & \defas \ifte{\lbatt{A} < 0}{\lbatt{A}\cdot\rub}{\lbatt{A} \cdot \rlb}                                          \\
 sga_{\aggsum}                                 & \defas \ifte{\bigwedge_{A_i \in G} \bgatt{A_i} = \bgatt{B_i}}{\bgatt{A}\cdot\rbg}{0}                                                       \\
 uba_{\aggsum}                                 & \defas \ifte{\ubatt{A}<0}{\ubatt{A}\cdot\rlb}{\ubatt{A}\cdot\rub}
\end{align*}

The neutral element of $\mmin$ is $\infty$ which is larger than any other value. Recall that $m \smbN{\mmin} k$ is the identify on $m$ except when $k=0$ where it returns $\infty$. Thus, the lowest possible value can be achieved if $k \neq 0$. Since $\rlb \leq \rub$, $\ifte{\rub > 0}{\lbatt{A}}{\infty}$ is a valid lower bound. Analog, $\ifte{\lub > 0}{\ubatt{A}}{\infty}$ is an upper bound. \ifnottechreport{The definition for $\aggmax$ is analog, please see \cite{techreport}.}
\begin{align*}
 lba_{\aggmin}                                 & \defas \ifte{\rub > 0}{\lbatt{A}}{\infty}                                           \\
 sga_{\aggmin}                                 & \defas \ifte{\bigwedge_{A_i \in G} \bgatt{A_i} = \bgatt{B_i}}{A\cdot\rbg}{\infty}                                                        \\
 uba_{\aggmin}                                 & \defas \ifte{\rlb > 0}{\ubatt{A}}{\infty}
\end{align*}

\iftechreport{
The neutral element of $\mmax$ is $-\infty$ which is smaller than any other value. $m \smbN{\mmin} k$ is the identify on $m$ except when $k=0$ where it returns $\infty$. Thus, the lowest possible value can be achieved if $k = 0$. Thus, $\ifte{\lub = 0}{-\infty}{\lbatt{A}}$ is a valid lower bound. Analog, $\ifte{\lub > 0}{\ubatt{A}}{\infty}$ is an upper bound.
\begin{align*}
 lba_{\aggmax}                                 & \defas \ifte{\rlb > 0 }{\lbatt{A}}{-\infty}                                           \\
 sga_{\aggmax}                                 & \defas \ifte{\bigwedge_{A_i \in G} \bgatt{A_i} = \bgatt{B_i}}{\bgatt{A}\cdot\rbg}{-\infty}                                                        \\
 uba_{\aggmax}                                 & \defas \ifte{\rub > 0}{\ubatt{A}}{-\infty}                                             \\
\end{align*}
}

%%%%%%%%%%%%%%%%%%%%%%%%%%%%%%%%%%%%%%%%%%%%%%%%%%%%%%%%%%%%%%%%%%%%%%%%%%%%%%%%
\subsection{Correctness}
\label{sec:correctness}

To demonstrate that our encoding and rewrites correctly implement \abbrUAADB query semantics, we have to show that (i) the encoding is invertible, i.e., that there exists a mapping $\Dec$ such that $\Dec(\Enc(\rangeDB)) = \rangeDB$, and (ii) that the rewrite correctly simulates \abbrUAADB query semantics, i.e.,  $\rewrUAA{\query}(\Enc(\rangeDB)) = \Enc(\query(\rangeDB))$.

%%%%%%%%%%%%%%%%%%%%%%%%%%%%%%%%%%%%%%%%%%%%%%%%%%%%%%%%%%%%%%%%%%%%%%%%%%%%%%%%
\begin{Theorem}[Rewrite Correctness]\label{theo:rewrite-correctness}
  Let $\rangeDB$ be a $\uaaN$-database, $\query$ be a $\raAgg$ query, then
  \begin{align*}
    \Dec(\Enc(\rangeDB)) &= \rangeDB \tag{$\Enc$ \textbf{is invertible}}\\
  \qmerge(\Enc(\rangeDB)) &= \Enc(\query(\rangeDB)) \tag{$\rewrUAA{\cdot}$ \textbf{is correct}}
  \end{align*}
  % Furthermore, the same holds if $\rangeDB$ is a $\uaaK{\semK}$-database and $\query$ is an $\ra$ query.
\end{Theorem}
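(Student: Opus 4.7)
The plan is to establish the two claims separately. The invertibility claim $\Dec(\Enc(\rangeDB)) = \rangeDB$ follows directly from unpacking definitions: for any range-annotated tuple $\rangeTup$ with $\rangeRel(\rangeTup) \neq \uaaNzero$, $\Enc$ produces exactly one deterministic tuple $\tup = \tenc{\rangeTup, \rangeRel(\rangeTup)}$ satisfying $\tdec{\tup} = \rangeTup$, $\tdecr{\tup} = \rangeRel(\rangeTup)$, and $\Enc(\rangeRel)(\tup) = 1$. Plugging these into the definition of $\Dec$, the sum collapses to a single term $\tdecr{\tup} \multOf{\uaaN} \ut{1}{1}{1} = \rangeRel(\rangeTup)$. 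For tuples with annotation $\uaaNzero$, $\Enc$ produces no encoding tuple and $\Dec$ returns $\uaaNzero$, matching $\rangeRel$.

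The rewrite correctness claim $\qmerge(\Enc(\rangeDB)) = \Enc(\query(\rangeDB))$ I would prove by structural induction on $\query$. The key intermediate invariant I want to maintain is slightly weaker than the statement itself, namely that $\Dec(\rewrUAA{\query}(\Enc(\rangeDB))) = \query(\rangeDB)$; the outer $\qmerge$ aggregation then just consolidates value-equivalent tuples into the canonical single-tuple-per-$\rangeTup$ form demanded by $\Enc$, which is exactly the inverse operation implicit in $\Dec$. For the base case (table access), $\rewrUAA{\rangeRel} = \rel_{\Enc} = \Enc(\rangeRel)$ contains no value-equivalent duplicates, so $\qmerge$ acts as identity. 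For the inductive step, each operator is handled by substituting the definitions: for selection and projection, the expression rewrites $\lbMarker{e}, \bgMarker{e}, \ubMarker{e}$ mirror range-annotated expression semantics (Definition \ref{def:range-expr-eval}) applied to $\rval_{\Enc(\rangeTup)}$, so their results coincide with the intended $\uaaN$-relational values; for union and cross product, pointwise operations in $\uaaN$ correspond directly to the relational union and the triple of multiplications of the encoded row-annotation attributes.

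The harder cases are set difference and aggregation. For set difference, I would show that $\query_{join}$ joins each $\rangeTup$ from the (already $\combine$-merged) left input with exactly those right-hand tuples $\rangeTup'$ satisfying $\rangeTup \matches \rangeTup'$, and that conditions $\theta_{sg}$ and $\theta_c$ correctly identify the subsets of right tuples contributing to the $\bgMarker{}$ and $\ubMarker{}$ components per Definition \ref{def:set-diff-semantics}; the outer aggregation with truncating subtraction then realizes the monus computation. For $\combine(\query)$, the aggregation by $\bar{A}$ with $\min$/$\max$ over attribute bounds and $\aggsum$ over the row annotations directly implements Definition \ref{def:t-combiner}. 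For aggregation, I would check in sequence: (i) $\query_{gbounds}$ produces one tuple per \abbrBG group with the correct group-by bounds as in Definition \ref{def:range-bounded-groups}; (ii) the join $\query_{join}$ with predicate $\theta_{join}$ pairs each output group with precisely the inputs in $\tgrouping(\asgrp)$; (iii) the expressions $lba_f$, $sga_f$, $uba_f$ implement the range-annotated semimodule action $\amysmbNAU$ from Definition \ref{def:range-semimodules} for each supported $f$; and (iv) the outer predicates $\theta_c$, $\theta_{sg}$ together with the $\ifte{}{}{\amzero}$ guards correctly encode the $\uncertg{G}{\rangeRel}{\rangeTup}$ case-split from Definition \ref{def:agg-function-bounds}, while the final $\aggmax(\rlb), \aggmax(\rbg), \aggsum(\rub)$ reproduces the row-annotation formulas of Definition \ref{def:aggr-op-semantics-gb} (in particular, $\aggmax$ of a bag of zero/one indicators coincides with $\duprem_{\semN}$ applied to their sum).

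The main obstacle I anticipate is the aggregation case, specifically aligning the case analysis in $lba_f$/$uba_f$ with the bound-preservation proof of Theorem \ref{theo:semimodulish-bound-preserving}: the relational expressions compute the correct minimum/maximum over the four product combinations $\{\lbMarker{k},\ubMarker{k}\} \times \{\lbMarker{m},\ubMarker{m}\}$ only after the sign-based case split on $\lbMarker{A}$ and $\ubMarker{A}$ is unfolded, and this needs to be checked separately for $\aggsum$, $\aggmin$, and $\aggmax$. A secondary complication is that several intermediate algebraic steps produce value-equivalent tuples (e.g., projection, union, set difference when several left tuples share $\bar{A}$), and I must argue that the final outer $\qmerge$ aggregation correctly sums their $\rlb$, $\rbg$, $\rub$ components and preserves a unique representative per tuple, thereby producing exactly $\Enc(\query(\rangeDB))$ rather than some equivalent but non-canonical encoding.
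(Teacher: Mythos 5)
Your proposal is correct and follows essentially the same route as the paper's proof: invertibility by unpacking the definitions of $\Enc$/$\Dec$ on the one-to-one tuple correspondence, and rewrite correctness by structural induction maintaining the invariant $\Dec(\rewrUAA{\query}(\Enc(\rangeDB))) = \query(\rangeDB)$ modulo value-equivalent tuples, with $\qmerge$ consolidating them at the end. The obstacles you flag (aligning $lba_f$/$uba_f$ with the $\amysmbNAU$ semantics, and handling value-equivalent encodings of a single input tuple) are exactly the points the paper addresses, the latter via the distributivity lemma for $\amysmbNAU$ over annotation addition.
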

%%%%%%%%%%%%%%%%%%%%%%%%%%%%%%%%%%%%%%%%%%%%%%%%%%%%%%%%%%%%%%%%%%%%%%%%%%%%%%%%
\begin{proof}

  %%%%%%%%%%%%%%%%%%%%%%%%%%%%%%%%%%%%%%%%
  \proofpara{$\Enc$ is invertible}
  Observe that by construction there exists a 1-to-1 mapping between the tuples in $\rangeRel$ and $\Enc(\rangeRel)$. A tuple $\rangeTup$ and its annotation $\rangeRel(\rangeTup)$ can be trivially reconstructed from the corresponding tuple $\tup$ in $\Enc(\rangeRel)$ by setting $\rangeTup.A = \uv{\tup.\lbatt{A}}{\tup.\bgatt{A}}{\tup.\ubatt{A}}$ for each attribute $A$ of $\rangeRel$ and then setting $\rangeRel(\rangeTup) = \ut{\tup.\rlb}{\tup.\rbg}{\tup.\rub}$.

  %%%%%%%%%%%%%%%%%%%%%%%%%%%%%%%%%%%%%%%%
  \proofpara{$\rewrUAA{\cdot}$ is correct}
  We prove the claim by induction over the structure of a query. Since we have proven that $\Dec(\Enc(\rangeDB)) = \rangeDB$, we  prove the correctness of $\rewrUAA{\cdot}$ by showing that $\Dec(\qmerge(\Enc(\rangeDB))) = \query(\rangeDB)$. We will we make use of this fact in the following.

  %%%%%%%%%%%%%%%%%%%%%%%%%%%%%%%%%%%%%%%%
  \proofpara{\textbf{Base case:} Table access $Q \defas R$}:
  $\rewrUAA{R}$ is the identify on $\Enc(\rangeRel)$. Thus, $\rewrUAA{R}$ does not contain value-equivalent tuples and $\qmerge$ is the identity on $\rewrUAA{R}$ and the claim holds.

  %%%%%%%%%%%%%%%%%%%%%%%%%%%%%%%%%%%%%%%%
  \proofpara{\textbf{Induction}}
    Assume that the claim holds for queries $\query_1$ and $\query_2$ modulo merging of value-equivalent tuples. We have to show that the claim holds modulo merging of value-equivalent tuples for each algebra
    operator applied to $\query_1$ (or $\query_1$ and $\query_2$ for binary operators). From this follows then that the claim holds for $\qmerge$ which merges such tuples. Consider an input database $\rangeDB$. In the following let $\rangeRel_1 = \query_1(\rangeDB)$ and $\rangeRel_2 = \query_2(\rangeDB)$. Similarly, let $\rel_1 = \rewrUAA{\query_1}(\Enc(\rangeDB))$
and $\rel_2 = \rewrUAA{\query_2}(\Enc(\rangeDB))$.

   %%%%%%%%%%%%%%%%%%%%%%%%%%%%%%%%%%%%%%%%
    \proofpara{Projection $\query \defas \projection_{U}(\query_1)$}
    Recall that $\projection_{U}(\query_1)$ is rewritten into

    $$\projection_{U, \lbatt{U}, \ubatt{U}, \rlb, \rbg, \rub}(\rewrUAA{\query_1}$$.

    Let $U' = (U, \lbatt{U}, \ubatt{U})$ and $U_{all} = (U, \lbatt{U}, \ubatt{U}, \rlb, \rbg, \rub)$.
The annotation of a tuple $\rangeTup$ in the result of $\query$ is the sum of annotations of all input tuples $\rangeOf{u}$ projected onto $\rangeTup$:
    \[
      \projection_{U}(\query_1)(\rangeTup) = \sum_{\rangeOf{u}: \rangeOf{u}.U = \rangeTup} \query_1(\rangeOf{u})
    \]
    Let $\{\rangeOf{u}_1, \ldots, \rangeOf{u}_m\}$ be the sets of tuples for
    which $\rangeOf{u_i}.U = \rangeTup$. Since,
    $\rangeOf{u_i}.U = \rangeOf{u_j}.A$, for any $i,j \in \{1, \ldots, m\}$ it
    follows that
    $\tenc{\rangeOf{u_i},\rangeRel_1(\rangeOf{u_i})}.U' = \tenc{\rangeOf{u_j},
      \rangeRel_1(\rangeOf{u_j})}.U'$ and in turn that
    $\tdec{\tenc{\rangeOf{u_i}}.U'} = \rangeTup$.  Note that,
    $\tenc{\rangeOf{u_i},\rangeRel_1(\rangeOf{u_i})}.U_{all} \neq
    \tenc{\rangeOf{u_j},\rangeRel_1(\rangeOf{u_j})}.U_{all}$ if
    $\rangeRel_1(\rangeOf{u_i}) \neq \rangeRel_1(\rangeOf{u_j})$. Based on the induction hypothesis, each tuple $\rangeOf{u_i}$ is encoded in $\rel_1$ as one or more value-equivalent tuples.
    Let
    $\tup_1, \ldots, \tup_l$ be the distinct tuples in the set
 $\{ \tup \mid i \in \exists i \in \{ 1, \ldots, m \}:  \rangeRel_1(\rangeOf{u_i}).U_{all}\}$.
  We use $\rangeOf{u_{i_1}}, \ldots, \rangeOf{u_{m_i}}$ to
  denote the tuples for which  $\rangeRel_1(\rangeOf{u_{i_j}}).U_{all} = \tup_i$.
  Furthermore, for any such tuple $\tup_i$, let $u_{i_{j_1}}$, \ldots, $u_{i_{o_j}}$ be the value-equivalent tuples that are projected onto $\tup_i$.
  Then,

  \begin{align*}
      & \Dec(\rewrUAA{\query}(\rangeDB))(\rangeTup)                                                                                                                              \\
    = &\sum_{i \in \{1, \ldots, l\}} \tdecr{\tup_i} \cdot \rewrUAA{\query}(\rangeDB)(\tup_i) \\
  \end{align*}
The definition of $\Enc$ ensures that every tuple $\tenc{\rangeOf{u_{i_j}}}$ is annotated  with $1$. Given the definition of projection, $\tup_i$ is annotated with the sum of annotations of all tuples $\tenc{\rangeOf{u_i}}$. That is, $\tup_i$ is annotated with $\sum_{j = 1}^{m_j} 1 = m_j$.
  \begin{align*}
    = &\sum_{i \in \{1, \ldots, l\}} \tdecr{\tup_i} \cdot u_{m_i}                       \\
    = & \sum_{i \in \{1, \ldots, l\}} \sum_{j \in \{1, \ldots, m_i\}} \tdecr{\tup_i}
  \end{align*}

$\rewrUAA{\projection_U(\query_1)}$ does retain the row annotation attributes of input tuples unmodified. Thus, we have:

  \begin{align*}
    = &\sum_{i \in \{1, \ldots, l\}} \sum_{j \in \{1, \ldots, m_i\}} \tdecr{\tenc{\rangeOf{u_i}}}
  \end{align*}

Since we assume that claim holds for $\query_1$, we know that

  \begin{align*}
    = & \sum_{i = \{ 1, \ldots, m \}} \tdecr{\tenc{\rangeOf{u_i}}}                                                                                                               \\
    = & \sum_{i = \{ 1, \ldots, m \}} \rangeRel_1(\tenc{\rangeOf{u_i}}) = \query(\rangeDB)(\rangeTup)
  \end{align*}

  We have proven that for any tuple $\rangeTup$, we have

  \[
    \Dec(\rewrUAA{\query}(\rangeDB))(\rangeTup) = \query(\rangeDB)(\rangeTup)
    \]

  which implies that $\Dec(\rewrUAA{\query}(\rangeDB)) = \query(\rangeDB)$.

    % Query $\rewrUAA{\query}$ returns an encoding of $\rangeTup$ annotated with $\query_1(\rangeOf{u})$ for each encoded of an input tuple $\rangeOf{u}$. $\qmerge$ sums up this annotations and, thus, will produce as output tuples the encoding of $\rangeTup$ annotated with $\sum_{\rangeOf{u}: \rangeOf{u}.A = \rangeTup} \query_1(\rangeOf{u})$.

   %%%%%%%%%%%%%%%%%%%%%%%%%%%%%%%%%%%%%%%%
   \proofpara{Selection $\query \defas \selection_{\theta}(\query_1)$}
   Consider a tuple $\rangeTup$ such that
   $\ubMarker{\query(\rangeDB)(\rangeTup)} > 0$, i.e., tuples that may exists in the
   result of the selection. Selection over $\uaaN$-relations calculates the
   annotation of a result tuple $\rangeTup$ by multiplying its annotation in the
   input with the bounds of the result of the selection condition evaluated over
   $\rangeTup$ mapped from a range annotated Boolean value to a $\uaaN$-value
   where $\btrue$ is mapped to $1$ and $\bfalse$ to $0$. $\rewrUAA{\query}$ uses
   a triple of deterministic expressions to compute the elements of
   $\theta(\rangeTup)$ individually over $\tup = \tenc{\rangeTup, \rangeRel_1(\rangeTup)}$.
   Thus, $\tdecr{\tup} = \query(\rangeDB)(\rangeTup)$ and the claim holds.

   %%%%%%%%%%%%%%%%%%%%%%%%%%%%%%%%%%%%%%%%
   \proofpara{Union $\query \defas \query_1 \union \query_2$}
   Consider a tuple $\rangeTup$ such that either $\rangeRel_1(\rangeTup) > 0$ or $\rangeRel_2(\rangeTup) > 0$. Let $T$ be the set of tuples that are value-equivalent to $\rangeTup$. Based on the induction hypothesis we know that $\rangeRel_1(\rangeTup) = \sum_{\tup \in T} \tdecr{\tup} \cdot  \ut{\rel_1(\tup)}{\rel_1(\tup)}{\rel_1(\tup)}$ and $\rangeRel_2(\rangeTup) = \sum_{\tup \in T} \tdecr{\tup} \cdot  \ut{\rel_2(\tup)}{\rel_2(\tup)}{\rel_2(\tup)}$.
Recall that $\rewrUAA{\query_1 \union \query_2} = \rewrUAA{\query_1} \union \rewrUAA{\query_2}$.
Based on the definition of union (semiring addition), we know that for any $\tup \in T$, it holds that $(\rel_1 \union \rel_2)(\tup) = \rel_1(\tup) + \rel_2(\tup)$. Thus,

% \begin{align*}
%   \Dec(\rel)(\rangeTup) = \sum_{\tup \in T}
% \end{align*}

%    and let $\tup_1 = \tenc{\rangeTup, \rangeRel_1(\rangeTup)}$ and $\tup_2 = \tenc{\rangeTup, \rangeRel_2(\rangeTup)}$. For any tuple which does not fulfill this condition, by definition $\Enc(\rangeRel_1)(\tup_1) = \Enc(\rangeRel_2(\tup_2) = 0$. We have $\rewrUAA{\query_1 \union \query_2} = \rewrUAA{\query_1} \union \rewrUAA{\query_2}$. Let $\rel = \qmerge(\Enc(\rangeDB))$. Based on the induction hypothesis we have

   \begin{align*}
     \Dec(\rel)(\rangeTup) & = \sum_{\tup: \Dec(\tup) = \rangeTup} \left( \tdecr{\tup} \cdot  \ut{\rel_1(\tup)}{\rel_1(\tup)}{\rel_1(\tup)} \right)  \\
                           & \hspace{1.3cm} + \left(\tdecr{\tup} \cdot  \ut{\rel_2(\tup)}{\rel_2(\tup)}{\rel_2(\tup)})\right)                                       \\[3mm]
     %%%%%%%%%%%%%%%%%%%%
                           & =  \sum_{\tup: \Dec(\tup) = \rangeTup}  \tdecr{\tup} \cdot  \ut{\rel_1(\tup)}{\rel_1(\tup)}{\rel_1(\tup)}  \\
                           & \mathtab +  \sum_{\tup: \Dec(\tup) = \rangeTup}  \tdecr{\tup} \cdot  \ut{\rel_2(\tup)}{\rel_2(\tup)}{\rel_2(\tup)}  \\[3mm]
                           &= \rangeRel_1(\rangeTup) + \rangeRel_2(\rangeTup)                                                                        \\
                          &= \query(\rangeDB)(\rangeTup)
   \end{align*}

% By definition of $\Dec$ and $\tdecr{}$, the above equation can be rewritten as

%      \begin{align*}
%        = & \tdecr{\tup_1} \cdot \Enc(\rangeRel_1)(\tdecr{\tup_1})\\
%            &\,\,\,\,\,\,+ \tdecr{\tup_2} \cdot \Enc(\rangeRel_2)(\tdecr{\tup_2})\\
%      = & \rangeRel_1(\rangeTup) + \rangeRel_2(\rangeTup)\\
%      = & \query(\rangeDB)(\rangeTup)
%    \end{align*}

   %%%%%%%%%%%%%%%%%%%%%%%%%%%%%%%%%%%%%%%%
   \proofpara{Cross Product $\query \defas \query_1 \times \query_2$}
Consider a tuple $\rangeTup$ that is the result of joining tuples $\rangeTup_1$ from the result of $\query_1$ and $\rangeTup_2$ from the result of $\query_2$. The annotation of a result tuple $\rangeTup$ of $\query$ in $\uaaN$ relations are computed by multiplying the annotations of the input tuples $\rangeTup_1$ and $\rangeTup_2$ which are joined to form $\rangeTup$. Based on the induction hypothesis we know that $\rangeRel_1(\rangeTup) = \sum_{\tup \in T} \tdec{\tup} \cdot  \ut{\rel_1(\tup)}{\rel_1(\tup)}{\rel_1(\tup)}$ and $\rangeRel_2(\rangeTup) = \sum_{\tup \in T} \tdec{\tup} \cdot  \ut{\rel_2(\tup)}{\rel_2(\tup)}{\rel_2(\tup)}$. Recall that $\rewrUAA{\query_1 \crossprod \query_2} = \rewrUAA{\query_1} \crossprod \rewrUAA{\query_2}$. Using the fact that multiplication distributes over addition and that semiring operations are commutative and associativity we get:

   \begin{align*}
     \Dec(\rel)(\rangeTup) & = \hspace{-1cm}\sum_{\tup_1, \tup_2: \tdec{\tup_1} = \rangeTup_1 \wedge \tdec{\tup_2} = \rangeTup_2} ( (\tdecr{\tup_1} \cdot  \ut{\rel_1(\tup_1)}{\rel_1(\tup_1)}{\rel_1(\tup_1)}) \\
                           & \hspace{2cm}\cdot (\tdecr{\tup_2} \cdot  \ut{\rel_2(\tup_2)}{\rel_2(\tup_2)}{\rel_2(\tup_2)}))                                                                                                                                                                                                       \\[3mm]
   \end{align*}
In the following let $n_i$ denote $\tdecr{\tup_i} \cdot  \ut{\rel_i(\tup_i)}{\rel_i(\tup_i)}{\rel_i(\tup_i)}$ for $i \in \{1,2\}$.
   \begin{align*}
                           & = \sum_{\tup_1, \tup_2: \tdec{\tup_1} = \rangeTup_1 \wedge \tdec{\tup_2} = \rangeTup_2} \left(n_1 \cdot n_2\right)                                                                      \\
                           & = \left(\sum_{\tup_1: \tdec{\tup_1} = \rangeTup_1} n_1\right) \cdot \left(\sum_{\tup_2: \tdec{\tup_2} = \rangeTup_2} n_2\right)                                                         \\
                           & = \rangeRel_1(\rangeTup_1) + \rangeRel_2(\rangeTup_2)                                                                                                                                   \\
                           & = \query(\rangeDB)(\rangeTup)
   \end{align*}

   %%%%%%%%%%%%%%%%%%%%%%%%%%%%%%%%%%%%%%%%
   \proofpara{\abbrBG Combiner $\query \defas \combine(\query_1)$}
Recall that $\combine$ merges the attribute bounds of tuples that agree on their \abbrBG values and sums their annotations. $\rewrUAA{\combine(\query_1)}$ groups input tuples on their \abbrBG values. Each group contains all tuples that agree with each other on \abbrBG attribute values. Then the minimum (maximum) over attributes storing attribute bounds is computed to calculate the value of attributes storing lower (upper) bounds for attributes. The values of attributes storing tuple annotations are computed by summing up the values of these attributes for each group.

   %%%%%%%%%%%%%%%%%%%%%%%%%%%%%%%%%%%%%%%%
   \proofpara{Except $\query \defas \query_1 - \query_2$}
   Recall the definition of set difference over $\uaaK{\semK}$-relations:
   \begin{align*}
          \lbMarker{(\rangeRel_1 - \rangeRel_2)(\rangeTup)}  & \defas \lbMarker{\combine(\rangeRel_1)(\rangeTup)} \monK \sum_{\rangeTup \matches \rangeTup'} \ubMarker{\rangeRel_2(\rangeTup')} \\
     \bgMarker{(\rangeRel_1 - \rangeRel_2)(\rangeTup)} & \defas \bgMarker{\combine(\rangeRel_1)(\rangeTup)} \monK \sum_{\bgOf{\rangeTup} = \bgOf{\rangeTup'}} \bgMarker{\rangeRel_2(\rangeTup')} \\
     \ubMarker{(\rangeRel_1 - \rangeRel_2)(\rangeTup)}      & \defas \ubMarker{\combine(\rangeRel_1)(\rangeTup)} \monK \sum_{\rangeTup \equiv \rangeTup'} \lbMarker{\rangeRel_2(\rangeTup')}
   \end{align*}
   Note that $\rangeTup \matches \rangeTup'$ if the bounds of all attributes of  $\rangeTup$ and $\rangeTup'$ overlap, i.e., the tuples may represent the same tuple in some world. Furthermore, $\rangeTup \equiv \rangeTup'$ if these tuples are equal and are certain ($\lbMarker{\rangeTup.A} = \ubMarker{\rangeTup.A}$ for all attributes $A$. Obviously, $\rangeTup \equiv \rangeTup' \Rightarrow \rangeTup \matches \rangeTup'$. Since the rewrite for $\combine$ was proven to be correct above, we only need to prove that tuple annotations are calculated according to the definition repeated above. The first step of $\rewrUAA{\query}$ joins the results of $\rewrUAA{\combine(\query_1)}$ with $\rewrUAA{\query_2}$ based on overlap of their attribute bounds. Note that in the result of $\rewrUAA{\combine(\query_1)}$ each tuple $\rangeTup$ from $\combine(\query_1)$ is encoded as a single tuple $\tup$. However, in $\rel_2$ (the result of $\rewrUAA{\query_2}$), each tuple $\rangeTup_2$ from $\rangeRel_2$ may be encoded as multiple value equivalent tuples. The annotations of these tuples multiplied with the their row annotation attributes sum up to the annotation $\rangeRel_2(\rangeTup_2)$:
   \[
\rangeRel_2(\rangeTup) = \sum_{\tup_2: \tdec{\tup_2} = \rangeTup_2}    \tdecr{\tup_2} \cdot  \ut{\rel_2(\tup_2)}{\rel_2(\tup_2)}{\rel_2(\tup_2)}
\]

For each such tuple $\rangeTup_i$ let $\tup_{i,1}, \ldots, \tup_{i,n_i}$ be the set of these value-equivalent tuples. Thus, the join will pair $\tup$ with all such tuples for each tuple $\rangeTup_2$ for which $\rangeTup \matches \rangeTup'$.
After the join, the row annotation attributes of each RHS tuple $\tup'$ paired with a tuple $\tup$ is modified as follows: the lower bound is replaced with the upper bound (expression $e_{cv}$); the \abbrBG row annotation is retained unless $\tup$ and $\tup'$ do not agree on their \abbrBG attribute values (in this case the lower  bound is set to $0$); ant the upper bound is set to the lower bound if $\tup$ and $\tup'$ are equal on all attributes and are certain (otherwise the upper bound is set to $0$).
Afterwards, tuples are grouped based on their LHS values and the RHS row annotations are summed up. As shown above in the proof for cross product, the fact the one tuple $\rangeTup$ is encoded as multiple value-equivalent tuples is unproblematic for joins.
The net effect is that for each LHS tuple $\tup$ corresponding to a tuple $\rangeTup$ there exists one tuple $\tup_{ext}$ in the result of the aggregation that (i) agrees with $\tup$ on all attribute values and for which (ii) attribute $\lbatt{rrow}$ stores $\sum_{\rangeTup': \rangeTup \matches \rangeTup'} \ubMarker{\rangeRel_2(\rangeTup')}$, attribute $\bgatt{rrow}$ stores $\sum_{\rangeTup': \bgMarker{\rangeTup} = \bgMarker{\rangeTup'}} \bgMarker{\rangeRel_2(\rangeTup')}$, and finally attribute $\ubatt{rrow}$ stores $\sum_{\rangeTup \equiv \rangeTup'} \lbMarker{\rangeRel_2(\rangeTup')}$. In a last, step the projection is used to calculate the annotation of $\tup$ in the result corresponding to           and selection is applied to remove tuples which certainly do not exist (their upper bound row annotation is $0$).

   %%%%%%%%%%%%%%%%%%%%%%%%%%%%%%%%%%%%%%%%
   \proofpara{Aggregation $\query \defas \aggregation{G}{f(A)}(\query_1)$}
   To demonstrate the the rewrite rule for aggregation is correct, we need to show that (i) tuples are grouped according to the default grouping strategy (\Cref{def:default-grouping-strategy}; (ii) that group-by bounds for each output tuple are calculated as defined in \Cref{def:range-bounded-groups}; (iii) that aggregation function result bounds are computed following \Cref{def:agg-function-bounds}); and (iv) that the multiplicity bounds for each output are correct (~\Cref{def:aggr-op-semantics-wo-gb} and (~\Cref{def:aggr-op-semantics-gb}).

   Consider $\query_{gbounds}$ that is part of the rewrite for aggregation. This query groups the input on their \abbrBG group-by values and then calculates the group-by bounds for each group. Each group created in this way corresponds to one output group produced by the default grouping strategy (recall that this strategy defines one output group \abbrBG group-by value that exists in the input. Thus, (i) holds.

   Note that according to \Cref{def:default-grouping-strategy}, the group-by attribute bounds for an output are determined as the minimum (maximum) value of the lower (upper) bound on a group-by attribute across all tuples that have the same \abbrBG group-by values as the output. This set of tuples corresponds exactly to one group in $\query_{gbounds}$. Since, $\query_{bounds}$ computes group-by attribute bounds as the minimum (maximum) of the bounds of the input tuples belonging to a group, (ii) also holds.

   The next step in the rewritten query is $\query_{join}$ which pairs every tuple from $\query_{gbounds}$ with all input tuples that overlap in their group-by bounds with the output tuple's group-by bounds. Note that this is precisely the set $\tgrouping(\asgrp)$ for output group $\asgrp$ as iterated over in \Cref{def:agg-function-bounds} to calculate aggregation function result bounds. Recall that to calculate the lower bound of for an aggregation function result $\lbMarker{\rangeTup_i.f(A)}$ by summing up (in the aggregation monoid $\monoid$) for each tuple in $\tgrouping(\asgrp)$ either the minimum of $\amzero$ and $\lbMarker{(\rangeRel(\rangeTup) \smbpair_{\uaaN,\monoid} \rangeTup.A)}$ if $\uncertg{\gbAttrs}{\rangeRel}{\rangeTup}$ which is the case when either some group-by attributes of $\rangeTup$ are uncertain or if the tuple may not exit ($\lbMarker{\rangeRel(\rangeTup)} = 0$). Otherwise, $\lbMarker{(\rangeRel(\rangeTup) \smbpair_{\uaaN,\monoid} \rangeTup.A)}$ is used instead. Expressions $lba$, $sga$, and $uba$ used in the rewritten query implement this logic. Furthermore, it is trivial to see that expressions $lba_f$, $sga_f$, and $uba_f$ implement the semantics of $\smbpair_{\uaaN,\monoid}$. Thus, the computation of the lower bound in the rewritten query correctly reflects the computation in $\uaaN$. Note that each input tuple $\rangeTup$ of the aggregation may be encoded as multiple value-equivalent tuples in the encoding. However, this is not a problem, according to \ref{lem:K-addition-distributes-over-smb} if $\vec{k} = \vec{k_1} \uaanAdd \vec{k_2}$, then $\vec{k} \amysmbNAU \vec{m} = \vec{k_1} \amysmbNAU \vec{m} \madd{\rangeMof{\monoid}} \vec{k_2} \amysmbNAU \vec{m}$.
   % follows immediatly from the fact that $\smbpair_{\uaaN,\monoid}$ is bound preserving (\Cref{theo:semimodulish-bound-preserving}). Consider $m \in \monoid$ and $k, k_1, k_2$ from $\semN$ such that $k = k_1 + k_2$. Then, based on the semi-module law (\Cref{fig:semimodule-laws}, \Cref{eq:semimodule-law-distributes-over-addition}).
   % \begin{align*}
   %   k \smbpair m = k_1 \smbpair m \amadd k_2 \smbpair m
   % \end{align*}
   % Now consider any $\vec{k}$ and $\vec{m}$ that bound $k$ and $m$ as well as $\vec{k_1}$ and $\vec{k_2}$ that bound $k_1$ and $k_2$, respectively. Based on the fact that both $\rangeMof{\monoid}$ and $\smbpair$ are bound preserving, we know that $\vec{k} \smbpair \vec{m}$ bounds $k \smbpair m$ and $\vec{k_1} \smbpair \vec{m} \madd{\rangeMof{\monoid}} \vec{k_2} \smbpair \vec{m}$ bounds $\vec{k_1} \smbpair \vec{m} \amadd \vec{k_1} \smbpair \vec{m}$. Since $k \smbpair m = k_1 \smbpair m \amadd k_2 \smbpair m$ if follows that $\vec{k_1} \smbpair \vec{m} \madd{\rangeMof{\monoid}} \vec{k_2} \smbpair \vec{m}$ also bounds $k \smbpair_{\semN,\monoid} m$.
   Thus, (iii) holds for lower bounds. The prove for upper bounds is symmetric.
   Finally, for the multiplicity bounds of tuples all input tuples that agree with an output on their \abbrBG group-by values are considered by the default grouping strategy. For the upper bound the bound bound on the multiplicity of these inputs is summed up. For the lower bound, the result of summing up the lower bound  multiplicities for all tuples with certain group-by values (only these tuples are guaranteed to below to a group) is passed to $\duprem_{\semN}$ which returns $1$ if the sum is non-zero and $0$ otherwise.  The \abbrBG and upper bound multiplicities computed in the same way except that all tuples are considered (and $\duprem_{\semN}$ is not used for the upper bound). In the rewritten query this is achieved by conditionally replacing the multiplicity bounds of tuples that are not part of the sum to $0$ using a condition $\theta_c$ for the lower bound and $\theta_\abbrBG$ for the upper bound and \abbrBG multiplicities. $\duprem_{\semN}$ is implemented by replacing non-zero multiplicities with $1$ and using aggregation function $\aggmax$ instead of $\aggsum$ to combine the multiplicities of input tuples. Since the same multiplicity bounds are calculated by the rewritten query as for the query under $\uaaN$ semantics, (iv) holds. From (i) to (iv) follows that the aggregation rewrite is correct.
\end{proof}
%%%%%%%%%%%%%%%%%%%%%%%%%%%%%%%%%%%%%%%%%%%%%%%%%%%%%%%%%%%%%%%%%%%%%%%%%%%%%%%%

%%%%%%%%%%%%%%%%%%%%%%%%%%%%%%%%%%%%%%%%%%%%%%%%%%%%%%%%%%%%%%%%%%%%%%%%%%%%%%%%
\subsection{Optimizations for Joins}

One potential performance bottleneck of query evaluation over \abbrUAADBs is that joins may degenerate into cross products if the bounds of join attribute values are loose.
As shown in the example below,
in the worst case, each tuple from the LHS input may join with every tuple from the RHS leading to a join result whose size is quadratic in the input size. Even if most join attribute values are certain, the DBMS is likely to chose a nested loop join since we join on inequalities to test for overlap of join attribute bounds leading to $\bigO(n^2)$ runtime for the join.

%%%%%%%%%%%%%%%%%%%%%%%%%%%%%%%%%%%%%%%%
\begin{figure*}[t]
	\centering
    \begin{minipage}{0.48\linewidth}
	\begin{subtable}{0.4\linewidth}
	\centering
	\begin{tabular}{ cc}
		\textbf{A}  & \underline{\semqN} \\
		\cline{1-1}
		$\uv{1}{1}{2}$ & \ut{2}{2}{3}\\
%				\cline{1-3}
		$\uv{1}{2}{2}$ & \ut{1}{1}{2}\\
%				\cline{1-1}
	\end{tabular}
	\caption{\abbrUAADB relation $\rangeOf{R}$}
	\label{table:join-input-r}
	\end{subtable}
	\begin{subtable}{0.4\linewidth}
	\centering
	\begin{tabular}{ cc}
				% \cline{1-1}
		\textbf{C}  & \underline{\semqN} \\
		\cline{1-1}
		$\uv{1}{3}{3}$ & \ut{1}{1}{1}\\
%				\cline{1-3}
		$\uv{1}{2}{2}$ & \ut{1}{2}{2}\\
				% \cline{1-1}
	\end{tabular}
	\caption{\abbrUAADB relation $\rangeOf{S}$}
	\label{table:join-input-s}
	\end{subtable}

	\begin{subtable}{\linewidth}
	\centering
	\begin{tabular}{ c|cc}
				% \cline{1-2}
		\textbf{A}  & \textbf{C}  & \underline{\semqN} \\
		\cline{1-2}
		$\uv{1}{2}{2}$ & $\uv{1}{2}{2}$ & \ut{1}{2}{4}\\
				% \cline{1-2}
	\end{tabular}
	\caption{\abbrBGW result of $R \join_{A=C} S$}
	\label{table:join-example-bg-output}
	\end{subtable}

	\begin{subtable}{\linewidth}
	\centering
	\begin{tabular}{ c|cc}
				% \cline{1-2}
		\textbf{A}  & \textbf{C}  & \underline{\semqN} \\
		\cline{1-2}
		$\uv{1}{1}{2}$ & $\uv{1}{3}{3}$ & \ut{0}{0}{3}\\
		$\uv{1}{1}{2}$ & $\uv{1}{2}{2}$ & \ut{0}{0}{6}\\
		$\uv{1}{2}{2}$ & $\uv{1}{3}{3}$ & \ut{0}{0}{2}\\
		$\uv{1}{2}{2}$ & $\uv{1}{2}{2}$ & \ut{1}{2}{4}\\
				% \cline{1-2}
	\end{tabular}
	\caption{\abbrUAADB result of $R \join_{A=C} S$}
	\label{table:join-example-UAA-output}
	\end{subtable}

	\caption{Join rewriting without optimization}
	\label{table:join_no_op}
  \end{minipage}
%%%%%%%%%%%%%%%%%%%%%%%%%%%%%%%%%%%%%%%%%%%%%%%%%%%%%%%%%%%%%%%%%%%%%%%%%%%%%%%%
  \begin{minipage}{0.48\linewidth}
	\centering

	\begin{subtable}{0.4\linewidth}
	\centering
	\begin{tabular}{ cc}
%				\cline{1-1}
		\textbf{A}  & \underline{\semqN} \\
		\cline{1-1}
		$\uv{1}{1}{1}$ & \ut{0}{2}{2}\\
%				\cline{1-3}
		$\uv{2}{2}{2}$ & \ut{0}{1}{1}\\
%				\cline{1-1}
	\end{tabular}
	\caption{$\splitbg(R)$}
	\label{table:join-input-r-bg}
	\end{subtable}
	\begin{subtable}{0.4\linewidth}
	\centering
	\begin{tabular}{ cc}
%				\cline{1-1}
		\textbf{C}  & \underline{\semqN} \\
		\cline{1-1}
		$\uv{3}{3}{3}$ & \ut{0}{1}{1}\\
%				\cline{1-3}
		$\uv{2}{2}{2}$ & \ut{0}{2}{2}\\
%				\cline{1-1}
	\end{tabular}
	\caption{$\splitbg(S)$}
	\label{table:join-input-s-bg}
	\end{subtable}

	\begin{subtable}{0.4\linewidth}
	\centering
	\begin{tabular}{ cc}
%				\cline{1-1}
		\textbf{A}  & \underline{\semqN} \\
		\cline{1-1}
		$\uv{1}{1}{2}$ & \ut{0}{0}{3}\\
%				\cline{1-3}
		$\uv{1}{2}{2}$ & \ut{0}{0}{2}\\
%				\cline{1-1}
	\end{tabular}
	\caption{$\splitpos(R)$}
	\label{table:join-input-r-pos}
	\end{subtable}
	\begin{subtable}{0.4\linewidth}
	\centering
	\begin{tabular}{ cc}
%				\cline{1-1}
		\textbf{C}  & \underline{\semqN} \\
		\cline{1-1}
		$\uv{1}{3}{3}$ & \ut{0}{0}{1}\\
%				\cline{1-3}
		$\uv{1}{2}{2}$ & \ut{0}{0}{2}\\
%				\cline{1-1}
	\end{tabular}
	\caption{$\splitpos(S)$}
	\label{table:join-input-s-pos}
	\end{subtable}

	\begin{subtable}{0.4\linewidth}
	\centering
	\begin{tabular}{ cc}
				% \cline{1-1}
		\textbf{A}  & \underline{\semqN} \\
		\cline{1-1}
		$\uv{1}{1}{2}$ & \ut{0}{0}{5}\\
%				\cline{1-3}
				% \cline{1-1}
	\end{tabular}
	\caption{$\compressPos_{A,1}(\splitpos(R))$}
	\label{table:join-input-r-pos-cmp}
	\end{subtable}
	\begin{subtable}{0.4\linewidth}
	\centering
	\begin{tabular}{ cc}
				% \cline{1-1}
		\textbf{C}  & \underline{\semqN} \\
		\cline{1-1}
		$\uv{1}{2}{3}$ & \ut{0}{0}{3}\\
%				\cline{1-3}
				% \cline{1-1}
	\end{tabular}
	\caption{$\compressPos_{C,1}(\splitpos(S))$}
	\label{table:join-input-s-pos-cmp}
	\end{subtable}

	\begin{subtable}{\linewidth}
	\centering
	\begin{tabular}{ c|cc}
				% \cline{1-2}
		\textbf{A}  & \textbf{C}  & \underline{\semqN} \\
		\cline{1-2}
		$\uv{2}{2}{2}$ & $\uv{2}{2}{2}$ & \ut{0}{2}{2}\\
		$\uv{1}{1}{2}$ & $\uv{1}{2}{3}$ & \ut{0}{0}{15}\\
				% \cline{1-2}
	\end{tabular}
	\caption{$\rewrOpt{R \join_{A=C} S}$}
	\label{table:join-example-compressed}
	\end{subtable}

	\caption{Join rewriting with optimization}
	\label{table:join_with_op}
  \end{minipage}
\end{figure*}

%%%%%%%%%%%%%%%%%%%%%%%%%%%%%%%%%%%%%%%%
\begin{Example}
  Figure~\ref{table:join_no_op} shows an example of joining two \abbrUAADB relations $R \join_{A=C} S$. The \abbrBGW result  consists of a single tuple. However, the \abbrUAADB result is a cross product of the two input tables since join attribute bounds of all tuples from $\rangeRel$ overlap with the join attribute bounds of all tuples from $\rangeOf{S}$.
  % if we want to enumerate all possible results, we need to consider all possible combinations. Four tuples are in the result which makes the join become a cross product.
\end{Example}
%%%%%%%%%%%%%%%%%%%%%%%%%%%%%%%%%%%%%%%%

In order to reduce the running time of joins, we introduce an optimized version of the rewrite rule for join. This optimization trades accuracy for performance by compressing the overestimation of possible answers encoded by the input relations of the join.
We introduce an operator called \textit{split} that splits the input relation $\rangeRel$ into two parts: $\splitbg(R)$ encodes the \abbrBGW removing all attribute-level uncertainty and $\splitpos(\rangeRel)$ encodes the over estimation of possible worlds encoded by $\rangeRel$. We will show that for any incomplete $\semN$-relation $\prel$ that is bounded by a $\uaaN$-relation $\rangeRel$, $prel$ is also bound by $\splitbg(\rangeRel) \cup \splitpos(\rangeRel)$. We define these two operations below. For each tuple, $\rangeTup$, $\splitbg(\rangeRel)$ contains a corresponding tuple $\rangeTup'$ from which all attribute uncertainty has been removed by setting $\lbMarker{\rangeTup'.A} = \bgMarker{\rangeTup'.A} = \ubMarker{\rangeTup'.A} = \bgMarker{\rangeTup.A}$. The annotation of such a tuple $\rangeTup'$ is determined as follows.: $\bgMarker{\splitbg(\rangeRel)(\rangeTup')} = \bgMarker{\rangeRel(\rangeTup)}$, $\ubMarker{\splitbg(R)(\rangeTup)} = \bgMarker{\rangeRel}(\rangeTup)$, i.e., the overestimation of possible annotations is removed and $\lbMarker{\splitbg(R)(\rangeTup)} = \lbMarker{\rangeRel(\rangeTup)}$ if the tuple's attribute values are all uncertain and to $0$ otherwise. $\splitpos(\rangeRel)$ retains the tuples from $\rangeRel$, keep the $\ubMarker{\rangeRel(\rangeTup)}$ as the upper bound of a tuple's annotation, and sets $\lbMarker{\splitpos(\rangeRel)(\rangeTup)} = \bgMarker{\splitpos(\rangeRel)(\rangeTup)} = 0$. Consider a \abbrUAADB relation $\rangeRel$ with schema $\bar{A} = (A_1, \ldots, A_n)$. Let $\certainName(\rangeTup)$ denote the tuple derived from $\rangeTup$ by making all attribute values certain, i.e., replacing each attribute value $\uv{c_1}{c_2}{c_3}$ with $\uv{c_2}{c_2}{c_2}$.

\begin{align*}
  \lbMarker{\splitbg(\rangeRel)(\rangeTup)} &\defas \sum_{\rangeTup': \certainName(\rangeTup') = \rangeTup}
                                              \begin{cases}
                                                \lbMarker{\rangeRel(\rangeTup')} &\mathtext{if} \bigwedge_{i \in \{ 1, \ldots, n\}} \lbMarker{\rangeTup'.A_i} = \ubMarker{\rangeTup'.A_i}\\
                                                0 &\mathtext{otherwise}
                                              \end{cases}\\
  \bgMarker{\splitbg(\rangeRel')(\rangeTup)} &=   \ubMarker{\splitbg(\rangeRel)(\rangeTup)} \defas \sum_{\rangeTup': \certainName(\rangeTup') = \rangeTup} \bgMarker{\rangeRel(\rangeTup)}
                                            % &\defas \sum_{\rangeTup': \certainName(\rangeTup') = \rangeTup} \bgMarker{\rangeRel(\rangeTup')}
%%%%%%%%%%%%%%%%%%%%
\end{align*}
\begin{align*}
  \lbMarker{\splitpos(\rangeRel)(\rangeTup)} &\defas 0 &
  \bgMarker{\splitpos(\rangeRel)(\rangeTup)} &\defas 0 &
  \ubMarker{\splitpos(\rangeRel)(\rangeTup)} &\defas \ubMarker{\rangeRel(\rangeTup)}\\
\end{align*}

We rewrite the two split operators as shown below.

\begin{align*}
  \rewrUAA{\splitbg(R)} &=\projection_{e_{A}, e_{c} \renameto \rlb, \rbg, \rbg \renameto \rub)}(\selection_{\rbg>0}(\rangeRel))\\
  e_{A} &= \bgatt{A_1} \renameto \lbatt{A_1},
          \bgatt{A_1} \renameto \bgatt{A_1},
          \bgatt{A_1} \renameto \ubatt{A_1},
          \ldots, \\ % \bgatt{A_n} \renameto \ubatt{A_n}\\
    e_{c} &= \ifte{\bigwedge_{i \in \{1,\ldots,n\}} \lbMarker{A_i} = \ubMarker{A_i}}{\rlb}{0}\\
 \rewrUAA{\splitpos(R)} &= \projection_{\lbMarker{A},A,\ubMarker{A},0\renameto \rlb, 0\renameto \rbg, \rub}(\rangeRel)
\end{align*}

% A "split" ($SP$) process will create two different modified version of the input relation. One ($\splitbg(R)$) only contains best guess tuples where all attribute uncertainty are moved to tuple level. The other version ($\splitpos(R)$) contains all possible tuples by setting all certain and best guess annotations to zero. Two operator is defined as below:
% $$ \splitbg(R)=\projection_{(a[\lbMarker{a}],a,a[\ubMarker{a}],(\ubMarker{a}>\lbMarker{a})?0:CET[CET], BST, BST[POS])}(\selection_{bg>0}(R)) $$
% $$ \splitpos(R)=\projection_{(\lbMarker{a},a,\ubMarker{a},0[CET], 0[BST], POS)}(R) $$

Based on the following lemma, we can split any \abbrUAADB relation $\rangeRel$ without loosing its bounding properties while preserving  the \abbrBGW encoded by $\rangeRel$.

%%%%%%%%%%%%%%%%%%%%%%%%%%%%%%%%%%%%%%%%
\begin{Lemma}[Split preserves bounds]\label{lem:split-preserves-bounds}
Let $\rangeRel$ be a $\uaaN$-relation that bounds an incomplete $\semN$-relation $\prel$, then $\splitbg(\rangeRel) \union \splitpos(\rangeRel)$ also bounds $\prel$. Furthermore, $\bgOf{\rangeRel} = \bgOf{\splitbg(\rangeRel) \union \splitpos(\rangeRel)}$.
\end{Lemma}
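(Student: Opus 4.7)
\textsc{Proof plan.} The plan is to prove the two claims separately, handling the simpler \abbrBGW-preservation first and then constructing an explicit tuple matching for the bound-preservation claim.

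For $\bgOf{\rangeRel} = \bgOf{\splitbg(\rangeRel) \union \splitpos(\rangeRel)}$, I would unfold the definitions. Since $\bgMarker{\splitpos(\rangeRel)(\rangeTup')} = 0$ by construction, only the $\splitbg$ summand contributes to the \abbrBG of the union. Expanding $\bgMarker{\splitbg(\rangeRel)(\rangeTup')}$ at a certain tuple $\rangeTup'$ yields $\sum_{\certainName(\rangeTup) = \rangeTup'} \bgMarker{\rangeRel(\rangeTup)}$, and then summing over all $\rangeTup'$ with $\bgOf{\rangeTup'} = \tup$ reindexes the sum (each $\rangeTup$ with $\bgOf{\rangeTup} = \tup$ is counted once through its unique image $\certainName(\rangeTup)$), giving exactly $\bgOf{\rangeRel}(\tup)$.

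For the bound-preservation claim, let $\rel \in \prel$ be arbitrary and let $\TM$ be a tuple matching witnessing $\rel \dbbounds \rangeRel$ as in \Cref{def:bounding-worlds}. The key idea is that every range-annotated tuple $\rangeTup \in \rangeRel$ still appears verbatim in $\splitpos(\rangeRel)$ (with its full upper-bound multiplicity $\ubMarker{\rangeRel(\rangeTup)}$ retained), so I would define a new matching $\TM'$ on $\splitbg(\rangeRel) \union \splitpos(\rangeRel)$ by simply routing every non-zero $\TM(\rangeTup, \tup)$ through the $\splitpos$ copy: $\TM'(\rangeTup, \tup) \defas \TM(\rangeTup, \tup)$ when $\rangeTup$ appears in $\rangeRel$, and $\zeroK$ otherwise. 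Since the attribute values of $\rangeTup$ are unchanged by $\splitpos$, $\tup \tmatch \rangeTup$ still holds, so $\TM'$ satisfies the defining condition of \Cref{def:tuple-matching}, and the per-tuple sum condition $\sum_{\rangeTup'} \TM'(\rangeTup', \tup) = \rel(\tup)$ is inherited directly from $\TM$.

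It remains to check the inequalities of \Cref{def:bounding-worlds} against the pointwise-summed annotations of the union. For the upper bound, $\sum_\tup \TM'(\rangeTup, \tup) = \sum_\tup \TM(\rangeTup, \tup) \ordK \ubMarker{\rangeRel(\rangeTup)}$, which is at most $\ubMarker{(\splitbg(\rangeRel) \union \splitpos(\rangeRel))(\rangeTup)}$ since $\splitpos$ already contributes $\ubMarker{\rangeRel(\rangeTup)}$ to the union and semiring addition preserves $\ordK$. For the lower bound I would split on whether $\rangeTup$ is certain: if it is not certain then $\lbMarker{\splitbg(\rangeRel)(\rangeTup)} = 0$ and $\lbMarker{\splitpos(\rangeRel)(\rangeTup)} = 0$, so the inequality is trivial; if $\rangeTup$ is certain then $\certainName(\rangeTup) = \rangeTup$ is the unique preimage contributing to the lower bound of $\splitbg$, giving $\lbMarker{\splitbg(\rangeRel)(\rangeTup)} = \lbMarker{\rangeRel(\rangeTup)}$, and the original $\TM$ already satisfies $\sum_\tup \TM(\rangeTup, \tup) \geqK \lbMarker{\rangeRel(\rangeTup)}$.

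The main subtlety (and the step I would be most careful with) is the bookkeeping for certain tuples: one must verify that no other $\rangeTup$ in $\rangeRel$ contributes to $\lbMarker{\splitbg(\rangeRel)(\rangeTup)}$ beyond $\rangeTup$ itself, which is exactly what the definition of $\splitbg$ enforces via the ``all attributes certain'' guard. Once this is established, bound-preservation for the union follows by combining the two inequalities, completing the proof after quantifying over all $\rel \in \prel$ and invoking \Cref{def:bounding-incomplet-dbs} together with the \abbrBGW equality shown first. \qed
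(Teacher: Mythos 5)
Your proposal is correct and follows essentially the same route as the paper's proof: both keep the original tuple matching $\TM$ unchanged on the tuples of $\rangeRel$ (extended by $\zeroK$ on the freshly created $\certainName$-tuples), split on whether a tuple's attribute values are all certain, and observe that the union's annotation at each such tuple dominates the original bounds of $\rangeRel(\rangeTup)$ while the \abbrBG component is untouched because $\splitpos$ carries zero \abbrBG annotations. Your explicit attention to the ``all attributes certain'' guard in the lower-bound bookkeeping is the same point the paper relies on, just stated more carefully.
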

%%%%%%%%%%%%%%%%%%%%%%%%%%%%%%%%%%%%%%%%%%%%%%%%%%%%%%%%%%%%%%%%%%%%%%%%%%%%%%%%
\begin{proof}
  We first show that $\bgOf{\rangeRel} = \bgOf{\splitbg(\rangeRel) \union \splitpos(\rangeRel)}$. Since, the \abbrBG annotations of $\splitpos(\rangeRel)$ are zero and $\bgOf{\splitbg(\rangeRel)} = \bgOf{\rangeRel}$ by definition, the claim holds.
Let $\rangeRel_{split} = \splitbg(\rangeRel) \union \splitpos(\rangeRel)$.
  To demonstrate that the split operator preserves bounds, we have to show that for any possible world $\rel \in \prel$ we can extend a tuple matching $\TM$ based on which $\rangeRel$ bounds $\rel$ to a tuple matching $\TM_{split}$ based on which $\rangeRel_{split}$ bounds $\rel$. Consider a tuple $\rangeTup$ such that $\rangeRel(\rangeTup) \neq 0$. We distinguish two cases. If $\bigwedge_{i \in \{ 1, \ldots, n\}} \lbMarker{\rangeTup.A_i} = \ubMarker{\rangeTup.A_i}$, i.e., the tuple's attribute values are certain then $\rangeRel_{split}(\rangeTup) = \splitbg(\rangeRel)(\rangeTup) + \splitpos(\rangeRel)(\rangeTup) = (\lbMarker{\rangeRel(\rangeTup)},\sum_{\rangeTup': \certainName(\rangeTup') = \rangeTup}\bgMarker{\rangeRel(\rangeTup')}, \sum_{\rangeTup': \certainName(\rangeTup') = \rangeTup} \bgMarker{\rangeRel(\rangeTup')} + \ubMarker{\rangeRel(\rangeTup)})$. Thus, for every tuple $\tup$ we can set $\TM_{split}(\rangeTup, \tup) = \TM(\rangeTup, \tup)$ and wrt. $\rangeTup$ the tuple matching $\TM_{split}$ is fulfilling the requirements for being a tuple matching based on which $\rangeRel_{spli}$ bounds $\rel$ since its annotation include the bounds of $\rangeRel(\rangeTup)$. Otherwise, at least one value of $\rangeTup$ is uncertain and $\rangeRel_{split}(\rangeTup) = \splitpos(\rangeRel)(\rangeRel) = (0,0,\ubMarker{\rangeRel(\rangeTup)})$. Thus, we can safely set $\TM_{split}(\rangeTup, \tup) = \TM(\rangeTup, \tup)$ and $\TM_{split}$ is a tuple matching based on which $\rangeRel_{split}$ bounds $\rel$.
\end{proof}
%%%%%%%%%%%%%%%%%%%%%%%%%%%%%%%%%%%%%%%%

To improve the performance of joins, we split both input relations of the join. We then employ another new operator $\compressPos_{A,n}$ that compresses the output of $\splitpos$ into a relation with $n$ tuples by grouping the input tuples into $n$ buckets based on their $A$ values. For that we split the range of $A$ values appearing in the input into $n$ buckets containing roughly the same number of values each. All tuples from a bucket are aggregated into a single result tuple by merging their attribute bounds and summing up their annotations. Let $\rangeTup_b$ denote the tuple constructed for bucket $b$ in this fashion. Let $B = \{ b_1, \ldots, b_n \}$ be the set of buckets for $\compressPos_{A,n}(\rangeRel)$.
\begin{align*}
  \compressPos_{A,n}(\rangeRel)(\rangeTup) =
  \begin{cases}
    \ut{0}{0}{\sum_{\rangeTup' \in b} \ubMarker{\rangeRel(\rangeTup')}} & \mathtext{if} \exists b \in B: \rangeTup = \rangeTup_b \\
    0                                                                   & \mathtext{otherwise}                                   \\
  \end{cases}
\end{align*}
\BG{Discuss how to deal with intermediate results}

We rewrite the $\compressPos$ operator as shown below. Assume that the bucket $b_i$ covers the interval  $[l_i,u_i]$ from the domain of $A$.

\begin{align*}
  \rewrUAA{\compressPos_{A,n}(\rangeRel)} &= \projection_{\bar{\lbatt{A}},\bar{\bgatt{A}},\bar{\ubatt{A}},0 \renameto \rlb, 0 \renameto \rbg, \rub}(\query_{merge})\\
  \query_{merge} &\defas \aggregation{B}{e_{merge}, \aggsum(\rub) \renameto \rub}(\rangeRel)\\
  e_{merge} &= min(\lbatt{A_1}) \renameto \lbatt{A_1}, min(\bgatt{A_1}) \renameto \bgatt{A_1}, max(\ubatt{A_1}) \renameto \ubatt{A_1}, \ldots\\
  e_{bucketize} &= \ifte{\ubatt{A} \geq l_1 \wedge \lbatt{A} \leq u_1}{1}{(\mathbf{if} \ldots} %{\ifte{\ubatt{A} \geq l_2 \wedge \lbatt{A} \leq u_2}{2}{\ldots n}}
\end{align*}

Note that $\compressPos$ does not preserve the \abbrBGW encoded by its input. This is not problematic, because we only apply $\compressPos$ to the output of $\splitpos$ for which the \abbrBG annotation of each tuple is zero anyways.

%%%%%%%%%%%%%%%%%%%%%%%%%%%%%%%%%%%%%%%%%%%%%%%%%%%%%%%%%%%%%%%%%%%%%%%%%%%%%%%%
\begin{Lemma}[$\compressPos$ preserves bounds]\label{lem:compression-preserves-bounds}
Let $\rangeRel$ be a $\uaaN$-relation that bounds an incomplete $\semN$-relation $\prel$, $n \in \semN$, and $A \in \schemaOf(\rangeRel)$, then $\compressPos_{A,n}(\rangeRel)$ also bounds $\prel$. % and $\bgOf{\rangeRel} = \bgOf{\compressPos_{A,n}(\rangeRel)}$.
\end{Lemma}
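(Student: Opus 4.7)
The plan is to proceed analogously to \Cref{lem:split-preserves-bounds}, by taking a tuple matching $\TM$ that witnesses $\rel \dbbounds \rangeRel$ for an arbitrary possible world $\rel \in \prel$ and constructing a new tuple matching $\TM'$ witnessing $\rel \dbbounds \compressPos_{A,n}(\rangeRel)$. Since \Cref{def:bounding-worlds} only requires the existence of \emph{some} tuple matching satisfying \Cref{eq:lower-bound-db,eq:upper-bound-db}, this suffices.

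First I would fix notation: let $B = \{b_1, \ldots, b_n\}$ denote the buckets, and for each bucket $b$ let $\rangeTup_b$ be the representative range-annotated tuple formed by taking, per attribute, the minimum of the lower bounds and the maximum of the upper bounds over all $\rangeTup \in b$. The key geometric observation, which I would state and verify explicitly, is that for every input tuple $\rangeTup \in b$ and every deterministic tuple $\tup$, the implication $\tup \tmatch \rangeTup \Rightarrow \tup \tmatch \rangeTup_b$ holds, since $\rangeTup_b$'s attribute intervals contain those of $\rangeTup$ by construction. Given the input tuple matching $\TM$, I would then define
\[
\TM'(\rangeTup', \tup) \defas
\begin{cases}
\sum_{\rangeTup \in b} \TM(\rangeTup, \tup) & \text{if } \rangeTup' = \rangeTup_b \text{ for some } b \in B,\\
\zeroOf{\semN} & \text{otherwise.}
\end{cases}
\]

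The bulk of the proof is then a short calculation. I would verify that $\TM'$ is indeed a tuple matching: the zero-outside-match condition follows from the geometric observation above, while the per-tuple sum $\sum_{\rangeTup'} \TM'(\rangeTup', \tup) = \sum_b \sum_{\rangeTup \in b} \TM(\rangeTup, \tup) = \sum_{\rangeTup} \TM(\rangeTup, \tup) = \rel(\tup)$ follows because the buckets partition the support of $\rangeRel$. For the upper-bound condition (\Cref{eq:upper-bound-db}), I would commute the sums:
\[
\sum_{\tup} \TM'(\rangeTup_b, \tup) = \sum_{\rangeTup \in b} \sum_{\tup} \TM(\rangeTup, \tup) \;\leqN\; \sum_{\rangeTup \in b} \ubMarker{\rangeRel(\rangeTup)} = \ubMarker{\compressPos_{A,n}(\rangeRel)(\rangeTup_b)},
\]
using that each inner sum is bounded by $\ubMarker{\rangeRel(\rangeTup)}$ since $\TM$ witnesses $\rel \dbbounds \rangeRel$. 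The lower-bound condition (\Cref{eq:lower-bound-db}) is trivial since $\lbMarker{\compressPos_{A,n}(\rangeRel)(\rangeTup_b)} = 0$.

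The only subtle point, which I would flag rather than prove, is the \abbrBGW clause of \Cref{def:bounding-incomplet-dbs}: $\compressPos$ zeros out the \abbrBG annotations, so $\bgOf{\compressPos_{A,n}(\rangeRel)}$ is empty and will generally not equal any $\rel \in \prel$. This is precisely the caveat the preceding paragraph acknowledges, and is harmless in the intended use because $\compressPos$ is only applied to $\splitpos(\rangeRel)$ whose \abbrBG annotations are already zero, while \Cref{lem:split-preserves-bounds} ensures the \abbrBGW is independently preserved by the $\splitbg$ branch of the split-compress-join rewrite. I would therefore interpret ``bounds'' in the lemma statement as the tuple-matching conditions of \Cref{def:bounding-worlds} (lifted pointwise over $\prel$), which are exactly what is needed for the soundness of the join optimization.
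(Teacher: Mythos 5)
Your proposal is correct and follows essentially the same route as the paper's proof: both construct the new tuple matching by summing $\TM$ over each bucket and rely on the observation that $\tup \tmatch \rangeTup \Rightarrow \tup \tmatch \rangeTup_b$ because the bucket representative's intervals cover those of its members. Your version is somewhat more explicit about verifying the tuple-matching conditions and about the \abbrBGW caveat (which the paper handles in the surrounding prose rather than in the proof), but there is no substantive difference.
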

%%%%%%%%%%%%%%%%%%%%%%%%%%%%%%%%%%%%%%%%%%%%%%%%%%%%%%%%%%%%%%%%%%%%%%%%%%%%%%%%
\begin{proof}

Let $\rel$ be one possible world of $\prel$.  Let $\TM$ be a tuple matching based on which $\rangeRel$ bounds $\rel$. Furthermore, let $\rangeTup_r$ denote the result tuple produced for a bucket $b$. We construct a tuple matching $\TM_{\compressPos}$ which bounds $\prel$ by setting for all buckets $b$ and tuple $\tup$ from $\rel$:
  \begin{align*}
    \TM_{\compressPos}(\rangeTup_b, \tup) = \sum_{\rangeTup \in b}\TM(\rangeTup, \tup)
  \end{align*}
Since by definition $\compressPos_{A,n}(\rangeRel)(\rangeTup_b) = \sum_{\rangeTup \in b} \rangeRel(\rangeTup)$ and since $\tup \tmatch \rangeTup \Rightarrow \tup \tmatch \rangeTup_b$ for all $\rangeTup \in b$ (because the attribute bounds of $\rangeTup_b$ are defined to cover the attribute bounds of all tuples from bucket $b$), we have that $\TM_{\compressPos}$ is a tuple matching based on which $\compressPos_{A,n}(\rangeRel)$ bounds $\rel$.
\end{proof}
%%%%%%%%%%%%%%%%%%%%%%%%%%%%%%%%%%%%%%%%%%%%%%%%%%%%%%%%%%%%%%%%%%%%%%%%%%%%%%%%

Our optimized rewrite for join first splits both inputs of the join.  We join
$\splitpos$ and $\splitbg$ separately and then union the result. For a join with
condition $\theta$, let $\bgMarker{\theta}$ denote the result of replacing
references to an attribute $A$ with $\bgatt{A}$. Note that for $\splitbg$ since
all attribute values are certain, we can apply $\bgMarker{\theta}$ instead of
having to apply range-anntotated expression evaluation. Thus, the join over
$\splitbg$ will result only in minimal overhead compared to a regular join.
Since $\splitpos(\rangeRel) \join \splitpos(\rangeOf{S})$ may potentially
produce a large number of results, we apply $\compressPos$ to the inputs to
bound the size of the join result. Thus, we control the worst case join result
size by setting the parameter $n$ of $\compressPos$. In principle we can
compress on any attribute of the input relations. However, it is typically
better to choose attributes that are referenced in $\theta$, e.g., for a
condition $A=B$ if we compress on $A$ respective $B$ using $n$ buckets with the
same bucket boundaries for $A$ and $B$, then the join result will contain at
most $n$ results since each tuple from $\compressPos_{A,n}(\rangeRel)$ will join
at most with one tuple from $\compressPos_{B,n}(\rangeOf{S})$. Let $A$ ($B$)
denote an attribute from $\query_1$ ($\query_2$) that appears in $\theta$,
preferably in an equality comparison. The optimized rewrite for join $\rewrOpt{\cdot}$
using $A$ and $B$ is defined below.
\begin{align*}
\hspace{-3cm}  \rewrOpt{\query_1 \join_{\theta} \query_2} &\defas Q_{sg} \union Q_{pos}
\end{align*}
\begin{align*}
  Q_{sg} &\defas \projection_{\bar{A},\bar{B},\lbatt{\bar{A}},\lbatt{\bar{B}}\ubatt{\bar{A}},\ubatt{\bar{B}},e_{c},e_{sg},e_{p}}(\\ &\hspace{1cm}\rewrUAA{\splitbg(\query_1)} \join_{\bgMarker{\theta}} \rewrUAA{\splitbg(\query_2)})\\[3mm]
  Q_{pos} &\defas \rewrUAA{\compressPos_{A,n}(\splitpos(\query_1)) \join_{\theta} \compressPos_{B,n}(\splitpos(\query_2))}\\
   e_{c}                     & \defas \query_1.\rlb \cdot \query_2.\rlb \renameto \rlb                                                                                                           \\
 e_{sg}                    & \defas \query_1.\rbg \cdot \query_2.\rbg \renameto \rbg                                                                                                           \\
 e_{p}                     & \defas \query_1.\rub \cdot \query_2.\rub \renameto \rub
\end{align*}

%%%%%%%%%%%%%%%%%%%%%%%%%%%%%%%%%%%%%%%%%%%%%%%%%%%%%%%%%%%%%%%%%%%%%%%%%%%%%%%%
\begin{lemma}[The optimized rewrite preserves bounds]\label{lem:opt-join-rewr-preserves-bounds}
  The optimized join rewrite is correct, i.e., for any $\uaaN$-database and query $\query \defas \query_1 \join_{\theta} \query_2$, let $\query_{optMerge}$ be query $\query_{merge}$, but using $\rewrOpt{\cdot}$ instead of $\rewrUAA{\cdot}$ for joins. Then,
  \[
    \Dec(\query_{merge}(\Enc(\rangeDB))) \dbleq \Dec(\query_{optMerge}(\Enc(\rangeDB)))
    \]
\end{lemma}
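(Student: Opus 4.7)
The plan is to proceed by induction on the query structure, where for all operators other than join the optimized rewrite $\rewrOpt{\cdot}$ coincides with $\rewrUAA{\cdot}$ and the claim reduces to equality (giving $\dbleq$ trivially by reflexivity). For a join $\query \defas \query_1 \join_\theta \query_2$, I would first apply the induction hypothesis to $\query_1$ and $\query_2$, reducing the problem to a single join: given two $\uaaN$-relations $\rangeRel_1$ and $\rangeRel_2$ (the decodings of the rewritten subquery results), show that $\rangeRel_1 \join_\theta \rangeRel_2$ (the $\uaaK{\semN}$-semantics join) is at least as tight as the decoding of $\query_{optMerge}$ applied to their encodings.

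The main tool is \Cref{lem:split-preserves-bounds}, which implies that replacing each $\rangeRel_i$ by $\splitbg(\rangeRel_i) \union \splitpos(\rangeRel_i)$ preserves bounds (and the \abbrBGW). By distributivity of join over union in $\raPlus$-semantics, the join of the two unions expands to four sub-joins: $\splitbg\!\join\!\splitbg$, $\splitbg\!\join\!\splitpos$, $\splitpos\!\join\!\splitbg$, and $\splitpos\!\join\!\splitpos$. I would then show that the optimized rewrite, which retains only the first and last of these (and then compresses the last), still yields a valid looser bound. The key observation is that in $\splitpos(\rangeRel_i)$ we have $\lbMarker{\splitpos(\rangeRel_i)(\rangeTup)} = \bgMarker{\splitpos(\rangeRel_i)(\rangeTup)} = 0$ and $\ubMarker{\splitpos(\rangeRel_i)(\rangeTup)} = \ubMarker{\rangeRel_i(\rangeTup)}$. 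Hence the cross terms contribute zero to lower and \abbrBGW annotations (so dropping them cannot weaken those), while their upper-bound contributions, of the form $\bgMarker{\rangeRel_1(\rangeTup_1)} \multN \ubMarker{\rangeRel_2(\rangeTup_2)}$, are pointwise dominated by the corresponding $\splitpos\!\join\!\splitpos$ term $\ubMarker{\rangeRel_1(\rangeTup_1)} \multN \ubMarker{\rangeRel_2(\rangeTup_2)}$ on the same pair of tuples; hence the $\splitpos\!\join\!\splitpos$ term alone already supplies enough upper-bound coverage to bound whatever the dropped cross terms would have bounded.

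Having reduced to $\splitbg(\rangeRel_1) \join_{\bgMarker{\theta}} \splitbg(\rangeRel_2) \,\union\, \splitpos(\rangeRel_1) \join_\theta \splitpos(\rangeRel_2)$ (which, note, uses $\bgMarker{\theta}$ on the certain side because all bounds collapse to the \abbrBG value there), I would then apply \Cref{lem:compression-preserves-bounds} to each input of the $\splitpos$-join, justifying replacement of $\splitpos(\rangeRel_i)$ by $\compressPos_{X,n}(\splitpos(\rangeRel_i))$. Finally, I would invoke \Cref{theo:rewrite-correctness} to transfer the $\uaaN$-level argument down to the encoded relational level: since both sub-joins inside $Q_{sg}$ and $Q_{pos}$ are themselves expressed via $\rewrUAA{\cdot}$ applied to $\raPlus$ queries that include the split and compress rewrites (whose correctness follows by direct inspection of their definitions in \Cref{sec:rewrite-rules}), $\qmerge$-style merging produces exactly the encoding of the $\uaaN$-level result, at which point $\Dec$ yields the relation we just argued bounds everything the standard rewrite bounds.

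The main technical obstacle is the last-paragraph transfer step, specifically constructing from a tuple matching $\TM$ witnessing $\rel \dbbounds \Dec(\query_{merge}(\Enc(\rangeDB)))$ a new tuple matching $\TM'$ witnessing $\rel \dbbounds \Dec(\query_{optMerge}(\Enc(\rangeDB)))$. The difficulty is bookkeeping how a single tuple $\tup \in \rel$ that was matched against some $\rangeTup$ in the standard result must be re-routed: either into the corresponding $\splitbg\!\join\!\splitbg$ tuple (when all join-relevant attribute values of its matched $\rangeTup$ are certain) or into the unique $\compressPos$-bucket representative that subsumes $\rangeTup$'s range. Verifying that the per-bucket aggregated upper bounds produced by $\compressPos$ majorize the sum of upper-bound contributions reassigned to them, and that the \abbrBGW is still exactly reproduced by $Q_{sg}$ (using that $\splitpos$ contributes zero to \abbrBG), is where the proof needs to be carried out in detail.
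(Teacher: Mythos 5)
Your proposal follows essentially the same route as the paper's proof: replace each input by $\splitbg(\query_i)\union\splitpos(\query_i)$ (justified by \Cref{lem:split-preserves-bounds}), relate the optimized two-term form $Q_{sg'}\union Q_{pos'}$ to the distributed join of the two unions, and invoke \Cref{lem:compression-preserves-bounds} for the $\compressPos$ step. The one place you genuinely go beyond the paper is the treatment of the cross terms $\splitbg\join\splitpos$ and $\splitpos\join\splitbg$: the paper's proof simply asserts that $Q_{sg'}\union Q_{pos'}$ ``can be rewritten'' by distributivity into $(\splitbg(\query_1)\union\splitpos(\query_1))\join_\theta(\splitbg(\query_2)\union\splitpos(\query_2))$, which as an equivalence is false --- the distributed form has four summands and the optimized rewrite keeps only two. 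Your observation that the dropped cross terms carry $\ut{0}{0}{\cdot}$ annotations (so their removal cannot hurt the lower bounds or the \abbrBGW), and that their upper-bound mass $\bgMarker{\rangeRel_1(\rangeTup_1)}\multN\ubMarker{\rangeRel_2(\rangeTup_2)}\leq\ubMarker{\rangeRel_1(\rangeTup_1)}\multN\ubMarker{\rangeRel_2(\rangeTup_2)}$ can be rerouted onto the wider-ranged $\splitpos\join\splitpos$ output tuple for the same base pair, is exactly the repair that argument needs, so your version is the more defensible one. The encoding-level transfer you flag as the remaining obstacle (rebuilding the tuple matching against the $\compressPos$ bucket representatives and checking the \abbrBGW survives via $Q_{sg}$ alone) is likewise not carried out in the paper, which stops at the $\uaaN$-level argument; completing it the way you sketch, by summing the reassigned contributions per bucket and comparing against the aggregated $\rub$ values, is the right plan and raises no new difficulty.
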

%%%%%%%%%%%%%%%%%%%%%%%%%%%%%%%%%%%%%%%%%%%%%%%%%%%%%%%%%%%%%%%%%%%%%%%%%%%%%%%%
\begin{proof}
  Based on \Cref{lem:split-preserves-bounds} and
  \Cref{lem:compression-preserves-bounds} we know that the split and compression
  operators preserve bounds. As mentioned above, all tuples are attribute-level
  certain in the result of $\splitbg$. Thus,
  \[
    \rewrUAA{\splitbg(R)} \join_{\bgMarker{\theta}} \rewrUAA{\splitbg(S)})
  \]
  and
  \[
    \rewrUAA{\splitbg(R) \join_\theta \splitbg(S)}
\]
    are equivalent.
  Since the rewrite for union is returning of the rewritten inputs we know that $\rewrOpt{\query_1 \join_{\theta} \query_2}$ is equivalent to $\rewrUAA{Q_{sg'} \union \query_{pos'}}$ where $\query_{sg'}$ and $\query_{pos'}$ are defined as shown below.

  \begin{align*}
  \rewrUAA{Q_{sg'} \union Q_{pos'}}\\
  Q_{sg'} &\defas \splitbg(\query_1) \join_{\bgMarker{\theta}} \splitbg(\query_2)\\
  Q_{pos'} &\defas \compressPos_{A,n}(\splitpos(\query_1)) \join_{\theta} \compressPos_{B,n}(\splitpos(\query_2))\\
  \end{align*}

  Using the fact that join distributes over union, we can rewrite $\query_{sg'} \union \query_{pos'}$ into:
  \begin{align*}
   (\splitbg(\query_1) \union \splitpos(\query_1)) \join_{\theta} (\splitbg(\query_2) \union \splitpos(\query_2))
  \end{align*}
Since replacing $\query$ with  $\splitbg(\query) \union \splitpos(\query)$ preserves bounds, it follows that $\rewrOpt{\cdot}$ is correct.
\end{proof}
%%%%%%%%%%%%%%%%%%%%%%%%%%%%%%%%%%%%%%%%%%%%%%%%%%%%%%%%%%%%%%%%%%%%%%%%%%%%%%%%

%%%%%%%%%%%%%%%%%%%%%%%%%%%%%%%%%%%%%%%%%%%%%%%%%%%%%%%%%%%%%%%%%%%%%%%%%%%%%%%%
\begin{Example}
	\Cref{table:join_with_op} shows the result of joining the two tables from \Cref{table:join_no_op} using the optimized rewrite. By sacrificing accuracy, the number of result tuples can be reduced by limiting the number of output tuples.
  \end{Example}
%%%%%%%%%%%%%%%%%%%%%%%%%%%%%%%%%%%%%%%%%%%%%%%%%%%%%%%%%%%%%%%%%%%%%%%%%%%%%%%%

%%%%%%%%%%%%%%%%%%%%%%%%%%%%%%%%%%%%%%%%%%%%%%%%%%%%%%%%%%%%%%%%%%%%%%%%%%%%%%%%
\subsection{Optimization for Aggregation}
Similar to the optimization for joins, the self-join used in the rewrite for
aggregation to determine which tuples could possibly belong to a group may also
degenerate into a cross product if the attribute-level bounds are loose,
resulting in a potential performance bottleneck. We now introduce an optimized
version of the rewrite rule for aggregation. This optimization improves over the
naive aggregation rewrite in two aspects: (i) we piggy-back the computation of
\abbrBG aggregation function results on the computation of output groups
($\query_{gbounds}$) and (ii) we use the compression operator used in the
optimize join rewrite to compress the RHS of subquery $\query_{join}$ from the
naive aggregation rewrite.

\begin{align*}
  \rewrOpt{\aggregation{G}{f(A)}(R)} & \defas \aggregation{'f(\bgMarker{A})',\bgatt{G},\ubatt{G},\lbatt{G}}{e_{aggbounds}}(Q_{proj})                                                                                             \\
	e_{aggbounds}                    & \defas
                                      f(\ubatt{A}), f(\lbatt{A}),                                                                                                                                     \\
                                     &\hspace{1cm} \aggmax(\rlb) \renameto \rlb,                                                                                                                                       \\
                                     &\hspace{1cm} \aggmax(\rbg) \renameto \rbg,                                                                                                                                       \\
                                     &\hspace{1cm} \aggsum(\ubMarker{\rattj}) \renameto \rub
\end{align*}
\begin{align*}
%%%%%%%%%%%%%%%%%%%%%%%%%%%%%%%%%%%%%%%%
 \query_{proj}                       & \defas \projection_{'f(\bgMarker{A})',\bgatt{G},\ubatt{G},\lbatt{G},lba,uba,\rlb,\rbg,\ubMarker{\rattj}}(\query_{join})                                                                        \\[3mm]
 \query_{join}                       & \defas \query_{gbounds} \join_{\theta_{join}} \compressPos_{A,n}(\rename_{e_{rename}}(\rewrUAA{R}))                                                                 \\
  \theta_{join}                      & \defas \bigwedge_{A_i \in G}  \ubatt{A_i} \geq \lbatt{B_i} \wedge \ubatt{B_i} \geq \lbatt{A_i}                                                                      \\
%  e_{sg}                            & \defas \ifte{\theta_{sg}}{1}{0}
%                                       \renameto \rbg                                                                                                                                                     \\
%  \theta_{sg}                       & \defas \bigwedge_{A_i \in G} \bgMarker{A_i} = \bgMarker{B_i}                                                                                                        \\
%%%%%%%%%%%%%%%%%%%%%%%%%%%%%%%%%%%%%%%%
  e_{rename}                         & \defas A_1 \renameto B_1, \ldots, A_n \renameto B_n, \ratt \renameto \rattj
\end{align*}
\begin{align*}
%%%%%%%%%%%%%%%%%%%%%%%%%%%%%%%%%%%%%%%%
  \query_{gbounds}                   & \defas                                                 \aggregation{\bgatt{G}}{e_{gbounds},\aggmax(e_{c}) \renameto \rlb,\aggmax(\rbg) \renameto \rbg}(\rewrUAA{R}) \\
  e_{c}                              & \defas  \ifte{\theta_c \wedge \rlb>0}{1}{0}                                                                                                                         \\
  \theta_c                           & \defas \bigwedge_{A_i \in G}                                        \ubatt{A_i}=\ubatt{B_i} \wedge = \lbatt{A_i}=\lbatt{B_i} \wedge \lbatt{A_i} = \ubatt{A_i}       \\
  e_{gbounds}                        & \defas e_{bound}^{A_1}, \ldots, e_{bound}^{A_k}, sga \tag{for $G = (A_1, \ldots, A_k)$}                                                                                  \\
  e_{bound}^{A}                      & \defas \aggmax(\ubatt{A}) \renameto \ubatt{A},\aggmin(\lbatt{A}) \renameto \lbatt{A}
\end{align*}

As mentioned above, the main goal of the optimization is to decrease the number
of input tuples for the join operation.  Note how in the optimized rewrite shown
above, a compression operator is applied on the RHS input of the join which
limits the number of tuples that need to be matched to the output groups
produced by the LHS. Because we still need to produce the correct \abbrBG
aggregation results for each output group, we now need to perform this when
calculating output groups in the LHS since compression may group multiple
\abbrBG groups into one tuple in the compressed RHS which makes it impossible to
compute the precise \abbrBG aggregation function results from the RHS.  There
are also some minor additional modifications like pre-computing the
certain/selected guess tuple multiplicities.

%%%%%%%%%%%%%%%%%%%%%%%%%%%%%%%%%%%%%%%%%%%%%%%%%%%%%%%%%%%%%%%%%%%%%%%%%%%%%%%%
\begin{lemma}[The optimized rewrite preserves bounds]\label{lem:opt-join-rewr-preserves-bounds}
  The optimized aggregation rewrite is correct, i.e., for any $\uaaN$-database and query $\query \defas \aggregation{G}{f(A)}(\query_1)$, let $\query_{optMerge}$ be query $\query_{merge}$, but using $\rewrOpt{\cdot}$ instead of $\rewrUAA{\cdot}$ for aggregation. Then
  \[
    % \query_{optMerge}(\Enc(\rangeDB)) = \Enc(\query(\rangeDB))
    \Dec(\query_{merge}(\Enc(\rangeDB))) \dbleq \Dec(\query_{optMerge}(\Enc(\rangeDB)))
  \]
\end{lemma}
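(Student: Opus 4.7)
The plan is to mirror the structure of the optimized join proof (\Cref{lem:opt-join-rewr-preserves-bounds}), reducing the claim to the already established correctness of $\rewrUAA{\cdot}$ (\Cref{theo:rewrite-correctness}) combined with the fact that $\compressPos$ preserves bounds (\Cref{lem:compression-preserves-bounds}). Concretely, I first want to show that the optimized rewrite is semantically equivalent to evaluating the naive rewrite of an aggregation whose input has been replaced by a bound-preserving coarsening, and then appeal to the bound-preservation of aggregation (\Cref{theo:aggregation-preserves-bounds}) to conclude that the optimized rewrite also bounds $\query(\pdb)$ for every $\pdb$ bounded by $\rangeDB$.

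First I would partition the computation done by $\rewrOpt{\aggregation{G}{f(A)}(\query_1)}$ into two conceptual components: a \emph{\abbrBG component}, which reuses $\query_1$ unchanged inside $\query_{gbounds}$ to compute $\bgMarker{\rangeTup.G}$, $\bgMarker{\aout.f(A)}$, and the \abbrBG multiplicity of each output tuple; and a \emph{bounds component}, which joins $\query_{gbounds}$ with $\compressPos_{A,n}(\rewrUAA{\query_1})$ to derive $\lbMarker{\aout.f(A)}$, $\ubMarker{\aout.f(A)}$, together with the lower and upper multiplicity bounds. For the \abbrBG component, I would argue by direct inspection that the \abbrBG attributes produced agree tuple-by-tuple with those produced by $\rewrUAA{\aggregation{G}{f(A)}(\query_1)}$: the default grouping strategy (\Cref{def:default-grouping-strategy}) defines one output per distinct \abbrBG group-by tuple, and both rewrites group $\rewrUAA{\query_1}$ by $\bgatt{G}$ using the same aggregate expressions $e_{bound}^{A_i}$ for the group-by ranges and the same formula for the \abbrBG multiplicity and \abbrBG aggregate value.

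For the bounds component, the key step is to view the right-hand side of $\query_{join}$ as $\rewrUAA{\query_1'}$ where $\query_1' \defas \compressPos_{A,n}(\query_1)$ (after accounting for the renaming, which is immaterial for bound preservation). By \Cref{lem:compression-preserves-bounds}, $\compressPos_{A,n}(\query_1(\rangeDB))$ still bounds the incomplete relation $\query_1(\pdb)$ for every $\pdb \dbbounds \rangeDB$. Then, since \Cref{theo:aggregation-preserves-bounds} guarantees that $\uaaN$ aggregation preserves bounds, the values $\lbMarker{\aout.f(A)}$, $\ubMarker{\aout.f(A)}$, $\lbMarker{\aout.\rlb}$, $\ubMarker{\aout.\rub}$ computed by the optimized rewrite are valid bounds for the corresponding quantities in every possible world of $\query(\pdb)$. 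Combining this with the \abbrBG equality established above shows that the $\uaaN$-relation decoded from $\query_{optMerge}(\Enc(\rangeDB))$ bounds every world that the decoded result of $\query_{merge}(\Enc(\rangeDB))$ bounds, which is exactly the desired $\dbleq$ relationship.

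The main obstacle will be the \abbrBG component: the optimized rewrite piggy-backs $sga$ onto $\query_{gbounds}$ instead of obtaining it from the join output (as the naive rewrite does), so I need to verify that the set of input tuples contributing to $sga$ for a given output group matches exactly $\{\rangeTup' \mid \gsdefP(\rangeTup') = \asgrp\}$ restricted to \abbrBG-nonzero annotations, i.e., the tuples summed over in the $\bgMarker{\cdot}$ clause of \Cref{def:aggr-op-semantics-gb}. This requires a careful check that the aggregation by $\bgatt{G}$ correctly reconstructs each \abbrBG group independently of any bounds-level grouping performed after compression, and that the $\duprem_{\semN}$-style treatment of $\rbg$ is faithfully emulated by the $\aggmax(\rbg)$ used in $\query_{gbounds}$. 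Once this bookkeeping is settled, the rest of the argument is a routine application of the previously proven bound-preservation results.
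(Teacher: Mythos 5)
Your overall plan matches the paper's: split the optimized rewrite into the \abbrBG component (computed on the uncompressed LHS in $\query_{gbounds}$, argued to coincide with the naive rewrite) and the bounds component (handled via \Cref{lem:compression-preserves-bounds}), and your worry about verifying that piggy-backing $sga$ onto $\query_{gbounds}$ reproduces the $\bgMarker{\cdot}$ clause of \Cref{def:aggr-op-semantics-gb} is exactly the point the paper dispatches with "this does not affect the computed values."

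There is, however, one step where your argument aims at the wrong target. For the bounds component you invoke \Cref{theo:aggregation-preserves-bounds} on the coarsened input to conclude that the optimized output still bounds $\query(\pdb)$ for every $\pdb \dbbounds \rangeDB$. That is a bound-preservation statement relative to the incomplete database, whereas the lemma asserts the tightness relation $\Dec(\qmerge(\Enc(\rangeDB))) \dbleq \Dec(\query_{optMerge}(\Enc(\rangeDB)))$ of \Cref{def:bound-tightness}, i.e., \emph{every} $\semN$-database bounded by the naive output is bounded by the optimized output --- not only those of the form $\query(\db)$ with $\db \dbbounds \rangeDB$. Your closing sentence claims the $\dbleq$ conclusion, but nothing in the bounds-component argument delivers it. The paper instead compares the two rewrites' outputs directly: every RHS tuple $\tup_{right}$ that joins with some $\tup_{left}$ in the naive $\query_{join}$ is absorbed into a compressed tuple $\tup_{compressed}$ whose attribute bounds contain those of $\tup_{right}$ (by definition of $\compressPos$); since the join condition is unchanged, $\tup_{left}$ still joins with $\tup_{compressed}$; hence each output group's aggregate-result ranges and multiplicity bounds in the optimized rewrite are computed as min/max/sum over a collection of intervals and annotations that dominates the naive one, so they can only widen, while the group-by bounds and \abbrBG values are identical. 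That containment of output annotations is what lets any tuple matching witnessing $\db \dbbounds$ the naive result be reused for the optimized result, yielding $\dbleq$. Replacing your appeal to \Cref{theo:aggregation-preserves-bounds} with this direct widening argument closes the gap; the rest of your bookkeeping plan is sound.
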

%%%%%%%%%%%%%%%%%%%%%%%%%%%%%%%%%%%%%%%%%%%%%%%%%%%%%%%%%%%%%%%%%%%%%%%%%%%%%%%%
\begin{proof}
We already proved that the naive aggregation rewrite preserves bounds (\Cref{theo:aggregation-preserves-bounds}).
The optimized rewrite we computes \bestG{} results in the LHS of the join. However, this does not affect the computed values.   We also apply the compression operator to compress the RHS. As shown in \Cref{lem:compression-preserves-bounds}, this operator preserves bounds. Consider two tuples $\tup_{left}$ and $\tup_{right}$ from the LHS and RHS of naive rewrite that are joined by $\query_{join}$. In the optimized rewrite, $\tup_{right}$ may have been merged with multiple other tuples into a compressed tuple $\tup_{compressed}$. By definition of $\compressPos$, the attribute bounds of $\tup_{compress}$ include the attribute bounds of $\tup_{right}$. Since the join condition of $\query_{join}$ is the same in the naive and optimized versions of the aggregation rewrite, we know that $\tup_{left}$ joins with $\tup_{compress}$. Using the fact the $\compressPos$ is bound preserving it follows that the aggregation function result bounds computed by the optimized aggregation rewrite have to include the bounds produced by the naive rewrite. The group-by bounds are computed in the same way in both cases and, thus, both rewrite compute the same group-by bounds for each result tuple. Thus, it follows that the optimized aggregation rewriting is correct.
\end{proof}
%%%%%%%%%%%%%%%%%%%%%%%%%%%%%%%%%%%%%%%%%%%%%%%%%%%%%%%%%%%%%%%%%%%%%%%%%%%%%%%%

%%% Local Variables:
%%% mode: latex
%%% TeX-master: "tech_report"
%%% End:
%!TEX root=./uaadb.tex
%%%%%%%%%%%%%%%%%%%%%%%%%%%%%%%%%%%%%%%%%%%%%%%%%%%%%%%%%%%%%%%%%%%%%%%%%%%%%%%%
\section{Creating \abbrUAADBs}
\label{sec:creating-abbruaadbs}

  In this section we discuss how to translate data represented in incomplete and probabilistic data models into \abbrUAADBs such that the generated \abbrUAADB bounds the input uncertain database. Furthermore, we enable the user to specify bounds for an expression. This enables integration of \abbrUAADBs with Lenses~\cite{Yang:2015:LOA:2824032.2824055,BB19} to generate \abbrUAADBs that encode the uncertainty introduced by data cleaning and curation heuristics.

%%%%%%%%%%%%%%%%%%%%%%%%%%%%%%%%%%%%%%%%%%%%%%%%%%%%%%%%%%%%%%%%%%%%%%%%%%%%%%%%
\subsection{\captialTIs}
\label{sec:tip}
A tuple-independent database (\abbrTI) $\pdb$ is a database where each tuple $\tup$ is marked as optional or not. The incomplete database represented by a \abbrTI $\pdb$ is the set of instances that include all non-optional tuples and some subset of the optional tuples. That is, the existence of a tuple $t$ is independent of the existence of any other tuple $t'$. In the probabilistic version of \abbrTIs each tuple is associated with its marginal probability. The probability of a possible world is then the product of the probability of all tuples included in the world multiplied by the product of $1 - \prob(t)$ for all tuples from $\pdb$ that are not part of the possible world.
We define a translation function $\doTTRTI$ for \abbrTIs that returns a \abbrAUDB relation with certain attribute values. We extend this function to databases in the obvious way. A tuple's lower multiplicity bound is $1$ if the tuples is not certain (marked as optional or its probability is less than $1$). That is, like for \abbrUADBs, $\doTTRTI$ encodes exactly the certain answers of the \abbrTI and the \abbrBG world is selected as the world that contains all tuples whose marginal probability is larger than or equal to $0.5$. Note that this is indeed one of possible worlds that have the highest probability among all worlds encoded by the \abbrTI. Let $\TTRTI(\tup)$ denote the range-annotated tuple which encodes the certain tuple $\tup$ and $\pdb$ be a  \abbrPTIrel. We define:

\[
  \forall{A \in \schemaOf(\prel)}:\lbMarker{\TTRTI(\tup).A} = \ubMarker{\TTRTI(\tup).A}=\bgMarker{\TTRTI(\tup).A}=\tup.A
  \]
% For probabilistic \abbrTIs we mark certain annotation as $1$ if their marginal probability is $1$.

Using this definition, we define $\doTTRTI$ for probabilistic \abbrTIs as shown below.

%%%%%%%%%%%%%%%%%%%%%%%%%%%%%%%%%%%%%%%%%%%%%%%%%%%%%%%%%%%%%%%%%%%%%%%%%%%%%%%%
\begin{align*}\label{eq:tip-labeling}
  \lbMarker{\doTTRTI(\prel)(\TTRTI(\tup))}            & \defas
                                                      \begin{cases}
                                                        1 & \mathtext{if} \prob(\tup) = 1 \\
                                                        0 & \mathtext{otherwise}
                                                      \end{cases}                         \\
%%%%%%%%%%%%%%%%%%%%%%%%%%%%%%%%%%%%%%%%
\bgMarker{\doTTRTI(\prel)(\TTRTI(\tup))}              & \defas
                                                      \begin{cases}
                                                        1 & \mathtext{if} \prob(\tup) \geq 0.5 \\
                                                        0 & \mathtext{otherwise}
                                                      \end{cases}                         \\
%%%%%%%%%%%%%%%%%%%%%%%%%%%%%%%%%%%%%%%%
\ubMarker{\doTTRTI(\prel)(\TTRTI(\tup))}              & \defas
                                                      \begin{cases}
                                                        1 & \mathtext{if} \prob(\tup) > 0 \\
                                                        0 & \mathtext{otherwise}
                                                      \end{cases}                         \\
% \ubMarker{\TTR(\tup)} = 1                            & \Leftrightarrow t\,\prob(t)>0   \\
% \bgMarker{\TTR(\tup)} = 1                            & \Leftrightarrow t\,\prob(t)>0.5 \\
\end{align*}
%%%%%%%%%%%%%%%%%%%%%%%%%%%%%%%%%%%%%%%%%%%%%%%%%%%%%%%%%%%%%%%%%%%%%%%%%%%%%%%%

%%%%%%%%%%%%%%%%%%%%%%%%%%%%%%%%%%%%%%%%%%%%%%%%%%%%%%%%%%%%%%%%%%%%%%%%%%%%%%%%
\begin{Theorem}[$\doTTRTI$ is bound preserving]\label{theo:h-homo-UA}
Given a probabilistic \abbrTI  $\pdb$, $\doTTRTI(\pdb)$ is a bound for $\pdb$.
\end{Theorem}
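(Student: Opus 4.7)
The plan is to unpack the two conditions in Definition~\ref{def:bounding-incomplet-dbs} and verify each one directly from the construction of $\doTTRTI$. Recall that every tuple in the image of $\TTRTI$ is attribute-certain (all three attribute bounds coincide with the deterministic value $\tup.A$), so the only nontrivial content is the tuple multiplicity bounds. Moreover, since $\pdb$ is a probabilistic TI-\abbrAUDB and we are interpreting it under $\semN$-semantics, in any possible world $\rel \in \pdb$ every tuple has multiplicity $0$ or $1$, and $\rel(\tup) = 1$ only if $\prob(\tup) > 0$.

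\textbf{Step 1 (BG world condition).} I would first show that $\bgOf{\doTTRTI(\pdb)}$ is indeed a possible world of $\pdb$. By construction, $\bgOf{\doTTRTI(\pdb)}$ is the deterministic relation containing exactly those tuples with $\prob(\tup) \geq 0.5$, each with multiplicity $1$. Because the TI-\abbrAUDB makes tuples mutually independent, every subset of $\{ \tup \mid \prob(\tup) > 0 \}$ that contains all tuples with $\prob(\tup) = 1$ is a possible world. The set $\{ \tup \mid \prob(\tup) \geq 0.5 \}$ satisfies this (any tuple with probability $1$ has probability $\geq 0.5$), so it is a possible world.

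\textbf{Step 2 (Per-world bounding).} For any $\rel \in \pdb$, I would construct a tuple matching $\TM_{\rel}$ by setting
\begin{align*}
\TM_{\rel}(\rangeTup, \tup) \defas
\begin{cases}
\rel(\tup) & \text{if } \rangeTup = \TTRTI(\tup), \\
\zeroN & \text{otherwise.}
\end{cases}
\end{align*}
Since $\TTRTI(\tup)$ has certain attribute values equal to $\tup$, we have $\tup \tmatch \TTRTI(\tup)$, and for any other range-annotated tuple $\rangeTup$ we get $\TM_{\rel}(\rangeTup, \tup) = \zeroN$ (satisfying the first condition of Definition~\ref{def:tuple-matching}). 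Also, for each $\tup$ there is a unique $\rangeTup = \TTRTI(\tup)$ in the support, so $\sum_{\rangeTup} \TM_{\rel}(\rangeTup, \tup) = \rel(\tup)$, giving the second condition.

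\textbf{Step 3 (Verifying the lower and upper inequalities).} I then check that for every $\rangeTup = \TTRTI(\tup)$,
\begin{align*}
\lbMarker{\doTTRTI(\pdb)(\rangeTup)} \;\leq\; \rel(\tup) \;\leq\; \ubMarker{\doTTRTI(\pdb)(\rangeTup)}.
\end{align*}
If $\prob(\tup) = 1$, then $\tup$ is present in every possible world, so $\rel(\tup) = 1 = \lbMarker{\doTTRTI(\pdb)(\rangeTup)}$. If $\prob(\tup) < 1$, the lower bound is $0$ which trivially bounds $\rel(\tup) \in \{0,1\}$ from below. For the upper bound: if $\prob(\tup) > 0$, then $\ubMarker{\doTTRTI(\pdb)(\rangeTup)} = 1 \geq \rel(\tup)$; if $\prob(\tup) = 0$, then $\tup$ cannot appear in any possible world, so $\rel(\tup) = 0 = \ubMarker{\doTTRTI(\pdb)(\rangeTup)}$. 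For range-annotated tuples not in the image of $\TTRTI$, the annotation $\doTTRTI(\pdb)(\rangeTup)$ is $\zeroOf{\uaaN}$, and the corresponding sum of $\TM_{\rel}$ values is $\zeroN$, so the inequalities hold trivially.

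The proof is essentially bookkeeping — there is no deep obstacle. The only point that requires a small amount of care is making sure the attribute-certainty of $\TTRTI$ tuples lines up with the $\tmatch$ predicate (so that $\TM_{\rel}$ is well-defined as a tuple matching), and that the case analysis on $\prob(\tup) \in \{0, (0,0.5), [0.5,1), 1\}$ covers every combination of lower/BG/upper bounds consistently.
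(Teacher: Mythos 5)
Your proposal is correct. It differs from the paper's proof mainly in that it is fully self-contained: the paper dispatches the lower-bound half of the claim by citing the bounding theorem for this translation from the earlier \abbrUADB work (\cite{FH19}), observes that $1$ is a trivial upper bound on any tuple's multiplicity in a \abbrTI, and notes that all attribute values produced by $\TTRTI$ are certain — and stops there. You instead verify the definition of bounding from scratch: you exhibit the (essentially unique) tuple matching $\TM_{\rel}(\TTRTI(\tup),\tup)=\rel(\tup)$, check both conditions of the tuple-matching definition, confirm that the \abbrBG world $\{\tup \mid \prob(\tup)\geq 0.5\}$ really is a possible world (a point the paper's proof leaves implicit), and do the case analysis on $\prob(\tup)$ for the lower and upper multiplicity bounds. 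What the paper's route buys is brevity; what yours buys is independence from the prior result and an explicit treatment of the second condition of Definition~\ref{def:bounding-incomplet-dbs} (existence of the \abbrBGW among the possible worlds), which is genuinely part of the claim and is glossed over in the paper's version. No gaps.
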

%%%%%%%%%%%%%%%%%%%%%%%%%%%%%%%%%%%%%%%%%%%%%%%%%%%%%%%%%%%%%%%%%%%%%%%%%%%%%%%%
\begin{proof}
  By definition of the translation function $\doTTRTI$, all attribute-values are
  certain. Thus, with exception of the upper bound on a tuple's multiplicity
  ($1$ is a trivial upper bound on the multiplicity of any tuple in an \abbrTI),
  the claim follows from~\cite{FH19}[Theorem 2] which proved the lower bounding
  property of this translation.  Thus, the result of $\doTTRTI$ trivially bounds
  $\prel$.  We omit the proof for incomplete \abbrTIs since it is analog.
\end{proof}
%%%%%%%%%%%%%%%%%%%%%%%%%%%%%%%%%%%%%%%%%%%%%%%%%%%%%%%%%%%%%%%%%%%%%%%%%%%%%%%%

%%%%%%%%%%%%%%%%%%%%%%%%%%%%%%%%%%%%%%%%%%%%%%%%%%%%%%%%%%%%%%%%%%%%%%%%%%%%%%%%
\subsection{\abbrXDBs}
\label{sec:x-dbs}
An \abbrXDB~\cite{DBLP:conf/vldb/AgrawalBSHNSW06}, records for each tuple a
number of alternatives. Such alternatives are encoded as so-called x-tuples.  An
x-tuple $\xTup$ is simply a set of tuples $\{t_1, \ldots, t_n\}$ with a label
indicating whether the x-tuple is optional. We use $\card{\xTup}$ to denote the
number of alternatives of x-tuple $\xTup$.  x-relations are sets of such
x-tuples and x-databases are sets of x-relations. A possible world of an
x-relation $\rel$ is a deterministic relations that is generated by selecting at
most one alternative $t \in \xTup$ for every x-tuple $\xTup$ from $\rel$ if
$\xTup$ is optional, or exactly one if it is not optional.  That is, each
x-tuple is assumed to be independent of the others, and its alternatives are
assumed to be disjoint (hence the name block-independent incomplete database).
The probabilistic version of \abbrXDBs (also called a \textit{block-independent
  database}~\cite{DBLP:conf/vldb/AgrawalBSHNSW06}) assigns each alternative a
probability such that $\prob(\xTup) = \sum_{t \in \xTup} \prob(t) \leq 1$. Thus,
a probabilistic x-tuple is optional if $\prob(\xTup) < 1$.
For an x-tuple $\xTup = \{ t_1, \ldots, t_n \}$, we use $\xselect{\xTup}$ to denote the alternative of $\xTup$ with the highest probability among all alternatives, picking the first alternative if there are multiple such alternatives. For instance, for $\xTup = \{ t_1, t_2 \}$ with $\prob(t_1) = 0.5$ and $\prob(t_2) = 0.5$ we get $\xselect{\xTup} = t_1$.

We now define a
translation function $\doTTRX$ that maps x-dbs into \abbrUAADBs. In the result of this translation each x-tuple from the input is encoded as a single range-annotated tuple whose attribute bounds include the attribute values of all alternatives of the x-tuple.  For the \abbrBG values of a tuple we choose the alternative with the highest probability.
We use $\TTRX(\xTup)$ to denote a range-annotated tuple that we construct as shown below such that it bounds all alternatives for x-tuple $\xTup$. For any $A \in \schemaOf(\xTup)$ we define:

%%%%%%%%%%%%%%%%%%%%%%%%%%%%%%%%%%%%%%%%%%%%%%%%%%%%%%%%%%%%%%%%%%%%%%%%%%%%%%%%
\begin{align*}
  \lbMarker{\TTRX(\xTup).A}   & = \min_{\tup \in \xTup}\tup.A    \\
  \bgMarker{\TTRX(\xTup).A} & = \xselect{\xTup}.A \\
  \ubMarker{\TTRX(\xTup).A} & = \max_{\tup \in \xTup}\tup.A
\end{align*}
%%%%%%%%%%%%%%%%%%%%%%%%%%%%%%%%%%%%%%%%%%%%%%%%%%%%%%%%%%%%%%%%%%%%%%%%%%%%%%%%

X-tuples in an x-relation are certain if $\prob(\xTup)$, i.e., if some
alternative of $\xTup$ exists in every possible world.  For the incomplete
version of x-dbs, a tuple is certain if it is not optional. We set the lower
multiplicity bound for such tuples to $1$. The lower multiplicity bound for all
other tuples is set to $0$.  Any alternative of an x-tuple is possible.  This
alternative $\xselect{\xTup}$ is part of the \abbrBGW if
$(1 - \prob(\xTup)) \leq \prob(\xselect{\xTup})$, i.e., it is more likely that
$\xselect{\xTup}$ exists than that no alternative of the x-tuple exists.  The
multiplicity upper bound for any $\TTRX(\tup)$ is $1$ and its lower bound is $1$
iff the x-tuple is not optional (one of the alternatives of the x-tuple exists
in every world). As mentionIn probabilistic \abbrXDBs we check that
$\prob(\xTup) = 1$. We the show version of $\doTTRX$ for probabilistic \abbrXDBs
below.

%%%%%%%%%%%%%%%%%%%%%%%%%%%%%%%%%%%%%%%%%%%%%%%%%%%%%%%%%%%%%%%%%%%%%%%%%%%%%%%%
\begin{align*}
  \lbMarker{\doTTRX(\prel)(\xTup)} = &
                              \begin{cases}
                                1    & \mathtext{if} \prob(\xTup) = 1                               \\
                                0    & \mathtext{otherwise}                                         \\
                              \end{cases}                                                           \\
%%%%%%%%%%%%%%%%%%%%
  \bgMarker{\doTTRX(\prel)(\xTup)} = & \begin{cases}
                                1    & \mathtext{if} (1 - \prob(\xTup)) \leq \prob(\xselect{\xTup}) \\
                                0    & \mathtext{otherwise}                                         \\
                              \end{cases}                                                           \\
%%%%%%%%%%%%%%%%%%%%
  \ubMarker{\doTTRX(\prel)(\xTup)} = & \begin{cases}
                                1    & \mathtext{if} \prob(\xTup) > 0                               \\
                                0    & \mathtext{otherwise}                                         \\
                              \end{cases}                                                           \\
\end{align*}
%%%%%%%%%%%%%%%%%%%%%%%%%%%%%%%%%%%%%%%%%%%%%%%%%%%%%%%%%%%%%%%%%%%%%%%%%%%%%%%%

%%%%%%%%%%%%%%%%%%%%%%%%%%%%%%%%%%%%%%%%%%%%%%%%%%%%%%%%%%%%%%%%%%%%%%%%%%%%%%%%
\begin{Theorem}[$\doTTRX$ preserves bounds]\label{theo:h-homo-UA-TUL-correct}
Given an \abbrXtable $\pdb$, $\doTTRX(\pdb)$ bounds $\pdb$.
\end{Theorem}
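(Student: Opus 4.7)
The plan is to show bound preservation by exhibiting, for each possible world of $\pdb$, an explicit tuple matching that witnesses both the lower and upper bound conditions of \Cref{def:bounding-worlds}, and then separately verifying that the \abbrBGW encoded by $\doTTRX(\pdb)$ is itself a possible world of $\pdb$.

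First I will fix an arbitrary possible world $\rel \in \pdb$. By the semantics of \abbrXDBs, $\rel$ arises by selecting, for each x-tuple $\xTup$ of the input x-relation, at most one alternative $\tup \in \xTup$ (exactly one if $\xTup$ is non-optional, i.e., $\prob(\xTup)=1$). I will define the tuple matching $\TM$ for $\doTTRX(\prel)$ and $\rel$ by setting $\TM(\TTRX(\xTup),\tup)=1$ whenever $\tup \in \xTup$ is the alternative chosen to produce $\rel$, and $\TM(\rangeTup,\tup)=0$ in all other cases. I then need to check the two formal conditions of a tuple matching from \Cref{def:tuple-matching}: (i) $\TM(\rangeTup,\tup)=0$ whenever $\tup \ntmatch \rangeTup$, which follows because every chosen alternative $\tup \in \xTup$ lies in the attribute-wise interval $[\min_{\tup' \in \xTup}\tup'.A,\max_{\tup' \in \xTup}\tup'.A]$ by construction of $\TTRX(\xTup)$; and (ii) $\sum_{\rangeTup}\TM(\rangeTup,\tup)=\rel(\tup)$, which holds because each tuple of $\rel$ comes from exactly one x-tuple and \abbrXDBs have set semantics so $\rel(\tup) \in \{0,1\}$.

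Next, I will verify the bound inequalities. For the upper bound, I need to show $\sum_{\tup}\TM(\TTRX(\xTup),\tup) \leq \ubMarker{\doTTRX(\prel)(\TTRX(\xTup))}$. The left-hand side is either $0$ or $1$ since at most one alternative of $\xTup$ is selected in $\rel$, and the right-hand side is $1$ whenever $\prob(\xTup)>0$, i.e., whenever any alternative could be chosen at all. For the lower bound, I need $\sum_{\tup}\TM(\TTRX(\xTup),\tup) \geq \lbMarker{\doTTRX(\prel)(\TTRX(\xTup))}$; the right-hand side is $1$ only when $\prob(\xTup)=1$ (non-optional), in which case one alternative must be chosen in every possible world and the left-hand side is also $1$.

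Finally, I will establish that $\bgOf{\doTTRX(\pdb)}$ coincides with some world in $\pdb$. By \Cref{def:bool-to-K}, this world contains, for each x-tuple $\xTup$ with $\bgMarker{\doTTRX(\prel)(\TTRX(\xTup))}=1$, the tuple $\xselect{\xTup}$ (and no tuple from x-tuples whose \abbrBG multiplicity is $0$). Since the construction selects at most one alternative per x-tuple, this is a legal choice under the x-relation semantics, and since every $\xTup$ with $(1-\prob(\xTup)) \leq \prob(\xselect{\xTup})$ has $\prob(\xTup)>0$ so $\xselect{\xTup}$ is a valid alternative, this world belongs to $\pdb$. The main obstacle here is minor book-keeping: being careful that the \abbrBG choice rule, which may leave an optional x-tuple ``empty'' in the \abbrBGW, still yields one of the possible worlds permitted by \abbrXDB semantics, and that the argument for the incomplete (non-probabilistic) variant is just the special case obtained by collapsing the marginal probabilities to $\{0,1\}$, which I will remark on briefly rather than duplicate.
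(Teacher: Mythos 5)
Your proof is correct and follows essentially the same route as the paper's: attribute bounds hold by construction of $\TTRX(\xTup)$, the multiplicity bounds follow because each world selects at most one alternative per x-tuple (exactly one when $\prob(\xTup)=1$), and the \abbrBG world is a legal selection of alternatives and hence a possible world. You merely make the tuple matching explicit where the paper argues informally (and the paper additionally notes the \abbrBGW is the most probable world, which is not needed for the statement).
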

%%%%%%%%%%%%%%%%%%%%%%%%%%%%%%%%%%%%%%%%
\begin{proof}
  Trivially, $\TTRX(\xTup)$ bounds all alternatives of $\xTup$ by
  construction. Each possible world contains at most one alternative per x-tuple
  (exactly one if the x-tuple is certain). Thus, $\ubMarker{\TTRX(\xTup)} = 1$
  is an upper bound on the multiplicity of any x-tuple's alternative in every
  world. The lower bound multiplicity $\lbMarker{\TTRX(\xTup)}$ is $1$ if
  $\prob(\xTup) = 1$ and $0$ otherwise. Thus, $\lbMarker{\TTRX(\xTup)}$ a lower
  bound the multiplicity of an alternative of the x-tuple in every world. Since
  x-tuples are assumed to be independent of each other and each possible world contains at
  most one alternative of an x-tuple, the probability of a possible world $\db$
  of an \abbrXtable $\prel$ is calculated as

  \[
    \left( \prod_{\tup \in \db} \prob(\tup) \right) \cdot \left( \prod_{\xTup: (\not \exists \tup \in \xTup: \tup \in \db)} 1 - \prob(\xTup) \right)
  \]

  Thus, the \abbrBG world is indeed the world with the highest
  probability of the \abbrXDB, because it contains the highest probability
  alternative for each x-tuple (or no alternative if this is the highest
  probability option).
\end{proof}
%%%%%%%%%%%%%%%%%%%%%%%%%%%%%%%%%%%%%%%%%%%%%%%%%%%%%%%%%%%%%%%%%%%%%%%%%%%%%%%%

%%%%%%%%%%%%%%%%%%%%%%%%%%%%%%%%%%%%%%%%%%%%%%%%%%%%%%%%%%%%%%%%%%%%%%%%%%%%%%%%
\subsection{\abbrCtables}
\label{sec:c-tables}
In contrast to the proof of \Cref{theo:intractability-of-tight-bounds}, we now
consider \textit{\abbrCtables}~\cite{DBLP:journals/jacm/ImielinskiL84} where
variables from the set $\Sigma$ of variable symbols can also be used as
attributes values, i.e., tuples over $\dataDomain \cup \Sigma$. Recall from the
proof of \Cref{theo:intractability-of-tight-bounds} that an
\abbrCtable~\cite{DBLP:journals/jacm/ImielinskiL84}
$\prel = (\rel, \lcond, \gcond)$ is a relation $\rel$ paired with (i) a global
condition $\gcond$ which is also a logical condition over $\Sigma$ and (ii) a
function $\lcond$ that assigns to each tuple $\tup \in \rel$ a logical condition
over $\Sigma$. Given a valuation $\mu$ that assigns to each variable from
$\Sigma$ a value, the global condition and all local conditions evaluate to
either $\btrue$ or $\bfalse$. The incomplete database represented by a
generalized \abbrCtable $\prel$ is the set of all relations $\rel$ such that
there exists a valuation $\mu$ for which $\mu(\gcond)$ is true and
$\rel = \{ \mu(\tup) \mid \mu(\lcond(\tup)) \}$, i.e., $\rel$ contains all
tuples for which the local condition evaluates to true where each variable in
the tuple is replaced based on $\mu$.

Since determining whether a tuple in a c-table is certain is
\conpcomplete\footnote{
Determine the certain answers to a query over a \abbrCoddTable is \conpcomplete~\cite{V86,AK91}.
  Since, the result of any
first order query over a \abbrCoddTable can be encoded as a \abbrCtable and
evaluating a query in this fashion is in \ptime, it follows that determining
whether a tuple is certain in a \abbrCtable cannot be in \ptime, because otherwise we could use this to compute the certain answers to a query in \ptime by evaluating it in over a \abbrCtable encoding the input \abbrCoddTable and then calculating the certain tuples of the resulting \abbrCtable.}, we settle for a transformation that creates lower and upper bounds for both attribute-values and multiplicities that are not tight. Using a constraint solver, we can determine (i) whether the local condition $\lcond(\tup)$ of a tuple is a tautology (the tuple exists in every world) and (ii) lower and upper bounds on the value of a tuple's attribute. For instance, for the lower bound we have to solve the following optimization problem:

\begin{flushleft}
  \textbf{Minimize:}
  \begin{align*}
    \mu(\tup.A)
  \end{align*}
  \textbf{Subject to:}
  \begin{align*}
    \lcond(\mu(\tup))\\
    \forall x \in \Sigma: \mu(x) \in \dataDomain
  \end{align*}
\end{flushleft}
We use $\min(\tup.A)$ and $\max(\tup.A)$ to denote the results of these optimization problems.
While solving constraints is not in \ptime, this can be acceptable if the size of formulas used in local conditions is relatively small. If that is not the case, we can trade tightness of bounds for performance and fall back to a simpler method, e.g., in worst case the minimum and maximum values of $\dataDomain$ are safe bounds and the bounds for a constant value is the constant itself.
To select a \abbrBGW, we have to find a valuation $\mu$ such that the global condition holds. In general, this may be computationally hard. For \abbrCtables without global conditions, any valuation would do. Let $\mu_{\abbrBG}$ denote the valuation we select.
To define our transformation, we again first  define a function $\TTRC$ that maps tuples and their local conditions to range-annotated tuples. For all $A \in \schemaOf(\tup)p$ this function is defined as:
\begin{align*}
  \lbMarker{\TTR(\tup).A} & \defas \min(\tup.A)         \\
  \bgMarker{\TTR(\tup).A} & \defas \mu_{\abbrBG}(\tup.A) \\
  \ubMarker{\TTR(\tup).A} & \defas \max(\tup.A)
\end{align*}

\newcommand{\istautology}{\textsc{isTautology}}
\newcommand{\issat}{\textsc{isSatisfiable}}

Now to determine the multiplicity bounds of tuples we have to reason about whether a local condition is a tautology and whether it is satisfiable. This can again be checked using constraint solvers.
We can fall back to a $\ptime$ method that can only detect certain types of tautologies if necessary, e.g., if the local condition is a conjunction of inequalities.
Note that even the first version does not guarantee tight bounds, because it (i) ignores global conditions and (ii) does not take into account that the set of tuples encoded by two tuples from the input \abbrCtable may overlap (e.g., a tuple may be certain even though no local condition is a tautology).
We define predicate $\istautology(\psi)$ that evaluates to true if $\psi$ is a tautology (potentially using the optimization as described above). Similarity, $\issat(\psi)$ is true if $\psi$ is satisfiable. Using these predicates we define the multiplicity bounds generated by $\doTTRC$ as shown below.

\begin{align*}
  \lbMarker{\doTTRC(\prel)(\TTRC(\tup, \lcond(\tup)))} & \defas
                                         \begin{cases}
                                           1           & \mathtext{if} \istautology(\psi)          \\
                                           0           & \mathtext{otherwise}                      \\
                                         \end{cases}                                               \\
%%%%%%%%%%%%%%%%%%%%
  \bgMarker{\doTTRC(\prel)(\TTRC(\tup, \lcond(\tup)))} & \defas
                                         \begin{cases}
                                           1           & \mathtext{if} \mu_{\abbrBG}(\lcond(\tup)) \\
                                           0           & \mathtext{otherwise}                      \\
                                         \end{cases}                                               \\
%%%%%%%%%%%%%%%%%%%%
  \ubMarker{\doTTRC(\prel)(\TTRC(\tup, \lcond(\tup)))}                 & \defas
                                         \begin{cases}
                                           1           & \mathtext{if} \issat(\psi)                \\
                                           0           & \mathtext{otherwise}                      \\
                                         \end{cases}                                               \\
 %%%%%%%%%%%%%%%%%%%%
\end{align*}

\begin{Theorem}[$\doTTRC$ is bound preserving]\label{theo:h-homo-UA-c-table}
Given an incomplete database $\pdb$ encoded as \abbrCtables, $\doTTRC(\pdb)$ bounds $\pdb$.
\end{Theorem}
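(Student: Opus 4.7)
The plan is to construct, for each possible world $\rel \in \pdb$, a tuple matching witnessing $\rel \dbbounds \doTTRC(\pdb)$, and then separately confirm that $\bgOf{\doTTRC(\pdb)}$ is itself a possible world. Fix $\rel$ generated by a valuation $\mu$ with $\mu(\gcond) = \btrue$ and $\rel = \{\mu(\tup) \mid \tup \in \prel \wedge \mu(\lcond(\tup)) = \btrue\}$. The first step is attribute-level bounding: for any c-table tuple $\tup$ with $\mu(\lcond(\tup)) = \btrue$ and any attribute $A$, the valuation $\mu$ is feasible in the optimization problems defining $\min(\tup.A)$ and $\max(\tup.A)$, so $\lbMarker{\TTRC(\tup).A} \leq \mu(\tup.A) \leq \ubMarker{\TTRC(\tup).A}$, giving $\mu(\tup) \tmatch \TTRC(\tup)$.

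Next I would define a tuple matching $\TM$ that assigns $\oneK$ to the pair $(\TTRC(\tup,\lcond(\tup)), \mu(\tup))$ for each c-table tuple $\tup$ with $\mu(\lcond(\tup)) = \btrue$ and $\zeroK$ elsewhere. By the previous step, $\TM(\rangeTup, \tup) = \zeroK$ whenever $\tup \ntmatch \rangeTup$, and the total mass assigned to any deterministic tuple equals its multiplicity in $\rel$, so $\TM$ is a valid tuple matching. For the bounding inequalities on a range-annotated tuple $\rangeTup$, the quantity $\sum_{\tup}\TM(\rangeTup,\tup)$ counts the contributing c-table tuples whose local condition is satisfied by $\mu$. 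If $\ubMarker{\doTTRC(\prel)(\rangeTup)} = \zeroK$, then $\issat(\lcond(\tup)) = \bfalse$ for every $\tup$ mapping to $\rangeTup$, so $\mu(\lcond(\tup)) = \bfalse$ throughout and the sum is $\zeroK$. If the lower multiplicity is $\oneK$, then some contributing $\tup$ has $\istautology(\lcond(\tup)) = \btrue$, which forces $\mu(\lcond(\tup)) = \btrue$ and yields a sum of at least $\oneK$. For bag multiplicities exceeding one, the bound holds because in the $\uaaK{\semK}$-relation the annotations of all c-table tuples sharing the encoding are summed, and each contributes at most one copy to $\rel$.

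For the \abbrBGW condition, the selection of $\mu_{\abbrBG}$ in $\doTTRC$ satisfies $\mu_{\abbrBG}(\gcond) = \btrue$, so $\rel_{\abbrBG} = \{\mu_{\abbrBG}(\tup) \mid \mu_{\abbrBG}(\lcond(\tup)) = \btrue\}$ is a possible world of $\pdb$. By construction $\bgMarker{\doTTRC(\prel)(\TTRC(\tup))} = 1$ exactly when $\mu_{\abbrBG}(\lcond(\tup)) = \btrue$, and $\bgMarker{\TTRC(\tup).A} = \mu_{\abbrBG}(\tup.A)$, so extracting $\bgOf{\doTTRC(\prel)}$ by grouping on \abbrBG values (\Cref{def:bool-to-K}) reproduces $\rel_{\abbrBG}$ exactly. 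The main obstacle is handling distinct c-table tuples that collapse to the same range-annotated encoding: their annotations sum in the $\uaaK{\semK}$-relation, so the bound inequalities must be checked after this merging rather than per-c-table-tuple, and the tuple matching must distribute the combined lower and upper bounds across the original contributors. A secondary subtlety is that $\istautology$ and $\issat$, as written, ignore $\gcond$, which can make the bounds looser than tight but does not endanger bound preservation, since any valuation satisfying $\gcond \wedge \lcond(\tup)$ remains a witness for $\issat(\lcond(\tup))$ and any global tautology of $\lcond(\tup)$ also holds under valuations constrained by $\gcond$.
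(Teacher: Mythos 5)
Your proposal is correct and follows essentially the same approach as the paper's proof: attribute bounds come from feasibility of $\mu$ in the optimization problems defining $\min(\tup.A)$ and $\max(\tup.A)$, tautological local conditions justify the lower multiplicity bound, satisfiability justifies the upper bound, and $\mu_{\abbrBG}$ witnesses that the extracted \abbrBGW is a possible world. Your version is in fact more explicit than the paper's (which asserts these facts without constructing the tuple matching), and you correctly identify the collapsing-encodings subtlety that the paper glosses over.
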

%%%%%%%%%%%%%%%%%%%%%%%%%%%%%%%%%%%%%%%%%%%%%%%%%%%%%%%%%%%%%%%%%%%%%%%%%%%%%%%%
\begin{proof}
Based on the definition of \abbrCtables, any tuple whose local condition is a tautology exists in every possible world. Furthermore, if the local condition of a tuple is satisfiable then it may exist in some world (only if the global condition for the valuation that satisfies the local condition evaluates to true). Finally, by construction the tuples of the  \abbrUAADB created by $\doTTRC$ for a \abbrCtable $\prel$ bound the possible values of the tuples of $\prel$ across all possible worlds. By construction $\mu_{\abbrBG}$ is a possible world of $\prel$. Thus, $\doTTRC(\prel)$ bounds $\prel$.
  % $\forall{\tup \in \pdb}$ and their translated $\abbrAUDB$ tuple $\rangeTup=\TTR(\tup)$, the attribute bounds holds as $\forall{t[v : \Sigma \rightarrow \aDom]}:t \tmatch \rangeTup$. The tuple bounds holds as all possible tuples are labeled as possible and a subset of certain tuples are labeled as certain.
\end{proof}
%%%%%%%%%%%%%%%%%%%%%%%%%%%%%%%%%%%%%%%%%%%%%%%%%%%%%%%%%%%%%%%%%%%%%%%%%%%%%%%%

The probabilistic version of \abbrCtables~\cite{DBLP:journals/debu/GreenT06}  associates each variable with a probability distribution over its possible values. Variables are considered independent of each other. The probability of a variable assignment $\mu$ is then the product of each probabilities $\prob(x = \mu(x))$ for each $x \in \Sigma$. The probability of a possible world is the sum of the probabilities of all valuations that produce this world.
Our translation scheme can be adapted to the probabilistic version of \abbrCtables by picking the \abbrBG by selecting the highest probability assignment for each variable $x$ to approximate the world with the highest probability. For probabilistic \abbrCtables with global conditions, finding a possible may again be computationally hard.

% Note that $\TTR$ is not guaranteed to be c-correct. For instance, a tuple $\tup$ consisting only of constants and for which $\phi_{\pdb}(\tup)$ is a tautology is guaranteed to be certain, but $\TUL(\tup) = F$ if  $\phi_{\pdb}(\tup)$ is not in CNF.% , is certa
%in, even though $\TUL(\tup) = F$.

%\begin{Example}
%  Consider a \abbrCtable consisting of two tuples $t_1 = (1,X)$ with $\phi_\pdb(t_1) \defas (X = 1)$ and $t_2 = (1,1)$ with $\phi_{\pdb}(t_2) \defas (X \neq 1)$. $\doTULC$ would mark $(1,1)$ as uncertain, because even though this tuple exists in the \abbrCtable and it's local condition is in CNF, the local condition is not a tautology. However, tuple $(1,1)$ is certain since either $X=1$ and then first tuple evaluates to $(1,1)$ or $X \neq 1$ and the second tuple is included in the possible world.
%\end{Example}

%%%%%%%%%%%%%%%%%%%%%%%%%%%%%
%%%%%%%%%%%%%%%%%%%%%%%%%%%%%%%%%%%%%%%%%%%%%%%%%%%
\subsection{Lenses}
\label{sec:lenses}
Lenses as presented in~\cite{Yang:2015:LOA:2824032.2824055,BB19,BS20} are a principled method for exposing the uncertainty in the result of data cleaning, curation and integration methods as incomplete data. For instance, when repairing primary key violations by picking one tuple $\tup$ for each set of tuples with the same key, the choice of $\tup$ is typically made based on some heuristic. However, all other possible picks cannot be ruled out in general. A lens applies such a cleaning heuristic to select on possible repair as a \abbrBGW and then encodes the space of possible repairs for a method as an incomplete database. Technically, this is achieved using the so-called variable-generating algebra which allows queries to introduce uncertainty in the form of random variables. The result are \abbrVCtables which generalize \abbrCtables by allowing symbolic expressions as attribute values. In our previous primary key repair example, for a key $k$ and the set of tuples $T = \{ \tup \mid \tup.K = k\}$ which have this key value ($K$ are the key attributes), random variables are introduced for each attribute $A$ such that $\card{\{ \tup.A \mid \tup \in T \}} > 1$, i.e., the attribute's value depends on the choice of repair for $k$. The possible value for such a variable $x_{k,A}$ are then all values that appear in $T$:
$x_{k,A} \in \{ \tup.A \mid \tup \in T \}$. This operation can be implemented in the variable-generating algebra as a query that uses aggregation grouping on the key attributes $K$, to check for each non-key attribute and key value $k$ whether there is more than one tuple with this key value. If this is the case, then the value of each attribute is replaced with a variable if there is more than one value for this attribute in a group.

%%%%%%%%%%%%%%%%%%%%%%%%%%%%%%%%%%%%%%%%
\begin{Example}[Cleaning with Lenses]\label{ex:cleaning-with-lenses}
  Consider repairing the key of a relation $R(A,B)$ with key $A$. In systems
  like Mimir~\cite{Yang:2015:LOA:2824032.2824055} and Vizier~\cite{BB19,BS20}
  that support lenses, variables are introduced by queries while their
  distribution is specified separately. Here we assume that the construct
  $Var(name)$ creates a new random variables with name $name$. Using this construct, we can generate a \abbrVCtable the encodes all possible repairs using the query shown below. We first count the number of distinct values of attribute $B$ for each key (subquery \lstinline!keys!). Then in the outer query we replace each $B$ value with a variable if there is more than one possible value. Note that a full solution also requires us to separately specify the possible values for these variables. However, we omit this here.
\begin{lstlisting}
SELECT A,
       CASE WHEN numB > 1
            THEN Var(tid || 'B')
       ELSE theB
       END AS B
  FROM (SELECT A,
               count(DISTINCT B) AS numB,
               min(B) AS theB
          FROM R
         GROUP BY A) keys
\end{lstlisting}
\end{Example}
%%%%%%%%%%%%%%%%%%%%%%%%%%%%%%%%%%%%%%%%

To support \abbrUAADBs as an approximation of \abbrVCtables created by Lenses
(and thus to support tracking of a wide range of cleaning and curation
operations), we can either develop a transformation for \abbrVCtables or define
a construct similar to the one for \abbrVCtables used in the example above to
allow uncertainty to be introduced as part of a query. We opted for the second
option, since it would allow new Lenses to be implemented directly in
\abbrUAADBs without the need for an excursion to \abbrVCtables for which
evaluation of certain operators (e.g., joins) is expensive. Towards that goal we
introduce a construct $\makeuncert{\lbMarker{e}}{\bgMarker{e}}{\ubMarker{e}}$
which takes three expressions $\lbMarker{e}$, $\bgMarker{e}$, $\ubMarker{e}$
that calculate a \abbrBG value, and an lower an upper bound for a value. We
require that this construct can only be applied to \abbrUAADBs, assumed to have
been created using one of our transformations explained above. For deterministic
inputs we produce a dummy transformation $\doTTRD$ that assumes that all tuples and
attribute values are certain.

%%%%%%%%%%%%%%%%%%%%%%%%%%%%%%%%%%%%%%%%
\begin{Example}\label{ex:make-uncertain-application}
  Reconsider the key repair task from \Cref{ex:cleaning-with-lenses}. We can create a \abbrUAADB bounding the space of repairs using the query shown below. Here we select the minimum $B$ value for each group as the \abbrBG value for attribute $B$.
\begin{lstlisting}
SELECT A,
       CASE WHEN numB > 1
            THEN
       ELSE $\makeuncertName$(minB,minB,maxB)
       END AS B
  FROM (SELECT A,
               count(DISTINCT B) AS numB,
               min(B) AS minB,
               max(B) AS maxB
          FROM $\doTTRD(R)$
         GROUP BY A) keys
\end{lstlisting}
\end{Example}

\newcommand{\methAUDB}{\texttt{AU-DB}\xspace}
\newcommand{\methDet}{\texttt{Det}\xspace}
\newcommand{\methLibkin}{\texttt{Libkin}\xspace}
\newcommand{\methUADB}{\texttt{UA-DB}\xspace}
\newcommand{\methMayBMS}{\texttt{MayBMS}\xspace}
\newcommand{\methMCDB}{\texttt{MCDB}\xspace}
\newcommand{\methTrio}{\texttt{Trio}\xspace}
\newcommand{\methSymb}{\texttt{Symb}\xspace}
\newcommand{\realQ}[2]{\ensuremath{Q_{#1,#2}}}
\newcommand{\pdbench}{\texttt{PDBench}\xspace}

\newcommand{\numaggops}{\#agg-ops\xspace}

% -*- root: uaadb.tex -*-
\begin{figure*}[t]
  \begin{minipage}{0.62\linewidth}
  \centering
  \begin{minipage}{1.0\linewidth}
\begin{subfigure}[b]{.48\textwidth}
	\centering
  \includegraphics[width=1\linewidth, trim=0cm 4cm 3cm 4.5cm, clip]{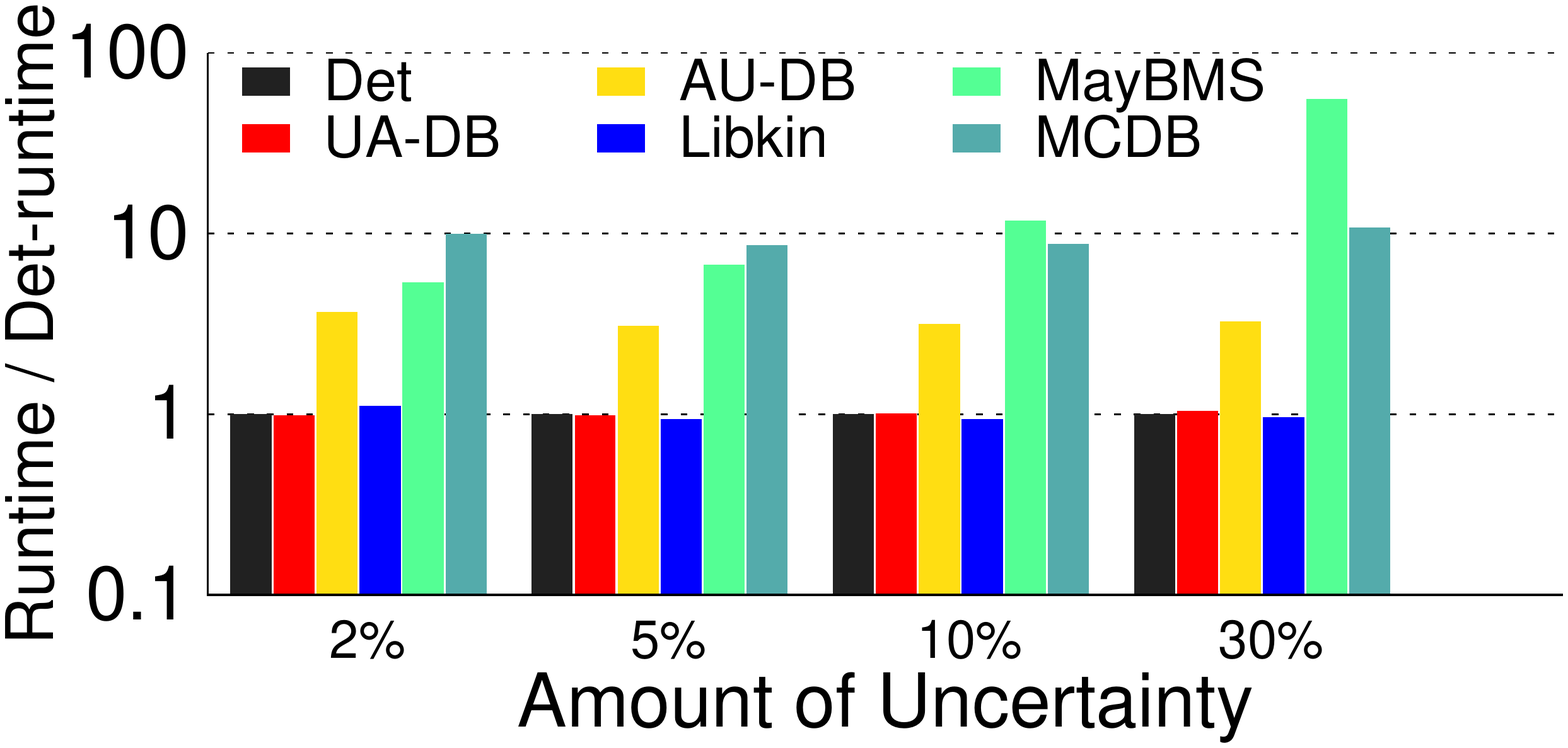}
  \vspace*{-8mm}
  \bfcaption{Varying uncertainty (1GB)}
  \label{fig:runtime_uncert}
\end{subfigure}
\begin{subfigure}[b]{.48\textwidth}
	\centering
  \includegraphics[width=1\linewidth, trim=0cm 4cm 3cm 4.5cm, clip]{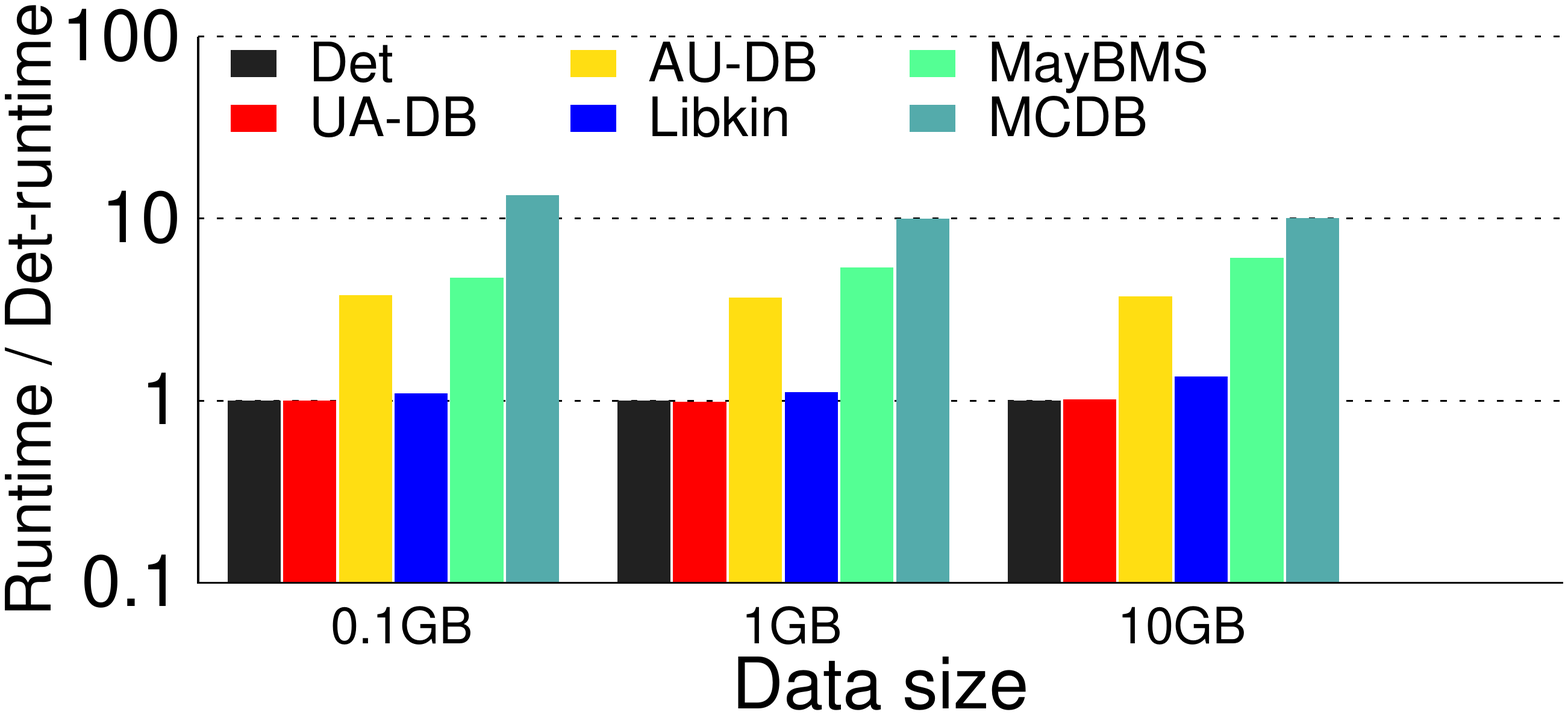}
  \vspace*{-8mm}
  \bfcaption{Varying DB size (2\% uncertainty)}
  \label{fig:runtime_data}
\end{subfigure}
\end{minipage}\\[-5mm]
  \caption{PDbench Queries}
  \label{fig:exp-pdbench}
\end{minipage}
%%%%%%%%%%%%%%%%%%%%%%%%%%%%%%%%%%%%%%%%%%%%%%%%%%%%%%%%%%%%%%%%%%%%%%%%%%%%%%%%
\begin{minipage}{0.36\linewidth}
  %%%%%%%%%%%%%%%%%%%%%%%%%%%%%%%%%%%%%%%%%%%%%%%%%%%%%%%%%%%%%%%%%%%%%%%%%%%%%%%%
% \begin{figure}[t]
	\centering
  	\includegraphics[width=0.8\textwidth, trim=0cm 1cm 2cm 4.5cm, clip]{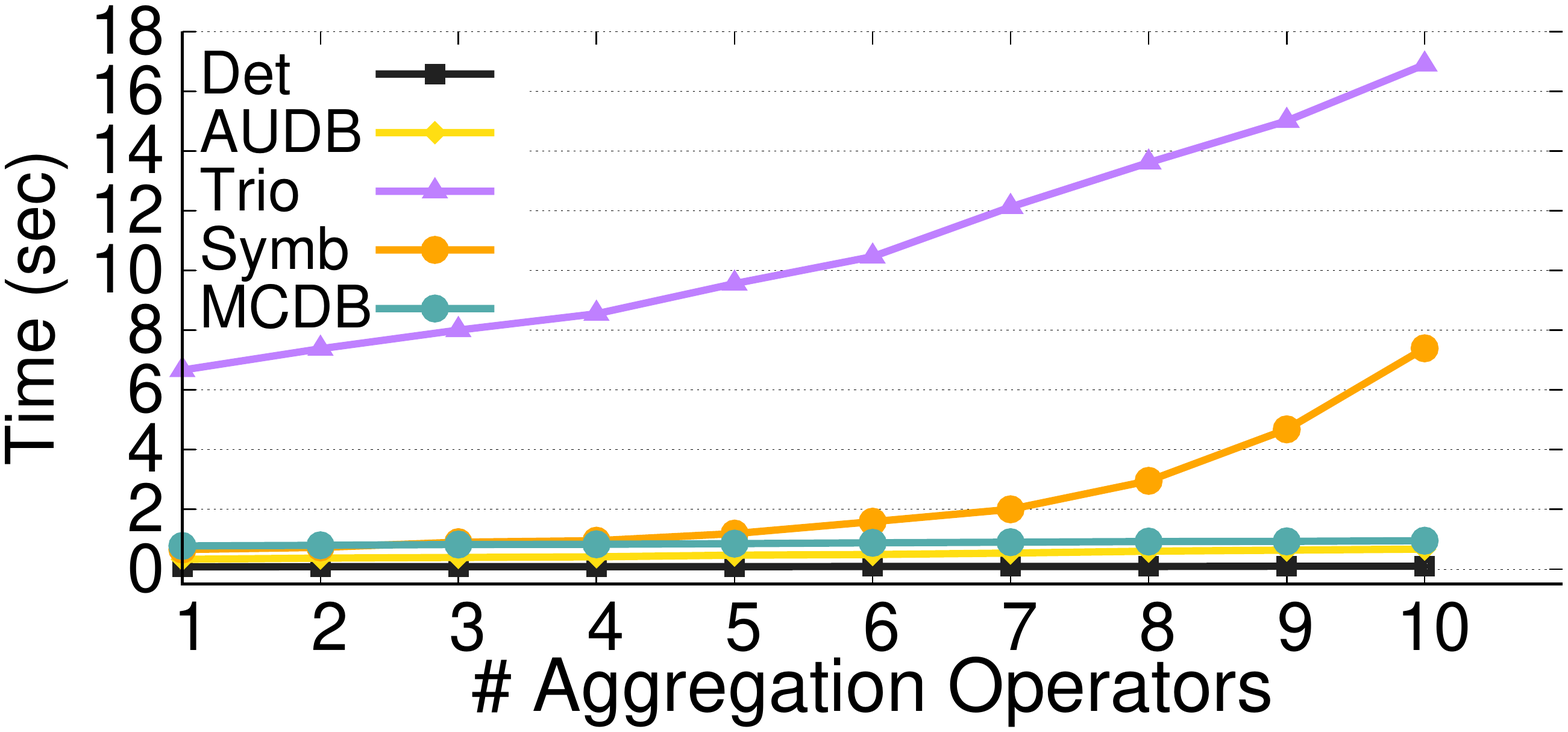}
  	\vspace*{-8mm}
  	\bfcaption{Simple aggregation over TPC-H data}
  	\label{fig:trio}
% \end{figure}
%%%%%%%%%%%%%%%%%%%%%%%%%%%%%%%%%%%%%%%%%%%%%%%%%%%%%%%%%%%%%%%%%%%%%%%%%%%%%%%%
\end{minipage}
\end{figure*}

%%% Local Variables:
%%% mode: latex
%%% TeX-master: "uaadb"
%%% End:

%%%%%%%%%%%%%%%%%%%%%%%%%%%%%%%%%%%%%%%%%%%%%%%%%%%%%%%%%%%%%%%%%%%%%%%%%%%%%%%%
\section{Experiments}
\label{sec:experiments}

We compare % the performance of queries over
\abbrAUDBs (\textbf{\methAUDB}) implemented on Postgres against
% We compare \textbf{\abbrAUDB}s with
(1)~\textbf{\methDet}:~Deterministic \abbrBGQP;
(2)~\textbf{\methLibkin}: An under\--ap\-prox\-i\-ma\-tion of certain answers~\cite{L16a,GL16};
(3)~\textbf{\methUADB}: An under\--approximation of certain answers combined with \abbrBGQP~\cite{FH19};
(4)~\textbf{\methMayBMS}: \methMayBMS used to compute all possible answers\footnote{Times listed for \methMayBMS \revm{and \methMCDB} include only computing possible answers and not computing probabilities.}; % , unless stated otherwise.};
(5)~\textbf{\methMCDB}: Database sampling (10 samples) in the spirit of MCDB~\cite{jampani2008mcdb} to over-approximate certain answers;
(6)~\textbf{\methTrio}: A probabilistic DB with bounds for aggregation~\cite{DBLP:conf/vldb/AgrawalBSHNSW06}; and
(7)~\textbf{\methSymb}: An SMT solver (Z3) calculating aggregation result bounds based on the symbolic representation from~\cite{AD11d}.
All experiments are run on a 2$\times$6 core AMD Opteron 4238 CPUs, 128GB RAM, 4$\times$1TB 7.2K HDs (RAID 5). We report the average % running time
of 10 runs.
\BG{Our experimental results support the following claims:
(i) In contrast to \methLibkin and \methUADB, which support a less expressive range of queries and report less detail (no possible answers or bounds), \abbrAUDB performance is competititve (within a factor of 2-5);
(ii) \methAUDB outperforms probabilistic databases including \methTrio, and \methMayBMS that can enumerate all possible answers and can distinguish certain tuples; and
(iii) \methAUDB also outperforms \methMCDB, the only other approach that supports all the queries we support (noting that it can not distinguish between certain and possible tuples, nor bound possible answers).}
% (i) In contrast to \methLibkin and \methUADB, which support a less expressive range of queries and report less detail (no possible answers or bounds), \abbrAUDB performance is competititve (within a factor of 2-5);
% (ii) \abbrAUDBs outperform \methMCDB and \methMayBMS, two popular approaches to uncertain data management;
% (iii) Our proposed join optimization improves performance by orders of magnitude; and
% (iv) \abbrAUDB-generated bounds for queries over real-world data are small.

%%%%%%%%%%%%%%%%%%%%%%%%%%%%%%%%%%%%%%%%
\subsection{Uncertain TPC-H (\pdbench)}
We use \pdbench~\cite{antova2008fast}, a modified TPC-H data generator~\cite{tpch} that creates an \abbrXDB (\termBI) with attribute-level uncertainty by replacing random attributes with multiple randomly selected possible alternatives.
% by generating random possible values for randomly selected cells (attributes). The generator produces a columnar encoding optimized for \methMayBMS, with tables as pairs of tuple identifiers and attribute values. Ambiguity arises from having multiple values for the same tuple identifier.
We directly run \methMayBMS queries (without probability computations) on its native columnar data representation.
For \methMCDB, we approximate  % decomposed
tuple bundles with 10 samples.
We apply \methLibkin on a database with labeled nulls for uncertain attributes using the optimized rewriting from~\cite{GL16}.
We run % deterministic queries
\methDet on one randomly selected world --- this world is also used as the \abbrBGW for \methUADB and \methAUDB.
We construct an \abbrAUDB instance by annotating each cell in this world with the minimum and maximum possible values for this cell across all % possible
worlds.
For \methUADB we mark all tuples with at least one uncertain value as uncertain.
 % We run each query 10 times and calculate the average execution time as the performance result.

%%%%%%%%%%%%%%%%%%%%%%%%%%%%%%%%%%%%%%%%
\mypar{\pdbench Queries}
To evaluate the overhead of our approach compared to \abbrUADBs for queries supported by this model, we reproduce the experimental setup from~\cite{FH19}, which uses the queries of \pdbench (simple SPJ queries).
%%%%%%%%%%%%%%%%%%%%
%\mypar{Amount of uncertainty}
%
With a scale factor 1 (\revm{SF1}) database ($\sim$1GB  per world), we evaluate scalability relative to amount of uncertainty.
Using \pdbench, we vary the percentage of uncertain cells: 2\%, 5\%, 10\% and 30\%.
Each uncertain cell has up to 8 possible values
\revm{picked uniformly at random over the whole domain,
% \pdbench picks possible values uniformly at random over the whole domain,
resulting in large ranges, a worst-case scenario for \abbrAUDBs} and a best-case scenario for \methMayBMS. % our approach.
As \Cref{fig:runtime_uncert} shows, our approach has constant overhead (a factor of $\sim$ 5),  % over \methDet for join queries (Q1 and Q3).
resulting from the many possible tuples created by joins on attributes with ranges across the entire domain.
% For basic selection query (Q2) our approach has significantly less overhead ($<$ factor of 2), resulting purely from the additional columns for attribute and tuple annotations and the extra possible tuples.
%%%%%%%%%%%%%%%%%%%%
%\mypar{Dataset size}
%
To evaluate scalability, we use 100MB, 1GB, and 10GB datasets (SF 0.1, 1, and 10) and fix the uncertainty percentage (2\%).
% The results % for PDBench queries Q1, Q2 and Q3
% are shown .
As evident from \Cref{fig:runtime_data}, \abbrAUDBs scale linearly in the \revc{SF} for such queries.

%%%%%%%%%%%%%%%%%%%%%%%%%%%%%%%%%%%%%%%%%%%%%%%%%%%%%%%%%%%%%%%%%%%%%%%%%%%%%%%%
\begin{figure}[t]
  \centering
  {\footnotesize
  \begin{tabular}{cc|r|r|r|r|r}
    \multicolumn{2}{c|}{\thead{Queries}} & \thead{2\%/SF0.1} & \thead{2\%/SF1} & \thead{5\%/SF1} & \thead{10\%/SF1} & \thead{30\%/SF1} \\ \hline
%%%%%%%%%%%%%%%%%%%%
    \multirow{2}{*}{Q1}                  & \methAUDB         & 1.607           & 15.636          & 15.746           & 15.811 & 16.021 \\
                                         & \methDet          & 0.560           & 1.833           & 1.884            & 1.882  & 1.883  \\
                                         & \methMCDB       & 5.152           & 19.107          & 18.938           & 19.063 & 19.279 \\
    \hline
    \multirow{2}{*}{Q3}                  & \methAUDB         & 0.713           & 7.830           & 8.170            & 8.530  & 7.972  \\
                                         & \methDet          & 0.394           & 1.017           & 1.058            & 1.092  & 1.175  \\
                                         & \methMCDB       & 4.112           & 11.138          & 11.222           & 10.936 & 11.454 \\
%                                         & \methMCDB0       & -           & -          & -           & - & - \\
                                         \hline
%%%%%%%%%%%%%%%%%%%%
    \multirow{2}{*}{Q5}                  & \methAUDB         & 0.846           & 8.877           & 8.803            & 8.839  & 8.925  \\
                                         & \methDet          & 0.247           & 0.999           & 1.012            & 1.123  & 1.117  \\
                                         & \methMCDB       & 2.599           & 10.152          & 10.981           & 11.527 & 11.909 \\
%                                         & \methMCDB0       & -           & -          & -           & - & - \\
                                         \hline
%%%%%%%%%%%%%%%%%%%%
    \multirow{2}{*}{Q7}                  & \methAUDB         & 0.791           & 7.484           & 7.537            & 7.303  & 7.259  \\
                                         & \methDet          & 0.145           & 0.977           & 0.985            & 0.989  & 1.044  \\
                                         & \methMCDB       & 1.472           & 10.123          & 10.277           & 10.749 & 10.900 \\
%                                         & \methMCDB0       & -           & -          & -           & - & - \\
                                         \hline
%%%%%%%%%%%%%%%%%%%%
    \multirow{2}{*}{Q10}                 & \methAUDB         & 0.745           & 7.377           & 7.283            & 7.715  & 8.012  \\
                                         & \methDet          & 0.263           & 1.024           & 0.993            & 1.004  & 1.015  \\
                                         & \methMCDB       & 2.691           & 10.743          & 10.937           & 11.826 & 11.697 \\
%                                         & \methMCDB0       & -           & -          & -           & - & - \\

%                                         & \methMCDB0       & -           & -          & -           & - & - \\

%%%%%%%%%%%%%%%%%%%%
  \end{tabular}
  }\\[-3mm]
  \bfcaption{TPC-H query performance (runtime in sec)}
  \label{fig:tpch}
  \trimfigurespacing
\end{figure}
%%%%%%%%%%%%%%%%%%%%%%%%%%%%%%%%%%%%%%%%%%%%%%%%%%%%%%%%%%%%%%%%%%%%%%%%%%%%%%%%

%%%%%%%%%%%%%%%%%%%%%%%%%%%%%%%%%%%%%%%%
%\subsection{TPC-H queries}
\mypar{TPC-H queries}
We now evaluate % performance for
 actual
TPC-H queries on \pdbench data.
These queries contain aggregation with uncertain group-by attributes (only supported by \methAUDB and \methMCDB).
%We vary the percentage of uncertainty and use SF 0.1 and 1 instances.
Results are shown in \Cref{fig:tpch}.
For most queries, \methAUDB has an overhead factor of between 3-7 over \methDet.
This overhead is mainly due to additional columns and scalar expressions.
Compared to \methMCDB, \methAUDB is up to 570\% faster, while producing hard bounds instead of an estimation.
% In this test we test the performance of our approach on some traditional TPC-H queries comparing with deterministic queries. We use \pdbench data as the uncertain TPC-H instance for \methAUDB and treating one selected possible world as the TPC-H instance for conventional query processing.

% \Cref{fig:tpch} shows the performance result of \methAUDB on different TPC-H queries comparing with conventional query processing on different scale and uncertainty factor combinations. For example, .02\_.1 denotes a \pdbench instance with 2\% uncertainty rate and 0.1G of data. The result shows that for most of the \pdbench queries \methAUDB has between 3 to 7 times overhead over conventional TPC-H queries. Overheads are mainly caused by calculating extra columns (UB and LB columns) and computing possible groupings through joins. \\

%%%%%%%%%%%%%%%%%%%%%%%%%%%%%%%%%%%%%%%%
\mypar{Simple Aggregation}
We use a simple aggregation query with certain group-by attributes on an SF0.1 instance to compare against a wider range of approaches, varying the number of aggregation operators (\numaggops).
For systems that do not support subqueries like \methTrio, operator outputs are materialized as tables. % as the input to the next.
In this experiment, \methTrio produces incorrect answers, as its representation of aggregation results (bounds) is not closed under queries; We are only interested in its performance.
\Cref{fig:trio} shows the runtime of our technique compared to
\methTrio which is significantly slower, and
\methSymb (only competitive for low \numaggops values). % , but shows significant overhead for larger number of aggregations.
% Our method significantly outperforms the alternatives. % and its overhead over \methDet .

%%%%%%%%%%%%%%%%%%%%%%%%%%%%%%%%%%%%%%%%%%%%%%%%%%%%%%%%%%%%%%%%%%%%%%%%%%%%%%%%
\begin{figure*}[t]
  \centering
  \begin{minipage}{1.0\linewidth}
\begin{subfigure}[b]{.245\textwidth}
	\centering
  \includegraphics[width=0.99\textwidth, trim=0cm 1.5cm 0.5cm 0cm, clip]{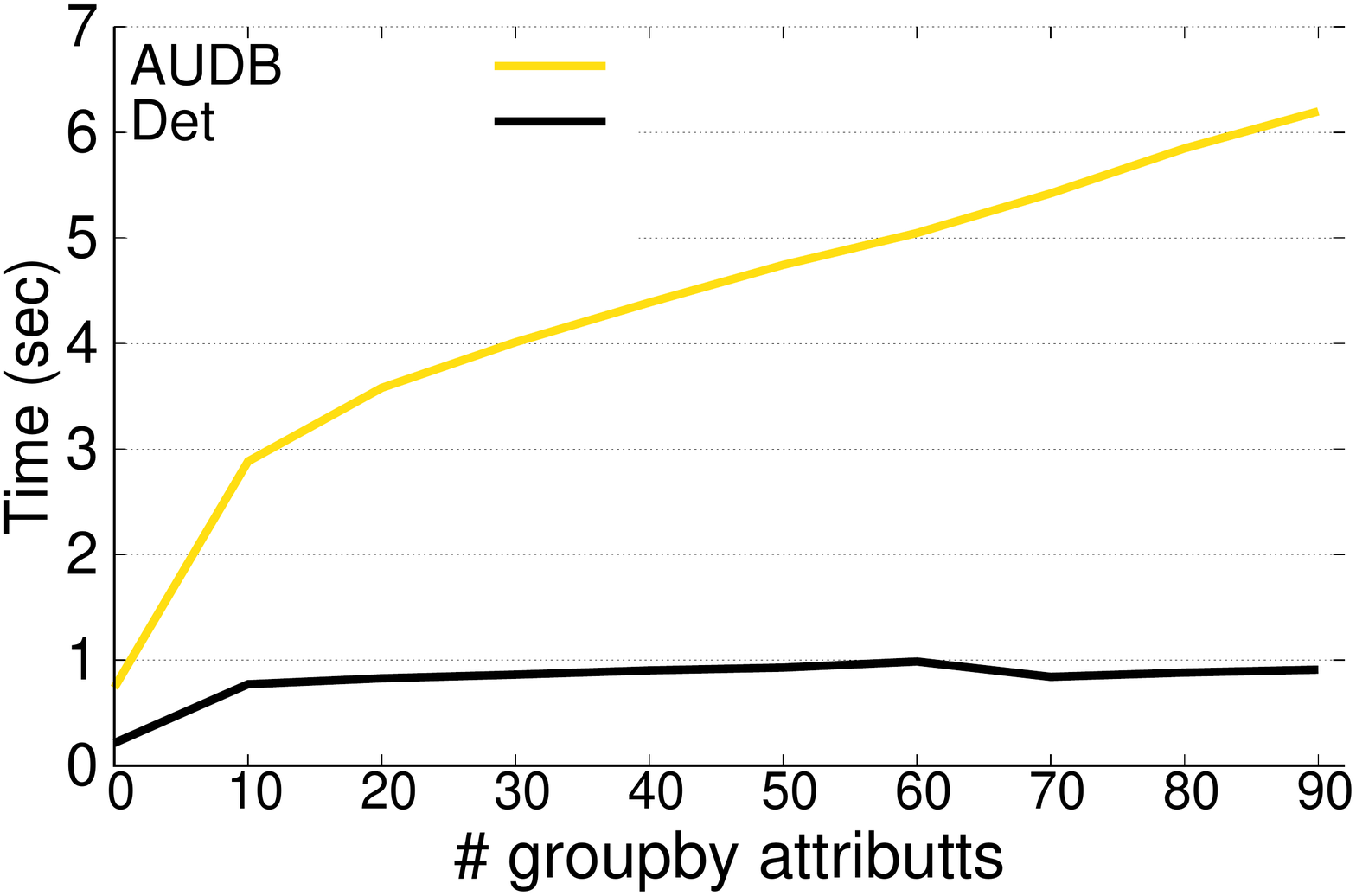}
  \vspace*{-6mm}
  \bfcaption{Varying \#group by}
  \label{fig:micro1}
\end{subfigure}
\begin{subfigure}[b]{.245\textwidth}
	\centering
  \includegraphics[width=0.99\textwidth, trim=0cm 1.5cm 0.5cm 0cm, clip]{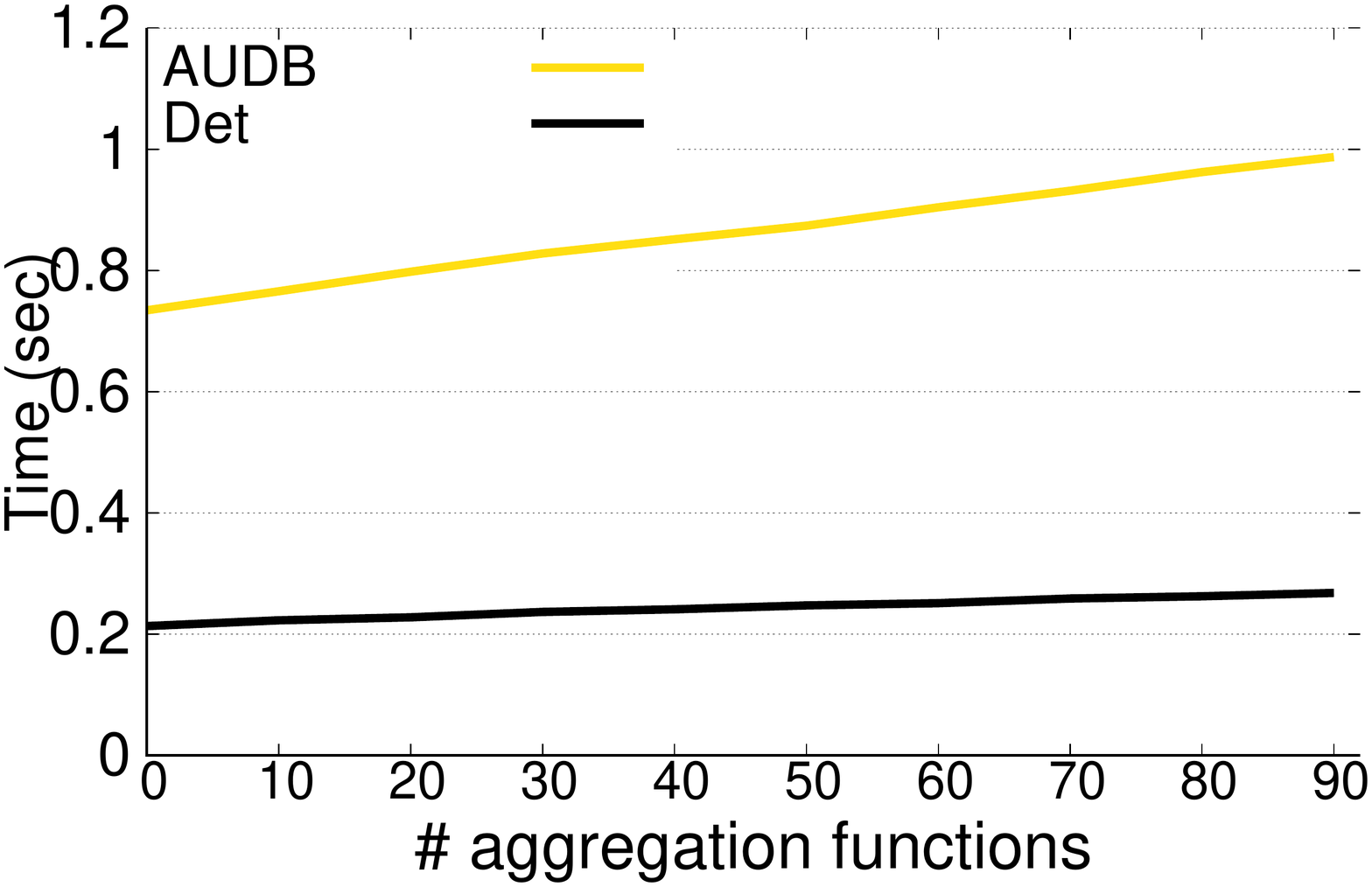}
  \vspace*{-6mm}
  \bfcaption{Varying \#aggregation}
  \label{fig:micro2}
\end{subfigure}
\begin{subfigure}[b]{.245\textwidth}
	\centering
  \includegraphics[width=0.99\textwidth, trim=0cm 1.5cm 0cm 0cm, clip]{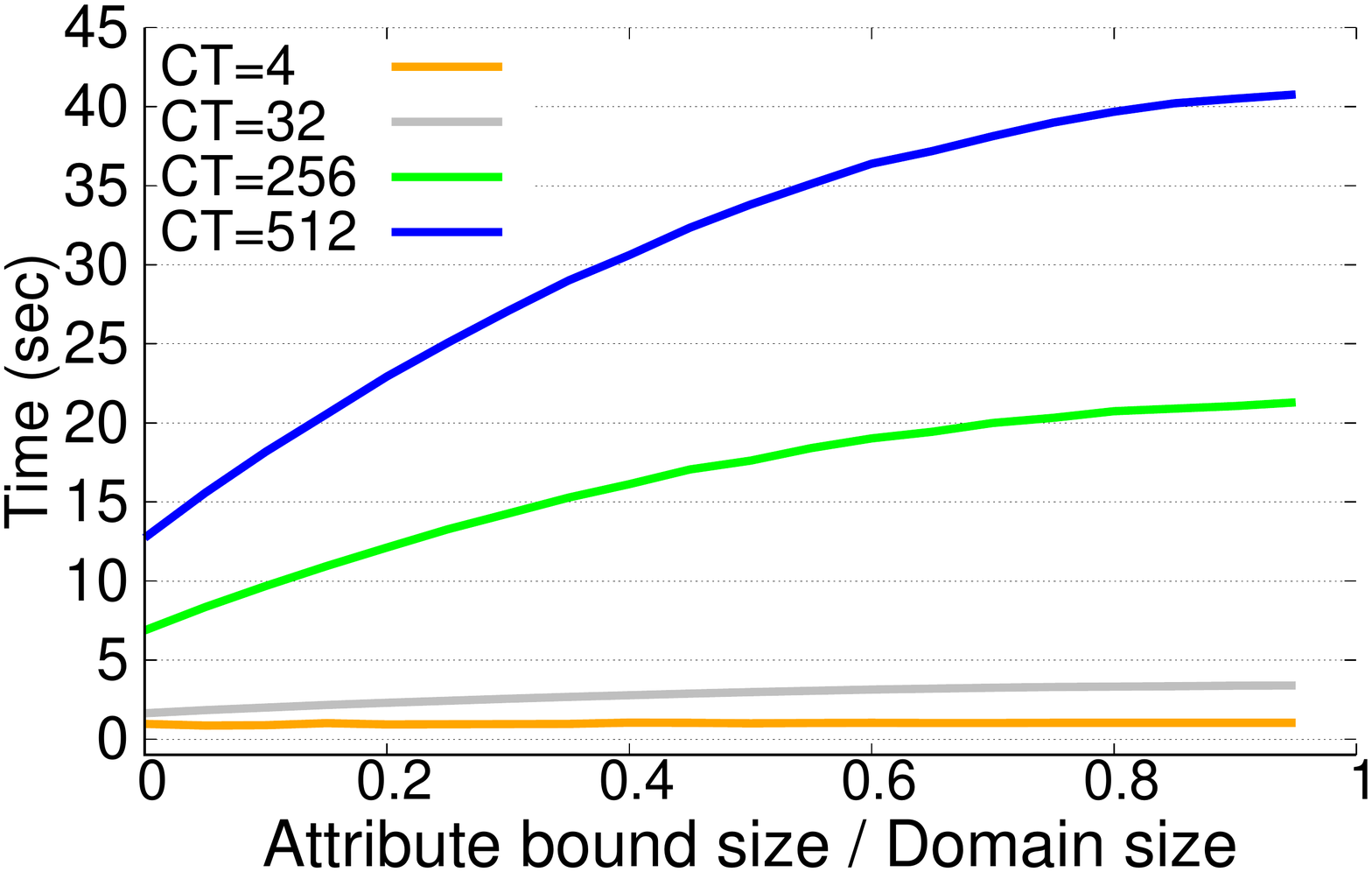}
  \vspace*{-6mm}
  \bfcaption{Varying attribute range}
  \label{fig:micro3}
\end{subfigure}
\begin{subfigure}[b]{.245\textwidth}
	\centering
  \includegraphics[width=0.99\textwidth, trim=0.5cm 5cm 2cm 1cm, clip]{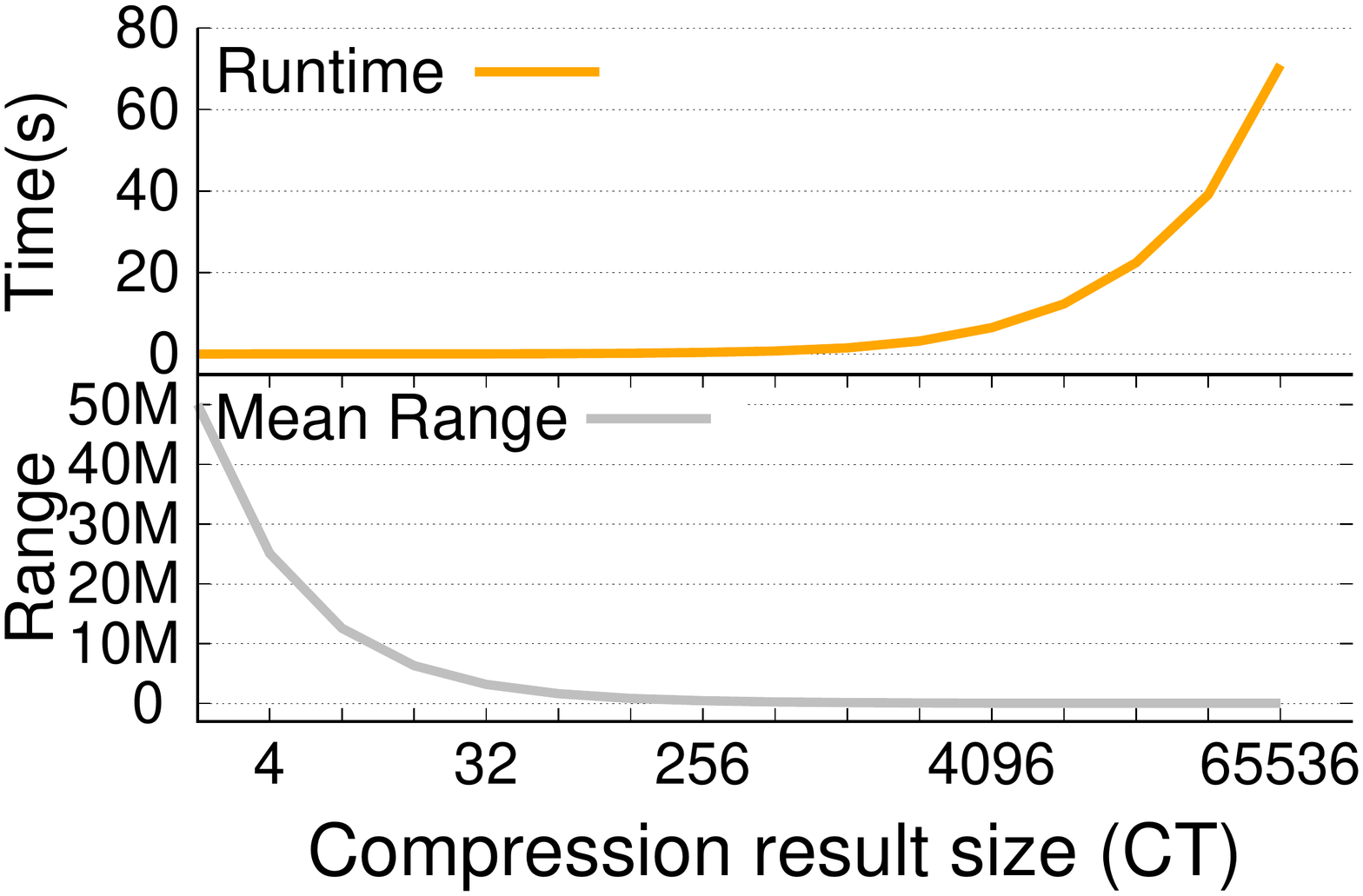}
  \vspace*{-6mm}
  \bfcaption{Varying compression rate}
  \label{fig:micro4}
\end{subfigure}\\[-8mm]
\bfcaption{Aggregation Microbenchmarks - Performance and Accuracy}
\end{minipage}
\end{figure*}
%%%%%%%%%%%%%%%%%%%%%%%%%%%%%%%%%%%%%%%%%%%%%%%%%%%%%%%%%%%%%%%%%%%%%%%%%%%%%%%%

%%%%%%%%%%%%%%%%%%%%%%%%%%%%%%%%%%%%%%%%%%%%%%%%%%%%%%%%%%%%%%%%%%%%%%%%%%%%%%%%
\begin{figure}
  \begin{minipage}{1.0\linewidth}
\begin{subfigure}[b]{.49\textwidth}
	\centering
  \includegraphics[width=0.9\textwidth, trim=0cm 0cm 0cm 3.5cm, clip]{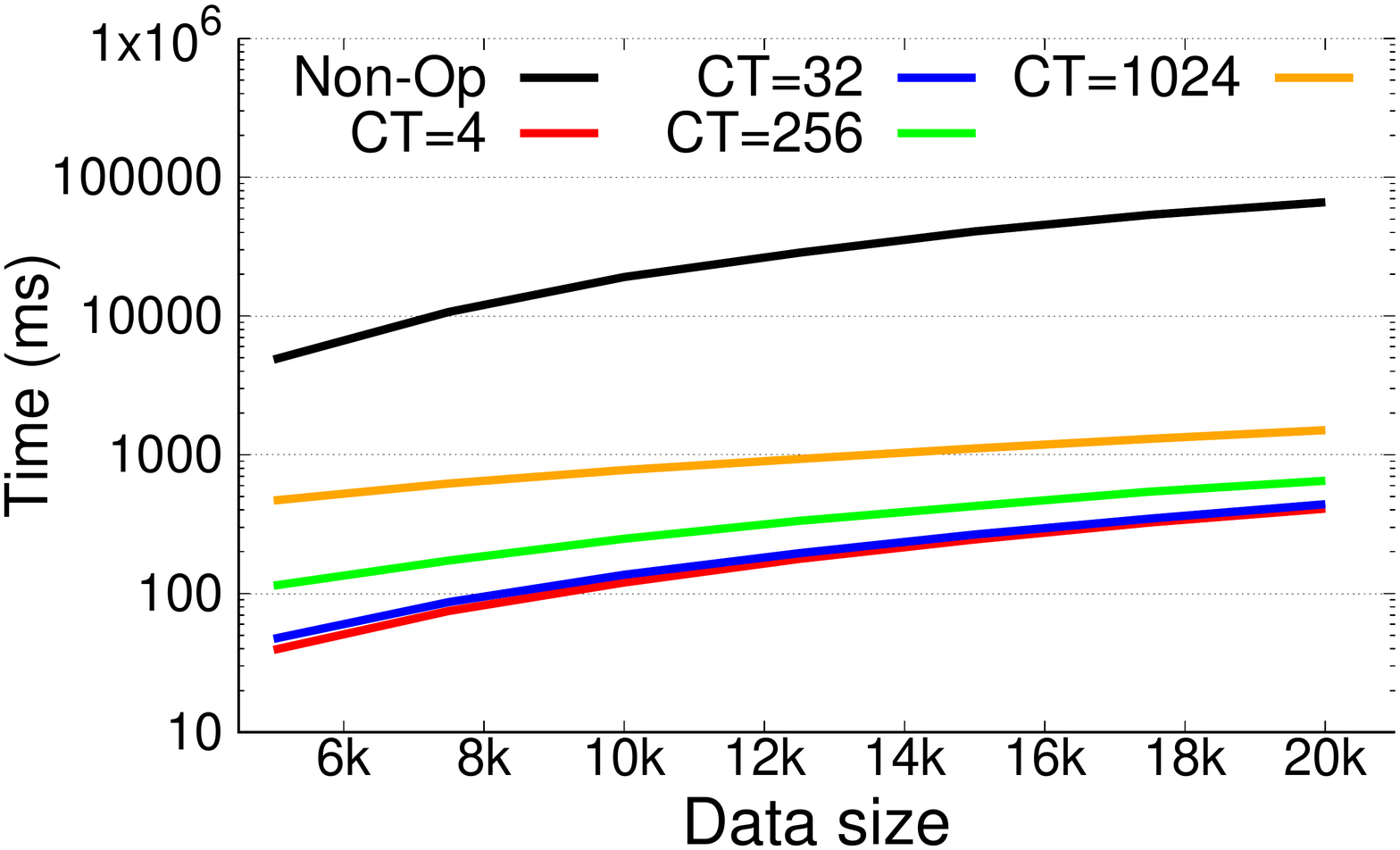}
  \vspace*{-5mm}
  \bfcaption{Runtime}
  \label{fig:micro5}
\end{subfigure}
\begin{subfigure}[b]{.49\textwidth}
	\centering
  \includegraphics[width=0.9\textwidth, trim=0cm 0cm 0cm 3.5cm, clip]{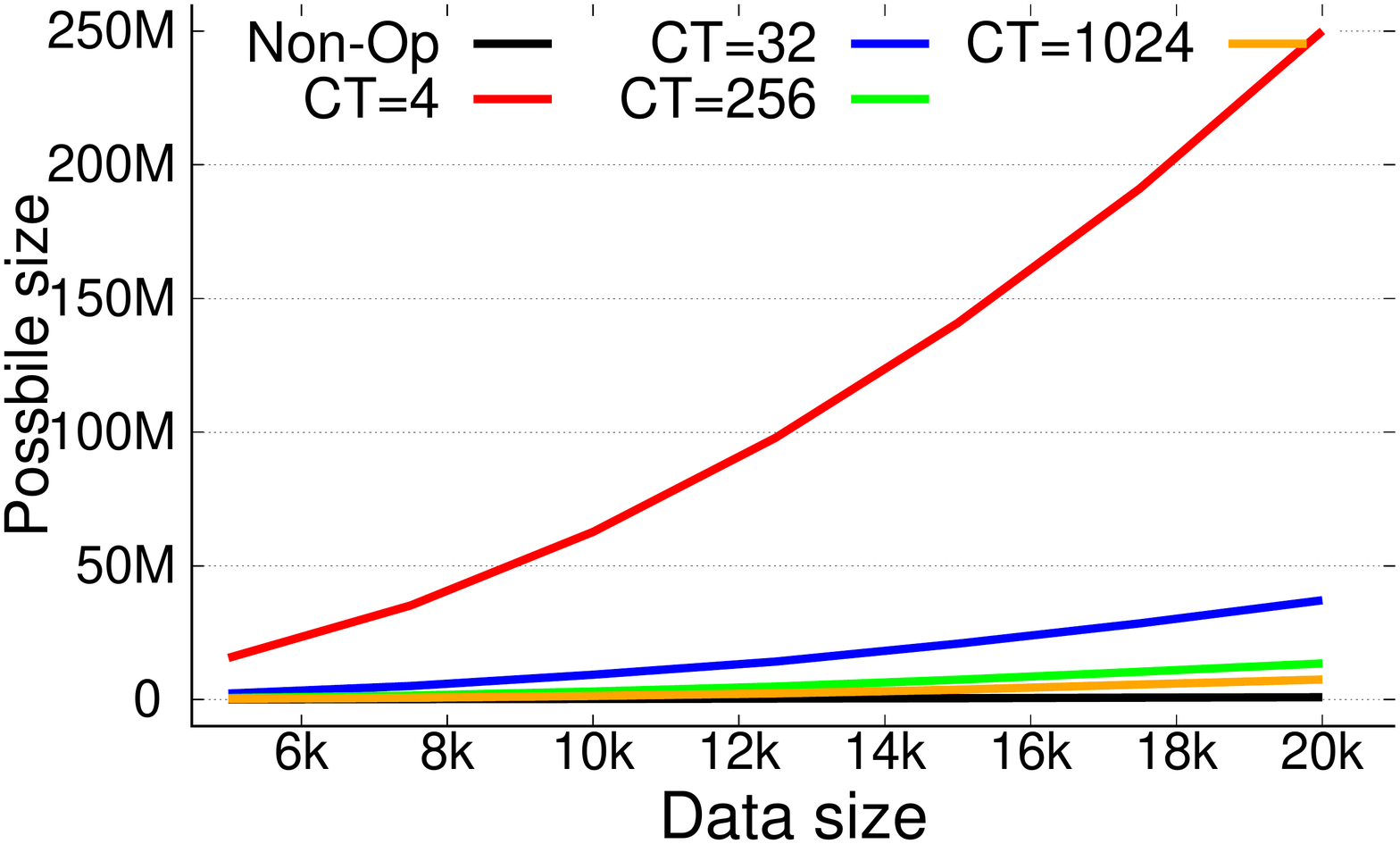}
  \vspace*{-5mm}
  \bfcaption{Attribute-level Accuracy}
  \label{fig:micro6}
\end{subfigure}
\vspace*{-4mm}
  \bfcaption{Join Optimizations - Performance and Accuracy}
  \label{fig:microbench}
\end{minipage}
\trimfigurespacing
\end{figure}
%%%%%%%%%%%%%%%%%%%%%%%%%%%%%%%%%%%%%%%%%%%%%%%%%%%%%%%%%%%%%%%%%%%%%%%%%%%%%%%%

%%%%%%%%%%%%%%%%%%%%%%%%%%%%%%%%%%%%%%%%%%%%%%%%%%%%%%%%%%%%%%%%%%%%%%%%%%%%%%%%%
%\begin{figure}[t]
%\begin{minipage}{0.9\linewidth}
%% \begin{figure}[t]
%	\centering
%  	\includegraphics[width=0.8\textwidth, trim=0cm 0cm 0cm 2cm, clip]{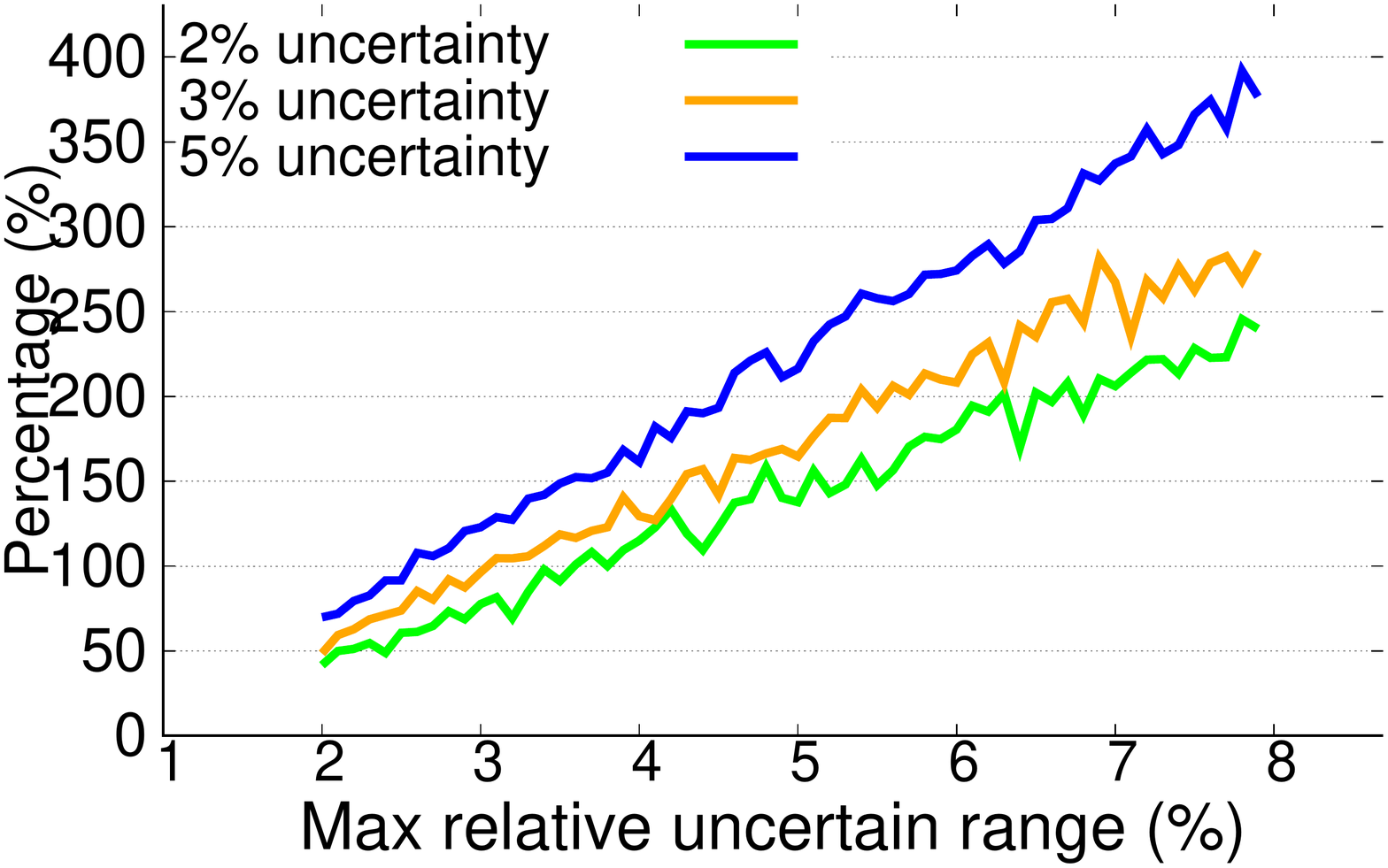}
%  	\vspace*{-8mm}
%  	\bfcaption{\revc{Over-grouping varying attribute range}}
%  	\label{fig:microovergroup}
%% \end{figure}
%%%%%%%%%%%%%%%%%%%%%%%%%%%%%%%%%%%%%%%%%%%%%%%%%%%%%%%%%%%%%%%%%%%%%%%%%%%%%%%%%

%%%%%%%%%%%%%%%%%%%%%%%%%%%%%%%%%%%%%%%%%%%%%%%%%%%%%%%%%%%%%%%%%%%%%%%%%%%%%%%%
\begin{figure}
  \begin{minipage}{1.0\linewidth}
\begin{subfigure}[b]{.49\textwidth}
	\centering
  \includegraphics[width=0.9\textwidth, trim=0cm 0cm 0cm 3.5cm, clip]{overgrouping.pdf}
  \vspace*{-5mm}
  \bfcaption{\revc{Over-grouping}}
  \label{fig:microovergroup}
\end{subfigure}
\begin{subfigure}[b]{.49\textwidth}
	\centering
  \includegraphics[width=0.9\textwidth, trim=0cm 0cm 0cm 3.5cm, clip]{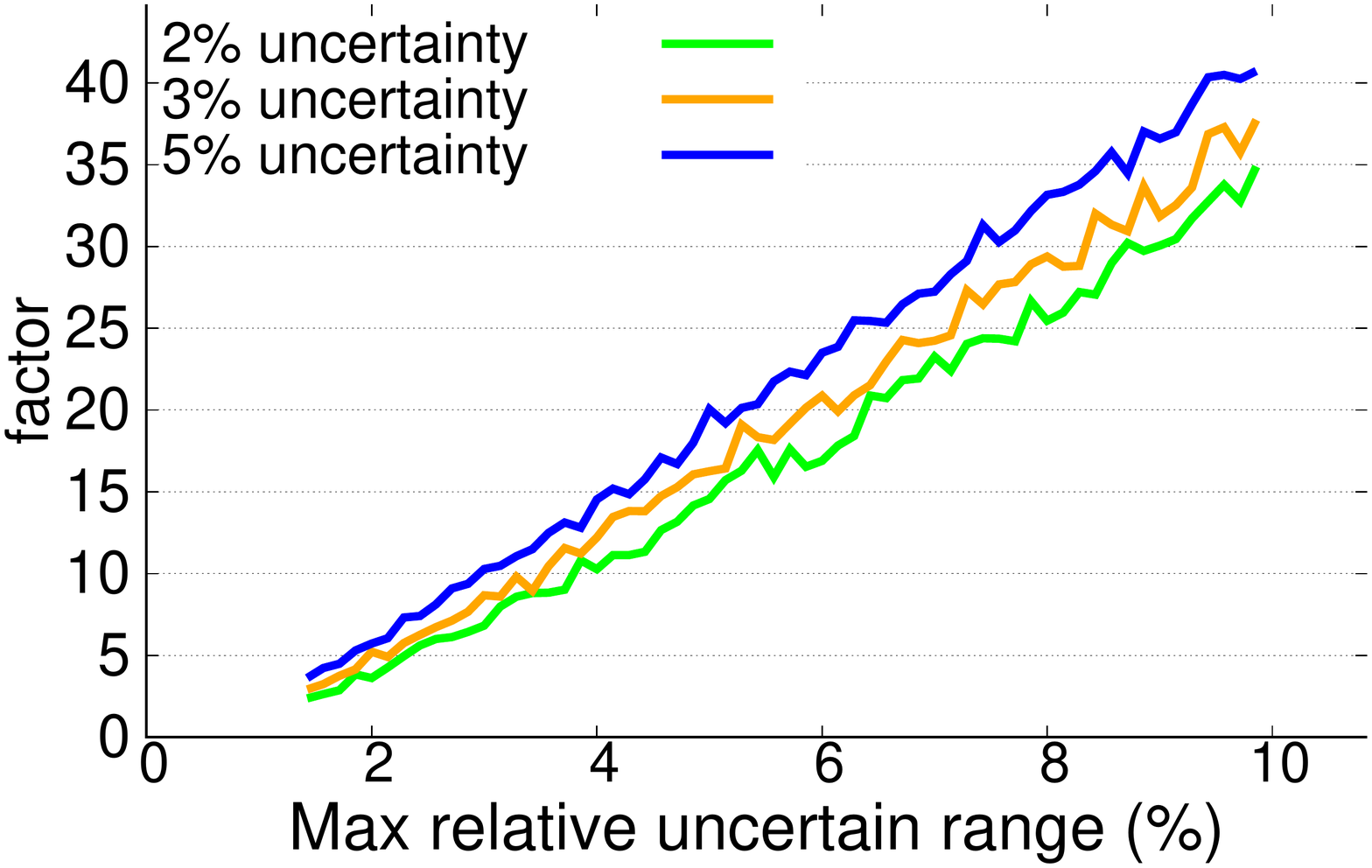}
  \vspace*{-5mm}
  \bfcaption{\revc{Range over-estimation}}
  \label{fig:microoverrange}
\end{subfigure}
\vspace*{-4mm}
  \bfcaption{\revc{Aggregation - varying attribute range}}
  \label{fig:rangeagg}
\end{minipage}
\trimfigurespacing
\end{figure}
%%%%%%%%%%%%%%%%%%%%%%%%%%%%%%%%%%%%%%%%%%%%%%%%%%%%%%%%%%%%%%%%%%%%%%%%%%%%%%%%
%%% Local Variables:
%%% mode: latex
%%% TeX-master: "uaadb"
%%% End:

%%%%%%%%%%%%%%%%%%%%%%%%%%%%%%%%%%%%%%%%%%%%%%%%%%%%%%%%%%%%%%%%%%%%%%%%%%%%%%%%
\begin{figure}
  \centering
  {\footnotesize
  \begin{tabular}{cc|r|r|r|r}
    \multicolumn{2}{c|}{Comp. Size}	&	\thead{1 join}	&	\thead{2 joins} & \thead{3 joins} & \thead{4 joins} \\ \hline
%%%%%%%%%%%%%%%%%%%%

    \multirow{2}{*}{\thead{4}}	& 3\%	&	0.004         & 0.006           & 0.009          & 0.015	\\
    					& 10\%	& 0.004         & 0.007          & 0.010          & 0.015	\\
    \hline
    \multirow{2}{*}{\thead{16}}	& 3\%	& 0.005         & 0.008           & 0.012          & 0.017	\\
    					& 10\%	& 0.005         & 0.009          & 0.012          & 0.017	\\
    \hline
    \multirow{2}{*}{\thead{64}}	& 3\%	& 0.009         & 0.027           & 0.47          & 0.069		\\
   						& 10\%	& 0.009         & 0.029          & 0.049          & 0.070	\\
    \hline
    \multirow{2}{*}{\thead{256}}	& 3\%	& 0.036         & 0.308           & 0.627          & 0.969	\\
    						& 10\%	& 0.043         & 0.337          & 0.660          & 1.019 \\
    \hline
    \thead{No}	& 3\%          & 0.216         & 1.351           & 6.269          & 29.639	\\
    \thead{Comp.}					& 10\%	& 0.213         & 2.565          & 29.379          & 333.695	\\
    \hline
%    \hline
%%%%%%%%%%%%%%%%%%%%
     \end{tabular}
  }\\[-4mm]
  \bfcaption{\reva{Join query performance (runtime in sec)}}
  \label{fig:micronumberjoin}
  \trimfigurespacing
\end{figure}
%%%%%%%%%%%%%%%%%%%%%%%%%%%%%%%%%%%%%%%%%%%%%%%%%%%%%%%%%%%%%%%%%%%%%%%%%%%%%%%%

%%%%%%%%%%%%%%%%%%%%%%%%%%%%%%%%%%%%%%%%%%%%%%%%%%%%%%%%%%%%%%%%%%%%%%%%%%%%%%%%
\subsection{Micro-benchmarks}
We use a synthetic table with 100 attributes with  uniform random values to evaluate the performance and accuracy % (tightness)
of our approach.  % for aggregation and the optimization for join.
\BG{We vary attribute-level uncertainty: \% of uncertain cells and average size of bounds (percentage of an attribute's domain).}

% In this section we use synthetic dataset generated by specifying parameters including number of attributes, number of rows, attribute domain, uncertainty rate and uncertain range in case the attribute is uncertain. We introduce micro benchmarks by fixing certain parameters and test by varying the other.

%%%%%%%%%%%%%%%%%%%%%%%%%%%%%%%%%%%%%%%%
\mypar{Varying number of group-by attributes}
We use an \aggsum \revm{aggregation} with 1 to 99 group-by attributes on a table with 35k rows and 5\% uncertainty.
Our implementation applies an aggregate analog of the join optimization described in \Cref{sec:joinOpt}: possible groups are compressed before being joined with the output groups (see~\cite{techreport}).
 This improves performance when there are fewer result groups.
 %\BG{Should we mention that here or rather leave it out?}
As~\Cref{fig:micro1} shows, overhead over \methDet is up to a factor of 6 to 7.

%%%%%%%%%%%%%%%%%%%%%%%%%%%%%%%%%%%%%%%%
\mypar{Varying number of aggregates}
Using a similar query and dataset, and 1 group-by attribute, we vary the number of aggregation functions from 1 to 99.
As \Cref{fig:micro2} shows, the overhead of our approach compared to \methDet varies between a factor of 5 to 6.

%%%%%%%%%%%%%%%%%%%%%%%%%%%%%%%%%%%%%%%%
\mypar{Compression Trade-off for Aggregation}
We evaluate the tradeoffs between tightness and compression for aggregation using \aggsum aggregation with group-by. \Cref{fig:micro4} shows the runtime overhead of our approach over \methDet when increasing the number of tuples in the compressed pre-aggregation result. The input table has 10\% uncertainty and 10k rows.
For tightness we calculate tight bounds for the aggregation function results for each possible group (a group that exists in at least one world). We then measure for each such group the relative size of our approximate bounds compared to the maximally tight bounds and report the average of this number.
\BG{While tightness is affected by the compression, \revm{even for aggressive compression we get decent bounds.}}

%%%%%%%%%%%%%%%%%%%%%%%%%%%%%%%%%%%%%%%%
\mypar{Attribute Bound Size}
Next, we vary the average size of attribute-level bounds (same query as above).
We generate tables with 35k rows each and 5\% uncertainty, varying the range of uncertain attribute values from 0\% to 100\% of the attribute's domain.
We measure runtime, varying the number of tuples in the compressed result (\CT) for the pre-aggregation step.  % (4, 32, 256 and 512 tuples).
\Cref{fig:micro3} shows that for more aggressive compression, the runtime of our approach is only slightly affected by the size of attribute-level bounds.
\revc{We also measure the effect of the attribute range on precision. We generate \abbrXDBs with 2\%, 3\%, and 5\% of uncertain tuples (10 alternatives per uncertain tuple) varying  attribute ranges from 1\% to 10\% of the entire value domain. We create an \abbrAUDB from the \abbrXDB (\cite{techreport} details how this is achieved). \Cref{fig:microovergroup,fig:microoverrange} show the percentage of over-grouping for \methAUDB (increase in group size, because of over-estimation of possible group-by attribute values) and relative factor of aggregation result range over-estimation. % compared to precise possible groupings from the block independent relations.
  The range over-estimation grows faster than over-grouping, as it is affected by uncertainty in aggregation function inputs as well as the over-grouping.
  % the aggregation range includes both effects already.
}
% For larger compression result sizes, the performance is affected more significantly by the size of bounds because wider range in the input will result wider range in the compressed tuples which creates more possible join partners in the optimization step.

% measured as the fraction of the domain covered by the attribute-level bound of a tuple averaged over all attributes and across all certain tuples. As expected the runtime increases in the number of tuples  of the compressed possible input since we need to join this data with the pre-aggregated \abbrBG groups while the less compressed data leads to more accurate results.
% And also the metric over certain tuples showing average uncertainty percentage in terms of the uncertain range over the domain. We use this metic to represent general accuracy of the result where lower metric value means less uncertainty and more accurate. Compression factor increases number of compressed tuples in an exponential manner so each increment creates exponential increase in query processing time. On the other hand, the result is more accurate for higher compression factor since less compression makes the grouping more precise.

%%%%%%%%%%%%%%%%%%%%%%%%%%%%%%%%%%%%%%%%
\mypar{Join Optimizations}
\revm{Next, we evaluate the  % performance and accuracy
impact of our join  optimization.} % (for several size of compression output size).
\Cref{fig:micro5} shows the runtime for a single equality join (log-scale) varying the size of both input relations \revm{from 5k to 20k rows containing 3\% uncertain values ranging over 2\% of the value domain}. The optimized version is between $\sim$ 1 and $\sim$ 2 orders of magnitude  faster depending on the compression rate (i.e., \CT). % The growth rate is slower than exponential for both optimized and non-optimized version.
As a simple accuracy measure, \Cref{fig:micro5} shows the number of possible tuples in the  join result. % for each experiment. The non-optimized version generates the tightest bounds and more compression leads to less accurate bounds. % join will create more possible join partners for the rewriting result more possible tuples in the result.
\BG{We should state the data size and highlight the large difference!}
\reva{ %We also measure the performance of the join optimization but
Next, we join tables of 4k rows with 3\% or 10\% uncertainty and vary the number of joins (1 to 4 chained equality joins, i.e., no overlap of join attributes between joins).  % when increase number of chained simple equality join operators such that each attribute is only used in one join condition.
As shown in \Cref{fig:micronumberjoin}, joins without optimization are up to 4 orders of magnitude more expensive, because
of the nested loop joins that are needed for interval-overlap joins and resulting large result relations. % results and the backend database's inaccurate cardinality estimations on the range-overlapping join conditions.
}

\newcommand{\cellNA}{\textcolor{red}{N.A.}}
\newcommand{\rag}{\textbf{GB}}
\newcommand{\raa}{\textbf{AGG}}
\newcommand{\ras}{\textbf{SPJ}}
%%%%%%%%%%%%%%%%%%%%%%%%%%%%%%%%%%%%%%%%%%%%%%%%%%%%%%%%%%%%%%%%%%%%%%%%%%%%%%%%
\begin{figure}[t]
  \centering
\resizebox{0.9\linewidth}{!}{
{\footnotesize
  \begin{tabular}{ccc|r|c|c|c|l|l}
    \multicolumn{3}{c|}{\thead{Datasets}}                                                                    & \thead{Time}  & \thead{cert.} & \multicolumn{2}{c|}{\thead{attr. bounds}} & \thead{pos.tup.} & \thead{pos.tup.}                                      \\
 \multicolumn{3}{c|}{\thead{\& Queries}}                                                                     & \thead{(sec)} & \thead{tup.}  & \thead{min} & \thead{max} & \thead{by id} & \thead{by val}        \\ \hline
    \multirow{8}{*}{\rotatebox[origin=c]{90}{\parbox[c]{1cm}{\centering \textbf{Netflix}~\cite{netflixdata}                                                                                                                                                               \\(1.9\%, 2.1)}}}
    &               & \methAUDB			& \textbf{0.011}		& 100\%			& 1	 		& 1 		& 100\%  & 100\% \\
	& \realQ{n}{1}  & \methTrio			& 0.900				& 100\%			& 1	 		& 1 		& 100\%  & 100\% \\
	& \ras          & \methMCDB			& 0.049				& \cellNA		& 1 		& 1 		& 99.6\%   & 98.5\% \\
	&          		& \revm{\methUADB}	& \revm{0.006} 		& \revm{100\%}	& \cellNA	& \cellNA	& \revm{99.1\%} & 97.3\%  \\
    \cline{2-9}
	&               & \methAUDB          & \textbf{0.082} 	& 100\%         & 1			& 4       & 	100\%  & 100\% \\
	& \realQ{n}{2}  & \methTrio          & 1.700 			& 100\%         & 1        	& 1    		& 	98.8\%  & 98.0\% \\
	& \rag          & \methMCDB          & 0.118 			& \cellNA       & 1        	& 1    		&	99.9\%  & 97.9\% \\
	&          		& \revm{\methUADB}   & \revm{0.009} 			& \revm{0\%}       & \cellNA       & \cellNA     & \revm{99.3\%} & 95.7\% \\
	\hline
%%%%%%%%%%%%%%%%%%%%
    \multirow{8}{*}{\rotatebox[origin=c]{90}{\parbox[c]{1cm}{\centering \textbf{Crimes}~\cite{crimesdata}                                                                                                                                                              \\(0.1\%, 3.2)}}}
    &               & \methAUDB          	& \textbf{1.58}   	& 100\%         & 1   		& 1		& 100\%  & 100\% \\
	& \realQ{c}{1}  & \methTrio          	& 59.0 				& 100\%          & 1         & 1    & 100\% & 100\% \\
	& \ras          & \methMCDB          	& 6.91				& \cellNA        & 0.6       & 1    & 99.9\% & 92.1\% \\
	&          		& \revm{\methUADB}		& \revm{0.63} 			& \revm{100\%}       & \cellNA         & \cellNA     & \revm{99.9\%} & 87.5\% \\
    \cline{2-9}
	&               & \methAUDB          & \textbf{2.09} 	& 100\%        & 1  		& 1.01 	& 100\% & 100\% \\
	& \realQ{c}{2}  & \methTrio          & 103.1 			& 100\%        & 1       	& 1    	& 100\%  & 100\% \\
	& \rag          & \methMCDB          & 5.24			& \cellNA      & 0.99    		& 0    	& 100\% & $\sim0\%$ \\
	&          & \revm{\methUADB}        & 0.47			& 0\%       & \cellNA      & \cellNA     & \revm{100\%} & $\sim0\%$ \\
	\hline
%%%%%%%%%%%%%%%%%%%%
    \multirow{8}{*}{\rotatebox[origin=c]{90}{\parbox[c]{1cm}{\centering \textbf{Healthcare}~\cite{healthcaredata}                                                                                                                                                                \\(1.0\%, 2.7)}}}
    &               & \methAUDB 		& \textbf{0.179} 	& 99.5\%		& 1	 	& 1		& 100\% 	& 100\% \\
	& \realQ{h}{1}  & \methTrio         & 20.6 			& 100\%     & 1		& 1      	& 100\% & 100\% \\
	& \ras          & \methMCDB         & 0.501				& \cellNA   & 0.4	    	& 1    	& 99.9\% & 87.6\% \\
	&          		& \revm{\methUADB}  & 0.042				& \revm{98.2\%}     & \cellNA     & \cellNA     & \revm{99.3\%} & 65.4\% \\
	\cline{2-9}
	&               & \methAUDB          & \textbf{0.859} 	& 100\%        & 1     		& 45 	& 100\% & 100\% \\
	& \realQ{h}{2}  & \methTrio          & 29.2 			& 100\%        & 1         	& 1    	& 100\%  & 100\% \\
	& \rag          & \methMCDB          & 2.31			& \cellNA      & 0.78      		& 1    	& 100\% & $\sim0\%$ \\
	&          & \revm{\methUADB}        &	0.235		& \revm{0\%}       & \cellNA     & \cellNA     & \revm{100\%} & $\sim0\%$ \\
%%%%%%%%%%%%%%%%%%%%
  \end{tabular}
  }}\\[-3mm]
  \bfcaption{\revm{Real world data - performance and accuracy}}
  \label{fig:realq}
  \trimfigurespacing
\end{figure}
%%%%%%%%%%%%%%%%%%%%%%%%%%%%%%%%%%%%%%%%%%%%%%%%%%%%%%%%%%%%%%%%%%%%%%%%%%%%%%%%

\iftechreport{
%%%%%%%%%%%%%%%%%%%%%%%%%%%%%%%%%%%%%%%%%%%%%%%%%%%%%%%%%%%%%%%%%%%%%%%%%%%%%%%%
\begin{figure}[t]
  \centering
{\footnotesize
  \begin{tabular}{ccc|r|c|c|c|c|l}
    \multicolumn{3}{c|}{\thead{Datasets}}                                                                    & \thead{Time}  & \thead{cert.} & \multicolumn{3}{c|}{\thead{cert. tup. attr. bounds}} & \thead{pos.}                                       \\
 \multicolumn{3}{c|}{\thead{\& Queries}}                                                                     & \thead{(sec)} & \thead{tup.}  & \thead{min}                                          & \thead{median} & \thead{max} & \thead{tup.}        \\ \hline
    \multirow{6}{*}{\rotatebox[origin=c]{90}{\parbox[c]{1cm}{\centering \textbf{Customers}                                                                                                                                                               \\(2.9\%, 3.9)}}}  &               & \methAUDB          & \textbf{0.09} %(5.2X)
                                                                                                             & 100\%         & 1             & 1                                                    & 1              & 100\%                             \\
                                                                                                             & \realQ{c}{1}  & \methTrio          & 0.54                                                 & 100\%          & 1           & 1    & 1    & 100\% \\
                                                                                                             & \ras          & \methMCDB          & 0.18                                                 & \cellNA        & 0.99        & 1    & 1    & 91\%  \\
                                                                                                             &          & \revm{\methUADB}          &                                                   &          &          &      &      &    \\
    \cline{2-9}
                                                                                                             &               & \methAUDB          & \textbf{1.25} %(6.8X)
                                                                                                             & 100\%         & 1             & 1                                                    & 523            & 100\%                             \\
                                                                                                             & \realQ{c}{2}  & \methTrio          & 58.80                                                & 100\%          & 1           & 1    & 1    & 89\%  \\
                                                                                                             & \rag          & \methMCDB          & 1.92                                                 & \cellNA        & 1           & 1    & 1    & 93\%  \\
                                                                                                             &          & \revm{\methUADB}          &                                                   &          &          &      &      &    \\
                                                                                                              \hline
%%%%%%%%%%%%%%%%%%%%
    \multirow{6}{*}{\rotatebox[origin=c]{90}{\parbox[c]{1cm}{\centering \textbf{Treatments}                                                                                                                                                              \\(4.3\%, 4.6)}}} &               & \methAUDB          & \textbf{0.26}    % (2.9X)
                                                                                                             & 100\%         & 1             & 1                                                    & 1              & 100\%                             \\
                                                                                                             & \realQ{t}{1}  & \methTrio          & 144.00                                               & 100\%          & 1           & 1    & 1    & 100\% \\
                                                                                                             & \ras          & \methMCDB          & 0.91                                                 & \cellNA        & 0.65        & 1    & 1    & 95\%
                                                                                                             \\
                                                                                                             &          & \revm{\methUADB}          &                                                   &          &          &      &      &    \\
    \cline{2-9}
                                                                                                             &               & \methAUDB          & \textbf{0.27} % (3.7X)
                                                                                                             & 100\%         & 1             & 51                                                   & 901            & 100\%                             \\
                                                                                                             & \realQ{t}{2}  & \methTrio          & 17.70                                                & 100\%          & 1           & 1    & 1    & 78\%  \\
                                                                                                             & \rag          & \methMCDB          & 0.75                                                 & \cellNA        & 0.98        & 0.98 & 1    & 91\%
                                                                                                             \\
                                                                                                             &          & \revm{\methUADB}          &                                                   &          &          &      &      &     \\
                                       \hline
%%%%%%%%%%%%%%%%%%%%
    \multirow{6}{*}{\rotatebox[origin=c]{90}{\parbox[c]{1cm}{\centering \textbf{Employee}                                                                                                                                                                \\(2.8\%, 5.2)}}}   &               & \methAUDB          & \textbf{2.40} % (4.7X)
                                                                                                             & 100\%         & 1             & 1                                                    & 1              & 100\%                             \\
                                                                                                             & \realQ{e}{1}  & \methTrio          & 1524.00                                              & 100\%          & 1           & 1    & 1    & 100\% \\
                                                                                                             & \ras          & \methMCDB          & 5.25                                                 & \cellNA        & 0.95        & 1    & 1    & 82\%
                                                                                                             \\
                                                                                                             &          & \revm{\methUADB}          &                                                   &          &          &      &      &     \\
    \cline{2-9}
                                                                                                             &               & \methAUDB          & \textbf{0.28} % (5.8X)
                                                                                                             & 100\%         & 1             & 1                                                    & 1              & 100\%                             \\
                                                                                                             & \realQ{e}{2}  & \methTrio          & 36.70                                                & 100\%          & 1           & 1    & 1    & 100\% \\
                                                                                                             & \ras          & \methMCDB          & 0.49                                                 & \cellNA        & 0.95        & 1    & 1    & 87\%
                                                                                                             \\
                                                                                                             &          & \revm{\methUADB}          &                                                   &          &          &      &      &     \\
                                                                                                             \hline
%%%%%%%%%%%%%%%%%%%%
    \multirow{6}{*}{\rotatebox[origin=c]{90}{\parbox[c]{1cm}{\centering \textbf{Tax}                                                                                                                                                                     \\(1.7\%, 4.4)}}}        &               & \methAUDB          & \textbf{2.21} % (7.27X)
                                                                                                             & 100\%         & 1             & 1                                                    & 75             & 100\%                             \\
                                                                                                             & \realQ{ta}{1} & \methTrio          & 16.40                                                & 100\%          & 1           & 1    & 1    & 98\%  \\
                                                                                                             & \rag          & \methMCDB          & 3.35                                                 & \cellNA        & 0.99        & 0.99 & 1    & 89\%
                                                                                                             \\
                                                                                                             &          & \revm{\methUADB}          &                                                   &          &          &      &      &     \\
    \cline{2-9}
                                                                                                             &               & \methAUDB          & \textbf{0.08} % (2.39X)
                                                                                                             & 100\%         & 1             & 1                                                    & 1              & 100\%                             \\
                                                                                                             & \realQ{ta}{2} & \methTrio          & 2.77                                                 & 100\%          & 1           & 1    & 1    & 100\% \\
                                                                                                             & \raa          & \methMCDB          & 0.34                                                 & \cellNA        & 0.99        & 0.99 & 0.99 & 100\%
                                                                                                             \\
                                                                                                             &          & \revm{\methUADB}          &                                                   &          &          &      &      &     \\

                                          %%%%%%%%%%%%%%%%%%%%
  \end{tabular}
  }\\[-3mm]
  \bfcaption{Synthethic data - performance and accuracy (OLD)}
  \label{fig:realqold}
  \trimfigurespacing
\end{figure}
%%%%%%%%%%%%%%%%%%%%%%%%%%%%%%%%%%%%%%%%%%%%%%%%%%%%%%%%%%%%%%%%%%%%%%%%%%%%%%%%
}

%%%%%%%%%%%%%%%%%%%%%%%%%%%%%%%%%%%%%%%%%%%%%%%%%%%%%%%%%%%%%%%%%%%%%%%%%%%%%%%%
\subsection{Real World Data}
\label{sec:real-world-data}

%\mypar{Real world data}
For this experiment, we repaired key violations for real world datasets (references shown in \Cref{fig:realq}). % \revm{where multiple tuples may have key attributes with same value}.
To repair key violations, we group tuples by their key attributes \revm{so that each group represents all possibilities of a single tuple with the corresponding key value}. For each group, \revm{we randomly pick one tuple for the \abbrBGW and use all tuples in the group to determine its attribute bounds as the  minimum (maximum) value within the group}.
\Cref{fig:realq} shows for each dataset the percentage of tuples with uncertain values and for all such tuples the average number of possibilities. We generated SPJ (\ras) and simple aggregation queries with group-by (\rag) % and without group-by (\raa)
for each of these datasets (query types are shown in \Cref{fig:realq}, see~\cite{techreport} for additional details).
%Furthermore, we report the type for query (\textbf{S}: SPJ, \textbf{A}: aggregation w/o group-by, \textbf{G}: aggregation w. group-by).
%
% We generate a \abbrBGW by randomly selecting one tuple for each set of tuples with the same key. The attribute-level bounds for a tuple are determined as  for the set of all tuples with the same key.
% In this test we test the performance and accuracy of our approach on real world use cases. The incompleteness of data can be achieved by performing data cleaning tasks on the dirty input. We start with applying key repair on real world datasets that creates \abbrAUDB instances with bounded guessing values for each repaired entry where bounds are obtained by knowing all possible repairs from the input.
%
\Cref{fig:realq} shows the runtime for these queries comparing \methAUDB with \methMCDB, \methTrio and \revm{\methUADB}. \methAUDB is significantly faster than \methTrio and consistently outperforms \methMCDB.
% We measure the performance for all approaches and noted the overhead comparing with \abbrBGQP after each \abbrAUDB result performance results,
% i.e., (5.2X) means that \abbrAUDB was a factor 5.2 slower than \abbrBGQP for that query.
\revm{As a comparison point and % the factor
  to calculate our quality metrics}, for each query we calculated the precise set of certain and possible tuples and exact bounds for attribute-level uncertainty in the query result.
\revm{We execute those queries in each system} and report the recall of certain and possible tuples it returns \revm{versus the exact result. Note that for possible tuple recall, we report two metrics. The first ignores attribute-level uncertainty. Possible tuples are grouped by their key (or group-by values for aggregation queries) and we measure the percentage of returned groups (a group is ``covered'' if at least one possible tuple from the group is returned). The second metric just measures the percentage of all possible tuples (without grouping) that are returned.
  % measure using two different tuple identification methods, one result showing results by identifying tuples using their tuples IDs or attributes that are grouping on in case of group by aggregations. The other result identify tuples by their value in all possible worlds.
}
% We measure percentage of certain and possible tuples compare with the exact answers.
We also measure the tightness of attribute-level bounds for certain rows
by measuring for each tuple the average size of its attribute-level bounds relative to exact bounds. \Cref{fig:realq} shows the minimum and maximum of this metric across all certain result tuples. % Although we can not beat \abbrBGQP,
% \methAUDB is faster than \methMCDB (10 samples) and significantly outperforms \methTrio.\BG{Change based on results}
% We also evaluate the tightness of bounds created by the systems on certain results.
Since \methMCDB relies on samples, it (i) may not return all possible tuples and (ii) calculating bounds for attributes values from the sample, we get bounds that may not cover all possible values. Furthermore, \methMCDB cannot distinguish between certain and possible tuples. For \methTrio the bounds on aggregation results are tight, but \methTrio does not support uncertainty in group-by attributes (no result is returned for a group with uncertain group-by values). As shown in~\Cref{fig:realq}, our attribute-level bounds are close to the tight bounds produced by \methTrio for most of the certain result tuples. \methMCDB does not return all possible aggregation result values (the ones not covered by the samples). Furthermore, we never miss possible tuples like both \methTrio and \methMCDB, and seldomly report a certain tuple as uncertain, while \methMCDB cannot distinguish certain from possible. \revm{\methUADB has performance close to conventional (\abbrBGQP) query processing and outperforms all other methods. However, \abbrUADBs provide no attribute level uncertainty and only contain tuples from the \abbrBGW and, thus,  miss most possible tuples. Furthermore, aggregates over \abbrUADBs will not return any certain answers, as doing so requires having a bound on all possible input tuples for the aggregate and often additionally requires attribute-level uncertainty (the group exists certainly in the result, but the aggregation function result for this group is uncertain).
For aggregates over \abbrUADBs, the range of the attribute bounds is significantly affected by the attribute domain and the aggregation functions used.
$\realQ{n}{2}$ and $\realQ{c}{2}$ use \aggmax and \aggcount, which return a relatively small over-estimation of the actual bounds.
$\realQ{h}{2}$ uses \aggsum, where the larger domain for the attribute over which we are aggregating over, and the combined effect of over-grouping and over-estimation of possible attribute values results in a larger over-estimation.}\iftechreport{\\} \ifnottechreport{\\[-6mm]}

% \BG{KEEP AS COMMENT FOR REFERENCE: on the result of queries on these real-world datasets using compression factor 5. The metrics includes c\_metric and p\_metric that denotes the average range versus domain percentage for certain and possible tuple. Notice each certain tuple is also counted as a possible tuple. Number of certain and possible tuples are also measured. We use different queries including join and aggregation with optimizations. For simple select/projection queries like Qc1, Qt1 and Qe2, \abbrAUDB only introduce small uncertainty to the result mainly from the uncertain selection conditions. For queries that including compression like join and aggregations, the query result tend to have larger number of possible rows overhead due to the natural of uncertain grouping that potentially creates large amount of possible groupings. The compression may further increase the grouping numbers due to the loss of accuracy in attribute ranges. For join query Qe1, the data cleaning creates sparse possible fixes which in this case introduced relatively large ranges in uncertain attributes which introduced very large number of possible join partners. Due to the natural of the representation, \abbrAUDB join optimization behaves better in terms of accuracy when ranges are narrow in joining attributes. For non-groupby aggregation as Qta2 with uncertain aggregation result, our metric measurement will return the worst case value since when only 1 tuples is in the result then the range become the entire domain.
% }

\iftechreport{
\BG{Add subsections for datasets, explain what these queries are doing}
The real world queries are listed below with brief explanation of what the queries are doing:
\begin{lstlisting}
Qn1:
SELECT title, release_year, director
FROM netflix
WHERE release_year < '2017';
\end{lstlisting}
Select all shows with year earlier than 2017.\\
\begin{lstlisting}
Qn2:
SELECT director, MAX(release_year)
FROM netflix
GROUP BY director;
\end{lstlisting}
What is the year of the most current show for each director.\\
\begin{lstlisting}
Qc1:
SELECT date, block, District 
FROM crimes 
WHERE Primary_Type='HOMICIDE' AND Arrest='False';
\end{lstlisting}
What is the date, block and district of all HOMICIDE crimes that are not arrested.\\
\begin{lstlisting}
Qc2:
SELECT year, count(*) 
FROM crimes 
GROUP BY year; 
\end{lstlisting}
Count the number of crimes for each year.\\
\begin{lstlisting}
Qh1:
SELECT Facility_Name, Measure_Name, score 
FROM healthcare_c 
WHERE state != 'TX' 
AND state !='CA' 
AND measure_id='HAI_1_SIR';
\end{lstlisting}
What is the facility name, measure name and score for all records that measuring HAI\_1\_SIR except state TX and CA.\\
\begin{lstlisting}
Qh2:
SELECT sum(score) 
FROM healthcare_c 
GROUP BY Facility_Name;
\end{lstlisting}
What is the total score for each facility.\\
%\begin{lstlisting}
%Qta1:
%SELECT sum(childexemp) as sum
%FROM tax
%GROUP BY city;
%\end{lstlisting}
%\begin{lstlisting}
%Qta2:
%SELECT count(*) as ct
%FROM tax
%WHERE maritalstatus = 'M';
%\end{lstlisting}
%\begin{lstlisting}
%Qg1:
%SELECT Street_Address, ZIP_Code, status
%FROM graffiti
%WHERE status='Completed-Dup'
%\end{lstlisting}

%%% Local Variables:
%%% mode: latex
%%% TeX-master: "tech_report"
%%% End:

}

\iftechreport{
	Detailed configurations for each microbench test is listed in figure~\ref{fig:config}.
	\begin{figure*}[t]
  		\centering
  		\begin{tabular}{c|c|c|c|c|c|c}
   		 \thead{Test} & \thead{\#Rows}  & \thead{Domain} & \thead{Range} & \thead{Uncert.\%} & \thead{CT} & \thead{Query}     \\ \hline
   		Groupby  & 35k &	[1,100]   & 5 & 5\% & $2^5$ & SELECT SUM(a0) FROM t GROUP BY [...]\\
   		Aggregation  & 35k &	[1,100]   & 5 & 5\% & $2^5$ & SELECT [...] FROM t GROUP BY a0\\
   		Range  & 35k &	[1,100k]  & 5k-100k & 10\% & $2^{2,5,8,9}$ & SELECT a0,SUM(a1) FROM t GROUP BY a0\\
   		Compression  & 10k &	[1,10k]  & 20 & 2\% & $2^{1 \sim 16}$ & SELECT a0,SUM(a1) FROM t GROUP BY a0\\
   		Join  & 5k-20k &	[1,1k]  & 15 & 3\% & non,$2^{2,5,8,10}$ & SELECT * FROM t1 JOIN t2 ON t1.a0 = t2.a0\\
   		Multi-join  & 4k &	[1,4k]  & 300 & 3\%,10\% & non,$2^{2,4,6,8}$ & ... (t1 JOIN t2 ON t1.a1 = t2.a0) JOIN t3 on t2.a1=t3.a0 ... \\
  		\end{tabular}
 		 \bfcaption{Microbenchmark Configurations}
 		 \label{fig:config}
 	\end{figure*}
	A brief data description is listed in figure~\ref{fig:datadesc}.
	\begin{figure*}[t]
  		\centering
  		\begin{tabular}{c|c|c|c}
   		 \thead{Dataset} & \#columns & \#rows & \thead{source} \\ \hline
   		Netflix  		& 12	& >6K	& \url{https://www.kaggle.com/shivamb/netflix-shows}  \\
   		Crimes  		& 22	& >1.4M	& \url{https://www.kaggle.com/currie32/crimes-in-chicago}  \\
%   		Death  			& 18	& >91K	& \url{https://data.medicare.gov/data/hospital-compare}  \\
   		Healthcare  	& 15	& >171K	& \url{https://data.medicare.gov/data/hospital-compare}  \\
  		\end{tabular}
 		 \bfcaption{Realworld Data}
 		 \label{fig:datadesc}
 	\end{figure*}
}

%%% Local Variables:
%%% mode: latex
%%% TeX-master: "uaadb"
%%% End:

\section{Conclusions}
\label{sec:conclusions}

We present \termUAADBs (\abbrUAADBs), an efficient scheme for approximating certain and possible answers for % an expressive class of queries (
full relational algebra and aggregation. % ) over incomplete databases.
\BG{Towards this goal we introduce \termUAADBs (\abbrUAADBs) which encode both attribute- and tuple-level uncertainty as bounds on the attribute values and multiplicity of tuples.}
% Aggregation over incomplete and probabilistic databases is a known to be a hard problem. Even semantics that only return an approximation of query results, e.g., tight lower and upper bounds of the aggregation function result, are often intractable.
Our approach stands out in that it is (i) more general in terms of supported queries than most past work, (ii) has guaranteed \ptime data complexity, and (iii) compactly encodes over-approximations of incomplete databases.
\BG{We demonstrate the efficiency of our approach over a wide range of datasets and queries, including TPC-H queries.}
%Thus, \abbrUAADBs are a critical step towards practical handling of uncertainty in data.
In future work, we will investigate extensions of this model for queries with ordering (top-k queries and window functions). \BG{DELETED: and investigate how to compress \abbrUAADB bounds to improve performance.}
We will also explore how to manage non-ordinal categorical attributes.
\BG{Furthermore, we will investigate how to reduce the storage and, thus, also runtime overhead of \abbrUAADBs by compressing bounds and by only recording bounds for uncertain attribute values.
Since tuples in \abbrAUDBs can be interpreted as boxes spanned by the bounds on attribute values, we will explore how techniques from spatial databases can be adapted for our model.}
% explore how this model can be used to deal with queries that require ordering such as top-k queries and queries with window functions.

%%% Local Variables:
%%% mode: latex
%%% TeX-master: "uaadb"
%%% End:

\bibliographystyle{abbrv}
\bibliography{tech_report.bib}

%%%%%%%%%%%%%%%%%%%%%%%%%%%%%%%%%%%%%%%%
% \appendix
% \input{uaa-aggregation-tech}
%
% \input{labelschemes}
% \input{introduction-Jan24-uarelations}
% \input{querysem}
% \input{uaa-correctness}
% \input{range}
% \input{old-content}

\end{document}